\newtheorem{theorem}{Theorem}
\newtheorem{definition}{Definition}
\newtheorem{lemma}{Lemma}
\newtheorem{proposition}{Proposition}
\newtheorem{corollary}{Corollary}
\newtheorem{remark}{Remark}
\newtheorem{assumption}{Assumption}
\newtheorem{problem}{Problem}
\newcommand{\eqn}[1]{(\ref{eqn:#1})}
\newcommand{\eq}[1]{(\ref{eq:#1})}
\newcommand{\thm}[1]{\hyperref[thm:#1]{Theorem~\ref*{thm:#1}}}
\newcommand{\cor}[1]{\hyperref[cor:#1]{Corollary~\ref*{cor:#1}}}
\newcommand{\defn}[1]{\hyperref[defn:#1]{Definition~\ref*{defn:#1}}}
\newcommand{\lem}[1]{\hyperref[lem:#1]{Lemma~\ref*{lem:#1}}}
\newcommand{\prop}[1]{\hyperref[prop:#1]{Proposition~\ref*{prop:#1}}}
\newcommand{\assume}[1]{\hyperref[assume:#1]{Assumption~\ref*{assume:#1}}}
\newcommand{\fig}[1]{\hyperref[fig:#1]{Figure~\ref*{fig:#1}}}
\newcommand{\tab}[1]{\hyperref[tab:#1]{Table~\ref*{tab:#1}}}
\newcommand{\algo}[1]{\hyperref[algo:#1]{Algorithm~\ref*{algo:#1}}}
\newcommand{\prob}[1]{\hyperref[prob:#1]{Problem~\ref*{prob:#1}}}
\renewcommand{\sec}[1]{\hyperref[sec:#1]{Section~\ref*{sec:#1}}}
\newcommand{\append}[1]{\hyperref[append:#1]{Appendix~\ref*{append:#1}}}
\newcommand{\fac}[1]{\hyperref[fac:#1]{Fact~\ref*{fac:#1}}}
\newcommand{\lin}[1]{\hyperref[lin:#1]{Line~\ref*{lin:#1}}}
\def\>{\rangle}
\def\<{\langle}
\newcommand{\specialcell}[2][c]{%
  \begin{tabular}[#1]{@{}c@{}}#2\end{tabular}}
\newcommand{\vect}[1]{\ensuremath{\mathbf{#1}}}
\newcommand{\x}{\ensuremath{\mathbf{x}}}
\newcommand{\cN}{\mathcal{N}}
\newcommand{\R}{\mathbb{R}}
\newcommand{\E}{\mathbb{E}}
\renewcommand{\H}{\mathcal{H}}
\newcommand{\ut}{\mathscr{T}}
\newcommand{\uf}{\mathscr{F}}
\newcommand{\tO}{\tilde{O}}
\newcommand{\tnu}{\tilde{\nu}}
\newcommand{\tTheta}{\tilde{\Theta}}
\newcommand{\tOmega}{\tilde{\Omega}}
\newcommand{\tx}{\tilde{\x}}
\newcommand{\tnabla}{\tilde{\nabla}}
\newcommand{\he}{\hat{\vect{e}}}
\newcommand{\hnu}{\hat{\nu}}
\newcommand{\tp}{t_\text{perturb}}
\DeclareMathOperator{\poly}{poly}
\DeclareMathOperator{\spn}{span}
\DeclareMathOperator{\supp}{supp}
\DeclareMathOperator{\quan}{quantum}
\DeclareMathOperator{\prog}{prog}
\renewcommand{\d}{\mathrm{d}}
\renewcommand{\x}{\vect{x}}
\newcommand{\y}{\vect{y}}
\newcommand{\z}{\vect{z}}
\newcommand{\g}{\vect{g}}
\renewcommand{\c}{\vect{c}}
\renewcommand{\u}{\vect{u}}
\renewcommand{\v}{\vect{v}}
\newcommand{\0}{\mathbf{0}}
\def\Tr{\operatorname{Tr}}\def\:{\hbox{\bf:}}
\let\oldnl\nl
\newcommand{\nonl}{\renewcommand{\nl}{\let\nl\oldnl}}
\begin{document}

\title{Robustness of Quantum Algorithms for Nonconvex Optimization}
\author{Weiyuan Gong\thanks{Equal contribution.}         \thanks{\mbox{Institute for Interdisciplinary Information Sciences, Tsinghua University. Email: \href{mailto:wygong19@mails.tsinghua.edu.cn}{gongwy19@mails.tsinghua.edu.cn}}}
\qquad
Chenyi Zhang$^{*}$\thanks{Computer Science Department, Stanford University; Institute for Interdisciplinary Information Sciences, Tsinghua University. Email: \href{mailto:chenyiz@stanford.edu}{chenyiz@stanford.edu}}
\qquad
Tongyang Li\thanks{Corresponding author. Center on Frontiers of Computing Studies and School of Computer Science, Peking University.  Email: \href{mailto:tongyangli@pku.edu.cn}{tongyangli@pku.edu.cn}}}
\date{}

\maketitle

\begin{abstract}
Recent results suggest that quantum computers possess the potential to speed up nonconvex optimization problems. However, a crucial factor for the implementation of quantum optimization algorithms is their robustness against experimental and statistical noises. In this paper, we systematically study quantum algorithms for finding an $\epsilon$-approximate second-order stationary point ($\epsilon$-SOSP) of a $d$-dimensional nonconvex function, a fundamental problem in nonconvex optimization, with noisy zeroth- or first-order oracles as inputs. We first prove that, up to noise of $O(\epsilon^{10}/d^5)$, accelerated perturbed gradient descent equipped with quantum gradient estimation takes $O(\log d/\epsilon^{1.75})$ quantum queries to find an $\epsilon$-SOSP. We then prove that standard perturbed gradient descent is robust to the noise of $O(\epsilon^6/d^4)$ and $O(\epsilon/d^{0.5+\zeta})$ for $\zeta>0$ on the zeroth- and first-order oracles, respectively, which provides a quantum algorithm with poly-logarithmic query complexity. We then propose a stochastic gradient descent algorithm using quantum mean estimation on the Gaussian smoothing of noisy oracles, which is robust to $O(\epsilon^{1.5}/d)$ and $O(\epsilon/\sqrt{d})$ noise on the zeroth- and first-order oracles, respectively. The quantum algorithm takes $O(d^{2.5}/\epsilon^{3.5})$ and $O(d^2/\epsilon^3)$ queries to the two oracles, giving a polynomial speedup over the classical counterparts. As a complement, we characterize the domains where quantum algorithms can find an $\epsilon$-SOSP with poly-logarithmic, polynomial, or exponential number of queries in $d$, or the problem is information-theoretically unsolvable even with an infinite number of queries. In addition, we prove an $\Omega(\epsilon^{-12/7})$ lower bound on $\epsilon$ for any randomized classical and quantum algorithm to find an $\epsilon$-SOSP using either noisy zeroth- or first-order oracles. As far as we know, this is the first quantum lower bound in $\epsilon$ for finding $\epsilon$-approximate second-order stationary points in nonconvex optimization.
\end{abstract}

\section{Introduction}

Optimization theory is a central topic in computer science and applied mathematics, with wide applications in machine learning, operations research, statistics, and many other areas.
Currently, various quantum algorithms for optimization have been proposed, ranging from linear programs~\cite{casares2020quantum,rains1999monotonicity} and semidefinite programs~\cite{brandao2017quantum,brandao2017SDP,van2019improvements,van2020quantum} to general convex optimization~\cite{chakrabarti2020optimization,van2020convex} and nonconvex optimization~\cite{liu2022quantum,zhang2021quantum}.

A crucial factor of quantum optimization algorithms is their robustness. On the one hand, current quantum applications suffer from noises generated by near-term quantum devices~\cite{Preskill2018NISQ}, which may create adversarial perturbations in the worst-case that result in disastrous failures. To deal with this issue, some (elements of)  quantum algorithms, such as some adiabatic quantum algorithms~\cite{childs2001robustness}, quantum gates~\cite{harrow2003robustness}, and machine learning algorithms~\cite{liu2021rigorous,cross2015quantum,lu2020quantum}, are robustness against experimental noises or noisy quantum queries~\cite{buhrman2007robust}. An alternative solution is to develop error correction~\cite{gottesman1997stabilizer} or error mitigation~\cite{endo2018practical,endo2021hybrid} mechanisms to reduce the influences of experimental noises. In the context of nonconvex optimization, developing robust quantum algorithms is essential for future practical implementations of these algorithms on near-term devices.

On the other hand, robustness is a natural and crucial requirement for solving classical optimization problems. For instance, statistical machine learning, which is a widely explored task, concerns the problem with data generated from an underlying probability distribution $\mathcal{D}$ (i.e., population), and optimizes the objective function (i.e., population risk) $F$ defined by the expectation:
\begin{align}
F(\theta)=\E_{\vect{z}\sim\mathcal{D}}[L(\theta;\vect{z})],
\end{align}
where the expectation is averaged over all possible continuous loss functions $\{L(\cdot\,;\vect{z})\}$ with $\z\sim\mathcal{D}$. The optimization algorithm does not access $F$ directly but can take queries to the empirical risk function $f(\theta)=\sum_{i=1}^nL(\theta;\vect{z}_i)/n$ via querying the loss function values at $L$ sample points. The optimization of $F$ given access to the empirical risk $f$ is known as the \emph{empirical risk minimization}~\cite{belloni2015escaping,jin2018local,vapnik1991principles}.
Under this setting, the noisy evaluation of $F$ can be poorly behaved -- it might have exponentially many shallow local minima even if $F$ has a good landscape and satisfies smoothness or Lipschitz assumptions~\cite{auer1995exponentially,brutzkus2017globally}.

To analyze the problem of nonconvex optimization using noisy queries, previous literature on classical optimization (see e.g.~\cite{bartlett2002rademacher,boucheron2013concentration}) assumed that $f$ and $F$ are pointwise close to each other:
\begin{align}\label{eq:ZeroIntro}
\|F-f\|_{\infty}\leq\nu,
\end{align}
where the error $\nu$ usually decays with the number of samples. Under this assumption, $f$ may still be non-smooth and contain additional shallow local minima independent of $F$. Nevertheless, it is possible to exploit the pointwise closeness between $f$ and $F$ to escape from highly suboptimal local minima that only exist in $f$ and find an approximate local minimum of $F$.

Another nonconvex optimization model using noisy queries is to find local minima of $F$ with empirical first-order information \cite{jin2018local}. Similar to \eq{ZeroIntro}, we query a stochastic gradient $\nabla f$ uniformly close to the actual gradient $\nabla F$. This model is widely considered in stochastic scenarios where we evaluate the gradient information using a sampling procedure to zeroth-order function values. A well-known example is the \emph{stochastic gradient descent}~\cite{jin2021nonconvex,sun2019optimization}, 
where we obtain an approximated gradient value by sampling mini-batch function values. As the mini-batch size $m$ increases, the gradient evaluation converges to the actual gradient with high probability:
\begin{align}\label{eq:FirstIntro}
\norm{\nabla F-\nabla f}_\infty\leq\tnu,
\end{align}
where the error $\tnu$ typically decreases with the mini-batch size $m$. Here, $\norm{\nabla F-\nabla f}_\infty$ take the maximal value of the infinity-norm taken both over the input $\x$ and the $d$ different entries of the gradient at the $\x$.

Various approaches have been developed to investigate the robustness of optimization algorithms from different perspectives~\cite{belloni2015escaping,zhang2017hitting,jin2018local,risteski2016algorithms,singer2015information,karabag2021smooth,roy2020escaping,zhang2022zeroth}. In the context of convex optimization, Ref.~\cite{belloni2015escaping} proposed an algorithm for finding an $\epsilon$-approximate global minimum of an approximate convex function, where $\epsilon$ is the precision guarantee for the optimization output (see \assume{ZeroProb} and \assume{FirstProb} for the formal definition). This algorithm requires $\tilde{O}(d^{7.5}/\epsilon^2)$\footnote{The $\tilde{O}$ notation omits poly-logarithmic terms, i.e., $\tilde{O}(g)=O(g\poly(\log g))$.} queries to the stochastic noisy function evaluation oracle, which has zero-mean and sub-Gaussian distributed noise. Very recently, Ref.~\cite{Li2022Quantum} improved this result by proposing a quantum algorithm with query complexity $\tilde{O}(d^{5}/\epsilon)$ for the same task, giving a polynomial quantum speedup compared to the classical counterpart. In addition, Ref.~\cite{singer2015information} proposed an information-theoretic lower bound for any convex optimization algorithms to find minima within $\epsilon$ multiplicative error using noisy function evaluation oracles. In Ref.~\cite{risteski2016algorithms}, an algorithm with optimal dependence on $d$ was proposed to find an $\epsilon$-approximate minimum taking queries to noisy function evaluation oracles. 

In the context of nonconvex optimization, Refs.~\cite{chen2020stationary,zhang2017hitting} considered querying oracles with bounded noise $\nu\leq O(\epsilon^2/d^8)$, where $\epsilon$ is the precision and $d$ is the dimension of $F$. This work developed an efficient classical algorithm to escape from the noise-induced ``shallow" local minima using simulated annealing and stochastic gradient Langevin dynamics (SGLD). More recently, improved polynomial algorithms to solve the nonconvex optimization problem with bounded noise of $O(\epsilon^{1.5}/d)$ and $O(\epsilon/\sqrt{d})$ using zeroth- and first- order noisy oracles were obtained in Ref.~\cite{jin2018local}. Ref.~\cite{jin2018local} further discussed the noise threshold to guarantee the existence of polynomial-query classical algorithms for finding an $\epsilon$-approximate local minimum of $F$. In contrast, little has been known about the influence of noises on quantum algorithms for general nonconvex optimization problems, upon which this paper systematically investigates. 


\subsection{Nonconvex Optimization with Noisy Oracle}

We consider the nonconvex optimization problem with a twice-differentiable target function $F\colon\R^d\to\R$ satisfying
\begin{itemize}
    \item $F$ is $B$-bounded: $\sup_{\x\in\R^d}\abs{F(\x)}\leq B$;
    \item $F$ is $\ell$-smooth ($\ell$-gradient Lipschitz): $\norm{\nabla F(\x_1)-\nabla F(\x_2)}\leq\ell\norm{\x_1-\x_2},\quad\forall\x_1,\x_2\in\R^d$;
    \item $F$ is $\rho$-Hessian Lipshitz: $\norm{\nabla^2 F(\x_1)-\nabla^2 F(\x_2)}\leq\rho\norm{\x_1-\x_2},\quad\forall\x_1,\x_2\in\R^d$.
\end{itemize}

The goal is to find an $\epsilon$-approximate second order stationary point ($\epsilon$-SOSP)\footnote{A more general target is to find an $(\epsilon,\gamma)$-SOSP $\x$ such that $\norm{F(\x)}\leq\epsilon$ and $\lambda_{\min}(\nabla^2 F(\x))\geq-\gamma$. The definition of an $\epsilon$-SOSP in \eq{SOSPDef} was proposed first by Ref.~\cite{nesterov2006cubic} and has been taken as a standard assumption in the subsequent papers~\cite{jin2017escape,jin2018accelerated,xu2017neon,xu2018first,carmon2018accelerated,agarwal2017finding,tripuraneni2018stochastic,fang2019sharp,jin2021nonconvex,zhang2021quantum}.} such that 
\begin{align}\label{eq:SOSPDef}
\norm{F(\x)}\leq\epsilon,\qquad\lambda_{\min}(\nabla^2 F(\x))\geq-\sqrt{\rho\epsilon}.
\end{align}
Instead of directly querying $F$, we assume one can access a noisy function $f$ that is pointwise close to $F$. 
\begin{assumption}[Noisy evaluation query]\label{assume:ZeroProb}
The target function $F$ is $B$-bounded, $\ell$-smooth, and $\rho$-Hessian Lipschitz, and we can query a noisy function $f$ that is $\nu$-pointwise close to $F$:
\begin{align}
\norm{F-f}_\infty\leq\nu.
\end{align}
\end{assumption}
We further consider finding an $\epsilon$-SOSP of $F$ given an alternative condition that the gradient $\nabla f$ of function $f$ is pointwise close to $\nabla F$.
\begin{assumption}[Noisy gradient query]\label{assume:FirstProb}
The target function $F$ is $B$-bounded, $\ell$-smooth, and $\rho$-Hessian Lipschitz, and we can query the gradient $\g\coloneqq\nabla f$ of an $L$-smooth function $f$. The gradient $\g$ is pointwise close to gradient of $F$:
\begin{align}
\norm{\nabla F-\nabla f}_\infty\leq\tnu.
\end{align}
\end{assumption}

In the quantum context, the oracles are unitary operators rather than classical procedures. Under \assume{ZeroProb}, one can query a \textit{quantum evaluation oracle} (quantum zeroth-order oracle) $U_f$, which can be represented as
\begin{align}\label{eq:QZeroOracle}
U_f(\ket{\x}\otimes\ket{0})\to\ket{\x}\otimes\ket{f(\x)},\qquad\forall \x\in\R^d.
\end{align}
Furthermore, quantum oracles allow coherent \textit{superpositions} of queries. Given $m$ vectors $\ket{\x_1},\ldots,$ $\ket{\x_m}\in\R^d$ and a coefficient vector $\c\in\mathbb{C}^m$ such that $\sum_{i=1}^m\abs{\c_i}^2=1$, the quantum oracle outputs $U_f(\sum_{i=1}^m\c_i\ket{\x_i}\otimes\ket{0})\to\sum_{i=1}^m\c_i\ket{\x_i}\otimes\ket{f(\x_i)}$. Compared to the classical evaluation oracle, the ability to query different locations simultaneously in superposition is the essence of quantum speedup. In addition, if a classical oracle can be implemented by a classical circuit, the corresponding quantum oracle can be implemented by a quantum circuit of the same size.

Similarly, in the first-order scenario we assume that one can access the \textit{quantum gradient oracle} $U_\g$ under \assume{FirstProb},
which can be represented as
\begin{align}\label{eq:QFirstOracle}
U_\g(\ket{\x}\otimes\ket{\0})\to\ket{\x}\otimes\ket{\nabla f(\x)},\qquad\forall \x\in\R^d.
\end{align}

\subsection{Contributions}
In this paper, we conduct a systematic study of quantum algorithms for nonconvex optimization using noisy oracles. Using zeroth- or first-order oracles as inputs, we rigorously characterize different domains where quantum algorithms can find an $\epsilon$-SOSP using poly-logarithmic, polynomial, or exponential number of queries, respectively. We also identify the domain where it is information-theoretically unsolvable to find an $\epsilon$-SOSP even using an infinite number of queries.

In some of the domains, we further develop lower bounds on the query complexity for any classical algorithms and propose quantum algorithms with polynomial or exponential speedups compared to either the classical lower bounds or the complexities of corresponding state-of-the-art classical algorithms. We summarize our main results under \assume{ZeroProb} and \assume{FirstProb} in \tab{main1} and \tab{main2}, respectively. 

\begin{table}[ht]
\centering
\resizebox{1.0\columnwidth}{!}{
\begin{tabular}{llll}
\hline
Noise Strength & Classical Bounds & Quantum Bounds & Speedup in $d$ \\ \hline
$\nu=\Omega(\epsilon^{1.5})$ & Unsolvable~\cite{jin2018local} & Unsolvable~(\thm{ZeroNoLower}) & N/A \\\hline
$\nu=O(\epsilon^{1.5})$, $\nu=\tOmega(\epsilon^{1.5}/d)$ & $O(\exp(d))$, $\Omega(d^{\log d})$~\cite{jin2018local} & $\Omega(d^{\log d})$~(\thm{ZeroMeanLower}) & N/A \\\hline
$\nu=O(\epsilon^{1.5}/d)$, $\nu=\tOmega(\epsilon^6/d^4)$ & $\tO(d^4/\epsilon^5)$~\cite{jin2018local,zhang2017hitting} & $\tO(d^{2.5}/\epsilon^{3.5})$~(\thm{ZeroMean}) & Polynomial\\ \hline
$\nu=\tO(\epsilon^6/d^4)$, $\nu=\tOmega(\epsilon^{10}/d^5)$ & $\Omega(d/\log d)$ (\thm{ZeroClassLowerJordan}) & $O(\log^4 d/\epsilon^2)$ (\thm{ZeroJordan}) & Exponential\\
\hline
$\nu=\tO(\epsilon^{10}/d^5)$ & $\Omega(d/\log d)^{*}$ (\thm{ZeroClassLowerJordan}) & $O(\log d/\epsilon^{1.75})$ (\thm{TinyAGD}) & Exponential$^*$\\
\hline
\end{tabular}
}
\caption{A summary of our results and comparisons with the state-of-the-art classical upper and lower bounds under \assume{ZeroProb}. The query complexities are highlighted in terms of the dimension $d$ and the precision $\epsilon$. ($*$) In the last row, we can obtain the desired classical lower bound and thus an exponential speedup in the query complexity when $\nu=\tOmega(\poly(1/d,\epsilon))$ as \thm{ZeroClassLowerJordan} works for $\nu=\tOmega(\poly(1/d,\epsilon))$. }
\label{tab:main1}
\end{table}

\begin{table}[ht]\small
\centering
\resizebox{1.0\columnwidth}{!}{
\begin{tabular}{llll}
\hline
Noise Strength & Classical Bounds & Quantum Bounds & Speedup in $d$ \\ \hline
$\tnu=\Omega(\epsilon)$ & Unsolvable~(\thm{FirstLower}) & Unsolvable~(\thm{FirstLower}) & N/A \\ \hline
$\tnu=O(\epsilon)$, $\tnu=\tOmega(\epsilon/d^{0.5})$ & $\Omega(d^{\log d})$~(\thm{FirstLower}) & $\Omega(d^{\log d})$~(\thm{FirstLower}) & N/A \\ \hline
$\tnu=\Theta(\epsilon/d^{0.5})$ & $O(d^3/\epsilon^4)$~\cite{jin2018local} & $O(d^2/\epsilon^3)$~(\thm{FirstMean}) & Polynomial \\ \hline
$\tnu=O(\epsilon/d^{0.5+\zeta})$ & $O(\log^4 d/\epsilon^2)$ (\cor{FirstPGD}) & $O(\log^4 d/\epsilon^2)$ (\cor{FirstPGD}) & No
\\ \hline
\end{tabular}
}
\caption{A summary of our results and comparisons with the state-of-the-art classical upper and lower bounds under \assume{FirstProb}. In the last line, $\zeta>0$ and $\zeta=\Omega(1/\log(d))$ (for instance, this is satisfied for any constant $\zeta>0$).}
\label{tab:main2}
\end{table}

\subsubsection{Upper bounds}
\paragraph{Tiny noise: robustness of perturbed accelerated gradient descent.} We start by adding tiny noise to the oracles in quantum gradient descent algorithms. In particular, we consider the function pair $(F,f)$ satisfying \assume{ZeroProb} and assume that one can access the function values of the noisy evaluation function $f$. We remark that $f$ may even be non-differentiable or non-smooth. In addition, the noise between $f$ and the target function $F$ might introduce additional SOSPs. Nevertheless, recent work~\cite{anonymous2023faster} indicates that the performance of accelerated gradient descent algorithm (PAGD)~\cite{jin2018accelerated,zhang2021escape} persists when the gradients are inexact. We rigorously prove that the perturbed accelerated gradient descent algorithm with accelerated negative curvature~\cite{zhang2021escape} equipped with Jordan's algorithm for quantum gradient estimation~\cite{jordan2005fast} is robust to the tiny noise on zeroth-order oracles. We formulate our first main result as follow:

\begin{theorem}[Informal]\label{thm:TinyAGD}
Given a target function $F$ and a noisy function $f$ satisfying \assume{ZeroProb} with $\nu=\Omega(\epsilon^{10}/d^5)$, there exists a quantum algorithm that finds an $\epsilon$-SOSP of $F$ with high probability using $\tO(\log d/\epsilon^{1.75})$ queries to the noisy zeroth-order oracle $U_f$.
\end{theorem}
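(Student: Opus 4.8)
The plan is to combine three ingredients: (i) Jordan's quantum gradient estimation algorithm, which turns a zeroth-order oracle $U_f$ into an approximate gradient oracle with a controllable error; (ii) a careful propagation of the gradient error through the analysis of perturbed accelerated gradient descent (PAGD) with accelerated negative-curvature exploitation; and (iii) a bound showing that the noise level $\nu = O(\epsilon^{10}/d^5)$ is small enough to keep the propagated error within the tolerance that PAGD's convergence proof can absorb. First I would set up the quantum gradient estimation step: apply Jordan's algorithm on a small hypercube of side length $r$ around the current iterate, querying $U_f$ on a superposition of grid points; since $f$ is only $\nu$-close to the smooth function $F$ (and $f$ itself may be non-smooth), the output is an estimate $\tilde{\nabla}$ of $\nabla F$ whose per-coordinate error scales like (Hessian-Lipschitz term) $\rho r^2 d$ from the finite-difference discretization plus (noise term) $\nu d / r$ from the pointwise deviation; optimizing $r \sim (\nu/\rho)^{1/3}$ makes the $\ell_\infty$ gradient error $O(\rho^{1/3}\nu^{2/3} d)$, hence the $\ell_2$ error $O(\rho^{1/3}\nu^{2/3}d^{1.5})$. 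Plugging $\nu = O(\epsilon^{10}/d^5)$ gives an $\ell_2$ gradient error of roughly $O(\epsilon^{20/3}/d^{1/6}) \ll \epsilon$, comfortably below the precision scale.

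Next I would invoke the inexact-PAGD machinery (as in the cited work on accelerated gradient descent with inexact gradients, together with the accelerated negative curvature routine of \cite{zhang2021escape}). The key structural facts to re-derive or cite are: the Hamiltonian/potential-function decrease argument for the accelerated gradient steps survives a gradient perturbation of size $\delta$ as long as $\delta$ is small relative to $\sqrt{\rho\epsilon}$-type quantities, losing only lower-order terms in the per-step decrease; and the negative-curvature-finding subroutine, which detects directions along which $\lambda_{\min}(\nabla^2 F) < -\sqrt{\rho\epsilon}$ by monitoring the divergence of iterates after a random perturbation, remains correct when gradients carry error $\delta \ll \sqrt{\rho\epsilon}\cdot(\text{curvature-gap radius})$. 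Since our $\delta = O(\epsilon^{20/3}/d^{1/6})$ is polynomially smaller than any of these thresholds, every invariant in the PAGD analysis goes through with the constants degraded by at most a factor of $2$. I would then quote the iteration count of PAGD, $\tO(1/\epsilon^{1.75})$, and multiply by the per-iteration quantum query cost of Jordan's algorithm, which is $O(\log d)$ (up to $\poly\log$ factors in the precision), to obtain the claimed $\tO(\log d/\epsilon^{1.75})$ total query complexity.

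The main obstacle I anticipate is the second ingredient: making the inexact-PAGD analysis fully rigorous in the presence of the \emph{accelerated} negative-curvature step. Standard (non-accelerated) perturbed gradient descent has a robust, well-documented error analysis, but the accelerated variant uses a momentum-coupled two-point iteration whose Lyapunov function is more delicate, and the negative-curvature exploitation is itself run via an accelerated scheme; error terms can in principle be amplified by the momentum coefficient $\approx 1 - \Theta(\sqrt{\epsilon})$ over the $\Theta(1/\sqrt{\epsilon})$ steps of a single epoch. I would need to check that the accumulated amplification is only $\poly(1/\epsilon)$, not exponential, which is why the noise budget is pushed down to $\nu = O(\epsilon^{10}/d^5)$ rather than the milder $O(\epsilon^6/d^4)$ that suffices for non-accelerated PGD; carrying the $\epsilon$- and $d$-dependence through this amplification cleanly is the technical heart of the proof. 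A secondary subtlety is that Jordan's algorithm assumes the oracle evaluates a function that is genuinely smooth on the sampling grid, whereas here $f = F + (\text{arbitrary bounded perturbation})$; I would handle this by viewing the algorithm as estimating the gradient of the \emph{smoothed} proxy and bounding the bias by the $\nu/r$ term above, rather than appealing to any regularity of $f$ itself.
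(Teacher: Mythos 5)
Your high-level strategy matches the paper's: replace exact gradients in perturbed accelerated gradient descent (PAGD with accelerated negative-curvature finding, following \cite{zhang2021escape}) with Jordan's quantum gradient estimate, then verify that the resulting gradient error is absorbed by the Hamiltonian-decrease and negative-curvature arguments adapted from \cite{anonymous2023faster}. The paper does exactly this, so your route is not genuinely different.

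That said, your error analysis of Jordan's algorithm is not correct. You model the discretization error as $\rho r^2$ per coordinate, which would suit a centered finite-difference scheme; Jordan's algorithm relies on a one-sided linear approximation, so the dominant Taylor error is $\ell r$ (governed by the smoothness constant $\ell$, not by $\rho$), and balancing against the noise term $\nu/r$ yields $r \sim \sqrt{\nu/\ell}$ and error $\sqrt{\nu\ell}$, not $\rho^{1/3}\nu^{2/3}$. You also omit the dimension blow-up from the probabilistic tail: \lem{JordanPerf} gives error $O(\omega d\sqrt{\nu\ell})$ with failure probability $d/(\omega-1)$, so the per-iteration error at failure probability $\delta_0$ is $O(d^2\sqrt{\nu\ell}/\delta_0)$, carrying a $d^2$ factor rather than your $d^{1.5}$. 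Both bounds happen to land below the threshold $O(\epsilon^{3.25}/d^{0.5})$ that the PAGD analysis requires, so the informal conclusion survives, but your derivation would misjudge the admissible noise regime. You also attribute the $\log d$ factor in the query complexity to Jordan's per-iteration query cost; in the paper Jordan's estimate costs a single oracle call per iteration, and the $\log d$ arises from the iteration count (from the $\chi = O(\log d)$ that appears in the escape-epoch length and the parameter choices). Finally, you correctly flag the main obstacle --- showing that the momentum-coupled Lyapunov function and the accelerated negative-curvature subroutine tolerate gradient error without exponential amplification --- but you stop at ``I would need to check,'' whereas the paper closes this gap by importing \lem{NCEFreeHam}, \lem{NCEHam}, \lem{LargePAGD}, \lem{SmallPAGDDir}, and \lem{SmallPAGD}; absent these or equivalent bounds, your argument remains a plan rather than a proof.
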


We leave the formal version of \thm{TinyAGD}, the corresponding algorithm, and the proof to \sec{RobustAGD}. \thm{TinyAGD} demonstrates that if the noise is small enough, the impact on PAGD algorithm will not lead to an increase on the query complexity. If $\nu=\Omega(\poly(\epsilon,1/d))$, we further demonstrate that this robustness only exists for quantum algorithms by proving a polynomial lower bound in \thm{ZeroClassLowerJordan} for any classical algorithm.

\paragraph{Small noise: robustness of quantum gradient estimation.} When the strength of noise increases, the negative curvature estimation in standard PAGD will fail. In this case, we show the robustness of the gradient descent algorithm with quantum gradient estimation against the noise. We consider the function pair $(F,f)$ satisfying \assume{ZeroProb} when we can access noisy function $f$. Refs.~\cite{zhang2021quantum,chakrabarti2020optimization} conveyed the conceptual message that perturbed gradient descent (PGD)~\cite{jin2021nonconvex} algorithm with Jordan's gradient estimation~\cite{jordan2005fast} possesses a certain degree of robustness to noise. In this work, we formalize this intuition and obtain the following result:
\begin{theorem}[Informal]\label{thm:ZeroJordan}
Given a target function $F$ and a noisy function $f$ satisfying \assume{ZeroProb} with $\nu\leq \tO(\epsilon^6/d^4)$, there exists a quantum algorithm that finds an $\epsilon$-SOSP of $F$ with high probability using $\tO(\log^4 d/\epsilon^2)$ queries to the noisy zeroth-order oracle $U_f$.
\end{theorem}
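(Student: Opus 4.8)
The plan is to realize an inexact first‑order method: run the perturbed gradient descent (PGD) algorithm of~\cite{jin2021nonconvex} on the target $F$, but replace each exact gradient evaluation by the output of Jordan's quantum gradient‑estimation algorithm~\cite{jordan2005fast} applied to the noisy zeroth‑order oracle $U_f$ (this is the robustness message of~\cite{zhang2021quantum,chakrabarti2020optimization}, which I would make quantitative). Two things have to be established. First, a single call to the gradient‑estimation subroutine, costing $\tO(1)$ queries to $U_f$, returns with high probability a vector $\g(\x)$ with $\|\g(\x)-\nabla F(\x)\|_\infty\le\Delta$ for a suitably small $\Delta$. Second, PGD still finds an $\epsilon$‑SOSP of $F$ in $\tO(1/\epsilon^2)$ iterations when each gradient it receives is corrupted by an adversarial additive error of size $\Delta$. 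Combining the two, driving the per‑call failure probability below $1/\poly(d,1/\epsilon)$ by a union bound over all iterations, yields the stated $\tO(\log^4 d/\epsilon^2)$ query complexity, the polylogarithmic overhead coming from the repetitions used in the gradient subroutine.

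For the first part the subtlety is that $f$ need not be differentiable, so ``$\nabla f$'' is meaningless and the output of Jordan's algorithm must be bounded directly against $\nabla F$. I would run the high‑order (central‑difference) variant of Jordan's algorithm on a hypercube of radius $r$ around the current iterate and split the error of $\g(\x)$ into a smooth part and a noise part. The smooth part is the deviation of $F$ from the local linear model read off by the finite‑difference formula; using $\ell$‑smoothness, $\rho$‑Hessian Lipschitzness of $F$, and the cancellation of even‑order Taylor terms by the symmetric difference, it is $O(\rho r^{2})$ up to $\poly(d)$ normalization factors. The noise part comes from the $\nu$‑sized perturbation of the phases loaded into the circuit; a stability statement for Jordan's algorithm — that an arbitrary bounded perturbation of the encoded phases shifts the recovered gradient by at most $O(\nu/r)$ per coordinate, up to $\poly(d)$ — bounds it by $\poly(d)\cdot\nu/r$. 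Balancing $\rho r^{2}$ against $\nu/r$ fixes $r\sim(\nu/\rho)^{1/3}$ and gives $\Delta=\poly(d)\cdot\poly(1/\epsilon)\cdot\nu^{2/3}$ up to logs, subject to $r$ being small enough that $F$ does not wrap around the phase register. Under $\nu\le\tO(\epsilon^{6}/d^{4})$ this makes $\Delta$ smaller than the tolerance needed below; each call uses $\tO(1)$ queries, and $O(\log(d/\delta))$ repetitions with a coordinate‑wise median boost the success probability to $1-\delta$.

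For the second part I would re‑run the PGD analysis of~\cite{jin2021nonconvex} carrying an additive gradient error $\Delta$ throughout. The large‑gradient case is straightforward: as long as $\|\nabla F(\x)\|\gtrsim\epsilon$, a descent step with the inexact gradient still decreases $F$ by $\Omega(\epsilon^{2}/\ell)$ provided $\Delta\ll\epsilon$. The delicate case is saddle escape: when $\lambda_{\min}(\nabla^{2}F(\x))\le-\sqrt{\rho\epsilon}$, after injecting a uniform perturbation of radius $\sim\poly(\epsilon)$ and running $\tO(1/\sqrt{\rho\epsilon})$ inexact steps, the ``improve‑or‑localize'' coupling argument must still certify an $\Omega(\sqrt{\epsilon^{3}/\rho})$‑scale decrease of $F$. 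Here the gradient error accumulated over the escape window must stay dominated by the component growing along the most negative curvature direction; propagating $\Delta\sim\poly(d)\,\nu^{2/3}$ through that escape‑signal inequality is exactly what fixes the admissible noise level $\nu\le\tO(\epsilon^{6}/d^{4})$ (the exponent $6$ and the power $d^{4}$ come out of this step). With both lemmas in hand, PGD terminates at an $\epsilon$‑SOSP of $F$ within $\tO(1/\epsilon^{2})$ iterations.

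The main obstacle is the gradient‑estimation stability in the first part: showing that a worst‑case, completely unstructured $\nu$‑perturbation of the values of $f$ can only move Jordan's output by $O(\poly(d)\,\nu/r)$ in $\ell_\infty$. This requires opening up the phase‑kickback and quantum Fourier transform inside Jordan's algorithm and bounding, in operator norm, the effect of an arbitrary diagonal phase perturbation of magnitude $\nu$ on the amplitudes from which the gradient is extracted, then optimizing the grid radius $r$ against the competing Taylor‑truncation error while respecting the no‑wraparound constraint. A secondary difficulty is re‑proving the saddle‑escape lemma with only a polynomially small perturbation radius, so that the cumulative inexactness over the $\tO(1/\sqrt{\rho\epsilon})$‑step escape phase remains negligible compared with the geometrically amplified escape direction.
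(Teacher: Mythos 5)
Your high-level plan matches the paper's: run PGD with each gradient replaced by Jordan's quantum gradient estimate from the noisy zeroth-order oracle, and then split the convergence analysis into a large-gradient decrease step and a saddle-escape step. But there are two significant mismatches between what you propose and what the paper actually does, and one of them is a genuine gap.

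\paragraph{How the paper handles the gradient-estimation error.} The paper does \emph{not} open up the phase kickback and QFT. It cites Lemma 2.2 of Chakrabarti et al.\ (\lem{JordanPerf}) directly, which gives $\Pr\big[\|\tnabla F(\x)-\nabla F(\x)\|\geq 400\omega d\sqrt{\nu\ell}\big]\leq d/(\omega-1)$; setting $\omega\approx d/\delta_0$ gives $\|\tnabla F-\nabla F\|=O(d^2\sqrt{\nu\ell}/\delta_0)$ with confidence $1-\delta_0$. This is a $\nu^{1/2}$ error scaling, coming from balancing the first-order linearization error $\ell r$ against the phase noise $\nu/r$, i.e.\ $r\sim\sqrt{\nu/\ell}$. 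You instead propose a higher-order central-difference balancing $\rho r^2$ vs.\ $\nu/r$ yielding $\Delta\sim\poly(d)\nu^{2/3}$. Your scaling, if established, is \emph{stronger} than what the paper uses (it would tolerate $\nu\lesssim\epsilon^{3/2}/\poly(d)$), but you yourself flag its proof as the main obstacle, and it is not a consequence of the cited Jordan-algorithm results. Moreover your arithmetic is internally inconsistent: you claim the exponent $6$ and the power $d^4$ ``come out of'' propagating $\Delta\sim\poly(d)\nu^{2/3}$ through the saddle-escape inequality, but $\Delta\lesssim\epsilon$ under that scaling would only require $\nu\lesssim\epsilon^{3/2}/\poly(d)$, not $\epsilon^6/d^4$. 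In the paper, the $\tO(\epsilon^6/d^4)$ threshold is precisely what the $\sqrt{\nu\ell}$ scaling in \lem{JordanPerf} forces once you set $\delta_0=\Theta(\delta\epsilon^2\chi^{-4})$ to survive the union bound: $d^2\sqrt{\nu\ell}/\delta_0\lesssim\epsilon$ rearranges to $\nu\lesssim\delta_0^2\epsilon^2/(d^4\ell)$.

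\paragraph{How the paper handles the inexact-gradient PGD.} You propose to re-run the Jin et al.\ ``improve-or-localize'' analysis with an additive gradient error. The paper does not redo this either: it cites Lemmas 67 and 68 of Jin et al.\ (\lem{JordanLargeGradJin}, \lem{JordanSmallGrad}), which already treat PGD with a gradient oracle accurate to $\epsilon/20$ in Euclidean norm. All that remains is the standard counting argument over $T$ iterations. This is where the $\log^4 d$ factor comes from --- it is the $\chi^4$ in $\ut/\uf$ from the escape-window parameters in those lemmas --- not, as you say, from ``repetitions used in the gradient subroutine''; the paper uses a single Jordan call per iteration with a boosted $\omega$, not a median of repeats.

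\paragraph{Summary.} The skeleton of your proposal is correct and is the same as the paper's (PGD + Jordan, two-case convergence). But the paper is more economical: both of the technical lemmas you set out to prove are cited verbatim from the literature, and both the $\epsilon^6/d^4$ threshold and the $\log^4 d$ overhead are immediate consequences of their stated parameters. Your version would require establishing a new noise-robustness statement for a higher-order variant of Jordan's algorithm --- the step you flag as the main obstacle --- which the paper deliberately avoids. You should also fix the misattributions: the $d^4$ and $\epsilon^6$ come from the existing $\sqrt{\nu\ell}$ scaling plus the $\delta_0\sim\epsilon^2\chi^{-4}$ confidence requirement, and the $\log^4 d$ comes from the saddle-escape parameters, not the gradient subroutine.
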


The formal version of \thm{ZeroJordan}, the corresponding algorithms, and the proof are given in \sec{RobustPGD}. \thm{ZeroJordan} demonstrates if the noise on the zeroth-order oracle is below a certain threshold, a quantum algorithm can find an $\epsilon$-SOSP of $F$ within a number of queries that is poly-logarithmic in terms of the dimension $d$. Similar to \thm{TinyAGD}, this robustness only exists in quantum algorithms and provide an exponential quantum speedup in the query complexity compared to the classical counterpart.

We further extend \thm{ZeroJordan} to function pair $(F,f)$ satisfying \assume{FirstProb}. We prove in \cor{FirstPGD} that the classical PGD iteration is robust against the noise of $\tnu\leq O(\epsilon/d^{0.5+\zeta})$ on the first-order gradient information, where $\zeta=\Omega(1/\log(d))$.

\paragraph{Intermediate noise: speedup from quantum mean estimation.} When the strength of noise keeps increasing, the robustness of Jordan's algorithm will also fail to handle the gap between the noisy function $f$ and the target function $F$. To address this issue, we develop a quantum algorithm based on the Gaussian smoothing of $f$ inspired by Ref.~\cite{jin2018local}. We consider function pairs $(F,f)$ satisfying \assume{ZeroProb}. We sample the value $\z[f(\x+\z)-f(\x)]/\sigma^2$, where $\z\sim\cN(0,\sigma^2I)$ is chosen from Gaussian distribution with parameter $\sigma^2$~\cite{duchi2015optimal}. We then apply quantum mean estimation to approximate the gradient from the samples of stochastic gradients. The performance of the algorithm is given by the following theorem:
\begin{theorem}[informal]\label{thm:ZeroMean}
Given a target function $F$ and a noisy function  $f$ satisfying \assume{ZeroProb} with $\nu\leq O(\epsilon^{1.5}/d)$, there exists a quantum algorithm that finds an $\epsilon$-SOSP of $F$ with high probability taking $\tO(d^{2.5}/\epsilon^{3.5})$ queries to the noisy zeroth-order oracle $U_f$.
\end{theorem}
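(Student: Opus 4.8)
To establish \thm{ZeroMean}, the plan is to reduce the problem to optimizing the Gaussian smoothing of $f$ and to speed up the gradient estimation inside a perturbed stochastic gradient descent loop via quantum multivariate mean estimation, following the zeroth-order perturbed SGD template of~\cite{jin2018local}. Write $\tilde F_\sigma(\x):=\E_{\z\sim\cN(0,\sigma^2I)}[F(\x+\z)]$ and $\tilde f_\sigma(\x):=\E_{\z\sim\cN(0,\sigma^2I)}[f(\x+\z)]$ for a smoothing radius $\sigma$ to be fixed. First I would record the standard smoothing estimates: convolution with a probability density keeps $\tilde F_\sigma$ $B$-bounded, $\ell$-smooth and $\rho$-Hessian Lipschitz, while Taylor expansion together with the moment bounds $\E\|\z\|=\Theta(\sigma\sqrt d)$ and $\E\|\z\|^2=d\sigma^2$ give $\|\nabla\tilde F_\sigma(\x)-\nabla F(\x)\|\le O(\rho d\sigma^2)$ and $\|\nabla^2\tilde F_\sigma(\x)-\nabla^2 F(\x)\|\le O(\rho\sqrt d\,\sigma)$. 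Choosing $\sigma^2=\Theta(\epsilon/(\rho d))$ makes these errors $O(\epsilon)$ and $O(\sqrt{\rho\epsilon})$ respectively, so that any $O(\epsilon)$-SOSP of the well-behaved surrogate $\tilde F_\sigma$ is an $\epsilon$-SOSP of $F$; it therefore suffices to optimize $\tilde F_\sigma$.

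Next I would set up the stochastic gradient oracle. By Stein's identity (Gaussian integration by parts), $\E_{\z\sim\cN(0,\sigma^2I)}[\g(\x;\z)]=\nabla\tilde f_\sigma(\x)$ for $\g(\x;\z):=\tfrac{\z}{\sigma^2}(f(\x+\z)-f(\x))$~\cite{duchi2015optimal}, and since $\|f-F\|_\infty\le\nu$ the mean of $\g$ is deterministically close to $\nabla\tilde F_\sigma$: $\|\nabla\tilde f_\sigma(\x)-\nabla\tilde F_\sigma(\x)\|\le O(\nu\sqrt d/\sigma)$. With $\sigma^2=\Theta(\epsilon/(\rho d))$ this bias is $O(\epsilon)$ exactly when $\nu=O(\epsilon^{1.5}/d)$, which is precisely the hypothesis of the theorem — so the role of the noise condition is to keep the $1/\sigma^2$ amplification of $\nu$ in the finite-difference estimator below the target accuracy while $\sigma$ is still small enough for the smoothing to be faithful. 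I would then truncate $\z$ to a ball of radius $\Theta(\sigma\sqrt{d\log(dT/\epsilon)})$, bound the induced truncation bias and the second-moment proxy $V:=\E_\z\|\g(\x;\z)\|^2=\tO(\poly(d,1/\epsilon))$ uniformly over the trajectory (using $B$-boundedness and $\ell$-smoothness of $F$ to bound $|F(\x+\z)-F(\x)|$), and note that, given an efficiently preparable discretized truncated-Gaussian state $\sum_{\z}\sqrt{p(\z)}\ket{\z}$, two calls to $U_f$ implement the quantum sampling oracle $\ket{\0}\mapsto\sum_{\z}\sqrt{p(\z)}\ket{\z}\ket{\g(\x;\z)}$ at the current classical iterate $\x$.

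With this oracle, at each iterate I would run quantum multivariate mean estimation to obtain $\hat\g(\x)$ with $\|\hat\g(\x)-\nabla\tilde f_\sigma(\x)\|\le O(\epsilon)$ with probability $1-\delta$, using $\tO(\sqrt{dV}/\epsilon)$ queries to $U_f$ — a quadratic improvement over the $\tO(V/\epsilon^2)$ mini-batch average that the classical algorithm uses. Combining the $O(\epsilon)$ estimation error with the $O(\epsilon)$ smoothing bias, $\hat\g$ is an inexact gradient of $\tilde F_\sigma$ of accuracy $O(\epsilon)$; feeding it into the perturbed gradient descent iteration with inexact gradients~\cite{jin2021nonconvex, jin2018local} finds an $O(\epsilon)$-SOSP of $\tilde F_\sigma$ (hence an $\epsilon$-SOSP of $F$) in $T=\tO(\ell B/\epsilon^2)$ iterations, after a union bound over the $T$ calls (setting $\delta=\Theta(1/T)$, which costs only an extra $\log$ factor per call). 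Multiplying $T$ by the per-iteration cost and substituting the value of $\sigma$ and the resulting bound on $V$ yields the stated $\tO(d^{2.5}/\epsilon^{3.5})$ total query complexity, against $\tO(d^4/\epsilon^5)$ classically.

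The main obstacle I anticipate lies in the middle steps rather than the bookkeeping at the ends. The first delicate point is fixing $\sigma$: it must be small enough that $\tilde F_\sigma$ retains the $\epsilon$-SOSP structure of $F$ yet large enough that the $\nu$-perturbation, amplified by $1/\sigma^2$ in the estimator, stays below the accuracy budget, and extracting the clean threshold $\nu=O(\epsilon^{1.5}/d)$ requires the careful $\ell_2$-versus-$\ell_\infty$ and Gaussian-moment accounting indicated above. The second is controlling the variance proxy $V$ of the truncated stochastic gradient uniformly along the entire run — not only near approximate SOSPs but also at early large-gradient iterates and throughout the perturbation/negative-curvature-escaping phase — and then checking that the inexact-gradient perturbed SGD analysis tolerates precisely the high-probability $\ell_2$ error produced by quantum mean estimation at every step. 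The remaining ingredients (the smoothing identities, Gaussian tail truncation, discretized state preparation, and the union bound) are routine.
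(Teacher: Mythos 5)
Your plan follows the same high-level template as the paper's proof — Gaussian smoothing with radius $\sigma^2 = \Theta(\epsilon/(\rho d))$, stochastic gradient via $\g(\x;\z) = \tfrac{\z}{\sigma^2}(f(\x+\z)-f(\x))$, quantum multivariate mean estimation, then perturbed (stochastic) gradient descent — but you make one structural change that is not innocuous: you treat $\tilde F_\sigma$ (the smoothing of the clean target $F$) as the optimization surrogate, while the paper optimizes $f_\sigma$ (the smoothing of the noisy $f$). The paper works with $f_\sigma$ precisely because the noise inflates its Hessian-Lipschitz constant to $\rho' = O(\rho + \nu/\sigma^3) = O(\rho\sqrt{d})$, which forces the sub-target accuracy on $f_\sigma$ to be $\tilde\epsilon = O(\rho\epsilon/\rho') = O(\epsilon/\sqrt{d})$ (see the paper's inequality $(\rho+\nu/\sigma^3)\tilde\epsilon \le O(\rho\epsilon)$). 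That $\sqrt{d}$ loss in the sub-target is exactly where the extra $d^{1.5}$ in the query count comes from: $T = \tO(\ell'\Delta_f/\tilde\epsilon^2) = \tO(d/\epsilon^2)$ and $m = \tO(\sqrt{d}\sigma_0/\tilde\epsilon) = \tO(d^{1.5}/\epsilon^{1.5})$, so $T\cdot m = \tO(d^{2.5}/\epsilon^{3.5})$. Your surrogate $\tilde F_\sigma$ keeps Hessian-Lipschitz constant $\rho$, so you run PGD to accuracy $\tilde\epsilon = \Theta(\epsilon)$; with $T = \tO(\ell B/\epsilon^2)$ and per-iteration cost $\tO(\sqrt{d\,\Tr\Sigma}/\epsilon)$, $\Tr\Sigma = O((B/\sigma)^2) = O(B^2\rho d/\epsilon)$, your own numbers add up to $\tO(d/\epsilon^{3.5})$, not the claimed $\tO(d^{2.5}/\epsilon^{3.5})$. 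So as written, the accounting does not deliver the stated bound; you assert the final number without the arithmetic following from your derivation. Either an unanticipated obstacle reintroduces the $d^{1.5}$ factor, or you are implicitly claiming a strictly stronger result than the theorem — neither is justified in the proposal.

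There is also a concrete error in the bias estimate, and it is load-bearing in your narrative. You claim $\|\nabla\tilde f_\sigma - \nabla\tilde F_\sigma\| \le O(\nu\sqrt d/\sigma)$ and use this to conclude that $\nu = O(\epsilon^{1.5}/d)$ is "precisely" the noise threshold that keeps the bias at $O(\epsilon)$. This bound is a loose Cauchy–Schwarz estimate; the correct one, used in the paper's \lem{ZeroGaussSmoothProp} (Lemma 13 of \cite{jin2018local}), is $\|\nabla f_\sigma - \nabla F_\sigma\| \le O(\nu/\sigma)$ (project onto any unit direction $\u$ and bound $\nu\,\E|\u^\top\z|/\sigma^2 = O(\nu/\sigma)$). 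With the corrected bias, the $\nu = O(\epsilon^{1.5}/d)$ requirement does not emerge from the term you attribute it to, so your explanation of where the noise threshold comes from is wrong. In the paper the binding constraints on $\nu$ come jointly from controlling $\nu/\sigma^2$ against $\sqrt{\rho\epsilon}$ and $\nu/\sigma^3$ against $\rho$ (the Hessian-level distortions of $f_\sigma$), which is why the paper must work with $f_\sigma$ and pay the $\tilde\epsilon = O(\epsilon/\sqrt d)$ penalty in the first place. If you want to salvage the $\tilde F_\sigma$-centric route you would need to argue carefully that the inexact-gradient PGD analysis on $\tilde F_\sigma$ survives with the bias $O(\nu/\sigma)$ and that the complexity it produces is indeed the one the theorem asserts — as it stands the numbers, thresholds, and error bounds do not line up.
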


\thm{ZeroMean} indicates that the quantum algorithm can find an $\epsilon$-SOSP of $F$ using polynomial number of queries to $f$ with bounded strength of noise $\nu\leq O(\epsilon^{1.5}/d)$. Recall that the state-of-art classical algorithm~\cite{jin2018local} solves this problem with the same noise strength $\nu\leq O(\epsilon^{1.5}/d)$ using $O(d^4/\epsilon^5)$ queries, our algorithm provides a polynomial improvement compared to the best known classical result in terms of both the dimension $d$ and the precision $\epsilon$. 

In \sec{FirstGaussMean}, we consider the problem of finding an $\epsilon$-SOSP of functions $F$ taking queries to the quantum gradient oracle in \eq{QFirstOracle}. We use $\nabla f(\x+\z)-\nabla f(\x)$ as a stochastic gradient estimation, where $\z\sim\cN(0,\sigma^2I)$ is chosen from Gaussian distribution with parameter $\sigma^2$. Similar to the zeroth-order scenario, we implement quantum mean estimation and 
derive the following algorithmic upper bound.
\begin{theorem}[Informal]\label{thm:FirstMean}
Given a target function $F$ and the gradient information of a noisy function $f$ satisfying \assume{FirstProb} with $\tnu\leq O(\epsilon/d^{0.5})$, there exists a quantum algorithm that finds an $\epsilon$-SOSP of $F$ with high probability using $\tO(d^2/\epsilon^3)$ queries to the noisy first-order gradient oracle $U_\g$.
\end{theorem}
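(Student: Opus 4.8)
The plan is to reduce finding an $\epsilon$-SOSP of $F$ to running perturbed gradient descent on the Gaussian-smoothed surrogate $F_\sigma(\x):=\E_{\z\sim\cN(0,\sigma^2 I)}[F(\x+\z)]$ (and analogously $f_\sigma(\x):=\E_\z[f(\x+\z)]$), implementing the gradient queries for $F_\sigma$ via quantum mean estimation of a control-variate stochastic gradient — the first-order analogue of the zeroth-order construction behind \thm{ZeroMean}, in the spirit of Ref.~\cite{jin2018local}. The first step is a \emph{landscape-transfer lemma}: for $\sigma=\Theta(\sqrt{\epsilon/(\rho d)})$ one has, using only that $F$ is $B$-bounded, $\ell$-smooth, and $\rho$-Hessian Lipschitz together with $\E\norm{\z}=\Theta(\sigma\sqrt d)$ and $\E\norm{\z}^2=\sigma^2 d$, that $F_\sigma$ is again $B$-bounded, $\ell$-smooth, and $\rho$-Hessian Lipschitz, and moreover $\norm{\nabla F_\sigma(\x)-\nabla F(\x)}\le\tfrac{\rho\sigma^2 d}{2}\le\tfrac\epsilon2$ and $\norm{\nabla^2 F_\sigma(\x)-\nabla^2 F(\x)}\le\Theta(\rho\sigma\sqrt d)\le\tfrac12\sqrt{\rho\epsilon}$ for all $\x$. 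Hence every $(\epsilon/2,\tfrac12\sqrt{\rho\epsilon})$-SOSP of $F_\sigma$ is an $\epsilon$-SOSP of $F$, so it suffices to optimize $F_\sigma$. Crucially — and unlike the zeroth-order setting — smoothing does not attenuate the oracle noise, since $\norm{\nabla F_\sigma(\x)-\nabla f_\sigma(\x)}_\infty=\norm{\E_\z[\nabla F(\x+\z)-\nabla f(\x+\z)]}_\infty\le\tnu$, and therefore $\norm{\nabla F_\sigma(\x)-\nabla f_\sigma(\x)}\le\sqrt d\,\tnu\le O(\epsilon)$ exactly under the hypothesis $\tnu\le O(\epsilon/\sqrt d)$; this is why the theorem is stated at that noise scale.

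The second step builds the approximate gradient oracle for $F_\sigma$. Since $\nabla f_\sigma(\x)=\E_\z[\nabla f(\x+\z)]$, define the control-variate estimator $G_\x(\z):=\nabla f(\x+\z)-\nabla f(\x)$, which is unbiased for $\nabla f_\sigma(\x)-\nabla f(\x)$; the $L$-smoothness of $f$ gives the pointwise bound $\norm{G_\x(\z)}\le L\norm{\z}$, hence $\E\norm{G_\x(\z)}^2\le L^2\sigma^2 d=\Theta(L^2\epsilon/\rho)$ — a variance controlled by $\sigma$, and truncating to $\norm{\z}\le\tO(\sigma\sqrt d)$ also makes $G_\x$ bounded while shifting its mean by far less than the target accuracy. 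Quantumly, at a point $\x$ I would coherently prepare a discretized truncated-Gaussian register $\sum_\z\sqrt{p(\z)}\ket{\z}$, apply $U_\g$ in superposition to obtain $\sum_\z\sqrt{p(\z)}\ket{\z}\ket{\nabla f(\x+\z)}$, subtract the classically known constant $\nabla f(\x)$ (one exact query), and run quantum multivariate mean estimation to output $\tilde g(\x)$ with $\norm{\tilde g(\x)-\nabla f_\sigma(\x)}\le\delta$ (failure probability driven below $\poly(\epsilon,1/d)$) using $\tO\!\big(\sqrt d\,\sqrt{\E\norm{G_\x}^2}/\delta\big)$ queries to $U_\g$; combined with the previous paragraph, $\norm{\tilde g(\x)-\nabla F_\sigma(\x)}\le\delta+\sqrt d\,\tnu$.

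The third step feeds these inexact gradients into perturbed gradient descent on $F_\sigma$, invoking a robustness-to-inexact-gradient analysis of PGD — the same machinery as in \cor{FirstPGD}, but now applied to the \emph{genuinely} $\ell$-smooth and $\rho$-Hessian-Lipschitz function $F_\sigma$, which has no spurious stationary points. Taking $\delta=\Theta(\epsilon)$ makes the total per-step gradient error $O(\epsilon)$, enough for PGD to reach an $(\epsilon/2,\tfrac12\sqrt{\rho\epsilon})$-SOSP of $F_\sigma$; a union bound over the iterations keeps the overall failure probability small at a poly-logarithmic overhead, and multiplying the (polynomial) iteration count of PGD on $F_\sigma$ by the per-iteration quantum-mean-estimation cost, with $\sigma$ and $\delta$ as above, yields the claimed $\tO(d^2/\epsilon^3)$ queries to $U_\g$. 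The landscape-transfer lemma then converts the output into an $\epsilon$-SOSP of $F$.

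I expect the main obstacle to be that this is the \emph{borderline} noise regime: because smoothing cannot reduce the first-order noise, the argument has essentially no slack — the $\sqrt d$ gap between $\norm{\cdot}_\infty$ and $\norm{\cdot}_2$ must be balanced exactly against the $\epsilon/\sqrt d$ noise budget, the constants in the landscape-transfer inequalities (the $\rho$-Hessian-Lipschitzness of $F_\sigma$ and the $O(\epsilon)$ gradient-perturbation bound) must be tracked carefully enough that the SOSP guarantee for $F_\sigma$ survives transfer back to $F$, and the PGD-with-inexact-gradients argument must tolerate a worst-case (not zero-mean) $\Theta(\epsilon)$ gradient error while still provably escaping saddle points. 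A secondary difficulty is cleanly composing the quantum mean-estimation subroutine with the outer PGD loop: its error is in $\ell_2$ and holds only with high probability, so one must amplify its success probability to beat a union bound over all iterations and verify that Gaussian discretization and tail truncation introduce bias well below $\delta$, so that "$\tilde g(\x)$ is an $O(\epsilon)$-accurate gradient of $F_\sigma$" can be plugged into the PGD analysis as a black box.
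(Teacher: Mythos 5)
Your route diverges from the paper's in a substantive way: you Gaussian-smooth the \emph{target} $F$ to get $F_\sigma$, which inherits $\ell$-smoothness, $\rho$-Hessian-Lipschitzness, and $B$-boundedness cleanly, and you run PGD on $F_\sigma$ to $\Theta(\epsilon)$ accuracy using the quantum-mean-estimated gradient of $f_\sigma$ as a worst-case-$O(\epsilon)$-inexact proxy. The paper instead optimizes the smoothed \emph{noisy} surrogate $f_\sigma$ (\lem{FirstGaussSmoothProp}), whose Hessian-Lipschitz constant is degraded to $O(\rho+\tnu/\sigma^2)=O(\rho\sqrt d)$ by the oracle noise at $\tnu=\Theta(\epsilon/\sqrt d)$; this forces the paper to target an $\epsilon'=\Theta(\epsilon/\sqrt d)$-SOSP of $f_\sigma$ (see Eqs.~\eq{FirstGaussIneq1}--\eq{FirstGaussIneq3}), which in turn inflates both the PGD iteration count to $\tO(d/\epsilon^2)$ and the per-iteration mean-estimation accuracy to $\Theta(\epsilon/\sqrt d)$, producing $\tO(d^2/\epsilon^3)$. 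Your move of absorbing all the noise degradation into a single $O(\epsilon)$ worst-case gradient error on the clean surrogate $F_\sigma$ sidesteps both $\sqrt d$ penalties, and is in that sense a cleaner and sharper decomposition than the one the paper actually uses.

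Precisely because of this, your final sentence --- ``with $\sigma$ and $\delta$ as above, yields the claimed $\tO(d^2/\epsilon^3)$'' --- is not supported by, and in fact contradicts, the rest of your argument. If you actually multiply your ingredients out with $\sigma=\Theta(\sqrt{\epsilon/(\rho d)})$, $\Tr\Sigma_{G_\x}\le L^2\sigma^2 d$, $\delta=\Theta(\epsilon)$, and PGD-on-$F_\sigma$ iteration count $T=\tO(\ell B/\epsilon^2)$, you get a per-step cost of $m=\tO\big(\sqrt d\cdot L\sigma\sqrt d/\delta\big)=\tO\big(L\sqrt{d/(\rho\epsilon)}\big)$ and a total of $\tO\big(\sqrt d/\epsilon^{2.5}\big)\cdot\poly(\ell,B,L,\rho)$ --- stronger (in both $d$ and $\epsilon$) than the theorem's $\tO(d^2/\epsilon^3)$, hence still sufficient to prove the statement but not equal to it. You should either (i) trace through the constant-factor budget carefully --- the total worst-case gradient error $\delta+\sqrt d\tnu$ must beat the $\epsilon/20$-type tolerance in the inexact-gradient PGD lemma after rescaling $\epsilon\mapsto\Theta(\epsilon)$ in the SOSP target for $F_\sigma$, which leaves essentially zero slack in the hidden constant of $\tnu\le O(\epsilon/\sqrt d)$ --- and, if everything checks out, record the improved bound; or (ii) identify what obstruction, if any, makes the $F_\sigma$ route fail and the paper's $f_\sigma$ route necessary. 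As written, the proposal asserts a complexity it does not derive.
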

The tolerance on $\tnu$ and the query complexity is larger compared to \thm{ZeroMean}, where we access a zeroth-order oracle. The best-known classical algorithm finding an $\epsilon$-SOSP under \assume{FirstProb} requires $O(d^3/\epsilon^4)$ queries. Hence, this quantum algorithm also provides a polynomial reduction on the sample complexity compared to the classical result.

\subsubsection{Lower bounds}

\paragraph{Large noise: quantum query complexity lower bound in $d$.} In this work, we also provide lower bounds concerning $d$ on the query complexity required for any classical and quantum algorithms under \assume{ZeroProb} and \assume{FirstProb}. In particular, we construct a hard instance inspired by Ref.~\cite{jin2018local} (as shown in \fig{MainLowerHard} (a)): we define a target function $F$ in a hypercube and use the hypercube to fill the entire space $\mathbb{R}^d$. By adding noise to the zeroth- or first-order oracle $f$, we can erase the information of $F$ such that a limited number of classical or quantum queries cannot find any $\epsilon$-SOSPs with high probability.

For a function pair $(F,f)$ satisfying \assume{ZeroProb}, our first result in this part is the following quasi-polynomial lower bound.
\begin{theorem}[Informal]\label{thm:ZeroMeanLower}
We can find functions $F$ and $f$ satisfying \assume{ZeroProb} with $\nu=\tTheta(\epsilon^{1.5}/d)$ such that any \textit{quantum} algorithm requires at least $\Omega(d^{\log d})$ queries to $U_f$ to find any $\epsilon$-SOSP of $F$ with high probability.
\end{theorem}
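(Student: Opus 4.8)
\noindent\emph{Overall strategy.} The plan is to turn the geometric hard instance into an instance of \emph{unstructured search} over a quasi-polynomially large domain and then quote the $\Omega(\sqrt N)$ quantum query lower bound. Fix $h=2\log d$ (rounded to an integer) and index a family of instances by a ``hidden path'' $\bm u=(u_1,\dots,u_h)\in[d]^h$. Tile a bounded region $R\subseteq\R^d$ by small congruent hypercubes; starting from a fixed root cell, the coordinate $u_j\in[d]$ selects one of the $d$ axis directions and moves one cell along it, so $\bm u$ determines a terminal cell $C_{\bm u}$, and $\bm u\mapsto C_{\bm u}$ is injective into a collection of $N=d^h=d^{2\log d}$ cells (this is the instance of \fig{MainLowerHard}(a), adapted from \cite{jin2018local}). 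Let $g$ be a fixed reference function that is $B$-bounded, $\ell$-smooth, $\rho$-Hessian Lipschitz, has $\norm{\nabla g}\ge 2\eps$ everywhere on $R$ (hence no $\eps$-SOSP), and whose restriction to every cell is a translate of a single profile, so $g$ reveals nothing about $\bm u$. Take $F_{\bm u}:=g+B_{\bm u}+\Phi_{\bm u}$, where $B_{\bm u}$ is a cubic-type bump supported in $C_{\bm u}$ that cancels $\nabla g$ at one point and makes it the \emph{unique} $\eps$-SOSP $\x^*_{\bm u}\in C_{\bm u}$ of $F_{\bm u}$, and $\Phi_{\bm u}$ is a gentle funnel along the path from the root to $C_{\bm u}$ whose sole purpose is to keep the landscape of $F_{\bm u}$ clean (no spurious stationarity, $\rho$-Hessian Lipschitz across cell boundaries). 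Define the noisy oracle by $f_{\bm u}:=F_{\bm u}-\Phi_{\bm u}=g+B_{\bm u}$; then $\norm{F_{\bm u}-f_{\bm u}}_\infty=\norm{\Phi_{\bm u}}_\infty$, and tuning the cell size together with the amplitudes of $B_{\bm u}$ and $\Phi_{\bm u}$ against the $\ell$- and $\rho$-bounds fixes $\norm{\Phi_{\bm u}}_\infty=\tTheta(\eps^{1.5}/d)=:\nu$ --- the $\eps^{1.5}$ being the natural depth of an SOSP-creating cubic funnel at curvature scale $\sqrt{\rho\eps}$, and the $1/d$ the price of packing $d$ distinguishable branch directions into a single node while respecting Hessian Lipschitzness. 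Each pair $(F_{\bm u},f_{\bm u})$ then satisfies \assume{ZeroProb}.

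\noindent\emph{Reduction and lower bound.} The key point is that $f_{\bm u}$ depends on $\bm u$ \emph{only} through which cell carries the bump: $f_{\bm u}(\x)=g(\x)+B_{\bm u}(\x)\,\mathbbm{1}[\,\bm u=\mathrm{decode}(\mathrm{cell}(\x))\,]$, a value determined by one bit that a single query to a search oracle $O_{\bm u}$ (flagging the marked string $\bm u\in[d]^h$) can supply, plus $\bm u$-independent classical computation. Hence any $T$-query quantum algorithm using $U_{f_{\bm u}}$ is simulated with $O(1)$ overhead per query by an algorithm making $O(T)$ queries to $O_{\bm u}$; and since the unique $\eps$-SOSP of $F_{\bm u}$ lies in $C_{\bm u}$, decoding the cell of the algorithm's output recovers $\bm u$. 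So a routine that finds an $\eps$-SOSP of $F_{\bm u}$ with probability $\ge 2/3$ over uniformly random $\bm u$ solves unstructured search over $N=d^{2\log d}$, which requires $\Omega(\sqrt N)=\Omega(d^{\log d})$ quantum queries. Equivalently --- and this is the version I would actually write out, in case the clean one-bit simulation needs care near the boundary of $R$ --- one couples the algorithm's run on $U_{f_{\bm u}}$ with its run on the path-free oracle $U_g$, writes $U_{f_{\bm u}}-U_g$ as an operator supported on $C_{\bm u}$, bounds the post-$T$-query trace distance by $\sum_{t\le T}\norm{P_{C_{\bm u}}\ket{\psi_t}}$, and averages over $\bm u$ using disjointness of the $N$ cells to get an average success advantage of $O(T/\sqrt N)$, which must be $\Omega(1)$.

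\noindent\emph{Main obstacle.} I expect essentially all the difficulty to sit in the construction: one must exhibit $F_{\bm u}$ with the three quantitative regularity properties, a genuinely unique $\eps$-SOSP, \emph{and} a noise budget $\nu=\tTheta(\eps^{1.5}/d)$ that is large enough for $f_{\bm u}$ to erase $\Phi_{\bm u}$ completely yet small enough that \assume{ZeroProb} holds --- the funnel $\Phi_{\bm u}$ has to be simultaneously invisible at scale $\nu$ and load-bearing for the smoothness of $F_{\bm u}$ over $\Theta(\log d)$ concatenated cells, and the bump $B_{\bm u}$ has to produce an $\eps$-SOSP without creating a second one or violating the Lipschitz bounds. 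Once the instance is in hand, the lower bound itself is a black-box reduction to Grover search; the only quantum-specific step, the hybrid/coupling bound above, is standard. Finally, the same instance should adapt to the neighbouring rows of \tab{main1} by rescaling $h$ and the cell geometry: pushing $h$ down to $\Theta(1)$ and $\nu$ up toward $\Theta(\eps^{1.5})$ yields information-theoretic unsolvability (as in \thm{ZeroNoLower}), while enlarging $h$ trades the quasi-polynomial bound for the exponential classical regime.
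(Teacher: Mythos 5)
Your high-level strategy --- hide the unique $\eps$-SOSP in a region that constitutes a $d^{-\Theta(\log d)}$ fraction of the domain, reduce to unstructured search, and apply the quantum search lower bound --- is exactly what the paper does, and your hybrid/coupling bound at the end is the same BBBV-style argument that underlies the $\Omega(\sqrt{N/N_1})$ step there. But your concrete hard instance is genuinely different from the paper's, and I believe it cannot achieve the advertised noise level. The paper hides a \emph{direction} $\v\in\mathbb{S}^{d-1}$: the non-informative region is the thin band $S_\v=\{\x:\langle\sin\x,\v\rangle\leq\log d/\sqrt d\}$, which by concentration of measure covers a $1-d^{-\Omega(\log d)}$ fraction of the ball, and the erased perturbation $h(\sin\x)\propto h_1(\langle\sin\x,\v\rangle)$ vanishes \emph{quadratically} at the band's center, so the noise budget is $\sup_{S_\v}|h|\sim(\log d/\sqrt d)^2=\tO(1/d)$ scale-free, i.e.\ $\tO(\eps^{1.5}/d)$ after scaling. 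The $1/d$ is thus a direct consequence of the $\Theta(1/\sqrt d)$ band width forced by sphere concentration, squared.

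Your construction has no such mechanism, and your justification for the $1/d$ (``the price of packing $d$ distinguishable branch directions into a single node'') does not hold up quantitatively. Your erased object $\Phi_{\bm u}$ must span a path of $\Theta(\log d)$ concatenated cells, each of side $\Theta(\sqrt{\eps/\rho})$ so that a cubic bump can produce an $\eps$-SOSP without violating $\rho$-Hessian-Lipschitzness. For $\Phi_{\bm u}$ to actually prevent spurious stationary points along that path (it must, or else $\Phi_{\bm u}$ could be dropped, giving $\nu=0$ and contradicting the matching poly-log-query upper bound of \thm{ZeroJordan}), its curvature must be comparable to $\sqrt{\rho\eps}$ or its slope to $\eps$ on that path; with Hessian-Lipschitzness $\rho$ this forces $\|\Phi_{\bm u}\|_\infty=\Theta(\sqrt{\eps^3/\rho}\cdot\poly\log d)=\tTheta(\eps^{1.5})$, not $\tTheta(\eps^{1.5}/d)$. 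In other words, your instance lands at the \emph{unsolvability} threshold of \thm{ZeroNoLower}, a factor of $d$ too large. Two secondary gaps: a $B$-bounded function on all of $\R^d$ cannot satisfy $\|\nabla g\|\geq 2\eps$ everywhere, so you must extend $g$ outside $R$ (the paper does this via periodization with $\|\sin\x\|^2$ and the hypercube padding) and argue no spurious SOSPs appear there; and $\bm u\mapsto C_{\bm u}$ is not injective (two paths with the same multiset of moves hit the same cell), though this only shaves a $\poly\log d$ factor off $\log N$ and would not change the final asymptotics. The first two issues are the ones that would need a new idea; I'd suggest replacing the path/tree structure with the paper's hidden-direction-plus-band construction, which is precisely what makes the $1/d$ fall out.
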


The detailed proof for \thm{ZeroMeanLower} is a bit technically involved and is left to \sec{DLowerQuasiPoly}. Here, we provide the intuition for the proof. To prove this lower bound, we define a function $F$ in a hyperball $\mathbb{B}(\0,r)$ and embed the hyperball into a hypercube, with which we can cover the whole space. Next, we introduce noise to create $f$ with a non-informative area around $\0$ (in the sense that any query to this area will obtain no information about any SOSPs of the target function $F$). Then, we transfer this problem into an unstructured search problem. The final lower bound for nonconvex optimization is obtained by applying the quantum lower bound for unstructured search. We mention that the $\epsilon$ and the $d$ dependence for $\nu$ in \thm{ZeroMeanLower} are tight up to logarithmic factors. The classical version of \thm{ZeroMeanLower} is proved in Ref.~\cite{jin2018local}. The parallelism in quantum algorithms possesses the potential to query different points in superposition. However, \thm{ZeroMeanLower} demonstrates that the same query complexity lower bound holds even for quantum algorithms. 

If the noise $\nu$ keeps increasing, we can further prove the following lower bound in \sec{DLowerZeroNo} that prevents any quantum algorithm from finding any $\epsilon$-SOSPs of target function $F$:
\begin{theorem}[Informal]\label{thm:ZeroNoLower}
For any \textit{quantum} algorithm, there exists a pair of functions $(F,f)$ satisfying \assume{ZeroProb} with $\nu=\tTheta(\epsilon^{1.5})$ such that it will fail, with large probability, to find any $\epsilon$-SOSP of $F$ given access to $f$.
\end{theorem}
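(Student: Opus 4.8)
\emph{Strategy.}  The plan is to prove the statement by exhibiting, for any given algorithm, a whole family of target functions $\{F_{\x_0}\}_{\x_0}$ that are \emph{all consistent with a single, $\x_0$-independent noisy oracle}, and whose $\epsilon$-SOSPs occupy disjoint (translated) regions; since no query — quantum or classical, adaptive or not, and however many — can then reveal $\x_0$, a simple averaging argument over $\x_0$ shows that any output is correct for only a vanishing fraction of the family.  The key structural freedom is that \assume{ZeroProb} imposes the smoothness and boundedness conditions only on $F$, not on $f$: we are allowed to take $f$ to be an outright constant, provided every $F_{\x_0}$ stays within $\nu$ of that constant in $\ell_\infty$.

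\emph{The hard function.}  Following the idea of Ref.~\cite{jin2018local}, I would build a ``bump'' $H\colon\R^d\to\R$ that is periodic under $L\Z^d$ for a scale $L=\Theta(\sqrt{\epsilon/\rho})$, is $C^2$, $B$-bounded, $\ell$-smooth, $\rho$-Hessian-Lipschitz, and has total oscillation $\operatorname{osc}(H)=\Theta(\epsilon^{1.5})$, with the crucial property that within each period the set where $H$ could even contain an $\epsilon$-SOSP — i.e.\ the ``benign'' set $\{\x:\lambda_{\min}(\nabla^2H(\x))\ge-\sqrt{\rho\epsilon}\}$ — is contained in one small ball $\mathbb{B}(\cdot,r)$ around the minimum, while the complement is everywhere a strict saddle.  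The scaling $\operatorname{osc}(H)\asymp\epsilon^{1.5}$ is not negotiable and is exactly why $\nu=\tTheta(\epsilon^{1.5})$ is the threshold: to hide a valley while staying within $\nu$ of a constant, the surrounding ``active'' region can only be as large as the distance over which a curvature of magnitude $\sqrt{\rho\epsilon}$, integrated twice, accumulates a function-value change of $\nu$, namely $\Theta(\sqrt{\nu/\sqrt{\rho\epsilon}})=\Theta(\sqrt\epsilon)$ when $\nu=\Theta(\epsilon^{1.5})$; this is the same $\epsilon^{1.5}$ that measures the function-value drop obtainable from a saddle of curvature $\sqrt{\rho\epsilon}$.

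\emph{Reduction to ``search with no signal''.}  I would then set $F_{\x_0}(\x):=H(\x-\x_0)$ for $\x_0$ drawn uniformly from the fundamental cell $[0,L)^d$, and $f\equiv h$ with $h$ the midrange of $H$, so that $\norm{F_{\x_0}-f}_\infty=\operatorname{osc}(H)/2=:\nu=\Theta(\epsilon^{1.5})$ for \emph{every} $\x_0$, making $U_f$ literally independent of $\x_0$.  Consequently, for any quantum algorithm the entire run — all (possibly adaptive, possibly superposed, arbitrarily many) queries and the final output $\hat\x$ — has a distribution that does not depend on $\x_0$.  On the other hand, by construction the set of $\epsilon$-SOSPs of $F_{\x_0}$ is nonempty (it contains the translated minimum) but sits inside $\mathbb{B}(\x_0,r)+L\Z^d$, which has density $c_d(r/L)^d$ in $\R^d$; choosing $r$ to be a small constant multiple of $L$ makes this density $2^{-\Omega(d)}$, because a ball occupies an exponentially small fraction of the cube it is inscribed in (equivalently, one takes $H$ to be a product of one-dimensional gadgets, so the benign set is a product of short intervals).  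Hence $\Pr_{\x_0,\hat\x}[\hat\x\text{ is an }\epsilon\text{-SOSP of }F_{\x_0}]\le c_d(r/L)^d$, and averaging over $\hat\x$ yields a specific $\x_0$ for which the algorithm fails with probability $1-2^{-\Omega(d)}$, no matter how many queries it makes — which is the claimed information-theoretic impossibility.

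\emph{Main obstacle.}  Everything after the construction is a one-line information-theoretic argument; the real work is the bump $H$.  The tension is sharp: forcing $\lambda_{\min}(\nabla^2H)$ strictly below $-\sqrt{\rho\epsilon}$ over all but a small ball forces $H''$ to swing back up (curvatures integrate to zero over a period), and that swing is bounded below simultaneously by $|H'''|\le\rho$ and by $\operatorname{osc}(H)\le 2\nu$ — in one dimension these constraints already pin $r/L$ to a constant, so one genuinely needs the $d$-dimensional (product-of-gadgets, or inscribed-ball) geometry to drive the SOSP-set density, and hence the algorithm's success probability, down to $2^{-\Omega(d)}$.  Verifying that such an $H$ simultaneously meets $B$-boundedness, $\ell$-smoothness, and $\rho$-Hessian-Lipschitzness with the stated constants, and that its benign set is exactly the intended ball, is the one calculation that needs care, and is where the construction of Ref.~\cite{jin2018local} is used.
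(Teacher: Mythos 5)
Your proposal is correct and rests on exactly the same information-theoretic mechanism as the paper's proof of \thm{ZeroNoLowerF}: construct a family of targets parameterized by a hidden quantity, choose the noisy oracle $f$ to be \emph{identically the same function} for every member of the family, observe that the algorithm's output distribution is then independent of the hidden quantity, and finish by showing that the $\epsilon$-SOSP set of any single target has exponentially small density so that a fixed output distribution succeeds only with probability $2^{-\Omega(d)}$. Where you differ from the paper is in the geometry of the family: the paper reuses the hard instance of Ref.~\cite{jin2018local}, $F(\x)=h(\sin\x)+\|\sin\x\|^2$, hides the direction $\v\in\mathbb{S}^{d-1}$, takes $f(\x)=\|\sin\x\|^2$ (not constant, but $\v$-independent), and gets the $2^{-\Omega(d)}$ density from measure concentration on the sphere applied to the complement of the band $S_\v$; you instead hide a translation $\x_0\in[0,L)^d$, take $f$ literally constant, and get the density bound from a product-of-1D-gadgets geometry. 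Your construction buys simplicity and transparency of the information argument (nothing to check about $f$ other than $\|F_{\x_0}-f\|_\infty$); the paper's buys the convenience of inheriting all the smoothness/boundedness/SOSP-location verifications from \lem{PropHardZeroNo}.

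Two points worth tightening if you wanted to carry this through. First, your 1D gadget cannot have period exactly $L=\sqrt{\epsilon/\rho}$: since $H_1''$ is $\rho$-Lipschitz, periodic with period $L$, and mean zero, its range is at most $\rho L/2=\sqrt{\rho\epsilon}/2$, which forces $H_1''\geq-\sqrt{\rho\epsilon}$ everywhere (benign fraction $p=1$). Taking $L=C\sqrt{\epsilon/\rho}$ for a fixed constant $C>2\pi$ fixes this — e.g.\ $H_1''(x)=a\cos(2\pi x/L)$ with $\sqrt{\rho\epsilon}<a\leq\rho L/(2\pi)$ is $\rho$-Lipschitz, yields oscillation $2aL^2/(2\pi)^2=\Theta(\epsilon^{1.5}/\sqrt{\rho})$ as required, and has benign fraction $p=1-\tfrac{1}{\pi}\arccos(-\sqrt{\rho\epsilon}/a)<1$, giving density $p^d$. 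Second, your $B$-bound is $O(d)$ because the product sums $d$ one-dimensional gadgets; this is fine here (the paper's instance is also $O(d)$-bounded and \thm{ZeroNoLowerF} only requires $\epsilon\leq\epsilon_0=\Theta((B^2\rho/d^2)^{1/3})$), but is worth noting explicitly since it constrains how $\epsilon_0$, $B$, and $d$ interrelate.
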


Despite the quantum lower bound, we also propose a classical lower bound concerning nonconvex optimization using zeroth-order oracle with noise strength $\nu=O(1/\poly(d))$.
\begin{theorem}\label{thm:ZeroClassLowerJordan}
For any $\epsilon\leq \epsilon_0<1$, where $\epsilon_0$ is some constant, there exists a function pair $(F,f)$ satisfying \assume{ZeroProb} with $\nu= \Omega(1/\poly(d))$, such that any \textit{classical} algorithm that outputs an $\epsilon$-SOSP of $F$ with high probability requires at least $\Omega(d/\log d)$ classical queries to the noisy function $f$.
\end{theorem}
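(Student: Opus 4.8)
The plan is to build a hard distribution over function pairs $(F,f)$ that is indistinguishable to any classical algorithm making few queries, in the spirit of the lower-bound construction sketched for \thm{ZeroMeanLower} but adapted to the regime $\nu=\Omega(1/\poly(d))$. First I would fix a convex region, say a hyperball $\mathbb{B}(\0,r)$ with $r=\Theta(1)$ (later tiled over $\R^d$ by a hypercube so that the global boundedness, smoothness, and Hessian-Lipschitz conditions of \assume{ZeroProb} are met), and inside it place a "planted direction" $\u$ drawn uniformly from a set of $\Omega(d)$ nearly-orthogonal unit vectors (e.g.\ $\{\pm e_i\}$ or a packing of the sphere). The target $F=F_\u$ would be a smooth nonconvex template that has a unique $\epsilon$-SOSP located at a point that depends on $\u$ — concretely, take $F_\u$ to be a slightly tilted double-well / Gaussian-bump profile along direction $\u$ plus a confining quadratic in the orthogonal directions, scaled so that $F_\u$ is $B$-bounded, $\ell$-smooth, $\rho$-Hessian Lipschitz, and so that the only point with small gradient and almost-PSD Hessian sits near $\x_\u:=c\,\u$ for a fixed constant $c$.

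Next I would define $f$ to erase the information about $\u$ outside a thin neighborhood of the planted point. The key quantitative point is a \emph{budget} lemma: because $\|F_{\u}-F_{\u'}\|_\infty$ over the "bulk" region is itself only $O(1/\poly(d))$ for the scaled template (the tilt that distinguishes two planted directions is small), one can choose a single "reference" function $F_0$ (say the $\u$-averaged or the untilted template) and set $f:=F_0$ on the bulk, so that $\|F_\u - f\|_\infty \le \nu$ with $\nu=\Omega(1/\poly(d))$ as claimed, while $f$ carries \emph{zero} information about $\u$ away from a small ball $\mathbb{B}(\x_\u,\delta)$; inside that ball $f$ can just equal $F_\u$. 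Thus any query made outside $\bigcup_{\u}\mathbb{B}(\x_\u,\delta)$ returns $F_0(\x)$ regardless of $\u$, and the only way to learn $\u$ — hence to locate the unique $\epsilon$-SOSP — is to query inside the correct small ball. Since the $\Omega(d)$ candidate balls are disjoint (by near-orthogonality of the $\u$'s and small $\delta$), this reduces the optimization problem to unstructured search over $\Omega(d)$ hidden alternatives.

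The final step is the information-theoretic lower bound for that search problem. Here, unlike the quantum case, I would use a plain adversary / indistinguishability argument: for a deterministic algorithm, the adversary answers every query with $F_0$ until the algorithm has probed all but one candidate ball, forcing $\Omega(d)$ queries before it can even name the planted direction; a uniform-prior averaging (Yao's principle) then gives the same $\Omega(d)$ bound for randomized algorithms up to the failure probability. A mild refinement — arranging the construction so that a single query can "rule out" at most $O(\log d)$ candidates, e.g.\ by packing the planted points so that no ball of the relevant radius meets more than $O(\log d)$ of the $\mathbb{B}(\x_\u,\delta)$'s, or equivalently by using a code-based family of size $d^{\Theta(1)}$ with the right distance — yields the stated $\Omega(d/\log d)$. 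I would also check that the construction is uniform in $\epsilon$ for all $\epsilon\le\epsilon_0$: the template scale is chosen independent of $\epsilon$ (the SOSP is genuinely unique and "deep"), so the same family works across the whole range, which is exactly the statement needed to feed into the last rows of \tab{main1}.

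The main obstacle I anticipate is the simultaneous bookkeeping in the middle step: one must keep $\nu=\Omega(1/\poly(d))$ \emph{large} (so it dominates the inter-template discrepancy everywhere in the bulk) while still guaranteeing that $F_\u$ retains a genuine $\epsilon$-SOSP that is \emph{unique} and survives the global tiling — i.e.\ that gluing the hypercubes does not create spurious SOSPs on cell boundaries and that the $B,\ell,\rho$ constants stay $O(1)$ after all the rescalings. Getting the template profile and the radii $r,\delta$ to satisfy all of these at once, and verifying the "no query reveals more than $O(\log d)$ candidates" packing property, is where the real work lies; the search lower bound itself is then standard.
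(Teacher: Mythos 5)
Your proposal has a genuine gap, and the route it takes is fundamentally different from the paper's. The paper proves \thm{ZeroClassLowerJordan} (\sec{DLowerClassical}) by a pure bit-counting argument, not a hide-the-needle / unstructured-search reduction: using the planted-direction template $\tilde F$ from \eq{FDefScale}, it observes that (i) locating an $\epsilon$-SOSP pins down the hidden unit vector $\v$ to precision $\poly(\epsilon)$ in infinity norm, which takes $\Omega(d\log(1/\epsilon))$ bits, and (ii) a single classical query to an evaluation oracle with additive noise $\nu=\Omega(1/\poly(d))$ can reveal only $O(\log(1/\nu))=O(\log d)$ bits about $F$ (the adversary rounds $F(\x)$ to the nearest $\nu$). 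Dividing gives $\Omega(d/\log d)$. No non-informative region, no disjoint balls, no adversary that withholds the planted direction until $\Omega(d)$ probes.

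The reason you cannot replace this with your planted-ball construction is a scale mismatch between the noise budget and the size of an $\epsilon$-SOSP feature. You want to set $f\equiv F_0$ on the bulk so that $\|F_\u-F_0\|_\infty\le\nu$ outside a small ball $\mathbb{B}(\x_\u,\delta)$ and place the unique $\epsilon$-SOSP of $F_\u$ inside that ball. But a bump that differs from $F_0$ by at most $\nu=O(1/\poly(d))$, together with $\ell$-smoothness and $\rho$-Hessian-Lipschitzness, forces $\|\nabla F_\u-\nabla F_0\|=O(\sqrt{\nu\ell})$ and $\|\nabla^2 F_\u-\nabla^2 F_0\|=O((\nu\rho^2)^{1/3})$ everywhere on the bulk, both of which are vastly smaller than the $\epsilon$ and $\sqrt{\rho\epsilon}$ thresholds in \eq{SOSPDef} when $\nu=1/\poly(d)$ and $\epsilon$ is a constant in $(0,\epsilon_0]$. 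Consequently the $\epsilon$-SOSP status of every bulk point is the same for $F_\u$ and $F_0$: you cannot both (a) make $F_0$ featureless on the bulk so that $F_\u$'s only SOSP is hidden, and (b) have $F_\u$ transition across the ball boundary to create an SOSP, because the transition itself changes gradients/Hessians by amounts far exceeding what a $\nu$-budget allows. (The radius $\delta$ would have to be $\Omega(\sqrt{\epsilon/\rho})$, forcing the perturbation height to be $\Omega(\epsilon^{1.5}/\sqrt{\rho})\gg\nu$.) This is exactly the regime where the non-informative-region trick of \thm{ZeroMeanLower} works — there $\nu=\tTheta(\epsilon^{1.5}/d)$ is large enough to mask the bump and yields a quasipolynomial bound — but it collapses once $\nu$ drops to $1/\poly(d)$, which is why the paper switches to the information-theoretic argument in this regime. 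Your concluding worry about ``keeping $\nu$ large while guaranteeing a genuine $\epsilon$-SOSP'' is precisely the obstruction, and it is not a bookkeeping issue that careful tuning can resolve.

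One smaller point: even if the construction worked, your refinement to ``rule out at most $O(\log d)$ candidates per query'' is unnecessary (disjoint balls already give at most $O(1)$ eliminations per query, so you would get $\Omega(d)$, which is stronger than what the theorem states). That you felt the need for it is a sign the mechanism producing the $\log d$ factor was unclear; in the paper's proof the $\log d$ comes cleanly from the $\log(1/\nu)$ bits revealed per query, not from any geometry of the candidate set.
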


We prove \thm{ZeroClassLowerJordan} using an information-theoretic argument inspired by Ref.~\cite{chakrabarti2020optimization}. \thm{TinyAGD}, \thm{ZeroJordan}, and \thm{ZeroClassLowerJordan} establish the exponential separation between classical and quantum query complexities required for nonconvex optimization using oracles with noise $\nu=\tOmega(\poly(\epsilon,1/d))$. This separation originates from the Jordan's gradient estimation algorithm~\cite{jordan2005fast}. Classically, querying the evaluation oracle can only provide information at one point. Quantumly, however, one can take the superposition on different points and query the quantum evaluation oracle in parallel~\cite{gilyen2019optimizing,chakrabarti2020optimization}.

Moreover, we extend the above lower bound to function pairs $(F,f)$ satisfying \assume{FirstProb} in \sec{DLowerFirst}. If the noise increases by even a factor that is logarithmic in $d$ from $\tTheta(\epsilon/d^{0.5})$, we can prove an exponential lower bound for any classical or quantum algorithm through a similar construction of hard instance used in \thm{ZeroMeanLower} (as shown in \fig{MainLowerHard} (b)). Moreover, if the noise increases to $\Omega(\epsilon)$, there exists a similar hard instance with \thm{ZeroNoLower} that prevents any classical or quantum algorithm from finding any $\epsilon$-SOSP of $F$. Formally, we can extend \thm{ZeroMeanLower} and \thm{ZeroNoLower} in the context of \assume{FirstProb}:
\begin{theorem}[Informal]\label{thm:FirstLower}
We can find functions $F$ and $f$ satisfying \assume{FirstProb} with $\tnu=\tTheta(\epsilon/d^{0.5})$ such that any classical or quantum algorithm that finds an $\epsilon$-SOSP of $F$ with high probability requires at least $\Omega(d^{\log d})$ queries to $U_\g$. Moreover, for any classical or quantum algorithm, we can find functions $F$ and $f$ satisfying \assume{FirstProb} with $\tnu=\Theta(\epsilon)$ such that it will fail with high probability.
\end{theorem}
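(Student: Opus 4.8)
The plan is to mirror the construction of \thm{ZeroMeanLower} (and its large-noise analogue \thm{ZeroNoLower}) in the first-order setting, and then to establish that the resulting lower bounds transfer verbatim from quantum to classical, rather than the other way around. First I would define a base function $F$ on a small hyperball $\mathbb{B}(\0,r)$ that is $B$-bounded, $\ell$-smooth, and $\rho$-Hessian Lipschitz and whose only $\epsilon$-SOSP lies near a designated ``hidden'' point; then tile $\R^d$ with copies of a hypercube circumscribing this ball so that the global $F$ inherits the smoothness constants (taking care that $F$ and $\nabla F$ match smoothly across cube boundaries — this is where the $B$-boundedness plus a smooth cutoff is used). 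The key idea is then to choose the noisy gradient oracle $\g=\nabla f$ so that in a ``non-informative'' region near $\0$ the reported gradient equals a fixed vector field independent of which copy/branch contains the true SOSP, while $\|\nabla F-\nabla f\|_\infty\le\tnu$ forces the perturbation budget. Tracking the scaling: a Hessian-Lipschitz function on a ball of radius $r$ has gradient variation $O(\ell r)$ and the ``depth'' of a hidden negative-curvature direction scales like $\sqrt{\rho\epsilon}\cdot r$, while the cube side and the number of indistinguishable branches are tuned so that distinguishing them is exactly an unstructured search over $\sim d^{\log d}$ alternatives; working out that the admissible $\tnu$ for this to close is $\tTheta(\epsilon/\sqrt d)$ is the delicate bookkeeping step.

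For the quasi-polynomial part, I would reduce the optimization problem to unstructured search: encode the location of the hidden SOSP as a string in an alphabet of size $\poly(d)$ over $\Theta(\log d)$ coordinates (so $d^{\Theta(\log d)}$ possibilities), argue that any query to $U_\g$ outside the tiny informative neighborhood of the true branch returns a branch-independent answer, and hence that an algorithm making $T$ queries can be simulated with $T$ queries to a search oracle hiding one marked item among $d^{\Theta(\log d)}$. The BBBV/Grover lower bound then gives $T=\Omega\!\bigl(\sqrt{d^{\log d}}\bigr)=\Omega(d^{\log d})$ after adjusting the constant in the exponent; crucially this bound holds for quantum queries, and since any classical algorithm is a special case of a quantum one, the same bound holds classically — so I get both the classical and quantum statements at once, in contrast to the zeroth-order case where \thm{ZeroClassLowerJordan} had to be proved separately. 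For the large-noise part $\tnu=\Theta(\epsilon)$, I would push the same construction until the perturbation budget is large enough to make the reported gradient in the \emph{entire} tiled space equal to a single fixed vector field $\nabla h$ for a smooth $h$ with no $\epsilon$-SOSP in the queried region (e.g. a bounded function with uniformly bounded-away-from-zero gradient on the relevant scale, obtained by smoothly interpolating a linear slope), so that $f$ carries literally no information about $F$; then no algorithm, quantum or classical, with any number of queries can locate an $\epsilon$-SOSP of $F$, because the posterior over the hidden branch is never updated.

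The main obstacle I anticipate is the geometry of the gluing plus the simultaneous constraints: I need $F$ globally $\ell$-smooth and $\rho$-Hessian Lipschitz with an $\epsilon$-SOSP that is genuinely ``hidden'' (no $\epsilon$-SOSP visible from outside its own branch), while $f$ must be $L$-smooth (required by \assume{FirstProb}) and within $\tnu$ of $\nabla F$ in $\ell_\infty$ — and the $\ell_\infty$ norm over both $\x$ and the $d$ coordinates is much weaker than an $\ell_2$ bound, which is exactly what makes $\tnu=\Theta(\epsilon/\sqrt d)$ rather than $\Theta(\epsilon)$ the right threshold and what must be exploited carefully. A secondary technical point is making the ``non-informative region'' argument airtight against adaptive quantum queries in superposition: the standard fix is a hybrid argument bounding the query magnitude (amplitude mass) that any fixed-query algorithm can place on the informative set, and since outside that set the oracle is branch-independent, the total variation between the algorithm's final states across branches is $O(T/\sqrt{N})$ with $N=d^{\Theta(\log d)}$, giving the bound. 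Once the instance is built, both sub-statements follow by essentially the same packing-plus-indistinguishability argument at two different noise scales, so the only real work is the construction itself.
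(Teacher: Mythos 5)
Your proposal matches the paper's approach almost exactly: both construct a hard instance by tiling $\mathbb{R}^d$ with hypercubes, carve out a ``non-informative'' band around $\0$ whose gradients are independent of the hidden direction $\v$, handle the $L$-smoothness requirement on $f$ by inserting a smooth transition layer (the paper's $h_3$ in \eq{fDefinFirst}), reduce finding the SOSP to unstructured search over the $N_1/N = O(d^{-\log d})$ fraction of informative directions on the discretized sphere, and invoke the quantum search lower bound to obtain $\Omega(d^{\log d})$, which automatically covers classical algorithms as well. You also correctly isolate the $\ell_\infty$-over-coordinates bookkeeping as the source of the $\tnu=\tTheta(\epsilon/\sqrt d)$ scaling. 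The one small divergence is in the unsolvability claim: you propose pushing $\tnu$ up to $\Theta(\epsilon)$ so that $\nabla f$ is a fixed branch-independent field, whereas the paper's formal \thm{FirstLowerF1} reaches the same conclusion already at $\tnu=\tTheta(\epsilon/\sqrt d)$ by simply taking $f=\|\sin\x\|^2$ (no band at all), so the two sub-claims differ in whether $f$ has an informative region rather than in the noise scale per se --- though the paper's own informal statement and \tab{main2} also say $\Theta(\epsilon)$, so this appears to be an internal inconsistency in the paper rather than a flaw in your plan.
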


\paragraph{Quantum query complexity lower bound in $\epsilon$.} Finally, we establish query complexity lower bounds for classical and quantum nonconvex optimization algorithms under \assume{ZeroProb} or \assume{FirstProb}, respectively, where our results are summarized in \tab{main3}.
\begin{table}[ht]
\resizebox{1.0\columnwidth}{!}{
\small
\begin{tabular}{llll}
\hline
Input Oracle\ \ & Noise Strength\ \ & \specialcell{Deterministic Classical\\Lower Bounds}\ \ & \specialcell{Randomized Classical and\\Quantum Lower Bounds}\ \  \\ \hline
Zeroth-order & $\nu=0$ & N/A & N/A\\
\hline
Zeroth-order & $\nu=\Omega(\epsilon^{-16/7}/d)$ & $\Omega(d/\log d)^{*}$ (\thm{ZeroClassLowerJordan}) & $\Omega(\epsilon^{-12/7})$ (\thm{QuantumCarmon})\\
\hline
First-order & $\tnu=0$ & $\Omega(\epsilon^{-12/7})$~\cite{carmon2021lower} &
N/A \\
\hline
First-order & $\tnu=\Omega(\epsilon^{-8/7}/\sqrt{d})$ & $\Omega(\epsilon^{-12/7})$~\cite{carmon2021lower} & $\Omega(\epsilon^{-12/7})$ (\thm{QuantumCarmon})\\
\hline
\end{tabular}
}
\caption{A summary of our results on classical and quantum query complexity lower bounds in $\epsilon$ under \assume{ZeroProb} or \assume{FirstProb}, respectively. The query complexities are highlighted in terms of the dimension $d$ and the precision $\epsilon$. 
($*$) As \thm{ZeroClassLowerJordan} works for noise strength of $\tOmega(\poly(\epsilon,1/d))$, this lower bound holds for both  deterministic and randomized classical algorithms.}
\label{tab:main3}
\end{table}
\begin{theorem}[informal]\label{thm:QuantumCarmon}
There exists a function pair $F$ and $f$ satisfying either \assume{ZeroProb} or \assume{FirstProb} with $\nu=\Omega(\epsilon^{-16/7}/d)$ or $\tnu=\Omega(\epsilon^{-8/7}/\sqrt{d})$, respectively, and additionally $F(\0)-\inf_{\x}F(\x)\leq\Delta$ for some constant $\Delta$, such that any classical or quantum algorithm with query complexity $\Omega\big(\epsilon^{-12/7}\big)$ will fail with high probability to find an $\epsilon$-SOSP of target function $F$.
\end{theorem}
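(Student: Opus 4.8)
The plan is to lift the classical deterministic lower bound of Carmon et al.~\cite{carmon2021lower} to randomized and quantum algorithms and to noisy oracles. Recall that~\cite{carmon2021lower} exhibits, for any fixed $\ell,\rho,\Delta$, a ``robust zero-chain'' $\bar f_T\colon\R^T\to\R$ with $T=\Theta(\epsilon^{-12/7})$ (the hidden constant depending on $\ell,\rho,\Delta$) that is $\ell$-smooth, $\rho$-Hessian Lipschitz, satisfies $\bar f_T(\0)-\inf_{\x}\bar f_T(\x)\le\Delta$, has no $\epsilon$-SOSP among the points whose last coordinate is below a fixed revealing threshold $\theta$, and has the structural property that $\nabla\bar f_T(\x)$ and $\nabla^2\bar f_T(\x)$ depend only on $x_1,\dots,x_j$ whenever $\abs{x_i}<\theta$ for all $i>j$. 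I would then (i) embed this chain into $\R^d$ through a uniformly random orthonormal frame $U=[\u_1\,|\,\cdots\,|\,\u_T]\in\R^{d\times T}$, setting $F(\x)=\bar f_T(U\trans\x)$, which is again $\ell$-smooth, $\rho$-Hessian Lipschitz, $B$-bounded, and satisfies $F(\0)-\inf_{\x}F(\x)\le\Delta$; (ii) take the (in general non-smooth) noisy oracle $f$ to be a perturbation of $F$ with $\|F-f\|_\infty\le\nu$ (resp.\ $\|\nabla F-\nabla f\|_\infty\le\tnu$) whose deviation is confined to the directions already ``visible'' to the algorithm, so that $f$ reveals no more about the hidden frame $U$ than $F$ does; and (iii) prove by a hybrid argument that no quantum algorithm can traverse the chain in fewer than $\Omega(T)=\Omega(\epsilon^{-12/7})$ queries, so none can output a point with component above $\theta$ along $\u_T$, hence none can output an $\epsilon$-SOSP of $F$ with probability bounded away from $1$.

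Step (iii) proceeds by induction on the chain position. For an algorithm state $\ket\psi$ (a superposition over query registers), call the ``discovery amplitude'' of link $j$ the amplitude carried by query points $\x$ with $\abs{\<\u_j,\x\>}\ge\theta$. The zero-chain property implies that, away from the revealing region of $\u_{j+1}$, the noisy oracle for $f$ coincides with the oracle for the $j$-truncated chain, which is a function of $\u_1,\dots,\u_j$ only and therefore independent of $\u_{j+1},\dots,\u_T$. Coupling the run against $f$ with the run against the $j$-truncated oracle, a BBBV-type hybrid estimate shows the two runs can diverge only through amplitude deposited in the $\u_{j+1}$-revealing region; since $\u_{j+1}$ is uniform on the unit sphere of the $(d-j)$-dimensional orthogonal complement of $\spn(\u_1,\dots,\u_j)$ and independent of everything seen while only $\u_1,\dots,\u_j$ have been discovered, and since the zero-chain confines informative queries to a ball of radius $O(\sqrt{T}\,)$ times the relevant length scale, a single query lands in that region with probability at most a fixed negative power of $d$. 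Taking $d$ a sufficiently large polynomial in $1/\epsilon$ and union-bounding over the $T$ levels, the algorithm needs $\Omega(T)=\Omega(\epsilon^{-12/7})$ queries before it has constant discovery amplitude for link $T$. The key point is that the direction $\u_{j+1}$ is not information-theoretically present until the gradient has been probed inside link $j$, so neither coherent superposition over queries nor Jordan's gradient-estimation trick~\cite{jordan2005fast} can parallelize the chain --- the same phenomenon underlying the quantum lower bounds for convex optimization~\cite{chakrabarti2020optimization,gilyen2019optimizing} and for escaping saddle points~\cite{zhang2021quantum}.

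It remains to calibrate the noise in step (ii). It must stay below the ``fresh-direction signal'' of a single query --- on the order of $\epsilon^{-16/7}/d$ for a zeroth-order query and $\epsilon^{-8/7}/\sqrt d$ for a first-order query, the extra $1/d$ versus $1/\sqrt d$ reflecting that a zeroth-order oracle exposes the gradient only through finite differences while a first-order oracle exposes it directly --- because a noise above that scale merely renders the instance outright unsolvable, as in \thm{ZeroNoLower} and \thm{FirstLower}, a strictly stronger but less informative conclusion. At $\nu=\Theta(\epsilon^{-16/7}/d)$ and $\tnu=\Theta(\epsilon^{-8/7}/\sqrt d)$ --- which, for the pertinent range of $d$, is still below that threshold yet far above the noise levels tolerated by the Jordan-based algorithms of \thm{ZeroJordan} and \thm{TinyAGD} --- one can realize $f$ as a genuine non-smooth noisy function whose oracle is still consistent with the $j$-truncated chains, so that the coupling in step (iii) goes through verbatim with $f$ in place of $F$. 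Assembling (i)--(iii) then yields the claimed $\Omega(\epsilon^{-12/7})$ query lower bound for any randomized classical or quantum algorithm under either \assume{ZeroProb} or \assume{FirstProb}, together with the stated bound on the descent $F(\0)-\inf_{\x}F(\x)$.

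I expect the main obstacle to be step (iii) in the quantum setting: one must verify that the hybrid/discovery-amplitude bound loses only a polynomial-in-dimension factor per query even against adversarially placed non-smooth noise --- which requires a careful choice of the revealing threshold $\theta$ and tight control of the joint behaviour of the random frame $U$ and the noise --- and one must check that no quantum gradient-estimation shortcut (for instance, a noisy variant of Jordan's algorithm applied inside link $j$) can extract the fresh direction $\u_{j+1}$ faster than the chain structure allows.
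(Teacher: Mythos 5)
Your proposal follows essentially the same route as the paper: embed the Carmon--Drori zero-chain $\bar F_T$ into $\R^d$ via a random orthonormal frame, define a noisy oracle that reveals only the ``discovered'' directions (the paper formalizes this as $f_{r_0;T;U}$ with a progress function $\prog_{\sigma r_0}(\cdot)$), run a BBBV/Garg-et-al.\ hybrid argument to show that a $k$-query quantum algorithm cannot distinguish the true oracle from one in which $\u^{(k+1)},\dots,\u^{(T)}$ are erased, invoke concentration of measure for the per-query amplitude deposited in the fresh direction, and finally calibrate $r_0$ against the allowed $\nu$ or $\tnu$ and require $d$ large enough so the union-bounded failure probability is $\le 1/3$. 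The paper's Lemma~\ref{lem:similar-outputs}, Proposition~\ref{prop:A_0-cannot}, and Proposition~\ref{prop:bounded-lower-bound} are the rigorous instantiation of your ``discovery amplitude'' hybrid, and Corollaries~\ref{cor:zeroth-lower-bound}--\ref{cor:first-lower-bound} are your noise calibration in step (ii); the $1/d$ vs.\ $1/\sqrt d$ gap does indeed arise because the zeroth-order mismatch scales as $r_0^2 T + r_0\sqrt T$ while the first-order mismatch scales linearly as $r_0\sqrt T$.

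One step you gloss over is how to get the $B$-boundedness and the unbounded-domain version: the hybrid argument as you state it only controls queries inside a ball of radius $O(\sigma\sqrt T)$, and the bare zero-chain is neither periodic nor bounded. The paper handles this in Section~\ref{sec:noisy-lower-unbounded} by precomposing with a cutoff $\gamma_\alpha$ and adding a periodic $\beta\mathcal L^2\|\sin(\x/\mathcal L)\|^2$ tail, which (a) makes the function $B$-bounded with $B=O(\Delta+\epsilon^{-12/7})$, (b) forces the Hessian to be positive definite on the boundary of each fundamental hypercube so every $\epsilon$-SOSP lies inside a single tile, and (c) keeps the smoothness and Hessian-Lipschitz constants $O(1)$. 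Without something of this form your claim that $F$ is $B$-bounded and your restriction of queries to a bounded ball both need justification. Otherwise the proposal is sound and matches the paper's proof.
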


We provide the detailed proof for \thm{QuantumCarmon} in \sec{EpsLower} using the hard instance inspired by Refs.~\cite{carmon2020lower,carmon2021lower}. Previously, there have been two lower bounds concerning $\epsilon$ dependence that apply to classical algorithms for nonconvex optimization. In Ref.~\cite{carmon2020lower}, it is proved that at least $\Omega(\epsilon^{-3/2})$ queries are required in finding an $\epsilon$-SOSP of a Hessian Lipshitz function $F$ even provided both zeroth- and first-order oracles for either random or deterministic classical algorithms. 
Using similar techniques, Carmon et al.~\cite{carmon2021lower} further proved that deterministic classical algorithms using first-order \textit{noiseless} oracle require $\Omega(\epsilon^{-12/7})$ queries to find an $\epsilon$-SOSP of a Hessian Lipshitz function $F$. 

On the other hand, despite recent papers~\cite{garg2020no,garg2021near} studying quantum lower bounds on convex optimization, quantum lower bounds on nonconvex optimization are still widely open. In this paper, we fill this conceptual gap by extending the classical deterministic lower bound~\cite{carmon2021lower} to all classical randomized algorithms and even quantum algorithms, given that noise exists in the function evaluation. In particular, noise allows us to construct a hard instance by creating a non-informative area around $\0$. According to the concentration of measure phenomenon, the non-informative area will occupy an overwhelming proportion of the whole space. Although its intuition and structure are different from the hard instance in Refs.~\cite{garg2020no,garg2021near} constructed via performing maximization, the hard instance we construct here exhibits a similar property that, if the number of quantum queries is below a certain threshold, in expectation the output state will barely change if we replace the quantum oracle by an oracle that only encodes ``partial" information of the objective function, where the missing information is crucial for any (classical or quantum) algorithm to find an $\epsilon$-SOSP of $F$. 

Moreover, we note that our lower bound result in \thm{QuantumCarmon} can be extended to the case where the goal is merely to find an $\epsilon$-SOSP if we waive the $B$-bounded requirement on $F$, which may be of independent interest.

\subsection{Open Questions}
Our paper leaves several open questions for future investigations:
\begin{itemize}
    \item Can we give quantum algorithms for the task of nonconvex optimization with better performance using noisy oracles? For instance, can we obtain a quantum algorithm with better dependence on $d$ and $\epsilon$ compared to \thm{ZeroMean}?
    
    \item Can we derive tighter lower bounds on quantum algorithms for nonconvex optimization? In particular, it is natural to investigate sublinear or poly-logarithmic quantum lower bounds in dimension $d$ on general optimization problems using either noiseless or noisy oracles.
    
    \item In this work, we employ a simple model on the noise in \assume{ZeroProb} and \assume{FirstProb}: only the upper bound of noise strength is considered. In general, can we demonstrate the robustness and speedups for nonconvex optimization algorithms analytically under other noise assumptions (say, more practical quantum noise models or stochastic noise models), or experimentally by numerical simulations or on real-world quantum computers?
\end{itemize}

\subsection{Organization}
The rest of the paper is organized as follows:
\begin{itemize}
    \item In \sec{RobustNoise}, we prove the robustness for the standard gradient-based algorithms. In particular, we consider the tiny noise case and prove the robustness of PAGD equipped with quantum gradient estimation in \sec{RobustAGD}. In \sec{RobustPGD}, we consider the small noise case and prove the robustness of standard PGD equipped with quantum gradient estimation. 
    \item In \sec{GaussMean}, we consider the intermediate noise case and propose the stochastic gradient descent algorithm using Gaussian smoothing and quantum mean estimation, which provides a polynomial speedup compared to classical algorithms under \sec{ZeroGaussMean} and \sec{FirstGaussMean}, respectively. 
    \item In \sec{DLower}, we prove lower bounds concerning dimension $d$ for classical and quantum algorithms under different noise strengths. Specifically, in \sec{DLowerQuasiPoly} and \sec{DLowerZeroNo}, we prove the existence of hard instances under \assume{ZeroProb} for any (polynomial) quantum algorithm when $\nu=\tTheta(\epsilon^{1.5}/d)$ ($\nu=\tTheta(\epsilon^{1.5})$). In \sec{DLowerClassical}, we prove the $\Omega(d/\log d)$ classical query complexity lower bound using zeroth-order oracles with $\nu=\Omega(1/\poly(d))$. We prove the lower bound under \assume{FirstProb} in \sec{DLowerFirst}.
    \item In \sec{EpsLower}, we prove lower bounds concerning the precision $\epsilon$ for both (possibly randomized) classical algorithms and quantum algorithms.
    \item In the appendices, we introduce necessary existing tools for our proofs in \append{existing-lemmas}. Technical lemmas for the main text are given in \append{technical-lemmas}. Additional information and extended discussions on PGD equipped with quantum simulation and quantum tunneling walk are provided in \append{PGDQSGC} and \append{QTW}, respectively.
\end{itemize}


\section{Robustness of Quantum and Classical Algorithms with Small Noise}\label{sec:RobustNoise}
In this section, we propose two quantum nonconvex optimization algorithms that are robust for tiny noise and small noise, respectively. These algorithms find to an $\epsilon$-SOSP of $F$ using only polylogarithmic queries to noisy empirical function $f$.

\subsection{Robustness of Classical Perturbed Accelerated Gradient Descent with tiny noise}\label{sec:RobustAGD}
To begin with, we introduce the quantum perturbed accelerated gradient descent (PAGD) with accelerated negative curvature finding algorithm, which is inspired by the noiseless nonconvex optimization algorithm in Ref.~\cite{zhang2021escape}. To find an $\epsilon$-SOSP of $F$ using quantum evaluation oracle specified in \eq{QZeroOracle}, an important step is to approximate the gradient at each iteration. An ingenious quantum approach initiated by Ref.~\cite{jordan2005fast} takes a uniform mesh around the point and queries the quantum evaluation oracle (in uniform superposition) in phase using the standard phase kickback technique~\cite{chakrabarti2020optimization,gilyen2019optimizing}. Then by the Taylor expansion, we have
\begin{align}
\sum_{\x}\exp(if(\x))\x\approx\sum_{\x}\bigotimes_{k=1}^d\exp(i\frac{\partial f}{\partial \x_k}\x_k)\x_k.
\end{align}
The algorithm finally recovers all the partial derivatives by applying a quantum Fourier transformation (QFT). We refer to Ref.~\cite{chakrabarti2020optimization} for a precise version of Jordan's gradient estimation algorithm with the following performance guarantee:
\begin{lemma}[Lemma 2.2, Ref.~\cite{chakrabarti2020optimization}]\label{lem:JordanPerf}
Given a target function $F$ and its noisy evaluation $f$ satisfying \assume{ZeroProb} with noisy rate $\nu$, there exists a quantum algorithm that uses one query to the noisy oracle defined in \eq{QZeroOracle} and outputs a vector $\tnabla F(\x)$ such that
\begin{align}
\Pr\left[\norm{\tnabla F(\x)-\nabla F(\x)}\geq 400\omega d\sqrt{\nu\ell}\right]\leq\min\left\{\frac{d}{\omega-1},1\right\},\qquad\forall\omega>1.
\end{align}
\end{lemma}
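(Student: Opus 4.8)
The statement to prove is \lem{JordanPerf}, Lemma 2.2 from~\cite{chakrabarti2020optimization}, restated for the noisy-oracle setting: a single quantum query to $U_f$ in \eq{QZeroOracle} produces $\tnabla F(\x)$ with $\Pr[\|\tnabla F(\x)-\nabla F(\x)\|\geq 400\omega d\sqrt{\nu\ell}]\leq\min\{d/(\omega-1),1\}$.

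\paragraph{Plan.} The plan is to run Jordan's gradient estimation algorithm~\cite{jordan2005fast} on a finite grid centered at $\x$, analyze the error it incurs coordinatewise, and then take a union bound over the $d$ coordinates. Concretely, I would fix a small side length $L_{\text{grid}}$ (to be tuned) and a resolution parameter so that the grid has $N=2^b$ points per dimension; set up the uniform superposition over grid points; apply $U_f$ with phase kickback to imprint the phase $e^{2\pi i \, r\, f(\x+\vect{\delta})}$ for a suitable scaling $r$; and then apply the inverse QFT on each of the $d$ registers. For a smooth function whose Hessian is bounded by $\ell$ on the small grid, the Taylor remainder $f(\x+\vect\delta)-f(\x)-\langle\nabla F(\x),\vect\delta\rangle$ has two contributions: the genuine second-order curvature term, controlled by $\ell\|\vect\delta\|^2$, and the pointwise noise term, controlled by $2\nu$ (since $|f-F|\leq\nu$ at both $\x+\vect\delta$ and $\x$). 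The key point is that the noise term does \emph{not} shrink with the grid size, so there is a genuine tension: a smaller grid kills the curvature term but amplifies the relative size of the noise term.

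\paragraph{Key steps.} First, I would recall the standard guarantee of Jordan's algorithm as analyzed in~\cite{chakrabarti2020optimization}: if on the grid the deviation of $f$ from its best affine approximation $g(\vect\delta)=f(\x)+\langle\nabla F(\x),\vect\delta\rangle$ satisfies $|f(\x+\vect\delta)-g(\vect\delta)|\leq\eta$ for all grid points, then for each coordinate $k$ the measured value $\tnabla F(\x)_k$ satisfies, with probability at least $1-1/\omega'$, an error bound of the form $O(\omega'\eta/L_{\text{grid}})$ (the $\omega'$ captures the tail of the inverse-QFT readout distribution). Second, I would bound $\eta$: on a cube of side $L_{\text{grid}}$ we have $\|\vect\delta\|\leq \sqrt{d}\,L_{\text{grid}}/2$, so the curvature term is $\leq \tfrac{\ell}{2}\|\vect\delta\|^2 = O(\ell d L_{\text{grid}}^2)$, and adding the $2\nu$ noise term gives $\eta = O(\ell d L_{\text{grid}}^2 + \nu)$. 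Third, I would optimize $L_{\text{grid}}$: the per-coordinate error is $O(\omega'(\ell d L_{\text{grid}}^2+\nu)/L_{\text{grid}}) = O(\omega'(\ell d L_{\text{grid}} + \nu/L_{\text{grid}}))$, minimized at $L_{\text{grid}}\asymp\sqrt{\nu/(\ell d)}$, giving per-coordinate error $O(\omega'\sqrt{\nu\ell d})$. Fourth, I would convert the per-coordinate bound into a bound on $\|\tnabla F(\x)-\nabla F(\x)\|$: since $\|\vect v\|\leq\sqrt d\,\|\vect v\|_\infty$, the norm error is $O(\omega'\sqrt d\cdot\sqrt{\nu\ell d}) = O(\omega' d\sqrt{\nu\ell})$, and setting $\omega'=\omega-1$ and tracking constants recovers the $400\omega d\sqrt{\nu\ell}$ bound. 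Fifth, I would handle the failure probability: each coordinate fails with probability $\leq 1/(\omega-1)$ by the inverse-QFT tail bound, so a union bound over $d$ coordinates gives total failure probability $\leq d/(\omega-1)$, and the trivial bound of $1$ makes the $\min$ valid; one also has to note that if the event "all coordinates within their individual bounds" holds, then the $\ell_\infty$ (hence $\ell_2$) guarantee holds simultaneously.

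\paragraph{Main obstacle.} The main subtlety — and really the only place the argument goes beyond bookkeeping — is the interplay between the grid size, the curvature term, and the non-vanishing noise term, together with making sure the quantum readout analysis of~\cite{chakrabarti2020optimization} applies verbatim when the phase oracle encodes the \emph{noisy} $f$ rather than $F$. The resolution is exactly that Jordan's algorithm only ever "sees" function values on the grid, and it is agnostic to whether those values come from a smooth function or a perturbed one; the entire effect of the noise is funneled into the single parameter $\eta$ bounding the affine-approximation error. Once that is recognized, the rest is the optimization of $L_{\text{grid}}$ and the union bound over coordinates. I would also need to double-check that the chosen $L_{\text{grid}}\asymp\sqrt{\nu/(\ell d)}$ is small enough that the Hessian bound $\ell$ is valid on the whole grid (it is, since $F$ is globally $\ell$-smooth) and that $b$ can be chosen large enough that the grid-discretization error is negligible compared to the $\nu$ and curvature terms — this only costs a logarithmic number of qubits and does not affect the single-query count.
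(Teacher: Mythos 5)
The paper imports \lem{JordanPerf} verbatim from Ref.~\cite{chakrabarti2020optimization} without reproducing a proof, so there is no internal argument to compare against; your sketch faithfully reconstructs the argument in that reference. The decomposition into a curvature term $O(\ell d L_{\text{grid}}^2)$ and a non-vanishing noise term $O(\nu)$ bounding the affine-approximation error $\eta$, the optimization $L_{\text{grid}}\asymp\sqrt{\nu/(\ell d)}$ yielding per-coordinate error $O(\omega'\sqrt{\nu\ell d})$, the $\sqrt d$ conversion to $\ell_2$, and the union bound giving failure probability $d/(\omega-1)$ are exactly the steps in the cited proof.
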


This lemma indicates that with probability at least $1-\delta$, one can use one query to the noisy zeroth-oracle and obtain a vector $\tnabla F(\x)$ such that
\begin{align}
\norm{\tnabla F(\x)-\nabla F(\x)}\leq O( d^2\sqrt{\nu\ell}/\delta).
\end{align}

Now, we are ready to introduce our first algorithm as shown in \algo{PAGDANCFQGC}. This algorithm replaces the gradient queries in Perturbed Accelerated Gradient Descent ~\cite{zhang2021escape,anonymous2023faster} with Jordan's gradient estimation in \lem{JordanPerf}. The negative curvature exploitation (NCE) subroutine as shown in \algo{NCE} is applied if the following condition holds.
\begin{align}\label{eq:NCECond}
f(\x_{t})\leq f(\y_{t})+\expval{\tnabla F(\y_{t}),\x_{t}-\y_{t}}-\frac\gamma 2\norm{\x_{t}-\y_{t}}^2.
\end{align}
The intuition for NCE (\algo{NCE}) will be discussed later.

We prove that \algo{PAGDANCFQGC} has the following performance guarantee:
\begin{theorem}[Formal version of \thm{TinyAGD}]\label{thm:TinyAGDF}
Consider a target function $F$ and its noisy evaluation $f$ satisfying \assume{ZeroProb} with $\nu\leq \tO(\delta^2\epsilon^{10}/d^5)$. \algo{PAGDANCFQGC} can find an $\epsilon$-SOSP of $F$ satisfying Eq.~\eq{SOSPDef} with probability at least $1-\delta$, using
\begin{align}
\tilde{O}\left(\frac{\ell B}{\epsilon^{1.75}}\cdot\log d\right)
\end{align}
queries to $U_f$ defined in \eq{QZeroOracle}, under the following parameter choices:
\begin{align}\label{eq:TinyAGDParam}
&\eta=\frac{1}{4\ell},\quad\theta=\frac{1}{4\sqrt{\kappa}},\quad\gamma=\frac{\theta^2}{\eta},\quad s=\frac{\gamma}{4\rho},\quad\delta_0=\frac{\delta\epsilon^{1.75}}{c_\delta\ell B}\cdot\log d,\\
&r=\frac{\delta_0\epsilon}{c_r}\sqrt{\frac{\pi}{\rho d}},\quad\ut=c_r\sqrt{\kappa}\log\left(\frac{\ell\sqrt{d}}{\delta_0\sqrt{\rho\epsilon}}\right),\quad\uf=\sqrt{\frac{\epsilon^3}{\rho}}c^{-7},
\end{align}
where $c$, $c_r$, and $c_\delta$ are some large enough constants, and $\kappa=\ell/\sqrt{\rho\epsilon}$.
\end{theorem}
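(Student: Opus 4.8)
The plan is to show that PAGD with accelerated negative-curvature finding, executed with Jordan's gradient estimation in place of exact gradients, tracks the noiseless analysis of Ref.~\cite{zhang2021escape} up to controllable perturbations, and then to convert the failure probability and iteration count into the claimed query bound. First I would fix the inexactness budget: by \lem{JordanPerf}, a single query to $U_f$ yields $\tnabla F(\x)$ with $\norm{\tnabla F(\x)-\nabla F(\x)}\le O(d^2\sqrt{\nu\ell}/\delta_0)$ with probability $1-\delta_0$, so choosing $\nu\le\tO(\delta_0^2\epsilon^{10}/d^5)$ — after substituting $\delta_0=\delta\epsilon^{1.75}\log d/(c_\delta \ell B)$ — makes the per-step gradient error at most $e \coloneqq \tilde{O}(\epsilon^{1.5}/\sqrt{\kappa})$ or so, i.e. polynomially small in $\epsilon$ and $1/d$. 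The key point is that $e$ must be small enough that (i) the ``improve-or-localize'' potential decrease of PAGD survives, (ii) the NCE condition \eq{NCECond} and the subsequent curvature exploitation still certify genuine negative curvature of $F$ (not an artifact of noise), and (iii) the perturbation ball radius $r$ and the escape time $\ut$ are chosen so that a true saddle of $F$ is escaped despite both the injected perturbation and the gradient noise.

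The main steps, in order: (1) Restate the Hamiltonian/potential-function argument of~\cite{zhang2021escape,anonymous2023faster} for inexact gradients — define $E_t = f(\x_t) + \frac{1}{2\eta}\norm{\v_t}^2$ (or the paper's analogue with momentum variable), and show that each non-NCE step decreases $E_t$ by $\Omega(\eta\norm{\g_t}^2)$ minus an $O(\eta e^2)$ error term, using $\ell$-smoothness of $f$ on the iterate (here one must be careful that $f$ itself need not be smooth, so the smoothness is only used through $F$ and the $\nu$-closeness); absorb the $O(\eta e^2)$ slack into the target accuracy by the choice of $\uf$. (2) Handle the NCE branch: when \eq{NCECond} triggers, show the standard two-point argument still extracts curvature $\lesssim -\gamma$ of $\nabla^2 F$, because the $O(e)\cdot\norm{\x_t-\y_t}$ error is dominated by the $\frac{\gamma}{2}\norm{\x_t-\y_t}^2$ margin once $\norm{\x_t-\y_t}\ge \Omega(e/\gamma)$, and bound the resulting function-value decrease by $\Omega(\uf)$ — this is where accelerated negative curvature finding contributes the $\epsilon^{1.75}$ rate rather than $\epsilon^2$. (3) Escaping saddles: at a point with $\lambda_{\min}(\nabla^2 F)\le-\sqrt{\rho\epsilon}$, add the uniform perturbation of radius $r$ and run for $\ut$ iterations; adapt the coupling/``pancake'' argument so that the stuck region has width $\lesssim r/\poly$, which forces $r$ (hence $\delta_0$) to satisfy the stated formula, and check that the accumulated gradient-noise drift over $\ut$ steps, $O(\ut\cdot\eta\cdot e)$, stays below $r$ — this is the constraint that ultimately pins down $\nu=\tO(\delta^2\epsilon^{10}/d^5)$. (4) Bookkeeping: the total number of ``epochs'' is $O(\ell B/\uf \cdot \text{(stuff)}) = \tO(\ell B/\epsilon^{1.75})$ by the $B$-boundedness of $F$ and the $\Omega(\uf)$ per-epoch decrease; each epoch uses $\tO(\sqrt{\kappa}\log(\cdot))$ iterations, each costing one query; a union bound over all $\tO(\ell B/\epsilon^{1.75})$ queries against the per-query failure $\delta_0$ gives total failure $\le\delta$ by the choice of $\delta_0$; finally combine with the standard argument that if no progress is made for one epoch the current point is an $\epsilon$-SOSP.

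The hard part will be step (3) together with the interplay between the perturbation radius and the noise tolerance. One has to re-derive the escape-time / stuck-width estimate for the \emph{accelerated} dynamics (which has a momentum variable and is more delicate than vanilla PGD) while simultaneously carrying an adversarial $O(e)$ error in every gradient evaluation; showing that this error does not destroy the coupling between the two trajectories that differ only in the saddle's negative-curvature direction, and that the width of the region from which escape fails remains $\tilde O(r)$, is the crux. A secondary subtlety, worth flagging explicitly, is that $f$ may be non-smooth and may have spurious SOSPs: all smoothness and Hessian-Lipschitz facts must be invoked for $F$ and transferred to the iterates only via $\|F-f\|_\infty\le\nu$ and the NCE certificate, so every inequality that would ordinarily use $\nabla f$ or $\nabla^2 f$ needs to be rewritten in terms of $\tnabla F$, $\nabla F$, and the $\nu$-gap. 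Once these are in place, the parameter choices in \eq{TinyAGDParam} fall out of matching the three constraints (potential decrease $\gtrsim \uf$, NCE margin $\gtrsim e/\gamma$, noise drift over $\ut$ steps $\lesssim r$), and the query complexity follows by multiplying epochs by per-epoch iterations.
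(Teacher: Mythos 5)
Your proposal follows essentially the same route as the paper: bound the per-step gradient error via Jordan's estimation (\lem{JordanPerf}), invoke noise-robust adaptations of the Hamiltonian-decrease and NCE lemmas from~\cite{anonymous2023faster,zhang2021escape} to get $\Omega(\uf)$ potential decrease per epoch in both the large-gradient and saddle-escape branches, and close with the $B$-boundedness pigeonhole plus a union bound against per-step failure $\delta_0$. The only real difference is that you propose re-deriving the coupling/stuck-region estimate for the accelerated dynamics under adversarial gradient noise, whereas the paper imports this as black-box ``adaptive versions'' of the cited lemmas (\lem{NCEFreeHam}, \lem{NCEHam}, \lem{LargePAGD}, \lem{SmallPAGDDir}) and only carries out the bookkeeping; your version is more self-contained but structurally identical.
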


\begin{algorithm}[H]
\caption{Perturbed Accelerated Gradient Descent with Accelerated Negative Curvature Finding and Quantum Gradient Computation}
\label{algo:PAGDANCFQGC}
\begin{algorithmic}[1]
\REQUIRE $\x_0$, learning rate $\eta$, noise ratio $r$, parameters $\ut$, $\iota$, $\theta$, $\gamma$, and $s$ to be fixed later.
\STATE $\tp\leftarrow-\ut-1$, $\y_0\leftarrow\x_0$, $\tx\leftarrow\x_0$, and $\iota\leftarrow\0$
\FOR{$t=0,1,\ldots,T$}
\STATE Apply \lem{JordanPerf} to compute an estimation $\tnabla F(\x)$ of $\nabla F(\x)$
\IF {$\norm{\tnabla F(\x)}\leq 3\epsilon/4$ and $t-\tp>\ut$}
\STATE $\tx\leftarrow\x_t$
\STATE $\x_t=\tx+\xi_t$
\STATE $\y_t=\x_t$, $\iota=\tnabla F(\tx)$, $\tp\leftarrow t$
\ENDIF
\IF {$\tp\neq-\ut-1$ and $t-\tp=\ut$}
\STATE $\he\leftarrow (\x_t-\tx)/\norm{\x_t-\tx}$
\STATE $\x_t\leftarrow\arg\min_{\x\in\{\tx-\frac14\sqrt{\frac\epsilon\rho}\he,\tx+\frac14\sqrt{\frac\epsilon\rho}\he\}}f(\x)$
\STATE $\y_t=\x_t$, $\iota=0$
\ENDIF
\STATE $\x_{t+1}=\y_t=\eta(\tnabla F(\y_t)-\iota)$
\STATE $\v_{t+1}=\x_{t+1}-\x_t$
\STATE $\y_{t+1}=\x_{t+1}+(1-\theta)\v_{t+1}$
\IF {$\tp\neq-\ut-1$ and $t-\tp\leq\ut$}
\STATE $(\y_{t+1},\x_{t+1})=\tx+r\cdot\left(\frac{\y_{t+1}-\tx}{\norm{\y_{t+1}-\tx}}\cdot\frac{\x_{t+1}-\tx}{\norm{\x_{t+1}-\tx}}\right)$
\ELSIF {$f(\x_{t+1})\leq f(\y_{t+1})+\expval{\tnabla F(\y_{t+1}),\x_{t+1}-\y_{t+1}}-\frac\gamma 2\norm{\x_{t+1}-\y_{t+1}}^2$}
\STATE $(\x_{t+1},\v_{t+1})\leftarrow\text{NCE}(\x_{t+1},\v_{t+1},s)$
\STATE $\y_{t+1}\leftarrow\x_{t+1}+(1-\theta)\v_{t+1}$
\ENDIF
\ENDFOR
\end{algorithmic}
\end{algorithm}
\begin{algorithm}[htbp]
\caption{Negative Curvature Exploitation (NCE) $(\x_t,\v_t,s)$}
\label{algo:NCE}
\begin{algorithmic}[1]
\IF {$\norm{\v_t}\geq s$}
\STATE $\x_{t+1}\leftarrow\x_t$
\ELSE
\STATE $\delta=s\cdot\v_t/\norm{\v_t}$
\STATE $\x_{t+1}\leftarrow\arg\min_{\x\in\{\x_t+\delta,\x_t-\delta\}}f(\x)$
\ENDIF
\end{algorithmic}
\end{algorithm}

For simplicity, we denote the error of Jordan's gradient estimation as $\hnu$. To solve the problem of monotonic decrease for function value in momentum-based nonconvex optimization problems, we consider the Hamiltonian of the function~\cite{jin2018accelerated} in our proof, which is defined as 
\begin{align}\label{eq:Hamiltonian-defn}
E_t=F(\x_t)+\frac{1}{2\eta}\norm{\v_t}^2.
\end{align}
The Hamiltonian composes a potential energy term and a kinetic energy term. It monotonically decreases in the continuous-time scenario. To prove \thm{TinyAGDF}, we consider the dynamics of \algo{PAGDANCFQGC} in the two different cases depending on whether \eq{NCECond} holds. If it does not hold, the following lemma holds by using Lemma 4 of Ref.~\cite{anonymous2023faster} and replacing the zeroth-order queries to $F(\x_t)$ and $F(\y_t)$ with the noisy queries $f(\x_t)$ and $f(\y_t)$.
\begin{lemma}[Adaptive version of Lemma 3, Ref.~\cite{anonymous2023faster}]\label{lem:NCEFreeHam}
We consider $F(\cdot)$ is $\ell$-smooth and $\rho$-Hessian Lipschitz. Assume one can access the zeroth-order oracle with noise $\nu$ and the first-order oracle with noise $\hnu$. Set the learning rate $\eta\leq1/4\ell$, $\theta\in[2\eta\gamma,1/2]$. For each iteration $t$ where \eq{NCECond} does not hold, running \algo{PAGDANCFQGC} will decrease the Hamiltonian defined in \eq{Hamiltonian-defn} by
\begin{align}
E_{t+1}\leq E_t-\frac{\theta}{2\eta}\norm{\v_t}^2-\frac{\eta}{4}\norm{\nabla f(\y_t)}^2+O(\eta\hnu^2)+O(\nu).
\end{align}
\end{lemma}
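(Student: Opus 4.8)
The plan is to follow the analysis of the noiseless (or inexact-gradient) accelerated gradient descent Hamiltonian decrease from Lemma 3 of Ref.~\cite{anonymous2023faster}, and to track carefully where the \emph{zeroth-order} function values $F(\x_t), F(\y_t)$ enter the argument, replacing each such evaluation by its noisy counterpart $f(\x_t), f(\y_t)$ and absorbing the incurred discrepancy into the additive $O(\nu)$ term. First I would recall the standard momentum update without NCE: from $\y_t$ we take one gradient step using the estimated gradient $\tnabla F(\y_t)$ (whose error from $\nabla F(\y_t)$ is at most $\hnu$ componentwise/in norm, and which differs from $\nabla f(\y_t)$ in the same controlled way), obtaining $\x_{t+1}$, and then set $\v_{t+1} = \x_{t+1}-\x_t$ and $\y_{t+1}=\x_{t+1}+(1-\theta)\v_{t+1}$. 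Using $\ell$-smoothness of $F$ (hence of the relevant descent quadratic model) together with the step-size choice $\eta \le 1/4\ell$ and the momentum parameter range $\theta \in [2\eta\gamma, 1/2]$, one derives the telescoping improvement on $E_t = F(\x_t) + \frac{1}{2\eta}\|\v_t\|^2$ of the form $E_{t+1} \le E_t - \frac{\theta}{2\eta}\|\v_t\|^2 - \frac{\eta}{4}\|\nabla f(\y_t)\|^2 + O(\eta\hnu^2)$; the $O(\eta\hnu^2)$ term is exactly the price of using $\tnabla F$ in place of the true gradient, via a Cauchy–Schwarz/Young split $\langle \nabla f(\y_t) - \tnabla F(\y_t), \cdot\rangle \le \frac{\eta}{8}\|\cdot\|^2 + O(\eta\hnu^2)$.

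The one place where genuine zeroth-order queries appear is in certifying the Hamiltonian decrease when we must compare $F$-values across iterates (this is where the original proof invokes the improve-or-localize / function-value monotonicity bookkeeping, and where Lemma 4 of Ref.~\cite{anonymous2023faster} is used). The key observation is that \eq{NCECond} is stated in terms of $f$, not $F$: we are in the branch where $f(\x_t) > f(\y_t) + \langle \tnabla F(\y_t), \x_t - \y_t\rangle - \frac{\gamma}{2}\|\x_t - \y_t\|^2$. I would convert this $f$-inequality into the corresponding $F$-inequality up to an additive $2\nu$ (since $|F(\x_t) - f(\x_t)| \le \nu$ and $|F(\y_t) - f(\y_t)| \le \nu$), then run the noiseless Hamiltonian-decrease argument on the $F$-side, and finally convert the resulting bound on $E_{t+1} - E_t$ (which involves $F(\x_{t+1}) - F(\x_t)$) back to the stated form, again at the cost of $O(\nu)$. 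Each such translation touches only finitely many function evaluations per iteration, so the total additive overhead per step is $O(\nu)$, which is what the lemma claims.

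The main obstacle I anticipate is purely bookkeeping rather than conceptual: one must verify that the places where Lemma 4 of Ref.~\cite{anonymous2023faster} is applied only ever need \emph{comparisons} of function values at the current iterates $\x_t, \y_t$ (and possibly $\x_{t+1}$), so that the noisy-to-true translation never compounds across iterations — i.e., the $\nu$ errors do not telescope into a $T\nu$ term but stay $O(\nu)$ per-step as written. This requires checking that the quadratic-upper-bound certificate used to lower-bound the decrease is applied \emph{locally} at each $t$ and that the Hamiltonian $E_t$ itself is defined via the true $F$ (not $f$), so that the telescoping sum $\sum_t (E_t - E_{t+1})$ still bounds $F(\x_0) - \inf F \le 2B$ cleanly; any cross-iteration cancellation of the $F$-vs-$f$ gap must be handled by keeping $E_t$ in terms of $F$ throughout and only paying $\nu$ when we are forced to read off an $f$-value from the oracle. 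Once this is confirmed, the remaining steps are the routine Young's-inequality manipulations already present in the noiseless proof, with $\hnu$ playing the role it plays there and $\nu$ appearing only through the branch-condition translation.
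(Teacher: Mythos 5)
Your proposal takes essentially the same route as the paper: the paper's own justification for \lem{NCEFreeHam} is the single observation that one can import the noiseless Hamiltonian-descent bound of the corresponding lemma in Ref.~\cite{anonymous2023faster} and replace the zeroth-order evaluations $F(\x_t),F(\y_t)$ by their noisy counterparts $f(\x_t),f(\y_t)$, incurring an additive $O(\nu)$, with the first-order error $\hnu$ absorbed into $O(\eta\hnu^2)$ exactly as you describe. Your reading that the $\nu$-error enters only through the local NCE-condition comparison, that $E_t$ in \eq{Hamiltonian-defn} is defined via the true $F$ so the $\nu$-errors do not accumulate across iterations, and that the gradient error is handled by a Young-type split, matches the intended argument; the only remaining gap in your sketch is the one you flag yourself (that the per-iteration translation stays $O(\nu)$), and that is indeed confirmed by the definition of $E_t$ in terms of $F$.
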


On the other hand, if \eq{NCECond} holds, the function has an approximate large negative curvature between $\y_t$ and $\x_t$. The accelerated gradient step might not decrease the value for the Hamiltonian. We thus call the negative curvature exploitation subroutine (\algo{NCE}) to further decrease the Hamiltonian. In particular, when choosing large enough constant $c_r$, the following lemme holds by replacing the zeroth-order query to $F(\x_t)$ and $F(\y_t)$ with the noisy query $f(\x_t)$ and $f(\y_t)$ and noise term $O(\nu)$, respectively, in Lemma 4 of Ref.~\cite{anonymous2023faster}.
\begin{lemma}[Adapted version of Lemma 4, Ref.~\cite{anonymous2023faster}]\label{lem:NCEHam}
Assume that $F(\cdot)$ is $\ell$-smooth, $\rho$-Hessian Lipschitz, and we are given the zeroth-order oracle with noise strength $\nu$ and the first-order oracle with noise strength $\hnu$. Set the learning rate $\eta\leq1/4\ell$, $\theta\in[2\eta\gamma,1/2]$. For each iteration $t$ where \eq{NCECond} holds, running \algo{PAGDANCFQGC} wiil decrease the Hamiltonian defined in \eq{Hamiltonian-defn} by
\begin{align}
E_{t+1}\leq E_t-\min\left\{\frac{s^2}{2\eta},\frac12\gamma s^2-\rho s^3-O\left(\frac{\hnu^2}{\gamma}\right)\right\}+O(\nu).
\end{align}
\end{lemma}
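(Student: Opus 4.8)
The plan is to follow the proof of Lemma~4 of Ref.~\cite{anonymous2023faster}, which establishes the same Hamiltonian decrease under exact zeroth-order access but inexact first-order information of error $\hnu$, and to account for the additional error incurred by replacing each exact value $F(\cdot)$ appearing there by the noisy value $f(\cdot)$. Only $O(1)$ function values enter that argument: $f(\x_t)$ and $f(\y_t)$ in the trigger condition \eq{NCECond}, and the two values $f(\x_t\pm\delta)$ compared inside the $\arg\min$ of the NCE subroutine (\algo{NCE}) in the branch $\norm{\v_t}<s$. Since $\norm{F-f}_\infty\leq\nu$, each differs from the corresponding value of $F$ by at most $\nu$, so altogether this produces an additive $O(\nu)$; the first-order error $\hnu$ is handled verbatim as in Ref.~\cite{anonymous2023faster} and is responsible for the $O(\hnu^2/\gamma)$ term.

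I would split on the test inside \algo{NCE}. If $\norm{\v_t}\geq s$, then \algo{NCE} outputs $\x_{t+1}=\x_t$, $\v_{t+1}=\0$ without any function query, so $E_{t+1}=F(\x_t)=E_t-\frac{1}{2\eta}\norm{\v_t}^2\leq E_t-\frac{s^2}{2\eta}$, which already implies the claimed bound in this case. If $\norm{\v_t}<s$, then \algo{NCE} sets $\delta=s\,\v_t/\norm{\v_t}$, $\x_{t+1}=\arg\min_{\x\in\{\x_t+\delta,\x_t-\delta\}}f(\x)$, $\v_{t+1}=\0$. First, I would convert \eq{NCECond} into an inequality for $F$ and $\nabla F$: using $f(\x_t)\geq F(\x_t)-\nu$, $f(\y_t)\leq F(\y_t)+\nu$, and $\norm{\tnabla F(\y_t)-\nabla F(\y_t)}\leq\hnu$, \eq{NCECond} gives
\[
F(\x_t)\leq F(\y_t)+\expval{\nabla F(\y_t),\x_t-\y_t}-\tfrac{\gamma}{2}\norm{\x_t-\y_t}^2+\hnu\norm{\x_t-\y_t}+2\nu .
\]
Since $\x_t-\y_t=-(1-\theta)\v_t$ is parallel to $\v_t$, I would then follow Ref.~\cite{anonymous2023faster}: combine this inequality with the $\rho$-Hessian-Lipschitz second-order Taylor expansion of $F$ and first-order expansion of $\nabla F$ at $\x_t$ to cancel $F(\y_t)$ and $\nabla F(\y_t)$, obtaining an upper bound on the directional curvature $\hat{\v}_t^\top\nabla^2F(\x_t)\hat{\v}_t$ (with $\hat{\v}_t:=\v_t/\norm{\v_t}$) that agrees with the one in Ref.~\cite{anonymous2023faster} up to an extra term of size $O(\nu/\norm{\x_t-\y_t}^2)$. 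Finally, because $\x_{t+1}$ minimizes $f$ over $\{\x_t\pm\delta\}$, it minimizes $F$ over those two points up to $2\nu$, and the second-order Taylor expansion of $F$ at $\x_t$ along $\pm\hat{\v}_t$ at distance $s$ gives
\[
F(\x_{t+1})\leq F(\x_t)+\tfrac{s^2}{2}\,\hat{\v}_t^\top\nabla^2F(\x_t)\hat{\v}_t+\tfrac{\rho s^3}{6}+2\nu .
\]
Substituting the curvature bound, using $\norm{\v_t}<s$, the AM-GM split $\hnu\norm{\x_t-\y_t}\leq\hnu^2/\gamma+\tfrac{\gamma}{4}\norm{\x_t-\y_t}^2$, and $E_{t+1}=F(\x_{t+1})$, $E_t\geq F(\x_t)$, one obtains $E_{t+1}\leq E_t-\big(\tfrac12\gamma s^2-\rho s^3-O(\hnu^2/\gamma)\big)+O(\nu)$. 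Taking the smaller of the two cases yields the lemma.

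The step that needs care is the curvature estimate in the second case: isolating $\hat{\v}_t^\top\nabla^2F(\x_t)\hat{\v}_t$ requires dividing by $\norm{\x_t-\y_t}^2=(1-\theta)^2\norm{\v_t}^2$, which is harmless for the terms scaling as $\norm{\v_t}^2$ and $\norm{\v_t}^3$ but turns the $\nu$- and $\hnu$-terms into inverse powers of $\norm{\v_t}$. I would control the small-velocity regime exactly as Ref.~\cite{anonymous2023faster} controls the $\hnu$-term; the parameter choices of \thm{TinyAGDF}, which keep $\nu$ and $\hnu$ small relative to $\gamma$, $s$, and $\rho$, then keep these contributions at the stated $O(\nu)$ and $O(\hnu^2/\gamma)$ levels. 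Everything else is verbatim from Ref.~\cite{anonymous2023faster}; the only genuinely new content is the $O(\nu)$ accounting for the four noisy function evaluations.
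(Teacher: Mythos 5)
Your proposal takes the same approach as the paper's (one-sentence) proof: cite Lemma 4 of Ref.~\cite{anonymous2023faster}, observe that replacing each exact zeroth-order evaluation $F(\cdot)$ by the noisy $f(\cdot)$ introduces an additive $O(\nu)$, and let the $\hnu$-accounting carry over verbatim. You are somewhat more careful than the paper in tracking all four noisy evaluations (the two in \eq{NCECond} and the two in the $\arg\min$ of \algo{NCE}) and in explicitly flagging the small-$\norm{\v_t}$ subtlety when isolating the directional curvature, but that subtlety is deferred to Ref.~\cite{anonymous2023faster} exactly as the paper does.
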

We set an additional parameter $\ut'=\Theta(\sqrt{\kappa})$. Based on \lem{NCEFreeHam} and \lem{NCEHam}, and proper choices of $\hnu$ and $\nu$, Lemma 5 of Ref.~\cite{anonymous2023faster} carries over as the below lemma when the norm of the estimated gradient is large enough, i.e. $\big\|\tnabla F(\x_t)\big\|\geq3\epsilon/4$.
\begin{lemma}[Adaptive version of Lemma 5, Ref.~\cite{anonymous2023faster}]\label{lem:LargePAGD}
If $\big\|\tnabla F(\x_t)\big\|\geq3\epsilon/4$ and the noise strengths are bounded by $\nu,\hnu\leq O(\epsilon^{1.25})$ for all $\tau\in[0,\ut']$, \algo{PAGDANCFQGC} can decrease the Hamiltonian by $E_{\ut'}-E_0\leq-\uf$ using
\begin{align}
    \ut'=\sqrt{\kappa}\chi c
\end{align}
iterations in \algo{PAGDANCFQGC}, where $\chi=\max\{1,\log(d\ell B/\rho\epsilon\delta_0)\}$, and $c$ is a large enough constant given in \thm{TinyAGDF}
\end{lemma}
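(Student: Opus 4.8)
The plan is to track the Hamiltonian $E_t = F(\x_t) + \frac{1}{2\eta}\norm{\v_t}^2$ across a window of $\ut' = \sqrt{\kappa}\chi c$ iterations and show it must drop by at least $\uf = \sqrt{\epsilon^3/\rho}\,c^{-7}$, provided $\big\|\tnabla F(\x_t)\big\|\geq 3\epsilon/4$ throughout. First I would invoke \lem{NCEFreeHam} and \lem{NCEHam} to get, for each step, a per-iteration decrease of either $\frac{\theta}{2\eta}\norm{\v_t}^2 + \frac{\eta}{4}\norm{\nabla f(\y_t)}^2 - O(\eta\hnu^2) - O(\nu)$ (when \eq{NCECond} fails) or $\min\{\frac{s^2}{2\eta}, \frac{1}{2}\gamma s^2 - \rho s^3 - O(\hnu^2/\gamma)\} - O(\nu)$ (when \eq{NCECond} holds). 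Summing over the $\ut'$ iterations, the net change is $E_{\ut'} - E_0 \leq -\sum_t \big(\text{decrease}_t\big) + O(\ut'\nu) + O(\ut'\eta\hnu^2)$. The idea is then a dichotomy: either the momentum $\norm{\v_t}$ becomes large at some point in the window — in which case the kinetic/NCE terms already give a drop of $\Omega(s^2/\eta) = \Omega(\sqrt{\epsilon^3/\rho})$ up to constants — or $\norm{\v_t}$ stays small throughout, which forces the iterates to remain in a small neighborhood, and then the ``localization'' argument of Ref.~\cite{anonymous2023faster} (their Lemma 5) shows that a persistently large true gradient $\norm{\nabla F(\x_t)} \gtrsim \epsilon$ is incompatible with the Hamiltonian failing to decrease, again yielding a drop of at least $\uf$.

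The bridge from the estimated gradient to the true gradient is the next step: since the Jordan estimate satisfies $\norm{\tnabla F(\x_t) - \nabla F(\x_t)} \leq \hnu$ and $\hnu$ will be set to $O(\epsilon)$ (indeed much smaller, $\hnu = O(d^2\sqrt{\nu\ell}/\delta_0)$ with $\nu \leq \tO(\delta_0^2\epsilon^{10}/d^5)$ giving $\hnu = \tO(\epsilon^5)$), the hypothesis $\big\|\tnabla F(\x_t)\big\|\geq 3\epsilon/4$ implies $\norm{\nabla F(\x_t)} \geq 3\epsilon/4 - \hnu \geq \epsilon/2$. So the large-estimated-gradient regime is genuinely a large-true-gradient regime, and all the error terms $O(\eta\hnu^2)$, $O(\hnu^2/\gamma)$, $O(\nu)$ accumulated over $\ut' = \tO(\sqrt{\kappa})$ iterations are of strictly lower order than $\uf$ once we plug in the parameter choices from \eq{TinyAGDParam}: with $\eta = 1/(4\ell)$, $\theta = 1/(4\sqrt{\kappa})$, $\gamma = \theta^2/\eta$, $s = \gamma/(4\rho)$, one checks $s^2/\eta = \Theta(\sqrt{\epsilon^3/\rho}/\kappa^{1.5})$-type magnitudes, and choosing $c$ large enough absorbs the constants so that the ``signal'' beats the ``noise.'' I would carry out this bookkeeping exactly as in Ref.~\cite{anonymous2023faster}, the only modification being the extra additive $O(\ut'\nu)$ term, which under $\nu \leq \tO(\delta^2\epsilon^{10}/d^5)$ is negligible.

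The main obstacle is the improper-learning subtlety in the small-momentum branch: when $\norm{\v_t}$ stays below $s$ for the whole window, one cannot directly conclude a Hamiltonian decrease from a single step; instead one needs the ``either the function value drops a lot, or the iterate has not moved much'' potential argument, and the latter case has to be combined with the fact that after $\Theta(\sqrt{\kappa})$ steps of (approximate) accelerated gradient descent on a region where the gradient is consistently $\gtrsim \epsilon$, the accumulated $\sum_t \eta\norm{\nabla f(\y_t)}^2/4$ term alone is $\Omega(\sqrt{\kappa}\cdot \eta\epsilon^2) = \Omega(\uf)$ after accounting for the $\chi$ and $c$ factors. Making this quantitative — in particular verifying that the telescoped lower-order error never overwhelms this accumulated gradient term over the full $\ut' = \sqrt{\kappa}\chi c$ iterations, and that the NCE calls (each of which uses only $f$-queries, contributing another $O(\nu)$ per call) do not break monotonicity — is where the care lies, but it is a direct adaptation of Lemma 5 of Ref.~\cite{anonymous2023faster} with the two noisy oracles substituted in and the noise budget from \thm{TinyAGDF} enforced.
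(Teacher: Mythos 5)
Your proposal is correct and follows essentially the same route as the paper: the paper's own ``proof'' simply cites Lemma~5 of Ref.~\cite{anonymous2023faster} with the noisy-query versions \lem{NCEFreeHam} and \lem{NCEHam} of its Lemmas~3 and~4 substituted in, which is exactly the per-step Hamiltonian bookkeeping, improve-or-localize dichotomy, and noise-budget absorption you describe. One minor arithmetic note: with $\delta_0=\Theta(\delta\epsilon^{1.75}\log d/(\ell B))$ from \eq{TinyAGDParam}, the Jordan error is $\hnu = O\big(d^2\sqrt{\nu\ell}/\delta_0\big) = O(\epsilon^{3.25}/d^{0.5})$ up to $\delta,\ell,B$ factors rather than $\tO(\epsilon^5)$, but either bound sits safely below the $O(\epsilon^{1.25})$ budget the lemma requires, so the conclusion is unaffected.
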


On the other hand, when the estimated gradient is small, we obtain the following adaptive version of Lemma 7 of Ref.~\cite{anonymous2023faster}.
\begin{lemma}[Adaptive version of Lemma 7, Ref.~\cite{anonymous2023faster}]\label{lem:SmallPAGDDir}
Suppose $\big\|\tnabla F(\x_t)\big\|\leq3\epsilon/4$ and the noise strengths are bounded by $\nu,\hnu\leq O(\epsilon^{3.25}/d^{0.5})$, $\lambda_{\min}(\nabla F(\x_t))\leq-\sqrt{\rho\epsilon}$. For any $0\leq\delta_0\leq 1$, we set the parameters as \thm{TinyAGD}. Suppose no perturbation is added in the iterations $[t-\ut,t]$. By running \algo{PAGDANCFQGC} for $\ut$ iterations, we have
\begin{align}
\he^\top\nabla^2F(\x_t)\he\leq-\frac{\sqrt{\rho\epsilon}}{4},
\end{align}
with probability at least $1-\delta_0$.
\end{lemma}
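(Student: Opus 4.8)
The plan is to adapt the noiseless negative-curvature-finding analysis of Lemma~7 in Ref.~\cite{anonymous2023faster} to our setting where the gradient is accessed through Jordan's estimation with error $\hnu$ and the function values are perturbed by $\nu$. The starting observation is that, because no perturbation is added during $[t-\ut,t]$ and $\|\tnabla F(\x_t)\|\le 3\epsilon/4$ while $\lambda_{\min}(\nabla^2 F(\x_t))\le-\sqrt{\rho\epsilon}$, the relevant AGD iteration behaves (to leading order) as a linear dynamical system driven by the Hessian $\cH\coloneqq\nabla^2 F(\tx)$ restricted to a neighborhood of $\tx$; the component of the displacement $\x_\tau-\tx$ along the most-negative eigenvector of $\cH$ is amplified geometrically, while the gradient-Lipschitz and Hessian-Lipschitz properties control the deviation of the true dynamics from this linearization. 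I would first write $\x_{\tau+1}-\tx$ in terms of the AGD recursion, substituting $\tnabla F=\nabla F+\e_\tau$ with $\|\e_\tau\|\le\hnu$, and collect the error contributions: (i) the Hessian-Lipschitz error from replacing $\nabla F(\y_\tau)$ by $\cH(\y_\tau-\tx)$, of order $\rho\|\y_\tau-\tx\|^2$; (ii) the gradient-estimation error $\e_\tau$; (iii) the restriction to the ball of radius $r$ in Line~19 of \algo{PAGDANCFQGC}.

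The core of the argument is the "improve-or-localize" dichotomy as in Ref.~\cite{jin2018accelerated,zhang2021escape}: either the Hamiltonian drops by at least $\uf$ over the $\ut$ steps — in which case we are in a different case of the main proof and the claimed conclusion is vacuous here — or the iterates stay within the radius-$r$ ball around $\tx$, and then the linearized analysis is valid. Under the localization branch, I would track the projection $a_\tau\coloneqq\langle\he_{\min},\x_\tau-\tx\rangle$ onto the minimum eigenvector $\he_{\min}$ of $\cH$ together with the "transverse" part, using the standard $2\times 2$ matrix for the momentum iteration with eigenvalue-dependent contraction/expansion factor $\big(1-\theta\big)\big(1-\eta\lambda\big)$ form. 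The negative eigenvalue $\lambda_{\min}\le-\sqrt{\rho\epsilon}$ gives an expansion rate $1+\Omega(\eta\sqrt{\rho\epsilon})=1+\Omega(1/\sqrt{\kappa})$ per step, so after $\ut=\Theta(\sqrt{\kappa}\log(\cdot))$ steps the signal grows by a $\poly(d\ell B/\rho\epsilon\delta_0)$ factor; meanwhile the noise terms, being bounded by $\hnu,\nu\le O(\epsilon^{3.25}/\sqrt d)$, feed into the recursion through a geometric sum whose total is $O(\sqrt{\kappa}\cdot\hnu/\sqrt{\rho\epsilon})$ plus the $\rho r^2$ curvature term, and by the choice $r=\Theta(\delta_0\epsilon\sqrt{\pi/\rho d})$ and the assumed bounds on $\hnu,\nu$ these stay below a small constant fraction of the amplified signal. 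The randomness in the perturbation $\xi_t$ (uniform in the radius-$r$ ball) guarantees, via an anti-concentration estimate along $\he_{\min}$ exactly as in Lemma~C.\,something of Ref.~\cite{zhang2021escape}, that $|a_{t}|\ge r/\poly(d)$ with probability $\ge 1-\delta_0$ at the start of the window, which is what seeds the geometric growth. Finally, since $\he$ in the algorithm is the normalized displacement $(\x_{t}-\tx)/\|\x_{t}-\tx\|$ after $\ut$ steps, and that displacement is dominated (by the above) by its $\he_{\min}$-component, I would conclude $\he^\top\nabla^2 F(\x_t)\he\le \he_{\min}^\top\cH\he_{\min}+O(\rho r)\le -\sqrt{\rho\epsilon}+O(\rho r)\le -\sqrt{\rho\epsilon}/4$, using once more $\rho r\le \sqrt{\rho\epsilon}/8$ under the stated parameter choices.

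The main obstacle I anticipate is quantitative bookkeeping of how the gradient-estimation error $\hnu$ and the function-value noise $\nu$ propagate through the accelerated (momentum) iteration rather than the plain gradient iteration: momentum couples consecutive steps, so the error accumulates through a non-diagonal recursion and the naive geometric-sum bound could be off by a $\sqrt{\kappa}$ factor, which is exactly the margin we can afford. Getting the exponents right — i.e.\ verifying that $\hnu=O(\epsilon^{3.25}/d^{0.5})$ (not merely $O(\epsilon^{1.25})$ as in the large-gradient case of \lem{LargePAGD}) is both necessary and sufficient — requires carefully matching the $d^{0.5}$ from the anti-concentration radius $r\propto 1/\sqrt d$ against the $d$ in the allowed noise, and this is where I would spend the bulk of the careful estimation; everything else is a direct transcription of the noiseless argument with the two extra additive error streams carried along.
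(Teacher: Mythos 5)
The paper itself provides no proof of \lem{SmallPAGDDir}: like \lem{NCEFreeHam}, \lem{NCEHam}, and \lem{LargePAGD}, it is presented purely as an ``adaptive version'' of Lemma~7 of Ref.~\cite{anonymous2023faster}, with the adaptation described only by the blanket remark that noiseless oracle calls are replaced by the noisy ones. So there is nothing in the text to compare you against line by line; what I can do is assess whether your reconstruction captures what that lemma's proof would have to do.

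Your high-level plan is the right one and matches the standard accelerated negative-curvature-finding analysis of Refs.~\cite{jin2018accelerated,zhang2021escape}: linearize the AGD dynamics around $\tx$ using $\cH=\nabla^2 F(\tx)$, observe that the $2\times2$ momentum recursion amplifies the component along $\he_{\min}$ at rate $1+\Omega(1/\sqrt{\kappa})$ per step, seed the signal with an anti-concentration bound on the uniform perturbation $\xi_t$, and budget three error streams (Hessian-Lipschitz $\rho\|\y_\tau-\tx\|^2$, gradient noise $\hnu$, and function-value noise $\nu$) against the exponential amplification over $\ut=\Theta(\sqrt{\kappa}\log(\cdot))$ steps. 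You also correctly flag the genuine subtlety — error accumulation is coupled across steps through momentum and must be done via the $2\times2$ transfer matrix rather than a naive geometric sum — which is exactly where a careless transcription of the noiseless proof would break. Two points, however, are off and would need repair before this becomes a proof. First, you invoke the ``improve-or-localize'' dichotomy, but in \algo{PAGDANCFQGC} localization is enforced, not inferred: during the window $t-\tp\leq\ut$ Line~19 explicitly renormalizes $(\y_{\tau},\x_{\tau})$ back onto the sphere of radius $r$ about $\tx$. There is no ``Hamiltonian dropped, so we escape this case'' branch inside this lemma; the entire argument lives in the localized regime by construction, and the only job is to show the normalized iterate aligns with $\he_{\min}$. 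Second, your closing estimate $\he^{\top}\nabla^2 F(\x_t)\he\leq\he_{\min}^{\top}\cH\he_{\min}+O(\rho r)$ is not the right form. Since $\tx=\x_t$, one has $\nabla^2 F(\x_t)=\cH$ exactly, so there is no Hessian-Lipschitz slack to absorb there; the loss comes from the misalignment $\he=\alpha\he_{\min}+\beta\he_{\perp}$, whose penalty is $\beta^2\cdot O(\ell)+ (1-\alpha^2)|\lambda_{\min}|$, not $O(\rho r)$. You need to show $\beta^2\ell\leq 3\sqrt{\rho\epsilon}/4$, which is what actually constrains $\hnu$ and $\nu$ through the amplified error stream and the initial anti-concentration radius — this is where the $d^{-1/2}$ in the assumed noise bound is spent. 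With those two corrections the sketch would be sound.
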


Furthermore, the following lemma from Ref.~\cite{zhang2021escape} indicates that the function value of $F$ will decrease fast along the direction of $\he$.
\begin{lemma}[Lemma 6, Ref.~\cite{zhang2021escape}]\label{lem:SmallPAGD}
Suppose the function $F$ is $\ell$-smooth and $\rho$-Hessian Lipschitz. For any point $\x_t$, if there exists a unit vector $\he$ satisfying $\he^\top F(\x_t)\he\leq-\sqrt{\rho\epsilon}/4$, we have
\begin{align}
F\left(\x_t-\frac{F_{\he}'(\x_t)}{4\abs{F_{\he}'(\x_t)}}\cdot\sqrt{\frac{\epsilon}{\rho}}\right)\leq F(\x_t)-\frac{1}{384}\sqrt{\frac{\epsilon^3}{\rho}},
\end{align}
where $F_{\he}'(\x_t)$ is the entry of the derivative along $\he$. 
\end{lemma}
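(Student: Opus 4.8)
The plan is to reduce the statement to the standard cubic upper bound that follows from $\rho$-Hessian Lipschitzness, specialized to a step of length $\tfrac14\sqrt{\epsilon/\rho}$ along the negative-curvature direction $\he$. Recall that for any twice-differentiable $\rho$-Hessian Lipschitz $F$ and any $\x,\y\in\R^d$,
\[
F(\y)\le F(\x)+\langle\nabla F(\x),\y-\x\rangle+\tfrac12(\y-\x)^\top\nabla^2F(\x)(\y-\x)+\tfrac{\rho}{6}\norm{\y-\x}^3 ,
\]
which is obtained by Taylor-expanding $t\mapsto F(\x+t(\y-\x))$ to second order on $[0,1]$ and bounding the integral remainder with the Hessian-Lipschitz property. (Equivalently one may argue directly with the one-dimensional restriction $g(\eta)=F(\x_t+\eta\he)$, for which $g'(0)=F_{\he}'(\x_t)$, $g''(0)=\he^\top\nabla^2F(\x_t)\he\le-\tfrac14\sqrt{\rho\epsilon}$, and $g''$ is $\rho$-Lipschitz because $\he$ is a unit vector.)

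First I would apply this inequality with $\x=\x_t$ and $\y=\x_t+\eta^\star\he$, where $\eta^\star=-\frac{F_{\he}'(\x_t)}{4\abs{F_{\he}'(\x_t)}}\sqrt{\epsilon/\rho}$ (and $\eta^\star=\tfrac14\sqrt{\epsilon/\rho}$ in the degenerate case $F_{\he}'(\x_t)=0$, in which the first-order term simply vanishes and the conclusion only improves). The point of this sign choice is that the first-order term is nonpositive: $\langle\nabla F(\x_t),\eta^\star\he\rangle=\eta^\star F_{\he}'(\x_t)=-\tfrac14\abs{F_{\he}'(\x_t)}\sqrt{\epsilon/\rho}\le0$, so it can be dropped. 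For the second-order term, the hypothesis $\he^\top\nabla^2F(\x_t)\he\le-\tfrac14\sqrt{\rho\epsilon}$ together with $\norm{\y-\x}^2=(\eta^\star)^2=\tfrac{1}{16}\,\epsilon/\rho$ gives $\tfrac12(\eta^\star)^2\he^\top\nabla^2F(\x_t)\he\le-\tfrac{1}{128}\sqrt{\epsilon^3/\rho}$, while the cubic remainder is at most $\tfrac{\rho}{6}\abs{\eta^\star}^3=\tfrac{1}{384}\sqrt{\epsilon^3/\rho}$.

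Summing the three contributions then yields
\[
F\Big(\x_t-\frac{F_{\he}'(\x_t)}{4\abs{F_{\he}'(\x_t)}}\sqrt{\tfrac{\epsilon}{\rho}}\Big)\le F(\x_t)-\tfrac{1}{128}\sqrt{\tfrac{\epsilon^3}{\rho}}+\tfrac{1}{384}\sqrt{\tfrac{\epsilon^3}{\rho}}=F(\x_t)-\tfrac{1}{192}\sqrt{\tfrac{\epsilon^3}{\rho}}\le F(\x_t)-\tfrac{1}{384}\sqrt{\tfrac{\epsilon^3}{\rho}},
\]
which is exactly (in fact slightly stronger than) the claimed inequality. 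I do not expect any real obstacle here: the argument is an elementary one-dimensional Taylor estimate, and the only points requiring care are (i) orienting the step so that the directional-derivative term is nonpositive --- this is the sole purpose of the unit factor $F_{\he}'(\x_t)/\abs{F_{\he}'(\x_t)}$, with the degenerate vanishing-gradient case handled separately --- and (ii) the constant bookkeeping, where one checks that the cubic error $\tfrac1{384}\sqrt{\epsilon^3/\rho}$ is comfortably absorbed by the gain $\tfrac1{128}\sqrt{\epsilon^3/\rho}$ coming from the negative curvature. Finally, note that only $\rho$-Hessian Lipschitzness is used; the $\ell$-smoothness hypothesis plays no role in this particular lemma.
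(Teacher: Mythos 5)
Your proof is correct and is the standard argument one would expect: you apply the cubic Taylor upper bound implied by $\rho$-Hessian Lipschitzness along the unit direction $\he$, orient the step $\eta^\star=-\frac{F_{\he}'(\x_t)}{4|F_{\he}'(\x_t)|}\sqrt{\epsilon/\rho}$ so the first-order term is nonpositive, and verify that the curvature gain $\tfrac{1}{128}\sqrt{\epsilon^3/\rho}$ dominates the cubic remainder $\tfrac{1}{384}\sqrt{\epsilon^3/\rho}$, giving a decrease of $\tfrac{1}{192}\sqrt{\epsilon^3/\rho}$, which indeed implies the stated $\tfrac{1}{384}\sqrt{\epsilon^3/\rho}$ bound (and you correctly read the statement's typos: $\he^\top\nabla^2 F(\x_t)\he$, and the implicit $\he$ in the step). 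The paper only cites this lemma from its reference, but your argument is exactly the standard one-dimensional Taylor estimate that the original proof uses.
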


Now, we are ready to prove \thm{TinyAGDF}.

\begin{proof}
We first set $\nu\leq\frac{C_0}{\ell}\cdot\frac{\epsilon^{10}}{d^5}$ for some small enough constant $C_0$. According to \lem{JordanPerf}, we bound $\hnu\leq O(\epsilon^{3.25}/d^{0.5})$ with probability at least $1-\delta_0$. Assume \algo{PAGDANCFQGC} starts at point $\x_0$ and the local minimum of $F$ has value $F^*$. Since $F$ is $B$-bounded, $F(\x_0)-F^*\leq 2B$. Set the total number of iterations $T$ to be:
\begin{align}
T=3\max\left\{\frac{2B\ut'}{\uf},768B\ut\cdot\sqrt{\frac{\rho}{\epsilon^3}}\right\}.
\end{align}
Suppose for some iterations $\x_t$, we have $\tnabla F(\x_t)\leq3\epsilon/4$ and $\lambda_{\min}(\nabla^2 F(\x_t))\leq-\sqrt{\rho\epsilon}$. The error probability of this assumption is given later. Under this assumption, the function value decreases for $\frac{1}{384}\cdot\sqrt{\frac{\epsilon^3}{\rho}}$ after each $\ut'$ iterations. The number of such iterations when \lem{SmallPAGD} can be called is bounded by $T/3$ times, for otherwise the function value will decrease greater than $2B\geq F(\x_0)-F^*$, which is impossible. The failure probability is composed of two parts: the failure probability of estimating the gradients in \lem{JordanPerf} and the failure probability of \lem{SmallPAGD}. In each iteration, the probability of failure is bounded by $2\delta_0$ according to the union bound. When we choose a large enough constant $\c_\delta$, the overall probability that \algo{PAGDANCFQGC} fails to indicate a negative curvature is upper bounded by
\begin{align}
\frac{T}{3}\cdot 2\delta_0\leq\frac{\delta}{2}.
\end{align}

Excluding the iterations that \lem{SmallPAGD} is applied, there are $2T/3$ iterations left. We consider the iterations $\x_t$ with large gradients, ${\tnabla F(\x_t)}\geq3\epsilon/4$. According to \lem{LargePAGD}, the function value decreases by at least $\uf$ with probability at least $1-\delta_0$ in $\ut'$ iterations. Thus there can be at most $T/3$ steps with large gradients, for otherwise, the function value will decrease greater than $2B\geq F(\x_0)-F^*$, which is impossible. The fail probability is bounded by
\begin{align}
\frac{T}{3}\cdot \delta_0\leq\frac{\delta}{2}.
\end{align}

In summary, we can deduce that with probability at least $1-\delta$, there are at most $T/3$ iterations within which the neighboring $\ut$ iterations have small gradients but large negative curvatures, and at most $T/3$ iterations with large gradients. Therefore, the rest $T/3$ iterations must be $\epsilon$-SOSPs of target function $F$. The number of queries is thus bounded by
\begin{align}
T\leq\tO\left(\frac{B\ell}{\epsilon^{1.75}}\cdot\log d\right).
\end{align}
\end{proof}

\subsection{Robustness of Quantum Perturbed Gradient Descent}\label{sec:RobustPGD}
When the noise rate increases but is still bounded by $\nu\leq\tO(\epsilon^6/d^4)$, some quantum algorithms using perturbed gradient descent (PGD) for noiseless cases are robust against such noise. We introduce the quantum PGD algorithm, which is the one of the standard methods used for noiseless nonconvex optimization~\cite{zhang2021quantum}. 

\algo{PGDQGC} replaces the gradient queries in PGD~\cite{jin2021nonconvex} by Jordan's gradient estimations in \lem{JordanPerf}.
\begin{algorithm}[htbp]
\caption{Perturbed Gradient Descent with Quantum Gradient Computation}
\label{algo:PGDQGC}
\begin{algorithmic}[1]
\REQUIRE $\x_0$, learning rate $\eta$, noise ratio $r$
\FOR{$t=0,1,\ldots,T$}
\STATE Apply \lem{JordanPerf} to compute an estimation $\tnabla F(\x)$ of $\nabla F(\x)$
\STATE $\x_{t+1}\leftarrow \x_t-\eta(\tnabla F(\x)+\xi_t)$, $\xi_t$ uniformly $\sim B_0(r)$
\ENDFOR
\end{algorithmic}
\end{algorithm}

We prove that \algo{PGDQGC} has the following performance guarantee:
\begin{theorem}[Formal version of \thm{ZeroJordan}]\label{thm:ZeroJordanF}
Suppose we have a target function $F$ and its noisy evaluation $f$ satisfying \assume{ZeroProb} with $\nu\leq \tO(\delta^2\epsilon^6/d^4)$. \algo{PGDQGC} can find an $\epsilon$-SOSP of $F$ satisfying Eq.~\eq{SOSPDef} with probability at least $1-\delta$, using
\begin{align}
\tilde{O}\left(\frac{\ell B}{\epsilon^2}\cdot\log^4 d\right)
\end{align}
queries to $U_f$ defined in \eq{QZeroOracle}, under the following parameter choices:
\begin{align}\label{eq:ZeroJordanParam}
\eta=\frac{1}{\ell},\quad\delta_0=\frac{\delta\epsilon^2}{32\ell B}\chi^{-4},\quad r=\epsilon\chi^{-3}c^{-6},\quad\ut=\frac{\chi c}{\eta\sqrt{\rho\epsilon}},\quad\uf=\sqrt{\frac{\epsilon^3}{\rho}}\chi^{-3}c^{-5},
\end{align}
where $c$ is some large enough constant and $\chi=\max\{1,\log(d\ell B/\rho\epsilon\delta_0)\}$.
\end{theorem}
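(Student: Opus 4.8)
The plan is to follow the standard PGD analysis of Jin et al.~\cite{jin2021nonconvex} (as adapted to the quantum-gradient-estimation setting in~\cite{zhang2021quantum}), but carefully tracking the error $\hnu$ introduced by \lem{JordanPerf} at every step and verifying that the chosen noise budget $\nu\leq\tO(\delta^2\epsilon^6/d^4)$ keeps $\hnu$ small enough that all the noiseless estimates survive up to constants. First I would fix $\nu\leq C_0\delta^2\epsilon^6/(d^4\ell)$ for a small constant $C_0$; then by \lem{JordanPerf} with $\omega=\Theta(d/\delta_0)$, with probability at least $1-\delta_0$ a single quantum query returns $\tnabla F(\x)$ with $\|\tnabla F(\x)-\nabla F(\x)\|\leq O(d^2\sqrt{\nu\ell}/\delta_0)\leq O(\epsilon\cdot\poly(\chi)^{-1})$, i.e.\ $\hnu\leq O(\epsilon/\poly\log d)$ — in particular $\hnu$ is much smaller than the perturbation radius $r$ and than $\epsilon$. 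This is the only place the noise assumption enters, and everything downstream is the inexact-gradient version of the classical argument.

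The body of the proof splits into the two familiar regimes, exactly as in the proof of \thm{TinyAGDF} above. \textbf{Large-gradient regime.} When $\|\tnabla F(\x_t)\|>\epsilon$ (hence $\|\nabla F(\x_t)\|>\epsilon-\hnu\geq 3\epsilon/4$), a single inexact gradient step with $\eta=1/\ell$ decreases $F$ by $\Omega(\epsilon^2/\ell)-O(\eta\hnu^2)=\Omega(\epsilon^2/\ell)$; since $F$ is $B$-bounded this can happen at most $O(\ell B/\epsilon^2)$ times. \textbf{Small-gradient-but-negative-curvature regime.} When $\|\tnabla F(\x_t)\|\leq\epsilon$ but $\lambda_{\min}(\nabla^2F(\x_t))\leq-\sqrt{\rho\epsilon}$, I would invoke the "improve-or-localize'' / coupling argument of~\cite{jin2021nonconvex}: add the uniform perturbation $\xi_t\sim B_0(r)$ and run $\ut=\tilde\Theta(1/(\eta\sqrt{\rho\epsilon}))$ inexact gradient steps; the key lemma is that two coupled trajectories whose starting points differ by a small component along the most-negative eigendirection separate to distance $\Omega(r)$ within $\ut$ steps, so at least one of them escapes the saddle and the Hamiltonian/function value drops by $\uf=\tilde\Omega(\sqrt{\epsilon^3/\rho})$ with probability $\geq 1-\delta_0$. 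The inexact-gradient version of this lemma requires $\hnu$ to be dominated by the perturbation and by the curvature-driven growth over $\ut$ steps — precisely what the noise budget guarantees — so the escape lemma of~\cite{zhang2021quantum} carries over verbatim with the substitution $\nabla F\mapsto\tnabla F$. Because each such escape episode costs $\ut$ queries and drops $F$ by $\uf$, there are at most $O(B/\uf)$ of them, i.e.\ $O(\ell B\,\chi^4/\epsilon^2)$ total queries from this regime.

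Finally I would assemble the counts: set $T=\tilde\Theta(\ell B/\epsilon^2)$ so that the episodes above cannot exhaust all iterations; a union bound over the $\leq T$ gradient-estimation failures and $\leq O(B/\uf)$ escape-lemma failures, each of probability $\delta_0=\delta\epsilon^2\chi^{-4}/(32\ell B)$, gives total failure probability $\leq\delta$; hence with probability $\geq1-\delta$ some visited iterate has $\|\nabla F(\x_t)\|\leq\epsilon$ and $\lambda_{\min}(\nabla^2F(\x_t))\geq-\sqrt{\rho\epsilon}$, i.e.\ is an $\epsilon$-SOSP, using $\tilde O((\ell B/\epsilon^2)\log^4 d)$ queries to $U_f$ (the $\log^4 d=\chi^4$ comes from $\ut\propto\chi$ combined with the logarithmic overhead in $\delta_0$). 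The main obstacle — and the only real work beyond bookkeeping — is verifying that the coupling/escape lemma of~\cite{jin2021nonconvex,zhang2021quantum} is genuinely robust to an adversarial (not stochastic, not zero-mean) gradient error of size $\hnu$: one must check that the quadratic-approximation error terms $\rho\|\x-\tx\|^2$ and the accumulated $\hnu$-drift over $\ut$ steps are both $o(r)$ under the stated parameters, which pins down the exponent $6$ in $\nu\leq\tO(\epsilon^6/d^4)$ (from $\hnu\sim d^2\sqrt{\nu\ell}/\delta_0$, needing $\hnu\lesssim\epsilon/\poly\log d$ and $\delta_0\sim\epsilon^2/(\ell B)$).
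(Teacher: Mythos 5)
Your proposal is correct and matches the paper's proof in essentially every respect: bound the quantum-gradient-estimation error $\hnu$ via \lem{JordanPerf} and the noise budget, split the analysis into a large-gradient regime and a small-gradient-but-negative-curvature regime, invoke the inexact-gradient PGD escape lemma for the latter, and finish with an iteration count plus union bound. The paper simply imports the escape step directly as a black box (Lemma~67 and Lemma~68 of~\cite{jin2018local}, stated here as \lem{JordanLargeGradJin} and \lem{JordanSmallGrad}), whereas you propose to re-run the coupling argument underlying those lemmas — either route works and both ultimately reduce to the same condition $\hnu\leq\epsilon/20$.

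Two small inaccuracies worth flagging. First, with $\nu\leq C_0\delta^2\epsilon^6/(d^4\ell)$ as you wrote it, the guarantee from \lem{JordanPerf} at confidence $1-\delta_0$ yields $\hnu=O(d^2\sqrt{\nu\ell}/\delta_0)=O(\ell B\chi^4\epsilon)$, not $O(\epsilon/\poly\log d)$; to actually land at $\hnu\leq\epsilon/20$ one must put the $\poly(\chi,\ell,B)$ factors into the denominator of $\nu$ (as the paper does, writing $\nu\leq\frac{C_0}{\ell}\big(\frac{\delta\epsilon^3}{Bd^2\chi^4\ell}\big)^2$, all of which the $\tO$ in the theorem statement absorbs), and the constant threshold $\epsilon/20$ is what \lem{JordanSmallGrad} requires — you do not need, and the parameters do not give, $\hnu\ll r$. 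Second, the escape lemma you want is the classical-perturbation one from~\cite{jin2018local}, not~\cite{zhang2021quantum}: as the paper notes right after the proof, \algo{PGDQGC} uses a classical uniform perturbation from $\mathbb{B}(\0,r)$ rather than the quantum-simulation perturbation of~\cite{zhang2021quantum}.
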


To prove \thm{ZeroJordanF}, we consider two cases where the current iteration $\x_t$ is not an $\epsilon$-SOSP of $F$. In the first case, the gradient $\norm{\nabla F(\x_t)}\geq\epsilon$ is larger than $\epsilon$. In the second case, the gradient $\norm{\nabla F(\x_t)}\leq\epsilon$ but the minimal eigenvalue of the Hessian matrix satisfies $\lambda_{\min}(\nabla^2 F(\x_t))\leq-\sqrt{\rho\epsilon}$. Intuitively, the proof of \thm{ZeroJordanF} is composed of the performance guarantees regarding both cases. For \algo{PGDQGC}, it takes $\ut=O(\log d)$ queries to $U_f$ to decrease the function value by $\uf=O(1/\log^3 d)$~\cite{jin2018local}. 

We first set $\nu\leq\frac{C_0}{\ell}\cdot(\frac{\delta\epsilon^3}{Bd^2\chi^4\ell})^2$ for some small enough constant $C_0$. Formally, we introduce the following lemma characterizing the performance of \algo{PGDQGC} when the gradient is large:
\begin{lemma}\label{lem:JordanLargeGrad}
Under the setting of \thm{ZeroJordanF}, for any iteration $t$ of \algo{PGDQGC} with $\norm{\nabla F(\x_t)}\geq\epsilon$, we have $F(\x_{t+1})-F(\x_t)\leq-\eta\epsilon^2/4$ with probability at least $1-\delta_0$, where $\delta_0$ is defined in Eq.~\eq{ZeroJordanParam}.
\end{lemma}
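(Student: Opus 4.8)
The plan is to treat a single step of \algo{PGDQGC} as an inexact gradient-descent step on the smooth function $F$ and run the standard descent-lemma argument, folding both the quantum gradient-estimation error and the artificial perturbation $\xi_t$ into one error vector. Set $\vect{e}_t := \tnabla F(\x_t) - \nabla F(\x_t)$ and $\vect{w}_t := \vect{e}_t + \xi_t$, so that the update reads $\x_{t+1} - \x_t = -\eta\bigl(\nabla F(\x_t) + \vect{w}_t\bigr)$. By $\ell$-smoothness of $F$,
\[
F(\x_{t+1}) \le F(\x_t) + \langle \nabla F(\x_t),\, \x_{t+1}-\x_t\rangle + \tfrac{\ell}{2}\norm{\x_{t+1}-\x_t}^2 .
\]
Substituting the update and using $\eta = 1/\ell$ (so $\tfrac{\ell\eta^2}{2} = \tfrac{\eta}{2}$), the two cross terms $-\eta\langle\nabla F(\x_t),\vect{w}_t\rangle$ and $+\eta\langle\nabla F(\x_t),\vect{w}_t\rangle$ cancel, leaving
\[
F(\x_{t+1}) - F(\x_t) \le -\tfrac{\eta}{2}\norm{\nabla F(\x_t)}^2 + \tfrac{\eta}{2}\norm{\vect{w}_t}^2 .
\]

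It then remains to show $\norm{\vect{w}_t}^2 \le \epsilon^2/4$ with probability at least $1-\delta_0$: the hypothesis $\norm{\nabla F(\x_t)}\ge\epsilon$ makes the first term at most $-\tfrac{\eta}{2}\epsilon^2$, and $-\tfrac{\eta}{2}\epsilon^2 + \tfrac{\eta}{2}\cdot\tfrac{\epsilon^2}{4} = -\tfrac{3\eta\epsilon^2}{8} \le -\tfrac{\eta\epsilon^2}{4}$. I would bound $\norm{\vect{w}_t}^2 \le 2\norm{\vect{e}_t}^2 + 2\norm{\xi_t}^2$. The perturbation satisfies $\norm{\xi_t}\le r = \epsilon\chi^{-3}c^{-6} \le \epsilon/4$ deterministically, while \lem{JordanPerf}, instantiated with failure probability $\delta_0$ (i.e.\ $\omega = 1 + d/\delta_0$), gives $\norm{\vect{e}_t} \le O\bigl(d^2\sqrt{\nu\ell}/\delta_0\bigr)$ with probability at least $1-\delta_0$. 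Substituting $\nu \le \frac{C_0}{\ell}\bigl(\frac{\delta\epsilon^3}{Bd^2\chi^4\ell}\bigr)^2$ and $\delta_0 = \frac{\delta\epsilon^2}{32\ell B}\chi^{-4}$, the $B,\ell,\chi,d$ factors cancel and the bound collapses to $\norm{\vect{e}_t} = O(\sqrt{C_0}\,\epsilon)$, which is at most $\epsilon/4$ once $C_0$ is chosen small enough. Then $\norm{\vect{w}_t}^2 \le 2(\epsilon/4)^2 + 2(\epsilon/4)^2 = \epsilon^2/4$, and the displayed inequality yields $F(\x_{t+1})-F(\x_t) \le -\tfrac{\eta\epsilon^2}{4}$ with probability at least $1-\delta_0$, as desired.

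There is no conceptual obstacle here: this is simply the ``large-gradient'' case of the standard PGD analysis carried over, with $F$-queries replaced by $f$-queries only inside \lem{JordanPerf}. The one point demanding care is the constant bookkeeping in the second step — one must verify that the $d/\delta_0$ blow-up incurred when turning the tail bound of \lem{JordanPerf} into the $O(d^2\sqrt{\nu\ell}/\delta_0)$ high-probability estimate is exactly offset by the $d^{-2}$ inside the $\nu$-threshold, and that the $\chi^{-4}$ and $B^{-1}$ in $\delta_0$ match the $\chi^{4}$ and $B$ produced after squaring. This is precisely why the $\nu$-threshold in \thm{ZeroJordanF} carries the factor $\bigl(\delta\epsilon^3/(Bd^2\chi^4\ell)\bigr)^2$, and why $C_0$ (small) and $c$ (large) are left free as universal constants.
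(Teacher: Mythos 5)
Your proof is correct and follows essentially the same route as the paper: both are standard descent-lemma arguments for inexact gradient descent, with the gradient-estimation error from \lem{JordanPerf} and the artificial perturbation $\xi_t$ absorbed into a single error term whose norm is made small relative to $\epsilon$ via the $\nu$-threshold and the choice of $r$. Your algebra is marginally cleaner — you exploit the exact cancellation of the cross terms at $\eta=1/\ell$ to get $F(\x_{t+1})-F(\x_t)\le -\tfrac{\eta}{2}\norm{\nabla F(\x_t)}^2+\tfrac{\eta}{2}\norm{\vect{w}_t}^2$, whereas the paper loosens the cross terms with Cauchy–Schwarz and uses the tighter target $\norm{\kappa_t}\le\epsilon/10$ — but the two arguments are otherwise identical.
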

\begin{proof}
We set $\omega=2d/\delta_0$ and choose $C_0$ small enough such that
\begin{align}
\norm{\tnabla F(\x)-\nabla F(\x)}\leq \frac{\epsilon}{20}
\end{align}
with probability at least $1-\delta_0$ according to \lem{JordanPerf}.

Next, we choose $c$ such that $\norm{\tnabla F(\x)-\nabla F(\x)}\leq\epsilon/20$. Recall that the perturbation $\xi_t$ is chosen from $B_0(r)$, the stochastic part in each iteration $\kappa_t=\tnabla F(\x)-\nabla F(\x)+\xi_t$ is bounded by $\norm{\kappa_t}=\epsilon/10$. According to the update rule $\x_{t+1}=\x_t-\eta(\nabla F(\x_t)+\kappa_t)$ of \algo{PGDQGC}, we have
\begin{align}
F(\x_{t+1})&\leq F(\x_t)+\expval{\nabla F(\x_t),\x_{t+1}-\x_t}+\frac{\ell}{2}\norm{\x_{t+1}-\x_t}^2\nonumber\\
&\leq F(\x_t)-\eta\left[\norm{\nabla F(\x_t)}^2-\norm{\nabla F(\x_t)}\norm{\kappa_t}\right]+\frac{\eta^2\ell}{2}\left[\norm{\nabla F(\x_t)}^2+2\norm{\nabla F(\x_t)}\norm{\kappa_t}+\norm{\kappa_t}^2\right]\nonumber\\
&\leq F(\x_t)-\eta\norm{\nabla F(\x_t)}\left[\frac12\norm{\nabla F(\x_t)}-2\norm{\kappa_t}\right]+\frac{\eta}{2}\norm{\kappa_t^2}\nonumber\\\label{eq:JordanLargeGrad}
&\leq F(\x_t)-\frac{\eta\epsilon^2}{4}.
\end{align}
\end{proof}

In addition, we can generalize the following lemma in Ref.~\cite{jin2018local}.
\begin{lemma}[Lemma 67, Ref.~\cite{jin2018local}]\label{lem:JordanLargeGradJin}
Suppose we are given a oracle that outputs an gradient estimation $\tnabla F(\x_t)$ such that $\norm{\tnabla F(\x_t)-\nabla F(\x_t)}\leq\epsilon/20$. Consider a iteration $t$ of \algo{PGDQGC} with $\norm{\nabla F(\x_t)}\geq\epsilon$. By using the PGD update rule $\x_{t+1}=\x_t-\eta(\tnabla F(\x_t)+\xi_t)$, we have $F(\x_{t+1})-F(\x_t)\leq-\eta\epsilon^2/4$ with probability at least $1-\delta_0$, where $\delta_0$ is defined in Eq.~\eq{ZeroJordanParam}.
\end{lemma}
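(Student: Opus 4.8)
The statement is the oracle-abstracted counterpart of \lem{JordanLargeGrad}: instead of invoking Jordan's algorithm via \lem{JordanPerf} to control the gradient-estimation error, we now take the bound $\norm{\tnabla F(\x_t)-\nabla F(\x_t)}\le\epsilon/20$ as a standing hypothesis and rerun the same inexact-gradient descent argument. The plan is therefore essentially to repeat the $\ell$-smoothness computation carried out in the proof of \lem{JordanLargeGrad}, being explicit about the one place where the hypothesis replaces the call to \lem{JordanPerf}; there is no genuinely new content.

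Concretely, I would first absorb all deviation from an exact gradient step into a single vector $\kappa_t\coloneqq\tnabla F(\x_t)-\nabla F(\x_t)+\xi_t$, so that the update of \algo{PGDQGC} reads $\x_{t+1}=\x_t-\eta(\nabla F(\x_t)+\kappa_t)$. Since $\xi_t$ is drawn from $B_0(r)$ with $r=\epsilon\chi^{-3}c^{-6}\le\epsilon/20$ (which holds once $c$ is a large enough constant, using $\chi\ge1$), the triangle inequality together with the hypothesis gives $\norm{\kappa_t}\le\epsilon/20+r\le\epsilon/10$. Then I would apply the descent inequality $F(\x_{t+1})\le F(\x_t)+\expval{\nabla F(\x_t),\x_{t+1}-\x_t}+\tfrac{\ell}{2}\norm{\x_{t+1}-\x_t}^2$, substitute the update, use $\eta=1/\ell$ so that $\tfrac{\eta^2\ell}{2}=\tfrac{\eta}{2}$, and bound the two cross terms by Cauchy--Schwarz, obtaining
\begin{align}
F(\x_{t+1})-F(\x_t)\le-\eta\norm{\nabla F(\x_t)}\left(\tfrac12\norm{\nabla F(\x_t)}-2\norm{\kappa_t}\right)+\tfrac{\eta}{2}\norm{\kappa_t}^2 .
\end{align}
Plugging in $\norm{\nabla F(\x_t)}\ge\epsilon$ and $\norm{\kappa_t}\le\epsilon/10$ makes the parenthesis at least $\tfrac{3\epsilon}{10}$ and the last term at most $\tfrac{\eta\epsilon^2}{200}$, so the right-hand side is at most $-\tfrac{3\eta\epsilon^2}{10}+\tfrac{\eta\epsilon^2}{200}\le-\tfrac{\eta\epsilon^2}{4}$, as claimed.

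Regarding the probability qualifier: once the hypothesis on $\tnabla F(\x_t)-\nabla F(\x_t)$ is granted, the only remaining randomness is $\xi_t$, which is bounded almost surely, so the displayed decrease actually holds deterministically; the factor $1-\delta_0$ is carried over only because, when this lemma is used inside the proof of \thm{ZeroJordanF}, the vector $\tnabla F(\x_t)$ is produced by \lem{JordanPerf} with $\omega=2d/\delta_0$, whose $\epsilon/20$-accuracy guarantee holds with probability $1-\delta_0$ under the stated choice of $\nu$ and $c$. Accordingly, I do not expect any real obstacle: the only points needing a little care are fixing the constant $c$ large enough that the perturbation radius $r$ and the gradient error jointly stay below $\epsilon/10$, and verifying that the resulting constant $\tfrac{3}{10}-\tfrac{1}{200}=\tfrac{59}{200}>\tfrac14$ is consistent with the advertised $\eta\epsilon^2/4$ progress.
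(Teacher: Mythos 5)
Your proposal is correct and takes essentially the same route as the paper: the paper states this lemma without a standalone proof (it is cited from Ref.~\cite{jin2018local}), but the descent computation you carry out---absorbing the gradient error and the perturbation into $\kappa_t$ with $\norm{\kappa_t}\le\epsilon/10$ and applying $\ell$-smoothness with $\eta=1/\ell$---is exactly the chain of inequalities the paper spells out in its proof of \lem{JordanLargeGrad}, of which this lemma is the oracle-abstracted version. Your observation that the $1-\delta_0$ qualifier is vestigial once the accuracy hypothesis is taken as given (and only re-enters when $\tnabla F$ is realized via \lem{JordanPerf}) is an accurate clarification of a point the paper leaves implicit.
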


When the gradient is small but the minimal eigenvalue of the Hessian matrix is large, i.e., the function has a large negative curvature at the current iteration, we have the following lemma from Ref.~\cite{jin2018local}.
\begin{lemma}[Lemma 68, Ref.~\cite{jin2018local}]\label{lem:JordanSmallGrad}
Suppose we are given a oracle that outputs an gradient estimation $\tnabla F(\x_t)$ such that $\norm{\tnabla F(\x_t)-\nabla F(\x_t)}\leq\epsilon/20$ and the norm of the perturbation in PGD is bounded by $\norm{\xi_t}\leq r$ with $r>\epsilon/20$. If $\norm{\nabla F(\x_t)}\leq\epsilon$ and $\lambda_{\min}(\nabla^2 F(\x_t))\leq-\sqrt{\rho\epsilon}$. By using the PGD update rule $\x_{t+1}=\x_t-\eta(\tnabla F(\x_t)+\xi_t)$, we have $F(\x_{t+\ut})-F(\x_t)\leq-\uf$ with probability at least $1-\delta_0$ when running \algo{PGDQGC}.
\end{lemma}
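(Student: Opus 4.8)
The plan is to reduce this to the perturbed-gradient-descent saddle-escape analysis of Ref.~\cite{jin2018local} (its Lemma~68), with the gradient-estimation error of \lem{JordanPerf} folded into the injected perturbation. Write the update of \algo{PGDQGC} as $\x_{\tau+1}=\x_\tau-\eta(\nabla F(\x_\tau)+\kappa_\tau)$ with $\kappa_\tau\coloneqq(\tnabla F(\x_\tau)-\nabla F(\x_\tau))+\xi_\tau$. On the good event of \lem{JordanPerf} we have $\norm{\tnabla F(\x_\tau)-\nabla F(\x_\tau)}\leq\epsilon/20$, and $\xi_\tau$ is uniform in $B_0(r)$ with $r>\epsilon/20$, so $\norm{\kappa_\tau}\leq 2r$ deterministically; moreover, viewed through the single perturbation $\xi_t$ at the reference step $t$, the noise kicks $\x_{t+1}$ uniformly over a ball of radius $\eta r$ whose center is displaced from the noiseless update by at most $\eta\epsilon/20<\eta r$. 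Thus the randomness the escape argument exploits survives the inexactness, and the displacement is absorbed into constants. First I would condition on the event that all $\ut$ gradient estimates along the trajectory satisfy $\norm{\tnabla F-\nabla F}\leq\epsilon/20$, which holds with probability $\geq 1-\delta_0/2$ by \lem{JordanPerf} with $\omega=\Theta(d\ut/\delta_0)$, valid since $\nu$ is below the threshold in \thm{ZeroJordanF}.

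Next I would set up the standard improve-or-localize dichotomy at $\tx\coloneqq\x_t$. Either $F$ drops by $\uf$ within $\ut$ steps and we are done; or, from the per-step descent inequality (with $\eta=1/\ell$) one telescopes to $\sum_{\tau=t}^{t+\ut-1}\norm{\x_{\tau+1}-\x_\tau}^2\lesssim\eta(F(\x_t)-F(\x_{t+\ut}))+\eta^2\ut\max_\tau\norm{\kappa_\tau}^2$, and then by Cauchy--Schwarz every iterate stays inside a ball $B(\tx,\mathcal{S})$ of radius $\mathcal{S}=\Theta(\sqrt{\epsilon/\rho}\,\chi^{-1}c^{-2})$, which is small enough that $\rho\mathcal{S}\ll\sqrt{\rho\epsilon}$, so $\nabla^2 F(\tx)$ approximates $\nabla^2 F(\x_\tau)$ along the whole run. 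Let $\he_1$ be a unit eigenvector of $\nabla^2 F(\tx)$ with eigenvalue $-\gamma$, $\gamma\geq\sqrt{\rho\epsilon}$. I then run the coupling argument: two runs with identical $\xi_\tau$ whose values of $\x_{t+1}$ differ only along $\he_1$ by $\Theta(\mu r/\sqrt d)$; writing $\vect{w}_\tau$ for the difference and Taylor-expanding $\nabla F$ about $\tx$ gives $\vect{w}_{\tau+1}=(I-\eta\nabla^2 F(\tx))\vect{w}_\tau+\eta M_\tau\vect{w}_\tau$ with $\norm{M_\tau}\leq\rho\mathcal{S}$, so the $\he_1$-component grows like $(1+\eta\gamma)^\tau$ and dominates; after $\ut=\Theta(\chi/(\eta\sqrt{\rho\epsilon}))$ steps the separation exceeds $2\mathcal{S}$, forcing at least one run out of $B(\tx,\mathcal{S})$ and hence decreasing $F$ by $\uf$. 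The set of $\xi_t$ for which neither run escapes lies in a slab of width $\Theta(\mu r/\sqrt d)$ along $\he_1$ inside $B_0(r)$, so $\Pr_{\xi_t}[\text{stuck}]\leq O(\mu)\leq\delta_0/2$ for the constants in \eq{ZeroJordanParam}; a union bound with the gradient-estimation failure event gives the claimed $1-\delta_0$.

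The delicate part is the simultaneous calibration: the error term $\eta^2\ut\max_\tau\norm{\kappa_\tau}^2=\Theta(\eta^2\ut r^2)$ in improve-or-localize must stay strictly below the target $\uf$, while the coupling separation must still exceed the localization radius $\mathcal{S}$ after only $\ut$ steps. These constraints are exactly what pin down the mutually consistent polylogarithmic choices $r=\epsilon\chi^{-3}c^{-6}$, $\ut=\chi c/(\eta\sqrt{\rho\epsilon})$, $\uf=\sqrt{\epsilon^3/\rho}\,\chi^{-3}c^{-5}$ with $c$ a large absolute constant, and verifying that a single choice of $c$ satisfies all of them is the computation carried out for Lemma~68 of Ref.~\cite{jin2018local}. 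The only modification here is that $\kappa_\tau$ now additionally carries the $\epsilon/20$-bounded gradient error, which merely inflates the effective perturbation radius by a constant factor and so only affects the constants $c$ and $\delta_0$.
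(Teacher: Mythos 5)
The paper does not actually prove this lemma; it is imported verbatim as Lemma~68 of Ref.~\cite{jin2018local}, so there is no in-paper argument to compare against. Your sketch reconstructs the standard improve-or-localize plus coupled-sequence argument, which is indeed the template behind that cited lemma, but the reconstruction has two substantive problems.

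First, the coupling recurrence is incomplete at exactly the point where the noise matters. You write $\vect{w}_{\tau+1}=(I-\eta\nabla^2 F(\tx))\vect{w}_\tau+\eta M_\tau\vect{w}_\tau$, but the two coupled trajectories query the inexact oracle at \emph{different} points, so the update of the difference carries an additional \emph{additive} term $\eta\Delta_\tau$ with $\|\Delta_\tau\|\le 2\cdot\epsilon/20$ coming from $(\tnabla F-\nabla F)(\x_\tau^{(1)})-(\tnabla F-\nabla F)(\x_\tau^{(2)})$. This term is not multiplicative in $\vect{w}_\tau$ and does not vanish as the coupled points approach each other, so it is not subsumed by the $\|M_\tau\|\le\rho\mathcal{S}$ bound. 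Integrating the recurrence shows the $\he_1$-component only survives if the initial coupling displacement exceeds roughly $\sum_\tau(1+\eta\gamma)^{-\tau}\eta\|\Delta_\tau\|\sim\epsilon/\gamma$, which is exactly the delicate comparison Lemma~68 of Ref.~\cite{jin2018local} is engineered to win under the hypothesis $r>\epsilon/20$. Saying the gradient error ``merely inflates the effective perturbation radius by a constant factor'' is misleading: the random perturbation $\xi_t$ enters once, while the gradient error re-enters on every step and accumulates along the trajectory; absorbing it into $r$ does not capture this.

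Second, the opening conditioning step double-counts a failure probability that the lemma does not contain. The hypothesis already grants a deterministic bound $\|\tnabla F(\x_\tau)-\nabla F(\x_\tau)\|\le\epsilon/20$, and in the proof of \thm{ZeroJordanF} the paper applies this lemma \emph{after} conditioning on Jordan's estimator succeeding, paying $2\delta_0$ per iteration by a union bound of the two separate events. Folding the Jordan failure into the $\delta_0$ of this lemma, as your sketch does, means the application in \thm{ZeroJordanF} would pay it twice. The $\delta_0$ here should come solely from the random perturbation (the slab-measure estimate), not from \lem{JordanPerf}.
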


Now, we are ready to prove \thm{ZeroJordanF}.
\begin{proof}[Proof of \thm{ZeroJordanF}]
Assume our \algo{PGDQGC} starts at point $\x_0$ and the local minimum of $F$ has value $F^*$. Since $F$ is $B$-bounded, we have $F(\x_0)-F^*\leq 2B$. Set the total number of iterations $T$ to be:
\begin{align}
T=3\max\left\{\frac{8B}{\eta\epsilon^2},\frac{2B\ut}{\uf}\right\}.
\end{align}
Assume for some iterations $\x_t$, we have $\nabla F(\x_t)\leq\epsilon$ and $\lambda_{\min}(\nabla^2 F(\x_t))\leq-\sqrt{\rho\epsilon}$. 
The error probability of this assumption is given later. Under this assumption, the function value decreases for $\uf$ after each $\ut$ iterations. The number of such iterations when \lem{JordanSmallGrad} can be called is bounded by $T/3$ times, for otherwise the function value will decrease greater than $2B\geq F(\x_0)-F^*$, which is impossible. The failure probability is composed of two parts: the failure probability of estimating the gradients in \lem{JordanPerf} and the failure probability of \lem{JordanSmallGrad}. In each iteration, the probability of failure is bounded by $2\delta_0$ according to the union bound. The overall probability that \algo{PGDQGC} fails to indicate a negative curvature is upper bounded by
\begin{align}
\frac{T}{3}\cdot 2\delta_0\leq\frac{\delta}{2}
\end{align}
for any $\chi$.

Excluding the iterations in which \lem{JordanSmallGrad} is applied, we still have $2T/3$ iterations left. We now consider the iterations $\x_t$ with large gradients ${\nabla F(\x_t)}\geq\epsilon$. According to \lem{JordanLargeGrad}, the function value decreases by at least $\eta\epsilon^2/4$ with probability at least $1-\delta_0$ in each iteration. Thus there can be at most $T/3$ steps with large gradients, for otherwise, the function value will decrease greater than $2B\geq F(\x_0)-F^*$, which is impossible. The fail probability is bounded by
\begin{align}
\frac{T}{3}\cdot \delta_0\leq\frac{\delta}{2}.
\end{align}

In summary, we can deduce that with probability at least $1-\delta$, there are at most $T/3$ iterations within which the neighboring $\ut$ iterations have small gradients but large negative curvatures, and at most $T/3$ iterations with large gradients. Therefore, the rest $T/3$ iterations must be $\epsilon$-SOSPs of target function $F$. The number of queries is thus bounded by
\begin{align}
T\leq\tO\left(\frac{B\ell}{\epsilon^2}\cdot\log^4 d\right).
\end{align}
\end{proof}

The above \thm{ZeroJordanF} indicates that our PGD method with quantum gradient computation still converges and finds an $\epsilon$-SOSP using the same number of iterations (i.e., the same number of queries), even if there exists small noise on the quantum evaluation oracles. We remark that compared to Algorithm $4$ in Ref.~\cite{zhang2021quantum}, \algo{PGDQGC} employs a classical perturbation uniformly chosen from the ball $\mathbb{B}(\0,r)$. Therefore, \algo{PGDQGC} requires no access to the quantum evaluation oracle without noise.

It is natural to ask if we can improve the dependence on $\log d$ in the query complexity. We answer this question with an affirmative answer in \append{PGDQSGC} under some additional assumptions. Consider if we have functions $F$ and $f$ that satisfy \assume{ZeroProb} with $\nu\leq O(\epsilon^6/d^4)$ and we further assume that $f$ is twice differentiable with $\sup_\x\norm{\nabla f-\nabla F}\leq O(\ell/d^{2+\zeta})$ and $\sup_\x\norm{\nabla^2 f-\nabla^2 F}\leq O(\rho/d^{1.5+\zeta})$ for arbitrary $\zeta>0$. We propose a quantum algorithm that can find an $\epsilon$-SOSP for $F$ using $\tO(\ell B/\epsilon^2\cdot\log^2 d)$ queries to the quantum evaluation oracle in Eq.~\eq{QZeroOracle}. 

In addition, we can use the techniques above to prove the algorithmic upper bound for function pair $(F,f)$ satisfying \assume{FirstProb} with $\tnu\leq O(\epsilon/d^{0.5+\zeta})$ for $\zeta>0$ and $\zeta=\Omega(1/\log(d))$. We provide the following corollary corresponding to the last line in \tab{main2}.
\begin{corollary}\label{cor:FirstPGD}
Suppose we have a target function $F$ and a noisy function $f$ satisfying \assume{FirstProb} with $\tnu\leq O(\epsilon/d^{0.5+\zeta})$ for $\zeta>0$ and $\zeta=\Omega(1/\log(d))$. Consider the gradient descent $\x_{t+1}=\eta(\nabla f+\xi_t)$ with $\xi_t$ uniformly chosen from ball $\mathbb{B}(\0,r)$. This rule can output an $\epsilon$-SOSP of $F$ satisfying Eq.~\eq{SOSPDef}, using
\begin{align}
\tilde{O}\left(\frac{\ell B}{\epsilon^2}\cdot\log^4 d\right)
\end{align}
queries to $U_\g$ in \eq{QFirstOracle} with probability $1-\delta$, under the following parameter choices
\begin{align}
\eta=\frac{1}{\ell},\quad\delta_0=\frac{\delta\epsilon^2}{4\ell B}\chi^{-4},\quad r=\epsilon\chi^{-3}c^{-6},\quad\ut=\frac{\chi c}{\eta\sqrt{\rho\epsilon}},\quad\uf=\sqrt{\frac{\epsilon^3}{\rho}}\chi^{-3}c^{-5},
\end{align}
where $c$ is some large enough constant and $\chi=\max\{1,\log(d\ell B/\rho\epsilon\delta_0)\}$.
\end{corollary}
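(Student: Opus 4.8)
The plan is to mirror the proof of \thm{ZeroJordanF} almost verbatim, replacing the role of Jordan's quantum gradient estimation (\lem{JordanPerf}) with the direct noisy first-order oracle $U_\g$ from \eq{QFirstOracle}. The key observation is that under \assume{FirstProb} with $\tnu\leq O(\epsilon/d^{0.5+\zeta})$, a single query to $U_\g$ returns $\nabla f(\x)$, whose Euclidean distance from $\nabla F(\x)$ is controlled: since $\|\nabla F(\x)-\nabla f(\x)\|_\infty\leq\tnu$, we have $\|\nabla F(\x)-\nabla f(\x)\|\leq\sqrt{d}\,\tnu\leq O(\epsilon/d^{\zeta})\leq\epsilon/20$ for $d$ large enough (using $\zeta=\Omega(1/\log d)$ so that $d^{\zeta}=\Omega(1)$, in fact $d^\zeta\to\infty$ when $\zeta$ is a positive constant; one should just verify $d^\zeta$ exceeds the required constant). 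Thus the "gradient estimate" $\tnabla F(\x):=\nabla f(\x)$ satisfies exactly the premise $\|\tnabla F(\x)-\nabla F(\x)\|\leq\epsilon/20$ that was used in \lem{JordanLargeGradJin} and \lem{JordanSmallGrad}, and this now holds \emph{deterministically} rather than with probability $1-\delta_0$.

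Concretely, I would proceed as follows. First, fix the parameters as in the statement and record that $\|\nabla f(\x)-\nabla F(\x)\|\leq\epsilon/20$ for every $\x$. Second, invoke \lem{JordanLargeGradJin} (Lemma 67 of Ref.~\cite{jin2018local}): for any iteration with $\|\nabla F(\x_t)\|\geq\epsilon$, the update $\x_{t+1}=\x_t-\eta(\nabla f(\x_t)+\xi_t)$ gives $F(\x_{t+1})-F(\x_t)\leq-\eta\epsilon^2/4$ — here even with failure probability $0$, since the only randomness (the perturbation $\xi_t$) does not affect this bound once $\|\xi_t\|\leq r$ is guaranteed by construction. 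Third, invoke \lem{JordanSmallGrad} (Lemma 68 of Ref.~\cite{jin2018local}): when $\|\nabla F(\x_t)\|\leq\epsilon$ but $\lambda_{\min}(\nabla^2 F(\x_t))\leq-\sqrt{\rho\epsilon}$, then over $\ut$ iterations the function value drops by $\uf$ with probability at least $1-\delta_0$ (the randomness of the perturbation is genuinely used here to escape the saddle). Fourth, run the identical counting argument from the proof of \thm{ZeroJordanF}: with $T=3\max\{8B/(\eta\epsilon^2),\,2B\ut/\uf\}$, there can be at most $T/3$ large-gradient iterations and at most $T/3$ saddle-escaping episodes before the total decrease would exceed $F(\x_0)-F^*\leq 2B$, so at least $T/3$ iterates are $\epsilon$-SOSPs; a union bound over the $\leq T/3$ saddle episodes gives total failure probability $\leq (T/3)\delta_0\leq\delta$ with the stated choice of $\delta_0$. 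The query count is $T=\tO(\ell B/\epsilon^2\cdot\log^4 d)$, one query to $U_\g$ per iteration.

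The main (and essentially only) obstacle is the dimension-dependence bookkeeping in the noise conversion: one must be careful that the $\ell_\infty$-to-$\ell_2$ conversion costs a factor $\sqrt{d}$, which is exactly why the hypothesis is $\tnu\leq O(\epsilon/d^{0.5+\zeta})$ rather than $O(\epsilon/\sqrt d)$ — the extra $d^{-\zeta}$ buys the slack needed to get strictly below the $\epsilon/20$ threshold for \emph{all} $d$ simultaneously, and the condition $\zeta=\Omega(1/\log d)$ ensures $d^\zeta$ is bounded below by a constant large enough to absorb the constant $20$ (and the smoothness constant, if the Lipschitz estimates of $f$ enter). Beyond this, there is a minor point worth noting explicitly: \lem{JordanSmallGrad} as stated in Ref.~\cite{jin2018local} assumes $f$ is $L$-smooth (which \assume{FirstProb} grants) and that the iteration stays inside a region where Taylor expansion of $F$ is valid; since $F$ is $\ell$-smooth and $\rho$-Hessian Lipschitz globally, these hypotheses transfer directly. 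Everything else is a transcription of the already-proved \thm{ZeroJordanF} with "Jordan's estimate, w.p.\ $1-\delta_0$" replaced by "$\nabla f$, deterministically," so no new analytic ideas are required.
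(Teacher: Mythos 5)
Your proposal matches the paper's proof of \cor{FirstPGD} essentially line for line: both convert the entrywise noise bound $\tnu$ to an $\ell_2$ gradient error by the $\sqrt{d}$ factor, use $\zeta\geq\log 20/\log d$ to push this below $\epsilon/20$, and then replay the counting argument of \thm{ZeroJordanF} via \lem{JordanLargeGradJin} and \lem{JordanSmallGrad}. Your remark that the large-gradient descent step now succeeds deterministically (since the gradient error bound is deterministic rather than a failure-probability event as with Jordan's estimator) is correct and is in fact a small clarification of the paper's bookkeeping, which retains a $\delta_0$ failure term for that case purely for uniformity with the zeroth-order proof.
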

\begin{proof}
Without loss of generality, we set $\tnu\leq O(\epsilon/d^{0.5+\zeta})$ and $\zeta\geq\log 20/\log(d)$ such that $\tnu\geq\epsilon/20$. We then choose $c$ large enough such that $\norm{\tnabla F(\x)-\nabla F(\x)}\leq\epsilon/20$. As the perturbation $\xi_t$ is chosen from $B_0(r)$, the stochastic part in each iteration $\kappa_t=\tnabla F(\x)-\nabla F(\x)+\xi_t$ is bounded by $\norm{\kappa_t}=\epsilon/10$. Similar to the proof of \thm{ZeroJordanF}, we set 
\begin{align}
T=3\max\left\{\frac{8B}{\eta\epsilon^2},\frac{2B\ut}{\uf}\right\}.
\end{align}

Suppose for some iterations, the function have small gradients $\nabla F(\x_t)\leq\epsilon$ and large negative curvatures $\lambda_{\min}(\nabla^2 F(\x_t))\leq-\sqrt{\rho\epsilon}$. Under this assumption, the function value decreases for $\uf$ after each $\ut$ iterations according to \lem{JordanSmallGrad}. The number of such iterations when \lem{JordanSmallGrad} can be called is bounded by $T/3$ times, for otherwise, the function value will decrease greater than $2B\geq F(\x_0)-F^*$, which is impossible. The failure probability is bounded above by
\begin{align}
\frac{T}{3}\cdot \delta_0\leq\delta/2.  
\end{align}

Except for the iterations that \lem{JordanSmallGrad} is applied, we still have $2T/3$ iterations left. We now consider the iterations $\x_t$ with large gradients, ${\nabla F(\x_t)}\geq\epsilon$. According to \lem{JordanLargeGradJin}, the function value decreases by at least $\eta\epsilon^2/4$ with the probability at least $1-\delta_0$ in each iteration. Thus there can be at most $T/3$ steps with large gradients, for otherwise, the function value will decrease greater than $2B\geq F(\x_0)-F^*$, which is impossible. The failure probability is again bounded above by
\begin{align}
\frac{T}{3}\cdot \delta_0\leq\delta/2.  
\end{align} 

Therefore, we can deduce that with probability at least $1-\delta$, there are at most $T/3$ iterations resulting in points having small gradients but large negative curvature, and at most $T/3$ iterations with large gradients. Therefore, the rest $T/3$ iterations must be $\epsilon$-SOSPs. The number of the queries is bounded by
\begin{align}
T\leq\tO\left(\frac{B\ell}{\epsilon^2}\cdot\log^4 d\right).
\end{align}
\end{proof}

\cor{FirstPGD} indicates that when the gradient $\g=\nabla f$ of the noisy function is close enough to the gradient $\nabla F$ of the target function, the PGD algorithm can converge even if the gradient $\g$ is noisy. As we can directly query the noisy gradient, the quantum algorithms such as quantum mean estimation~\cite{hamoudi2021quantum,cornelissen2022near} or quantum gradient estimation~\cite{jordan2005fast,gilyen2019optimizing} cannot provide speedup in this case. Moreover, quantum approaches to add perturbation such as quantum simulation~\cite{zhang2021quantum} require zeroth-order information, which is unavailable under \assume{FirstProb}. Therefore, there is no quantum speedup compared to the classical gradient descent in the setting of \cor{FirstPGD}.


\section{Quantum Speedup Using Mean Estimation}\label{sec:GaussMean}
When the noise strength further increases, it exceeds the robustness of quantum PGD. To handle this issue, we apply a Gaussian smoothing to the noisy function $f$ inspired by Ref.~\cite{jin2018local}, which can turn a possibly nonsmooth or even non-continuous $f$ into a function $f_\sigma$ with ``good" properties such as smoothness and Hessian-Lipschitzness.

\subsection{Zeroth-order Algorithm and Performance Guarantee}\label{sec:ZeroGaussMean}
In this section, we introduce a quantum algorithm based on Gaussian smoothing for function pairs $(F,f)$ satisfying \assume{ZeroProb} with $\nu\leq O(\epsilon^{1.5}/d)$. We formally define the \textit{Gaussian smoothing} for a function $f$ as follows.
\begin{definition}\label{defn:ZeroGaussSmooth}
Given a function $f\colon\mathbb{R}^d\to\mathbb{R}$, we define its Gaussian smoothing $f_{\sigma}\colon\R^d\to\R$ as
\begin{align}
f_\sigma\coloneqq\mathbb{E}_{\z\sim\cN(0,\sigma^2 I)}[f(\x+\z)],
\end{align}
where the parameter $\sigma$ is the smoothing radius.
\end{definition}

Given a noisy function $f$ and a target function $F$ satisfying \assume{ZeroProb}, Gaussian smoothing transfers the (probably even non-smooth or not differentiable) noisy $f$ into a smooth function $f_\sigma$ that has close gradient and Hessian with $F$. Formally, $f_\sigma$ has the following properties according to Ref.~\cite{jin2018local}:
\begin{lemma}[Lemma 13, Ref.~\cite{jin2018local}]\label{lem:ZeroGaussSmoothProp}
 Assume the function pair $(F,f)$ satisfies \assume{ZeroProb}, the Gaussian smoothing $f_\sigma$ of $f$ satisfies the following properties.
\begin{itemize}
    \item $f_\sigma(\x)$ is $O(\ell+\nu/\sigma^2)$-smooth and $O(\rho+\nu/\sigma^3)$-Hessian Lipshitz.
    \item The distance between the gradient and the Hessian of $f_\sigma$ and $F$ at any $\x$ is bounded by $\norm{\nabla f_\sigma(\x)-\nabla F(\x)}\leq O(\rho d\sigma^2+\nu/\sigma)$ and $\norm{\nabla^2 f_\sigma(\x)-\nabla^2 F(\x)}\leq O(\rho\sqrt{d}\sigma+\nu/\sigma^2)$.
\end{itemize}
\end{lemma}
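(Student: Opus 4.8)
The plan is to write the noisy oracle as $f = F + g$ with $g \coloneqq f - F$, so that $\norm{g}_\infty \leq \nu$, and to use linearity of Gaussian smoothing to split $f_\sigma = F_\sigma + g_\sigma$. I would then treat the ``signal part'' $F_\sigma$ and the ``noise part'' $g_\sigma$ separately: $F_\sigma$ should inherit the regularity of $F$ up to a small Taylor error, while $g_\sigma$ will be controlled entirely through $\norm{g}_\infty \leq \nu$ together with a handful of Gaussian moment estimates, with no assumption on $g$ (hence $f$) beyond boundedness, since convolution with a Gaussian density makes $g_\sigma$ automatically $C^\infty$.

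For the signal part, the first observation is that $\nabla F_\sigma = (\nabla F)*\phi_\sigma$ and $\nabla^2 F_\sigma = (\nabla^2 F)*\phi_\sigma$, where $\phi_\sigma$ is the $\cN(0,\sigma^2 I)$ density; by Jensen's inequality the $\ell$-smoothness and $\rho$-Hessian-Lipschitzness of $F$ pass to $F_\sigma$ unchanged. For the distances to $F$ I would Taylor-expand: $\nabla F_\sigma(\x) - \nabla F(\x) = \E_{\z\sim\cN(0,\sigma^2 I)}[\nabla F(\x+\z) - \nabla F(\x)]$; the linear term $\E[\nabla^2 F(\x)\z]$ vanishes because $\E[\z] = \0$, and the $\rho$-Hessian-Lipschitz remainder is at most $\tfrac{\rho}{2}\E\norm{\z}^2 = O(\rho d\sigma^2)$. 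Likewise $\norm{\nabla^2 F_\sigma(\x) - \nabla^2 F(\x)} = \norm{\E_\z[\nabla^2 F(\x+\z) - \nabla^2 F(\x)]} \leq \rho\,\E\norm{\z} = O(\rho\sqrt{d}\,\sigma)$.

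For the noise part, the key device is to move all derivatives onto the smooth Gaussian kernel rather than onto $g$: differentiating $g_\sigma(\x) = \int g(\y)\phi_\sigma(\y-\x)\,\d\y$ under the integral sign gives, with $\u\sim\cN(0,I)$,
\begin{gather*}
\nabla g_\sigma(\x) = \tfrac{1}{\sigma}\,\E_\u\big[g(\x+\sigma\u)\,\u\big], \qquad \nabla^2 g_\sigma(\x) = \tfrac{1}{\sigma^2}\,\E_\u\big[g(\x+\sigma\u)(\u\u^\top - I)\big],\\
\nabla^3 g_\sigma(\x) = \tfrac{1}{\sigma^3}\,\E_\u\big[g(\x+\sigma\u)\,\mathcal{H}_3(\u)\big],
\end{gather*}
where $\mathcal{H}_3$ is the third, tensor-valued, Hermite polynomial, with $\mathcal{H}_3(\u)[\v,\v,\v] = (\v^\top\u)^3 - 3\,\v^\top\u$ for a unit vector $\v$. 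The trick that avoids spurious dimension factors is to estimate operator norms by testing against a \emph{fixed} unit vector before taking the supremum: for any unit $\v$, $\abs{\v^\top\nabla g_\sigma(\x)} \leq \tfrac{\nu}{\sigma}\,\E\abs{\v^\top\u} = O(\nu/\sigma)$ and $\abs{\v^\top\nabla^2 g_\sigma(\x)\v} \leq \tfrac{\nu}{\sigma^2}\,\E\abs{(\v^\top\u)^2 - 1} = O(\nu/\sigma^2)$, whereas bounding crudely through $\E\norm{\u}$ or $\E\norm{\u\u^\top - I}$ would lose factors $\sqrt{d}$ and $d$. Hence $\norm{\nabla g_\sigma(\x)} = O(\nu/\sigma)$ and $\sup_\x\norm{\nabla^2 g_\sigma(\x)} = O(\nu/\sigma^2)$, the latter being exactly the smoothness constant of $g_\sigma$; and $\abs{\nabla^3 g_\sigma(\x)[\v,\v,\v]} \leq \tfrac{\nu}{\sigma^3}\,\E\abs{(\v^\top\u)^3 - 3\,\v^\top\u} = O(\nu/\sigma^3)$, which (using that symmetric and injective norms agree for real symmetric tensors, plus the tensor mean-value inequality) gives $\norm{\nabla^2 g_\sigma(\x_1) - \nabla^2 g_\sigma(\x_2)} \leq O(\nu/\sigma^3)\norm{\x_1-\x_2}$. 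Adding the two parts yields all four claims: $f_\sigma$ is $(\ell + O(\nu/\sigma^2))$-smooth and $(\rho + O(\nu/\sigma^3))$-Hessian Lipschitz, $\norm{\nabla f_\sigma - \nabla F} = O(\rho d\sigma^2 + \nu/\sigma)$, and $\norm{\nabla^2 f_\sigma - \nabla^2 F} = O(\rho\sqrt{d}\,\sigma + \nu/\sigma^2)$.

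The main obstacle I anticipate is the bookkeeping, not the ideas: (i) justifying differentiation under the integral and the resulting Stein-type identities for a merely bounded, possibly discontinuous $g$ — routine by dominated convergence, since $\phi_\sigma$ and its first three derivatives admit integrable dominating functions; and (ii) correctly converting the third-derivative bound into a Hessian-Lipschitz constant via the equality of symmetric and injective tensor norms. Everything else reduces to computing a few absolute moments of one-dimensional standard Gaussians, which only affects the constants hidden in the $O(\cdot)$.
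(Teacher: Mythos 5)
The paper does not reprove this statement; it is imported verbatim as Lemma~13 of Ref.~\cite{jin2018local}, so there is no internal proof against which to compare. Your blind reconstruction is correct and follows the standard argument used in that reference: decompose $f = F + g$ with $\norm{g}_\infty \le \nu$, pass the regularity of $F$ to $F_\sigma$ by Jensen and Taylor (with the linear term killed by $\E[\z]=\0$), and handle $g_\sigma$ by integrating by parts onto the Gaussian kernel so that each derivative of $g_\sigma$ costs one factor of $1/\sigma$ times a one-dimensional Gaussian moment. The two points you flag as bookkeeping are genuinely the places where care is needed, and you handle both correctly: testing the Stein identities against a fixed unit vector $\v$ (so that $\v^\top\u$ is a scalar standard normal) is exactly what keeps the $\nu$-terms dimension-free, rather than paying $\sqrt{d}$ or $d$ through $\E\norm{\u}$ or $\E\norm{\u\u^\top - I}$; and converting the bound on $\sup_\v \abs{\nabla^3 g_\sigma[\v,\v,\v]}$ into a Hessian-Lipschitz constant via Banach's theorem on symmetric tensor norms plus the mean-value form of the Hessian increment is the right step. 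One small remark: in the signal part the bound $\E\norm{\z}\le\sigma\sqrt{d}$ should be stated explicitly (via Jensen or the chi-distribution moment); you assert the $O(\rho\sqrt{d}\,\sigma)$ estimate without recording it, but the fill-in is immediate.
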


The first part of \lem{ZeroGaussSmoothProp} demonstrates that the Gaussian smoothing $f_\sigma$ is a smooth and Hessian Lipschitz function. Thus we can perform standard gradient descent on $f_\sigma$ with a polynomial convergence rate. The second part of \lem{ZeroGaussSmoothProp} indicates that the gradients and Hessians of $f_\sigma$ are similar to those of the target function $F$ up to a term related to the noise rate $\nu$ and the smoothing radius $\sigma$. As the noise rate $\nu$ increases and the noisy function $f$ deviates further from the target function $F$, we have to choose a larger parameter $\sigma$ to bound the terms $\nu/\sigma$, $\nu/\sigma^2$, and $\nu/\sigma^3$. However, choosing a larger smoothing radius $\sigma$ will increase the term $\rho d\sigma^2$ and $\rho\sqrt{d}\sigma$, which erases the information about local geometry of $F$. Hence, the choice of $\sigma$ must balance between the two terms in the bounds in \lem{ZeroGaussSmoothProp}. 

Suppose we have an $\tilde{\epsilon}$-SOSP $\x_{\text{SOSP}}$ of the Gaussian smoothing $f_\sigma$. One have to guarantee that an $\tilde{\epsilon}$-SOSP of the Gaussian smoothing $f_\sigma$ is also an $\epsilon$-SOSP of $F$. We now search for the value of $\sigma$ and $\tilde{\epsilon}$ such that $\nu$ is maximized. According to \lem{ZeroGaussSmoothProp}, we can bound the gradient and the minimal eigenvalue of Hessian for $F(\x_{\text{SOSP}})$ by the following inequalities. 
\begin{align}
\norm{\nabla F(\x_{\text{SOSP}})}&\leq\rho d\sigma^2+\frac{\nu}{\sigma}+\tilde{\epsilon},
\end{align}
whereas
\begin{align}
\lambda_{\min}(\nabla^2 F(\x_{\text{SOSP}}))&\geq\lambda_{\min}(\nabla^2 f_\sigma(\x_{\text{SOSP}}))+\lambda_{\min}(\nabla^2 F(\x_{\text{SOSP}})-\nabla^2 f_\sigma(\x_{\text{SOSP}}))\nonumber\\
&\geq-\sqrt{\left(\rho+\frac{\nu}{\sigma^3}\right)\tilde{\epsilon}}-\norm{\nabla^2 f_\sigma(\x_{\text{SOSP}})-\nabla^2 F(\x_{\text{SOSP}})}\nonumber\\
&\geq-\sqrt{\left(\rho+\frac{\nu}{\sigma^3}\right)\tilde{\epsilon}}-\left(\rho\sqrt{d}\sigma+\frac{\nu}{\sigma^2}\right).
\end{align}
Hence, to guarantee that an $\tilde{\epsilon}$-SOSP of $f_\sigma$ is an $\epsilon$-SOSP of $F$, we only need the following set of inequalities to be satisfied(up to constant factors).
\begin{align}\label{eq:GaussIneq1}
\rho\sqrt{d}\sigma+\frac{\nu}{\sigma^2}&\leq O(\sqrt{\rho\epsilon}),\\\label{eq:GaussIneq2}
\rho d\sigma^2+\frac{\nu}{\sigma}&\leq O(\epsilon),\\\label{eq:GaussIneq3}
\left(\rho+\frac{\nu}{\sigma^3}\right)\tilde{\epsilon}&\leq O(\rho\epsilon).
\end{align}
From \eq{GaussIneq1} and \eq{GaussIneq2}, we have 
\begin{align}
\sigma&\leq O\left(\sqrt{\frac{\epsilon}{\rho d}}\right),\\
\nu&\leq\sqrt{\rho\epsilon}\sigma^2=O\left(\sqrt{\frac{\epsilon^3}{\rho}}\cdot\frac{1}{d}\right),\\
\tilde{\epsilon}&\leq\frac{\rho\epsilon}{\left(\rho+\frac{\nu}{\sigma^3}\right)}=O\left(\frac{\epsilon}{\sqrt{d}}\right).
\end{align}
The above results indicate that we can guarantee that an $O(\epsilon/\sqrt{d})$-SOSP for $f_\sigma$ is an $\epsilon$-SOSP of the target function $F$.

The next step is to find an $O(\epsilon/\sqrt{d})$-SOSP of $f_\sigma$ using queries to the noisy oracle in~\eq{QZeroOracle}. Through Gaussian smoothing, we convert the function evaluations of $f$ into stochastic gradients of $f_\sigma$. According to Ref.~\cite{duchi2015optimal} the gradients of $f_\sigma$ can be calculated as  
\begin{align}
\nabla f_\sigma=\frac{1}{\sigma^2}\mathbb{E}_{\z\sim\cN(0,\sigma^2I)}[\z(f(\x+\z)-(\x))].
\end{align}
One can thus compute the gradient for $f_\sigma$ by querying the function value of $f$. However, the gradient is unbiasedly computed through averaging over the continuous Gaussian distribution. To approximate the gradient, we employ the zeroth-order quantum oracle in \eq{QZeroOracle} to sample the stochastic gradient estimation $\z[f(\x+\z)-f(\x)]/\sigma^2$, where $\z\sim\cN(0,\sigma^2I)$. The stochastic gradient has the following properties. 
\begin{lemma}[Lemma 14, Ref.~\cite{jin2018local}]\label{lem:StochGradGauss}
We denote $\g(\x;\z)=\z[f(\x+\z)-f(\x)]/\sigma^2$, where $\z\sim\cN(0,\sigma^2I)$. The following inequalities hold:
\begin{align}
&\mathbb{E}_\z \g(\x;\z)=\nabla f_\sigma(\x),\\
&\Pr\left[\norm{\g(\x;\z)-\nabla f_\sigma(\x)}\geq t\right]\leq\exp(-Bt^2/\sigma),\qquad\forall t>0.
\end{align}
The second inequality demonstrates that $\g(\x;\z)$ is a sub-Gaussian random variable with a tail $B/\sigma$.
\end{lemma}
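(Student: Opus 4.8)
\medskip
\noindent\textbf{Proof proposal.}
The two assertions are essentially independent, so I would prove them in turn. For the unbiasedness $\E_\z\,\g(\x;\z)=\nabla f_\sigma(\x)$, where $\g(\x;\z)=\sigma^{-2}\z\big(f(\x+\z)-f(\x)\big)$, the plan is to apply Gaussian integration by parts (Stein's identity), being careful that $f$ need not be differentiable: every derivative must land on the Gaussian kernel, never on $f$. Concretely, write $\phi_\sigma$ for the density of $\cN(0,\sigma^2 I)$ and use the convolution form $f_\sigma(\x)=\int_{\R^d}f(\u)\,\phi_\sigma(\u-\x)\,\d\u$; differentiating under the integral sign is legitimate by dominated convergence, since $f$ is bounded (by $B+\nu$, from \assume{ZeroProb}) and $\norm{\nabla_\x\phi_\sigma}$ is integrable over $\R^d$. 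Using the identity $\nabla_\x\phi_\sigma(\u-\x)=\sigma^{-2}(\u-\x)\phi_\sigma(\u-\x)$ and changing variables back to $\z=\u-\x$ gives $\nabla f_\sigma(\x)=\E_{\z\sim\cN(0,\sigma^2 I)}\!\big[\sigma^{-2}\z\,f(\x+\z)\big]$; since $\E_\z[\z]=0$ one may subtract $\sigma^{-2}\z\,f(\x)$ at no cost, which is precisely $\E_\z\,\g(\x;\z)$.

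For the concentration bound, the idea is to push all the randomness onto $\z$ and reduce to a one-dimensional Gaussian tail. First I would bound the scalar increment: since $F$ is $B$-bounded and $\norm{F-f}_\infty\le\nu$, we have $\abs{f(\x+\z)-f(\x)}\le 2(B+\nu)=O(B)$ uniformly in $\z$. Hence, for any fixed unit vector $\u$, $\expval{\g(\x;\z),\u}=\sigma^{-2}\expval{\z,\u}\big(f(\x+\z)-f(\x)\big)$ is the product of the Gaussian scalar $\sigma^{-2}\expval{\z,\u}\sim\cN(0,1/\sigma^2)$ with a bounded factor of magnitude $O(B)$, so $\expval{\g(\x;\z)-\nabla f_\sigma(\x),\u}$ is sub-Gaussian with proxy $O(B/\sigma)$ (the mean, which by the first part equals $\expval{\nabla f_\sigma(\x),\u}$, only contributes a deterministic shift of order $O(B/\sigma)$). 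To upgrade this directional estimate to a bound on $\Pr[\norm{\g(\x;\z)-\nabla f_\sigma(\x)}\ge t]$, I would either union-bound over a $1/2$-net of the unit sphere, or — more directly — use $\norm{\g(\x;\z)}\le O(B/\sigma^2)\norm{\z}$ together with concentration of the chi variable $\norm{\z}$ about its mean $\le\sigma\sqrt d$ (Gaussian concentration, since $\z\mapsto\norm{\z}$ is $1$-Lipschitz), combined with $\norm{\nabla f_\sigma(\x)}\le\E\,\norm{\g(\x;\z)}=O(B\sqrt d/\sigma)$ and a triangle inequality. Either route yields a sub-Gaussian tail in $t$; carrying the constants through the increment bound is what pins the tail down to the stated form.

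The routine parts are the differentiation-under-the-integral justification and the constant bookkeeping in the net (or chi-concentration) argument. The one genuinely delicate point lies in the first assertion: one must resist writing $\nabla_\z f(\x+\z)$, since $f$ may be nowhere differentiable — keeping all derivatives on the smooth kernel $\phi_\sigma$ is what makes the computation legitimate, and the $-f(\x)$ subtraction (harmless because $\E_\z[\z]=0$) is the small trick that matches $\g(\x;\z)$ exactly. In the second assertion, the key discipline is to use only the crude increment bound $\abs{f(\x+\z)-f(\x)}=O(B)$ coming from boundedness of $F$, rather than any (nonexistent) Lipschitz constant of $f$, since it is precisely this bound, paired with the Gaussian scale $\sigma$, that produces a tail with the claimed dependence on $B$ and $\sigma$.
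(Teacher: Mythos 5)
Your proof is correct and follows the standard route; the paper does not prove this lemma but cites it directly from Jin et al., and the argument there is exactly yours: Gaussian integration by parts with all derivatives landing on the kernel $\phi_\sigma$ (never on $f$) for the mean identity, and a bounded-increment plus Gaussian-norm concentration argument for the tail. One small but worth-noting point: the displayed tail $\exp(-Bt^2/\sigma)$ in the lemma as transcribed here is not dimensionally consistent and is a typo for the original form in Jin et al., which reads $\Pr\big[\norm{\g(\x;\z)-\nabla f_\sigma(\x)}\ge(B/\sigma)(\sqrt d+t)\big]\le\exp(-t^2/2)$; this $\sqrt d$ radius shift is exactly what both your net argument and your chi-concentration argument produce, and neither can (nor should) remove it, since $\norm{\g-\nabla f_\sigma}$ concentrates around $\Theta(B\sqrt d/\sigma)$ rather than around $0$. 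So your derivation is the right one even though it will not literally reproduce the paper's displayed inequality; it reproduces the correct statement, and the phrase "sub-Gaussian with tail $B/\sigma$" in the lemma's prose (a per-direction proxy of $B/\sigma$) is the honest summary of what you obtain.
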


\lem{StochGradGauss} guarantees that by sampling a large mini-batch and evaluating the mean of the stochastic gradients, the value converges to the gradient $\nabla f_\sigma(\x)$ of the Gaussian smoothing $f_\sigma(\x)$. Classically, the batch size required for the sampling can be obtained by the Chernoff bound (say, e.g. Ref.~\cite{lugosi2019mean}).
\begin{lemma}\label{lem:ZeroClassicalSamp}
Given a fixed point $\x$ and the mini-batch size $m$, for any $\delta>0$, we have:
\begin{align}
\norm{\nabla f(\x)-\frac{1}{m}\sum_{i=1}^m \g(\x;\z^{(i)})}\leq\sqrt{\frac{2\sigma_0^2 d}{m}\log(\frac d\delta)}
\end{align}
with probability at least $1-\delta$, where $\sigma_0=B/\sigma$ is the standard deviation of the stochastic gradient.
\end{lemma}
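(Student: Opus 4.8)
The plan is to prove \lem{ZeroClassicalSamp} as a routine vector-valued sub-Gaussian concentration inequality: reduce to coordinate-wise tail bounds, tensorize over the $m$ i.i.d.\ samples, and finish with a union bound. By \lem{StochGradGauss} the stochastic gradient $\g(\x;\z)$ is an unbiased estimator of $\nabla f_\sigma(\x)$, so the quantity to be controlled is the deviation $\frac{1}{m}\sum_{i=1}^m\g(\x;\z^{(i)})-\nabla f_\sigma(\x)$, and the tail bound in \lem{StochGradGauss} says $\norm{\g(\x;\z)-\nabla f_\sigma(\x)}$ is sub-Gaussian with scale $\sigma_0=B/\sigma$ (consistently, the per-coordinate fluctuation is $\abs{\z_j}\cdot\abs{f(\x+\z)-f(\x)}/\sigma^2\lesssim\sigma\cdot B/\sigma^2=\sigma_0$). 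I note in passing that the $\nabla f(\x)$ appearing on the left-hand side of \lem{ZeroClassicalSamp} should be read as $\nabla f_\sigma(\x)$.

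First I would transfer the norm-level tail to the individual coordinates. For every $j\in\{1,\dots,d\}$ one has the pointwise domination $\abs{\g_j(\x;\z)-\nabla f_\sigma(\x)_j}\le\norm{\g(\x;\z)-\nabla f_\sigma(\x)}$, so the centered coordinate $X^{(i)}_j:=\g_j(\x;\z^{(i)})-\nabla f_\sigma(\x)_j$ inherits a sub-Gaussian tail with the same scale $\sigma_0$, and it is mean-zero since $\E_\z\g(\x;\z)=\nabla f_\sigma(\x)$. Hence each $X^{(i)}_j$ is a mean-zero sub-Gaussian random variable with variance proxy $O(\sigma_0^2)$.

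Next I would tensorize: since $\z^{(1)},\dots,\z^{(m)}$ are i.i.d., for fixed $j$ the average $\frac1m\sum_{i=1}^m X^{(i)}_j$ is sub-Gaussian with variance proxy $O(\sigma_0^2/m)$, and the standard Chernoff bound yields $\Pr\big[\,\abs{\tfrac1m\sum_i X^{(i)}_j}\ge t\,\big]\le 2\exp(-mt^2/(2\sigma_0^2))$. Taking $t=\sigma_0\sqrt{(2/m)\log(2d/\delta)}$ makes this at most $\delta/d$, so a union bound over the $d$ coordinates gives, with probability at least $1-\delta$, $\abs{\tfrac1m\sum_i X^{(i)}_j}\le t$ for all $j$; combining with the elementary inequality $\norm{v}_2\le\sqrt{d}\,\norm{v}_\infty$ gives $\norm{\frac1m\sum_i\g(\x;\z^{(i)})-\nabla f_\sigma(\x)}\le\sqrt{(2\sigma_0^2 d/m)\log(2d/\delta)}$, and absorbing the $\log 2$ (or re-running with $\delta/2$) recovers the stated $\sqrt{(2\sigma_0^2 d/m)\log(d/\delta)}$.

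The argument is essentially textbook, so there is no substantial obstacle. The only points needing a little care are (i) justifying that the norm-level sub-Gaussian tail of \lem{StochGradGauss} descends to centered, mean-zero sub-Gaussianity of each coordinate --- which is exactly where the pointwise bound $\abs{X_j}\le\norm{X}$ and the unbiasedness are used --- and (ii) constant bookkeeping, in particular whether to pass from the $\ell_\infty$ control to the $\ell_2$ bound via the crude $\norm{v}_2\le\sqrt d\,\norm{v}_\infty$ (which reproduces the $\sqrt d$ factor exactly as in the statement) or via a dimension-free vector concentration bound (same result up to constants); I would use the former to match \lem{ZeroClassicalSamp} verbatim.
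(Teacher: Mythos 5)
The paper itself offers no proof of this lemma: it prefaces the statement with a remark that the batch size ``can be obtained by the Chernoff bound (say, e.g.\ Ref.~\cite{lugosi2019mean})'' and treats the bound as a known fact. Your write-up is a correct, textbook instantiation of exactly that plan --- per-coordinate sub-Gaussian Chernoff bound, tensorization over the $m$ i.i.d.\ samples, union bound over the $d$ coordinates, then the conversion $\norm{v}_2\le\sqrt{d}\,\norm{v}_\infty$, which reproduces the $\sqrt{d}$ in the statement. So your argument fills in what the paper left to a citation, along the same lines the paper points to.

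Two remarks on bookkeeping, both of which you handled appropriately. First, you correctly note that the $\nabla f(\x)$ on the left-hand side should read $\nabla f_\sigma(\x)$: \lem{StochGradGauss} states unbiasedness of $\g(\x;\z)$ for $\nabla f_\sigma(\x)$, not $\nabla f(\x)$. Second, the exponent printed in \lem{StochGradGauss}, namely $\exp(-Bt^2/\sigma)$, is not actually consistent with the claim that $\sigma_0=B/\sigma$ is the sub-Gaussian scale (that tail would correspond to a variance proxy $\sim\sigma/B$, the reciprocal). You implicitly work with the scale $\sigma_0=B/\sigma$ as the lemma statement asserts, which is also what your dimensional sanity check $|\z_j||f(\x+\z)-f(\x)|/\sigma^2\lesssim B/\sigma$ supports, and which matches the corresponding lemma in the cited source. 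That is the right reading. The residual $\log 2$ you point out and the mild constant loss in passing from tail bound to MGF are both at the level of constants the paper does not track.
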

\lem{ZeroClassicalSamp} indicates that it is sufficient to choose a mini-batch of size
\begin{align}
m\geq\frac{2\sigma_0^2 d}{\epsilon^2}\log(\frac d\delta),
\end{align}
where $\sigma_0=B/\sigma$, to estimate the gradient $\nabla f_\sigma(\x)$ within $\epsilon$ under Euclidean norm with probability at least $1-\delta$. In addition, this bound is optimal in any classical algorithms~\cite{hopkins2020mean}, or equivalently, any classical multivariate mean estimator with batch size less than this quantity will fail on a certain stochastic gradient function $\g(\x;\z)$.

Quantumly, the well-known amplitude estimation algorithm~\cite{brassard2002quantum} provides a smaller error rate when estimating the mean of Bernoulli random variables. For the multivariate mean estimation problem of a random vector, quantum algorithms can also provide a speedup under certain circumstances. In particular, we consider the mean estimation task of estimating $\g(\x;\z)$ in \lem{StochGradGauss} given a \textit{binary oracle} defined as follows.
\begin{definition}
Consider the random variable $\g(\x;\z)\in\R^d$ with $\z\sim\cN(0,\sigma^2I)$. Let $\H_\z$ and $\H_{\g,\x}$ be two Hilbert spaces with basis states $\{\ket{\z}\}_{\z}$ and $\{\g(\x;\z)\}_{\z}$, which contains quantum state encoding vectors $\z$ and $\g(\x;\z)$, respectively. The binary oracle $B_\z\colon\H_\z\otimes\H_{\g,\x}\to\H_\z\otimes\H_{\g,\x}$ is defined as
\begin{align}\label{eq:QBinaryOracle}
B_\z:\ket{\z}\ket{\0}\to\ket{\z}\ket{\g(\x;\z)}, \qquad\forall \z\sim\cN(0,\sigma^2 I),
\end{align}
where we assume $\0\in\{\g(\x;\z)\}_{\z}$.
\end{definition}
In practice, the above binary oracle
can be constructed by employing two quantum evaluation oracles in \eq{QZeroOracle}. Using such binary oracle, Ref.~\cite{cornelissen2022near} provides the following performance guarantee.
\begin{lemma}[Theorem 3.5, Ref.~\cite{cornelissen2022near}]\label{lem:ZeroQSamp}
Suppose $\g$ is a $d$-dimensional random vector with mean $\mu$ and covariance matrix $\Sigma$ such that $\Tr(\Sigma)=\sigma_0^2$. Given two real values $\delta\in(0,1)$ and $m\geq\log(d/\delta)$, there exists a quantum algorithm that outputs a mean estimation $\tilde{\mu}$ such that
\begin{align}
\norm{\tilde{\mu}-\mu}\leq
\begin{cases}
O\left(\sqrt{\frac{\sigma_0^2}{m}}\right),&n\leq d,\\
O\left(\frac{\sqrt{d\sigma_0^2}\log(\frac d\delta)}{m}\right),\qquad&n>d,
\end{cases}
\end{align}
with probability at least $1-\delta$. Such an algorithm requires $\tO(m)$ queries to the binary oracle. 
\end{lemma}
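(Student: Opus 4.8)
This is the binary-oracle quantum multivariate mean estimation bound of Cornelissen--Hamoudi--Jerbi~\cite{cornelissen2022near} (their Theorem~3.5), and a self-contained proof would follow their template. The plan is to bootstrap from the one-dimensional quantum mean estimation primitive: given a binary oracle for a scalar random variable with variance $v$, one can estimate its mean with additive error $\tO(\sqrt{v}/m)$ using $m$ coherent queries, a quadratic improvement over the classical $\Theta(\sqrt{v/m})$. First I would reduce the sub-Gaussian vector $\g$ to a bounded one by truncating it at radius $R=\Theta(\sqrt{\sigma_0^2}\log(d/\delta))$ --- which perturbs the mean negligibly by the sub-Gaussian tail --- and amplify the confidence to $1-\delta$ by the usual powering/median trick over $O(\log(1/\delta))$ independent runs, so that it suffices to prove the error bound for a bounded random vector with constant success probability.

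The core step is getting the dependence on the dimension $d$ right. Conceptually one works along directions: since $\norm{\tilde\mu-\mu}=\max_{\norm{v}=1}\expval{v,\mu-\tilde\mu}$, it is enough to estimate $\expval{v,\mu}$ within $\varepsilon/2$ for every $v$ in a $\tfrac12$-net of the unit sphere, and the scalar $\expval{v,\g}$ has variance at most $\Tr(\Sigma)=\sigma_0^2$ along each direction. The difficulty is that applying the univariate primitive to each net direction separately costs exponentially many queries, while splitting the budget coordinate-wise over $d$ directions --- even after a random rotation that flattens the per-coordinate variances to $\sigma_0^2/d$ --- yields an $\ell_2$ error of $\tO(\sqrt{\sigma_0^2}\,d/m)$, still a factor $\sqrt d$ above the target $\tO(\sqrt{d\sigma_0^2}\log(d/\delta)/m)$. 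Closing this gap is exactly what the genuinely multivariate quantum subroutine of \cite{cornelissen2022near} does, extracting information about all net directions from a single shared quantum computation; the resulting ``damped'' quadratic speedup only pays off once $m\gtrsim d$, which is also why the lemma splits into two regimes. For $m\le d$ one simply runs a classical sub-Gaussian (median-of-means) estimator on $m$ samples drawn from the oracle, giving $O(\sqrt{\sigma_0^2/m})$; the two branches agree at $m=d$ up to the logarithm.

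The main obstacle is precisely this dimension dependence: no black-box reduction to the scalar primitive achieves the correct $\sqrt d$, so one has to use the full CHJ estimator together with its $\ell_2$-error analysis, and accordingly in this paper we invoke the bound as Theorem~3.5 of \cite{cornelissen2022near} rather than reprove it. The only supplementary point needed for our use is that the binary oracle $B_\z$ of \eq{QBinaryOracle} is implemented with $O(1)$ queries to the zeroth-order oracle $U_f$ of \eq{QZeroOracle} --- one for $f(\x+\z)$, one (fixed, hence amortized over the batch) for $f(\x)$, plus an uncomputation call --- so the $\tO(m)$ query count transfers to queries of $U_f$.
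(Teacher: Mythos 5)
The paper imports this statement directly as Theorem~3.5 of Ref.~\cite{cornelissen2022near} without proof, and you correctly recognize that a self-contained proof would require the full CHJ multivariate estimator and its $\ell_2$-error analysis, so you likewise defer to the citation after an (accurate) sketch of why naive univariate reductions lose a $\sqrt d$ factor. Your supplementary observation that the binary oracle $B_\z$ of \eq{QBinaryOracle} is implementable with $O(1)$ calls to $U_f$ matches the point the paper makes when it remarks that the binary oracle ``can be constructed by employing two quantum evaluation oracles.''
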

\lem{ZeroQSamp} indicates that it only requires
\begin{align}
m\geq O\left(\frac{\sqrt{d}\sigma_0}{\epsilon}\log(\frac d\delta)\right)
\end{align}
samples to estimate the gradient $\nabla f_\sigma(\x)$ within error $\epsilon$ with high probability. Compared with the classical mini-batch size in \lem{ZeroClassicalSamp}, quantum mean estimation provides a quadratic reduction when the classical mini-batch size is $\Omega(d)$. It is worthwhile to mention that the error scaling in \lem{ZeroQSamp} is near-optimal up to logarithmic factors~\cite{cornelissen2022near}.

We consider using the PGD with stochastic gradient estimation to find an $\epsilon$-SOSP of the target function $F$ (also an $O(\epsilon/\sqrt{d})$-SOSP of $f_\sigma$) using noisy function $f$ in \eq{QZeroOracle}. The detailed algorithm is given in \algo{PSGDQME}.

\begin{algorithm}[H]
\caption{Perturbed Stochastic Gradient Descent with Quantum Mean Estimation}
\label{algo:PSGDQME}
\begin{algorithmic}[1]
\REQUIRE $\x_0$, learning rate $\eta$, noise ratio $r$, mini-batch size $m$
\FOR {$t=0,1,...,T$}
\STATE Estimate the gradient $\tnabla f_\sigma(\x_t)$ of $f_\sigma(\x_t)$ using quantum mean estimation and $m$ binary queries to \eq{QBinaryOracle}
\STATE $\x_{t+1}\leftarrow \x_t-\eta(\tnabla f_\sigma(\x_t)+\xi_t)$, $\xi_t$ uniformly $\sim B_0(r)$
\ENDFOR
\end{algorithmic}
\end{algorithm}

We now prove the performance guarantee for \algo{PSGDQME}, which is the formal version of \thm{ZeroMean}.
\begin{theorem}[Formal version of \thm{ZeroMean}]\label{thm:ZeroMeanF}
Suppose we have a target function $F$ and its noisy evaluation $f$ satisfying \assume{ZeroProb} with $\nu\leq O(\sqrt{\epsilon^3/\rho}\cdot(1/d))$. With probability at least $1-\delta$, \algo{PSGDQME} finds an $\epsilon$-SOSP of $F$ satisfying \eq{SOSPDef}, using
\begin{align}
\tO\left(\frac{d^{2.5}}{\epsilon^{3.5}}\cdot\poly(\Delta_f,\ell,\rho)\right)
\end{align}
queries to $U_f$ in \eq{QZeroOracle}, under the following parameter choices:
\begin{align}\label{eq:ZeroMeanParam}
\eta=\frac{1}{\ell'},\quad\delta_0=\frac{\delta\epsilon'^2}{16\ell' \Delta_f}\chi^{-4},\quad r=\epsilon'\chi^{-3}c^{-6},\quad\ut=\frac{\chi c}{\eta\sqrt{\rho'\epsilon'}},\quad\uf=\sqrt{\frac{\epsilon'^3}{\rho'}}\chi^{-3}c^{-5},
\end{align}
where $c$ is some large enough constant, $\Delta_f=f_\sigma(\x_0)-f_\sigma(\x^*)$ is the value between the initial point $\x_0$ and the global minima point $\x^*$, $\chi=\max\{1,\log(d\ell' \Delta_f/\rho'\epsilon'\delta_0)\}$, $\ell'=O(\ell+\sqrt{\epsilon/\rho})$, and $\rho'=O(\rho)$ are the smoothness and Hessian-Lipschitz parameters for $f_\sigma$, and $\epsilon'=O(\epsilon/\sqrt{d})$ such that an $\epsilon'$-SOSP of $f_\sigma$ is an $\epsilon$-SOSP of $F$.
\end{theorem}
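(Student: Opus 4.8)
The plan is to follow the two-stage strategy set up before the statement. \textbf{Stage one} is the reduction already carried out above: with $\sigma=\Theta\!\big(\sqrt{\epsilon/(\rho d)}\big)$ and $\epsilon'=\Theta(\epsilon/\sqrt d)$, the hypothesis $\nu\le O(\sqrt{\epsilon^3/\rho}/d)$ makes \eq{GaussIneq1}--\eq{GaussIneq3} hold, so by \lem{ZeroGaussSmoothProp} the smoothing $f_\sigma$ is $\ell'$-smooth and $\rho'$-Hessian Lipschitz with $\ell'=O(\ell+\nu/\sigma^2)=O(\ell+\sqrt{\rho\epsilon})$, $\rho'=O(\rho)$, and every $\epsilon'$-SOSP of $f_\sigma$ is an $\epsilon$-SOSP of $F$. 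Since $f_\sigma(\x_0)-\inf f_\sigma\le 2B+2\nu=O(B)$, it suffices to prove \algo{PSGDQME} outputs an $\epsilon'$-SOSP of $f_\sigma$ with probability $\ge 1-\delta$ within the stated budget; \textbf{stage two} does this by running perturbed SGD on $f_\sigma$ with the gradient at each step realized by quantum mean estimation of $\g(\x;\z)=\z[f(\x+\z)-f(\x)]/\sigma^2$.

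For the gradient oracle I would first note that one coherent evaluation of $\g(\x;\z)$ over $\z\sim\cN(0,\sigma^2 I)$ costs $O(1)$ queries to $U_f$ (prepare the Gaussian state over $\z$, call $U_f$ at $\x+\z$ and at $\x$, compute the difference), i.e.\ one call to the binary oracle \eq{QBinaryOracle}. By \lem{StochGradGauss}, $\g(\x;\z)$ is unbiased for $\nabla f_\sigma(\x)$ and sub-Gaussian, so its covariance has trace $\sigma_0^2$ with $\sigma_0=O(B/\sigma)$. Feeding this into \lem{ZeroQSamp} with mini-batch $m=\tO(\sqrt d\,\sigma_0/\epsilon')$ produces $\tnabla f_\sigma(\x_t)$ with $\|\tnabla f_\sigma(\x_t)-\nabla f_\sigma(\x_t)\|\le\epsilon'/20$ with probability $\ge 1-\delta_0$, at cost $\tO(m)$ queries to $U_f$ per iteration. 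One should check we land in the branch $m>d$ where the quadratic quantum improvement in \lem{ZeroQSamp} is active: substituting $\sigma_0=\Theta(B\sqrt{\rho d/\epsilon})$ and $\epsilon'=\Theta(\epsilon/\sqrt d)$ gives $m=\tO(B\sqrt\rho\,d^{1.5}/\epsilon^{1.5})$, which exceeds $d$ for small $\epsilon$.

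With a gradient oracle of accuracy $\epsilon'/20$ and the perturbation $\xi_t\sim B_0(r)$ of \algo{PSGDQME}, the iteration analysis is exactly the perturbed-gradient-descent machinery of \lem{JordanLargeGradJin} and \lem{JordanSmallGrad}, now instantiated on $f_\sigma$ with parameters $(\ell',\rho',\epsilon')$ and the choices in \eq{ZeroMeanParam}: a large-gradient step ($\|\nabla f_\sigma(\x_t)\|\ge\epsilon'$) decreases $f_\sigma$ by $\Omega(\eta\epsilon'^2)$, and a small-gradient/large-negative-curvature step ($\|\nabla f_\sigma(\x_t)\|\le\epsilon'$, $\lambda_{\min}(\nabla^2 f_\sigma(\x_t))\le-\sqrt{\rho'\epsilon'}$) decreases $f_\sigma$ by $\uf$ over the next $\ut$ iterations. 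Setting $T=3\max\{8\Delta_f/(\eta\epsilon'^2),\,2\Delta_f\ut/\uf\}=\tO(\ell'\Delta_f d/\epsilon^2)$ and union-bounding over the $T$ iterations (failure $\le 2\delta_0$ each, from gradient estimation plus the escaping-saddle lemma), with probability $\ge 1-\delta$ at most $T/3$ iterations are large-gradient and at most $T/3$ begin a negative-curvature window — else $f_\sigma$ would fall below $\inf f_\sigma$ — so the remaining $\ge T/3$ iterations are $\epsilon'$-SOSPs of $f_\sigma$, hence $\epsilon$-SOSPs of $F$. Multiplying the iteration count $T=\tO(\ell'\Delta_f d/\epsilon^2)$ by the per-iteration cost $\tO(m)=\tO(d^{1.5}/\epsilon^{1.5})\cdot\poly(B,\rho)$ yields the claimed $\tO(d^{2.5}/\epsilon^{3.5}\cdot\poly(\Delta_f,\ell,\rho))$.

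The main obstacle is ensuring the escaping-saddle step genuinely survives the kind of gradient error produced here: $\tnabla f_\sigma(\x_t)-\nabla f_\sigma(\x_t)$ is not a fresh zero-mean noise but an essentially arbitrary vector of norm $\le\epsilon'/20$, possibly correlated across iterations through the algorithm's internal randomness, so the coupling argument that forces escape along the negative-curvature direction over a length-$\ut$ window must tolerate such a bounded adversarial perturbation on top of $\xi_t$. This is dispatched by the fact that \lem{JordanLargeGradJin} and \lem{JordanSmallGrad} are quoted from Ref.~\cite{jin2018local} precisely in the form that assumes only $\|\tnabla F-\nabla F\|\le\epsilon/20$ together with a perturbation radius $r$, so they apply to $f_\sigma$ verbatim; the residual work is bookkeeping of the failure probabilities and checking that the substitutions $(\ell,\rho,\epsilon)\to(\ell',\rho',\epsilon')$ in \eq{ZeroMeanParam} meet the hypotheses of those lemmas. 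A secondary subtlety worth flagging is that the advertised tolerance $\nu\le O(\sqrt{\epsilon^3/\rho}/d)$ is exactly the one forced by \eq{GaussIneq1}--\eq{GaussIneq3}, which in turn pins down the choices of $\sigma$ and $\epsilon'$ used throughout.
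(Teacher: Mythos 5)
Your proposal is correct and follows essentially the same route as the paper: Gaussian-smoothing reduction with $\sigma=\Theta(\sqrt{\epsilon/(\rho d)})$, quantum mean estimation via \lem{ZeroQSamp} to get a per-iteration gradient of accuracy $\epsilon'/20$ using $m=\tO(B\sqrt{\rho}\,d^{1.5}/\epsilon^{1.5})$ binary-oracle calls, then the standard PGD iteration accounting via \lem{JordanLargeGradJin} and \lem{JordanSmallGrad} with $T=\tO(\ell'\Delta_f d/\epsilon^2)$. Your extra sanity checks (confirming the $m>d$ regime of \lem{ZeroQSamp} is the active one, noting $\Delta_f=O(B)$, and flagging that the bounded-but-possibly-adversarial gradient error is covered because the cited lemmas assume only a norm bound rather than independent noise) are all correct and slightly more explicit than the paper's own write-up.
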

\begin{proof}
Notice that an $\epsilon$-SOSP of target function $F$ is an $O(\epsilon/\sqrt{d})$-SOSP of function $f_\sigma$, we only need to prove that \algo{PSGDQME} converges to a $O(\epsilon/\sqrt{d})$-SOSP of $f_\sigma$ using $\tO(\ell B d^{2.5}/\epsilon^{3.5})$ queries.

In each iteration, \algo{PSGDQME} estimates the gradient $\nabla f_\sigma(\x_t)$ using quantum mean estimation with $O(m)$ queries to the quantum evaluation oracle. According to \lem{StochGradGauss}, the stochastic gradient $\g(\x_t;\z^{(i)})$ for $\z\sim\cN(0,\sigma^2I)$ is a random vector with mean $\nabla f_\sigma(\x_t)$ and variance $\sigma_0^2=B^2/\sigma^2$. As we choose $\sigma=\sqrt{\epsilon/\rho d}$, we require
\begin{align}
m&\geq O\left(\frac{\sqrt{d}(B/\sigma)}{\epsilon/\sqrt{d}}\log(\frac {d}{\delta_0})\right)\\
&=\tO(B\sqrt{\rho}\cdot\sqrt{\frac{d^3}{\epsilon^3}})
\end{align}
queries to bound the error $\norm{\tnabla f_\sigma(\x_t)-\nabla f_\sigma(\x_t)}\leq\epsilon'/20\leq O(\epsilon/\sqrt{d}\cdot(1/20))$ with probability at least $1-\delta_0$ according to \lem{ZeroQSamp}.

Next, \algo{PSGDQME} employs the estimations of the gradient and the PGD to find an $\epsilon'$-SOSP of $f_\sigma$. Recall that we choose $\sigma=O(\sqrt{\epsilon/\rho d})$ and $\nu=O(\sqrt{\epsilon^3/\rho}\cdot(1/d))$, $f_\sigma$ is thus $\ell'$-smooth and $\rho'$-Hessian Lipschitz, where
\begin{align}
\ell'&=O\left(\ell+\frac{\nu}{\sigma^2}\right)=O\left(\ell+\sqrt{\frac{\epsilon}{\rho}}\right),\\
\rho'&=O\left(\rho+\frac{\nu}{\sigma^3}\right)=O(\rho).
\end{align}
We consider the number of queries required to find an $\epsilon'$-SOSP of $f_\sigma$. We set the total iteration number to be:
\begin{align}
T=3\max\left\{\frac{4\Delta_f}{\eta\epsilon'^2},\frac{\Delta_f\ut}{\uf}\right\}.
\end{align}
Similar to the proof in the previous section, we consider the two cases when a $\x_t$ is not local minima. Suppose for some iterations $\x_t$, we have $\nabla f_\sigma(\x_t)\leq\epsilon'$ and $\lambda_{\min}(\nabla^2 f(\x_t))\leq-\sqrt{\rho'\epsilon'}$. The error probability of this assumption is given later. Under this assumption, the function value decreases for $\uf$ after each $\ut$ iterations according to \lem{JordanSmallGrad}. Therefore, the number of such iterations when \lem{JordanSmallGrad} can be called is bounded by $T/3$ times, 
for otherwise the function value will decrease greater than $\Delta_f=f_\sigma(\x_0)-f_\sigma(\x^*)$, which is impossible. The failure probability is composed of two parts: the failure probability for estimating the gradient in \lem{JordanPerf} and the failure probability for \lem{JordanSmallGrad}. In each iteration, the probability of failure is bounded by $2\delta_0$ according to the union bound. The overall probability that \algo{PGDQGC} fails to indicate a negative curvature is upper bounded by
\begin{align}
\frac{T}{3}\cdot 2\delta_0\leq\frac{\delta}{2}
\end{align}
for any $\chi$.

Excluding the iterations that \lem{JordanSmallGrad} is applied, we still have $2T/3$ iterations left. We now consider the iterations $\x_t$ with large gradients ${\nabla f_\sigma(\x_t)}\geq\epsilon'$. According to \lem{JordanLargeGradJin}, the function value decreases by at least $\eta\epsilon'^2/4$ with a probability of at least $1-\delta_0$ in each iteration. Thus there can be at most $T/3$ steps with large gradients, for otherwise, the function value will decrease greater than $\Delta_f$, which is impossible. The failure probability is bounded again by
\begin{align}
\frac{T}{3}\cdot \delta_0\leq\frac{\delta}{2}.
\end{align}

In summary, with probability at least $1-\delta$, there are at most $T/3$ iterations within which the neighboring $\ut$ iterations have small gradients but large negative curvatures, and at most $T/3$ iterations with large gradients. Therefore, the rest $T/3$ iterations must be $\epsilon$-SOSPs. The number of the queries is thus bounded by
\begin{align}
T\leq\tO\left(\frac{\Delta_f\ell'}{\epsilon'^2}\cdot\chi^{4}\cdot m\right)=\tO\left(\frac{d^{2.5}}{\epsilon^{3.5}}\cdot\poly(\Delta_f,\ell,\rho)\right).
\end{align}
\end{proof}

\thm{ZeroMeanF} provides a quantum upper bound $O(d^{2.5}/\epsilon^{3.5})$ in finding $\epsilon$-SOSPs of $F$ using noisy oracle $f$ at $\nu\leq O(\sqrt{\epsilon^3/\rho}\cdot 1/d)$ while the classical upper bound requires $O(d^4/\epsilon^5)$ queries~\cite{jin2018local}. The essence of the speedup lies in the quadratic reduction provided by the quantum mean estimation in the mini-batch size $m$.


\subsection{First-order Algorithm and Performance Guarantee}\label{sec:FirstGaussMean}

Consider a pair of functions $(F,f)$ satisfying \assume{FirstProb} with a relatively large noise strength such that \cor{FirstPGD} fails to apply. Recall that in the previous subsection we have implemented a Gaussian smoothing for the noisy zeroth-oracle defined in \defn{ZeroGaussSmooth}. Now, we introduce the Gaussian smoothing of the noisy gradient, which is defined as:
\begin{align}
\nabla f_\sigma (\x)\coloneqq\mathbb{E}_\z\big[ \nabla f(\x+\z)\big],\quad\z\sim\cN(0,\sigma^2I).
\end{align}
After permutating the expectation operator and the gradient operator, we obtain 
\begin{align}
    \nabla f_\sigma(\x)=\nabla\cdot\big(\mathbb{E}_\z[ f(\x+\z)]\big),
\end{align}
which indicates that $f_\sigma(\x)$ is a Gaussian smoothing of $f$. Similar to \lem{ZeroGaussSmoothProp}, we deduce the following property of $\nabla f_\sigma(\x)$, which originally appeared in Ref.~\cite{jin2018local}. 
\begin{lemma}[Lemma 48, Ref.~\cite{jin2018local}]\label{lem:FirstGaussSmoothProp}
Assume the function pair $(F,f)$ satisfies \assume{FirstProb}. The Gaussian smoothing $\nabla f_\sigma$ of the noisy gradient $\g=\nabla f$ satisfies:
\begin{itemize}
    \item $f_\sigma(\x)$ is $O(\ell+\tnu/\sigma)$-smooth and $O(\rho+\tnu/\sigma^2)$-Hessian Lipschitz.
    \item The distances between the gradients and the Hessians of $f_\sigma$ and $F$ are bounded. In particular, we have $\norm{\nabla f_\sigma(\x)-\nabla F(\x)}\leq O(\rho d\sigma^2+\tnu)$ and $\norm{\nabla^2 f_\sigma(\x)-\nabla^2 F(\x)}\leq O(\rho\sqrt{d}\sigma+\tnu/\sigma)$, respectively.
\end{itemize}
\end{lemma}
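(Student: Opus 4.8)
The plan is to write the Gaussian smoothing of the noisy gradient as the Gaussian smoothing of $\nabla F$ plus the Gaussian smoothing of the pointwise gradient error, and to estimate the two pieces separately by Gaussian calculus. Under \assume{FirstProb}, set $\vect{e}(\y)\coloneqq\nabla f(\y)-\nabla F(\y)$, so that $\|\vect{e}(\y)\|\le\tnu$ uniformly in $\y$ (in Euclidean norm; a factor $\sqrt{d}$ would be absorbed if $\tnu$ were measured entrywise), and
\[
\nabla f_\sigma(\x)=\E_{\z}\!\left[\nabla F(\x+\z)\right]+\E_{\z}\!\left[\vect{e}(\x+\z)\right],\qquad\z\sim\cN(0,\sigma^2 I).
\]
For the $F$-part I would use only the $\ell$-smoothness and $\rho$-Hessian Lipschitzness of $F$. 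Since differentiation commutes with convolution against the (smooth, rapidly decaying) Gaussian kernel, $\nabla^2\E_\z[F(\x+\z)]=\E_\z[\nabla^2 F(\x+\z)]$ has operator norm at most $\ell$ and is $\rho$-Lipschitz in $\x$; the second-order Taylor estimate $\|\nabla F(\x+\z)-\nabla F(\x)-\nabla^2F(\x)\z\|\le\tfrac{\rho}{2}\|\z\|^2$ with $\E\|\z\|^2=d\sigma^2$ gives $\|\E_\z[\nabla F(\x+\z)]-\nabla F(\x)\|\le O(\rho d\sigma^2)$, and $\|\nabla^2F(\x_1)-\nabla^2F(\x_2)\|\le\rho\|\x_1-\x_2\|$ together with $\E\|\z\|\le\sqrt{d}\,\sigma$ gives $\|\E_\z[\nabla^2F(\x+\z)]-\nabla^2F(\x)\|\le O(\rho\sqrt{d}\,\sigma)$.

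All the $\sigma$-dependent terms come from the error-part, which I would control with Gaussian integration by parts: for a bounded vector field $\vect{h}$,
\[
\nabla_\x\E_\z[\vect{h}(\x+\z)]=\tfrac{1}{\sigma^2}\E_\z\!\left[\vect{h}(\x+\z)\,\z^\top\right],\qquad\nabla_\x^2\E_\z[\vect{h}(\x+\z)]=\tfrac{1}{\sigma^4}\E_\z\!\left[\vect{h}(\x+\z)\otimes(\z\z^\top-\sigma^2 I)\right].
\]
Applying the first identity with $\vect{h}=\vect{e}$ yields the Hessian contribution $N(\x)=\sigma^{-2}\E_\z[\vect{e}(\x+\z)\z^\top]$, whose operator norm is $O(\tnu/\sigma)$ by Cauchy--Schwarz and $\E[(v^\top\z)^2]=\sigma^2$; this gives the smoothness constant $\ell'=\ell+O(\tnu/\sigma)$ and the Hessian-distance bound $\|\nabla^2 f_\sigma-\nabla^2F\|\le O(\rho\sqrt{d}\,\sigma+\tnu/\sigma)$. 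Applying the second identity with $\vect{h}=\vect{e}$ bounds the third-derivative tensor: as a trilinear form in unit vectors $u,v,w$ its entry is $\sigma^{-4}\E[(u^\top\vect{e})((v^\top\z)(w^\top\z)-\sigma^2v^\top w)]$, which by Cauchy--Schwarz is at most $\sigma^{-4}\,\tnu\,\sup_{v,w}\sqrt{\Var\!\big((v^\top\z)(w^\top\z)\big)}=O(\tnu\sigma^2/\sigma^4)=O(\tnu/\sigma^2)$, giving the Hessian-Lipschitz constant $\rho'=\rho+O(\tnu/\sigma^2)$. Finally, the crude estimate $\|\E_\z[\vect{e}(\x+\z)]\|\le\tnu$ gives $\|\nabla f_\sigma-\nabla F\|\le O(\rho d\sigma^2+\tnu)$. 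Adding the $F$-part and the error-part in each of the four estimates reproduces the claimed bounds.

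I expect the main obstacle to be the rigorous justification of the two Stein identities and of exchanging $\nabla_\x$ with $\E_\z$, since $f$ (hence $\vect{e}$) is only assumed measurable and bounded, not smooth. The clean way is to note that $\x\mapsto\E_\z[\vect{e}(\x+\z)]=\int\vect{e}(\y)\,p_\sigma(\y-\x)\,\d\y$ is automatically $C^\infty$ because $p_\sigma$ is, and to differentiate under the integral sign using $\nabla_\x p_\sigma(\y-\x)=\sigma^{-2}(\y-\x)p_\sigma(\y-\x)$ and the analogous second-derivative formula, with dominated convergence supplying uniform integrable envelopes (all Gaussian moments are finite). Everything else reduces to the standard second- and fourth-moment identities for $\cN(0,\sigma^2 I)$ behind the Cauchy--Schwarz steps. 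This is exactly Lemma~48 of Ref.~\cite{jin2018local} transcribed into our notation; note that the zeroth-order analogue \lem{ZeroGaussSmoothProp} costs one extra power of $1/\sigma$ in every noise term, precisely because there the noise sits on $f$ itself and must first be turned into gradient information by one more application of the first Stein identity.
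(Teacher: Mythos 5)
The paper quotes this as Lemma~48 of Jin et al.\ and does not prove it, so there is no in-paper argument to compare against; your Stein-identity reconstruction -- splitting $\nabla f_\sigma$ into the $\nabla F$-smoothing part (controlled by $\ell$-smoothness, $\rho$-Hessian-Lipschitzness of $F$, and the Gaussian moments $\E\|\z\|^2=d\sigma^2$, $\E\|\z\|\le\sqrt{d}\sigma$) and the bounded-error part (controlled by the first and second Stein identities plus Cauchy--Schwarz with $\E[(v^\top\z)^2]=\sigma^2$ and $\Var((v^\top\z)(w^\top\z))\le 2\sigma^4$ via Isserlis) -- is exactly the standard proof of that cited lemma, and every estimate in it checks out. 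The caveat you flag about entrywise versus Euclidean $\tnu$ is a genuine normalization ambiguity inherited from the way \assume{FirstProb} is phrased, not a gap in your argument.
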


We can bound the deviation of $\nabla f_\sigma$ from $\nabla F$ and maintain the information about the local geometry of $F$, as well as guaranteeing that any $\tilde{\epsilon}$-SOSP $\x_{\text{SOSP}}$ of $f_\sigma$ is also an $\epsilon$-SOSP of $F$, by choosing a suitable Gaussian smoothing parameter $\sigma$. We optimize $F$ through the Gaussian smooth $f_\sigma$. According to \lem{FirstGaussSmoothProp}, the gradients and the eigenvalues of Hessians of $F$ are bounded by:
\begin{align}
\norm{\nabla F(\x_{\text{SOSP}})}\leq\rho d\sigma^2+\tnu+\tilde{\epsilon},
\end{align}
whereas
\begin{align}
\lambda_{\min}(\nabla^2 F(\x_{\text{SOSP}}))&\geq\lambda_{\min}(\nabla^2 f_\sigma(\x_{\text{SOSP}}))+\lambda_{\min}(\nabla^2 F(\x_{\text{SOSP}})-\nabla^2 f_\sigma(\x_{\text{SOSP}}))\nonumber\\
&\geq-\sqrt{\left(\rho+\frac{\tnu}{\sigma^2}\right)\tilde{\epsilon}}-\norm{\nabla^2 f_\sigma(\x_{\text{SOSP}})-\nabla^2 F(\x_{\text{SOSP}})}\nonumber\\
&\geq-\sqrt{\left(\rho+\frac{\tnu}{\sigma^2}\right)\tilde{\epsilon}}-\left(\rho\sqrt{d}\sigma+\frac{\tnu}{\sigma}\right).
\end{align}
To guarantee that any $\tilde{\epsilon}$-SOSP of $f_\sigma$ is an $\epsilon$-SOSP of $F$, we only need the following set of inequalities to be satisfied (up to constant factors).
\begin{align}\label{eq:FirstGaussIneq1}
\rho\sqrt{d}\sigma+\frac{\tnu}{\sigma}&\leq O(\sqrt{\rho\epsilon}),\\\label{eq:FirstGaussIneq2}
\rho d\sigma^2+\tnu&\leq O(\epsilon),\\\label{eq:FirstGaussIneq3}
\left(\rho+\frac{\tnu}{\sigma^2}\right)\tilde{\epsilon}&\leq O(\rho\epsilon).
\end{align}
From \eq{FirstGaussIneq1} and \eq{FirstGaussIneq2}, we can deduce that 
\begin{align}
\sigma&\leq O\left(\sqrt{\frac{\epsilon}{\rho d}}\right),\\
\tnu&\leq\sqrt{\rho\epsilon}\sigma=O\left(\frac{\epsilon}{\sqrt{d}}\right),\\
\tilde{\epsilon}&\leq\frac{\rho\epsilon}{\left(\rho+\frac{\tnu}{\sigma^2}\right)}=O\left(\frac{\epsilon}{\sqrt{d}}\right).
\end{align}
Hence, an $O(\epsilon/\sqrt{d})$-SOSP of the Gaussian smoothing $f_\sigma$ is an $\epsilon$-SOSP of the target function $F$. Similar to \assume{ZeroProb}, we now have to find an $O(\epsilon/\sqrt{d})$-SOSP for the Gaussian smoothing $f_\sigma$ through queries to first-order noisy oracle $\nabla f(\x)$. To approximate the gradient $\nabla f_\sigma$, we sample from the stochastic gradient estimation $\g_\sigma(\x;\z)=\nabla f(\x+\z)$, where $\z\sim\cN(0,\sigma^2I)$. As shown in Ref.~\cite{jin2018local}, the stochastic gradient has the following properties, which is similar to \lem{StochGradGauss}. 
\begin{lemma}[Lemma 53, Ref.~\cite{jin2018local}]\label{lem:FirstStochGradGauss}
We denote $\g_\sigma(\x;\z)=\nabla f(\x+\z)$ for a sample from the noisy oracle, where $\z\sim\cN(0,\sigma^2I)$. The following inequalities hold:
\begin{align}
&\mathbb{E}_\z[\g_\sigma(\x;\z)]=\nabla f_\sigma(\x),\\
&\Pr\left[\norm{\g_\sigma(\x;\z)-\nabla f_\sigma(\x)}\geq t\right]\leq\exp(-Lt^2),\qquad\forall t>0.
\end{align}
The second inequality indicates that $\g_\sigma(\x;\z)$ is a sub-Gaussian random variable with a tail $L$ (Recall that $L$ is the smoothness parameter of $f$ in \assume{FirstProb}).
\end{lemma}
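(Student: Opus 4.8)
The plan is to handle the mean identity essentially by definition and to extract the tail bound from Gaussian concentration of Lipschitz functions. For the first identity, observe that $\g_\sigma(\x;\z)=\nabla f(\x+\z)$ and that, in the convention used here, $\nabla f_\sigma(\x)$ is \emph{defined} as $\mathbb{E}_{\z\sim\cN(0,\sigma^2I)}[\nabla f(\x+\z)]$, so $\mathbb{E}_\z[\g_\sigma(\x;\z)]=\nabla f_\sigma(\x)$ is immediate. The only point worth recording is that $\nabla$ and $\mathbb{E}_\z$ commute, so that $f_\sigma$ is genuinely the Gaussian smoothing of $f$ and \lem{FirstGaussSmoothProp} can be invoked: since $f$ is $L$-smooth, $\norm{\nabla f(\x+\z)}\le\norm{\nabla f(\x)}+L\norm{\z}$, which is integrable against the Gaussian density, so differentiation under the integral sign is legitimate by dominated convergence.

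For the tail bound, the first step is to note that $\z\mapsto\nabla f(\x+\z)$ is $L$-Lipschitz in $\z$ (this is precisely $L$-smoothness of $f$), hence the scalar map $\phi(\z):=\norm{\g_\sigma(\x;\z)-\nabla f_\sigma(\x)}$ is also $L$-Lipschitz, being a composition of an $L$-Lipschitz map with the $1$-Lipschitz Euclidean norm. Then I would apply the Gaussian concentration inequality for Lipschitz functions (Borell--TIS, or equivalently the Gaussian log-Sobolev inequality): for $\z\sim\cN(0,\sigma^2I)$,
\[
\Pr\big[\phi(\z)\ge\mathbb{E}\,\phi(\z)+s\big]\le\exp\!\Big(-\tfrac{s^2}{2L^2\sigma^2}\Big),\qquad s>0 .
\]
The remaining ingredient is a bound on $\mathbb{E}\,\phi(\z)$: by Jensen, $\mathbb{E}\,\phi(\z)\le(\mathbb{E}\,\phi(\z)^2)^{1/2}$, which is the square root of the trace of the covariance matrix of $\nabla f(\x+\z)$; since each partial derivative $\partial_i f(\x+\z)$ is $L$-Lipschitz in $\z$, the Gaussian Poincar\'e inequality gives $\Var(\partial_i f(\x+\z))\le L^2\sigma^2$, so $\mathbb{E}\,\phi(\z)\le L\sigma\sqrt d$. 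Feeding this into the concentration bound and absorbing the mean into the deviation level $t$ (valid once $t\gtrsim L\sigma\sqrt d$) yields a tail of the form $\exp(-c\,t^2/(L^2\sigma^2))$, which under the smoothing radius $\sigma$ and target precision $\tilde{\epsilon}$ fixed in \sec{FirstGaussMean} is exactly the stated $\exp(-Lt^2)$ up to the absolute constant.

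The step I expect to be the main obstacle is reconciling the dimension dependence: the recentering $\mathbb{E}\,\phi(\z)$ genuinely carries a $\sqrt d$ factor while the fluctuation of $\phi$ about its mean is dimension-free, so the clean form $\exp(-Lt^2)$ must be read as valid for $t$ beyond that (dimension-dependent) center, with the absolute constants absorbed into the parameter choices made earlier. Everything else is routine Lipschitz bookkeeping plus two invocations of standard Gaussian concentration. Finally, I would remark that this is the first-order counterpart of \lem{StochGradGauss}: the improvement of the sub-Gaussian parameter from $B/\sigma$ there to $L$ here reflects that we now differentiate the oracle directly rather than forming a finite difference of width $\sigma$.
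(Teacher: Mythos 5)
The paper does not prove this lemma; it is cited directly as Lemma~53 of Ref.~\cite{jin2018local}, so there is no internal argument to compare against. Your route---Lipschitz concentration for Gaussian measure plus a Poincar\'e/Jensen bound on the mean---is the natural one and the reasoning is sound: $\z\mapsto\nabla f(\x+\z)$ is $L$-Lipschitz by $L$-smoothness of $f$, hence $\phi(\z)=\|\nabla f(\x+\z)-\nabla f_\sigma(\x)\|$ is $L$-Lipschitz, Borell--TIS for $\z\sim\cN(\0,\sigma^2 I)$ gives the tail $\exp\bigl(-s^2/(2L^2\sigma^2)\bigr)$ about $\mathbb{E}\,\phi$, and your Poincar\'e estimate $\mathbb{E}\,\phi\leq L\sigma\sqrt d$ is correct. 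The mean identity and the interchange of $\nabla$ and $\mathbb{E}_\z$ are handled adequately.

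The one step I would push back on is your closing claim that the resulting bound "is exactly the stated $\exp(-Lt^2)$ up to the absolute constant" under the parameter choices of \sec{FirstGaussMean}. It is not, and you should not try to force that reconciliation. Folding the $\sqrt d$ centering into the exponent in the usual way gives, for a suitable absolute constant $c$, $\Pr[\phi(\z)\geq t]\leq 2\exp\bigl(-ct^2/(L^2\sigma^2 d)\bigr)$ for all $t>0$, i.e.\ a norm--sub-Gaussian bound with parameter of order $L\sigma\sqrt d$. With $\sigma=\Theta(\sqrt{\epsilon/(\rho d)})$ the exponent coefficient $1/(L^2\sigma^2 d)$ equals $\rho/(L^2\epsilon)$, which is not $L$ in general. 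So the form $\exp(-Lt^2)$ written here genuinely drops the $\sigma$- and $d$-dependence that your (correct) derivation produces. Note the internal inconsistency this creates: the zeroth-order counterpart \lem{StochGradGauss} records a $\sigma$-dependent tail $B/\sigma$, yet the text after this lemma sets $\sigma_0=L$ with no $\sigma$-dependence at all. In short, your proof is correct and your flagged "obstacle" is real, but it is a defect in the paraphrased statement rather than a gap you need to close; the clean globally valid form is the $2\exp\bigl(-ct^2/(L^2\sigma^2 d)\bigr)$ bound above, and that is what should be carried forward into the mini-batch size estimate.
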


By a similar reduction to \lem{ZeroClassicalSamp}, we can deduce that the optimal sampling strategy requires
\begin{align}
m\geq O\left(\frac{\sigma_0^2 d}{\epsilon^2}\log(\frac d\delta)\right)
\end{align}
queries to approximate $\nabla f_\sigma(\x)$ with accuracy $\epsilon$, where $\sigma_0=L$ according to \lem{FirstStochGradGauss}. Quantumly, we can employ \lem{ZeroQSamp} and use only
\begin{align}\label{eq:MFirst}
m\geq O\left(\frac{\sqrt{d}\sigma_0}{\epsilon}\log(\frac d\delta)\right)
\end{align}
queries if given access to a quantum binary oracle defined in Eq.~\eq{QBinaryOracle}, which can be constructed by one query to the first-order oracle defined in Eq.~\eq{QFirstOracle}. We propose a first-order version of PGD with stochastic gradient queries to the smoothed function $f_{\sigma}$ and quantum mean estimation in \algo{FPSGDQME}.

\begin{algorithm}[H]
\caption{First-order Perturbed Stochastic Gradient Descent with Quantum Mean Estimation.}
\label{algo:FPSGDQME}
\begin{algorithmic}[1]
\REQUIRE $\x_0$, learning rate $\eta$, noise ratio $r$, mini-batch size $m$
\FOR {$t=0,1,...,T$}
\STATE Estimating the gradient $\tnabla f_\sigma(\x_t)$ of $f_\sigma(\x_t)$ using quantum mean estimation and $m$ queries to $\g_\sigma(\x;\z)=\nabla f(\x+\z)$ in the binary oracle
\STATE $\x_{t+1}\leftarrow \x_t-\eta(\tnabla f_\sigma(\x_t)+\xi_t)$, $\xi_t$ uniformly $\sim B_0(r)$
\ENDFOR
\ENSURE $\x_T$
\end{algorithmic}
\end{algorithm}

The goal of \algo{FPSGDQME} is to find an $O(\epsilon/\sqrt{d})$-SOSP of the Gaussian smoothing $f_\sigma$. The number of queries required can be bounded by the following theorem, which is the formal version of \thm{FirstMean}.
\begin{theorem}[Formal version of \thm{FirstMean}]\label{thm:FirstMeanF}
Suppose we have a target function $F$ and its noisy evaluation $f$ satisfying \assume{FirstProb} with $\tnu\leq O(\epsilon/\sqrt{d})$. With probability at least $1-\delta$, \algo{FPSGDQME} finds an $\epsilon$-SOSP of $F$ satisfying \eq{SOSPDef}, using
\begin{align}
\tO\left(\frac{d^{2}}{\epsilon^{3}}\cdot\poly(\Delta_f,\ell,\rho)\right)
\end{align}
queries to $U_\g$ defined in Eq.~\eq{QFirstOracle}, under the following choices of parameters:
\begin{align}\label{eq:FirstMeanParam}
\eta=\frac{1}{\ell'},\quad\delta_0=\frac{\delta\epsilon'^2}{4\ell' \Delta_f}\chi^{-4},\quad r=\epsilon'\chi^{-3}c^{-6},\quad\ut=\frac{\chi c}{\eta\sqrt{\rho'\epsilon'}},\quad\uf=\sqrt{\frac{\epsilon'^3}{\rho'}}\chi^{-3}c^{-5},
\end{align}
where $c$ is some large enough constant, $\Delta_f=f_\sigma(\x_0)-f_\sigma(\x^*)$ is the gap between the initial point $\x_0$ and the global minimum $\x^*$, and $\chi=\max\{1,\log(d\ell' \Delta_f/\rho'\epsilon'\delta_0)\}$. Here, $\ell'$, $\rho'$, and $\epsilon'$ have the same definition as in \thm{ZeroMeanF}.
\end{theorem}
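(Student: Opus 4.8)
The plan is to follow the proof of \thm{ZeroMeanF} almost verbatim, the only structural change being that the stochastic gradient is now read off directly from the first-order oracle rather than reconstructed by Jordan's algorithm. The reduction to smoothing has already been carried out in the discussion preceding the statement: by \lem{FirstGaussSmoothProp} with $\sigma=\Theta(\sqrt{\epsilon/(\rho d)})$ and $\tnu\le O(\epsilon/\sqrt{d})$, every $\epsilon'$-SOSP of the Gaussian smoothing $f_\sigma$ with $\epsilon'=\Theta(\epsilon/\sqrt{d})$ is an $\epsilon$-SOSP of $F$, and $f_\sigma$ is $\ell'$-smooth and $\rho'$-Hessian Lipschitz with $\ell'=O(\ell+\sqrt{\rho\epsilon})$ and $\rho'=O(\rho)$. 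Hence it suffices to show that \algo{FPSGDQME} returns an $\epsilon'$-SOSP of $f_\sigma$ within the stated query budget.

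First I would fix the mini-batch size. By \lem{FirstStochGradGauss} the samples $\g_\sigma(\x_t;\z)=\nabla f(\x_t+\z)$, $\z\sim\cN(0,\sigma^2 I)$, are unbiased for $\nabla f_\sigma(\x_t)$ and sub-Gaussian with tail parameter $\sigma_0=L$, so $\Tr(\Sigma)=O(\sigma_0^2)$; feeding this into \lem{ZeroQSamp}, with the binary oracle \eq{QBinaryOracle} simulated by a single call to $U_\g$ from \eq{QFirstOracle}, shows that $m=\tO(\sqrt{d}\,\sigma_0/\epsilon')=\tO(d/\epsilon)$ samples produce an estimate $\tnabla f_\sigma(\x_t)$ with $\|\tnabla f_\sigma(\x_t)-\nabla f_\sigma(\x_t)\|\le\epsilon'/20$ with probability at least $1-\delta_0$. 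Here one should verify that the chosen $\sigma_0$ and $\epsilon'$ land us in the $m>d$ branch of \lem{ZeroQSamp}, so the estimator has the $1/m$ (rather than $1/\sqrt{m}$) error scaling; this is exactly the quadratic quantum saving in the mini-batch size, and it is what turns the classical $\tO(d^3/\epsilon^4)$ into $\tO(d^2/\epsilon^3)$.

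Next I would run the standard value-decrease dichotomy on $f_\sigma$. Since the per-step gradient error $\epsilon'/20$ meets the accuracy hypothesis of \lem{JordanLargeGradJin} and \lem{JordanSmallGrad}, those lemmas apply with $(F,\ell,\rho,\epsilon)$ replaced by $(f_\sigma,\ell',\rho',\epsilon')$: an iteration with $\|\nabla f_\sigma(\x_t)\|\ge\epsilon'$ decreases $f_\sigma$ by $\Omega(\eta\epsilon'^{2})$, and an iteration with small gradient but $\lambda_{\min}(\nabla^2 f_\sigma(\x_t))\le-\sqrt{\rho'\epsilon'}$ decreases it by $\uf$ over the ensuing $\ut$ steps. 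Taking $T=3\max\{4\Delta_f/(\eta\epsilon'^{2}),\,\Delta_f\ut/\uf\}$ with the parameters in \eq{FirstMeanParam}, and using that the cumulative decrease cannot exceed $\Delta_f=f_\sigma(\x_0)-f_\sigma(\x^*)$, at most $T/3$ iterations are large-gradient iterations and at most $T/3$ are negative-curvature iterations, so at least $T/3$ of the visited points are $\epsilon'$-SOSPs of $f_\sigma$, hence $\epsilon$-SOSPs of $F$. A union bound over the $O(T)$ mean-estimation events and the $O(T)$ invocations of the two PGD lemmas, each failing with probability at most $\delta_0$, keeps the overall failure probability at most $\delta$ for the stated $\delta_0$, and the query count is $T\cdot m=\tO(\Delta_f\ell'/\epsilon'^{2}\cdot\chi^{4})\cdot\tO(d/\epsilon)=\tO(d^2/\epsilon^3\cdot\poly(\Delta_f,\ell,\rho))$ after substituting $\epsilon'=\Theta(\epsilon/\sqrt{d})$, $\ell'=O(\ell+\sqrt{\rho\epsilon})$, and $\rho'=O(\rho)$.

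Everything here is assembly, so I do not expect a deep obstacle; the one delicate point is the interface between \lem{ZeroQSamp} and the accuracy hypotheses of the robust-PGD lemmas. The mini-batch error must be held strictly below $\epsilon'/20$ uniformly over all $T=\tO(d/\epsilon^2)$ iterations while $m$ is kept at $\tO(d/\epsilon)$, and the smoothing parameters $\ell',\rho',\epsilon'$ must be propagated consistently into $\delta_0$, $r$, $\ut$, and $\uf$; any slack in the accuracy bound feeds back both into the union bound (through the $T$ iterations and the $d/\delta_0$ factor inside \lem{ZeroQSamp}) and into the $\poly$ factors of the final query complexity, so the bookkeeping has to be done carefully even though each individual step is routine.
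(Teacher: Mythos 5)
Your proposal is correct and follows essentially the same route as the paper's proof: reduce via \lem{FirstGaussSmoothProp} to finding an $\epsilon'=\Theta(\epsilon/\sqrt{d})$-SOSP of $f_\sigma$, bound the mini-batch size at $m=\tO(d/\epsilon)$ via \lem{FirstStochGradGauss} and \lem{ZeroQSamp}, run the standard large-gradient versus negative-curvature dichotomy using \lem{JordanLargeGradJin} and \lem{JordanSmallGrad}, take a union bound over the $T$ iterations, and multiply $T\cdot m$. One minor note: you write $\ell'=O(\ell+\sqrt{\rho\epsilon})$ where the theorem statement carries over $\ell'=O(\ell+\sqrt{\epsilon/\rho})$ from \thm{ZeroMeanF}; tracing $\tnu/\sigma$ with $\tnu=O(\epsilon/\sqrt{d})$ and $\sigma=\Theta(\sqrt{\epsilon/(\rho d)})$ actually gives $O(\sqrt{\rho\epsilon})$, so your version appears to be the consistent one, and in any case the discrepancy is absorbed into the $\poly(\Delta_f,\ell,\rho)$ factor.
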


Similar to \thm{ZeroMeanF}, \thm{FirstMeanF} presents a polynomial reduction in the query complexity of oracles using quantum mean estimation. In particular, \algo{FPSGDQME} requires only $\tO(d^2/\epsilon^3)$ queries to the first-order gradient oracles while its classical counterpart~\cite{jin2018local} requires $\tO(d^3/\epsilon^4)$.
\begin{proof}[Proof of \thm{FirstMeanF}]
Since an $\epsilon$-SOSP of the target function $F$ is an $O(\epsilon/\sqrt{d})$-SOSP of function $f_\sigma$, we only need to prove that \algo{FPSGDQME} converges to an $O(\epsilon/\sqrt{d})$-SOSP of $f_\sigma$ using $\tO( d^{2}/\epsilon^{3})$ queries.

In each iteration, \algo{FPSGDQME} estimates the gradient $\nabla f_\sigma(\x_t)$ via quantum mean estimation while using $O(m)$ queries to the quantum evaluation oracle in each mini-batch. According to Eq.~\eq{MFirst}, we require the mini-batch size
\begin{align}
m\geq O\left(\frac{\sqrt{d}L}{\epsilon/\sqrt{d}}\log(\frac{d}{\delta_0})\right)=\tilde{O}\left(\frac{d}{\epsilon}\right)
\end{align}
to bound the error $\norm{\tnabla f_\sigma(\x_t)-\nabla f_\sigma(\x_t)}\leq\epsilon'/20\leq O(\epsilon/\sqrt{d}\cdot(1/20))$ with probability at least $1-\delta_0$.

Next, we consider the number of queries required to find an $\epsilon'$-SOSP of $f_\sigma$. We set the total iteration number to be
\begin{align}
T=3\max\left\{\frac{4\Delta_f}{\eta\epsilon'^2},\frac{\Delta_f\ut}{\uf}\right\}.
\end{align}
We repeat the procedure in the proof of \thm{ZeroMeanF}. Suppose for some iterations, the function has small gradients $\nabla f_\sigma(\x_t)\leq\epsilon'$ and large negative curvatures $\lambda_{\min}(\nabla^2 f_\sigma(\x_t))\leq-\sqrt{\rho'\epsilon'}$. Under this assumption, the function value will decrease for $\uf$ after each $\ut$ iterations according to \lem{JordanSmallGrad}. Therefore, the number of such iterations is bounded by $T/3$, for otherwise the function value will decrease greater than $\Delta_f=f_\sigma(\x_0)-f_\sigma(\x^*)$, which is impossible. The failure probability of the above argument is bounded by
\begin{align}
\frac{T}{3}\cdot \delta_0\leq\delta/2.  
\end{align}
Except for the iterations where \lem{JordanSmallGrad} is applied, we still have $2T/3$ iterations left. We now consider the iterations $\x_t$ with large gradients, i.e., ${\nabla f_\sigma(\x_t)}\geq\epsilon'$. According to \lem{JordanLargeGradJin}, the function value decreases by at least $\eta\epsilon^2/4$ with a probability of at least $1-\delta_0$. Therefore, there can be at most $T/3$ steps with large gradients, for otherwise, the function value will decrease greater than $\Delta_f$, which is impossible. The failure probability of the above argument is again bounded by
\begin{align}
\frac{T}{3}\cdot \delta_0\leq\delta/2.  
\end{align}

Therefore, we can deduce that with probability at least $1-\delta$, there are at most $T/3$ iterations within which the neighboring $\ut$ iterations have small gradients but large negative curvatures, and at most $T/3$ iterations with large gradients. Hence, the rest $T/3$ iterations must be $\epsilon$-SOSPs. The number of queries can be bounded by
\begin{align}
T\leq\tO\left(\frac{\Delta_f\ell}{\epsilon'^2}\cdot\chi^4\cdot m\right)=\tO\left(\frac{d^{2}}{\epsilon^{3}}\cdot\poly(\Delta_f,\ell,\rho)\right).
\end{align}
\end{proof}


\section{Classical and Quantum Lower Bounds in $d$}\label{sec:DLower}

In this section, we prove the query complexity lower bounds in the dimension $d$ of the input. Intuitively, the lower bound is obtained by constructing a hard instance and calculating its worst-case query complexity.

\subsection{Quasi-polynomial Lower Bound for Quantum Zeroth-order Methods}\label{sec:DLowerQuasiPoly}
\begin{figure}
    \centering
    \includegraphics[width=0.99\textwidth]{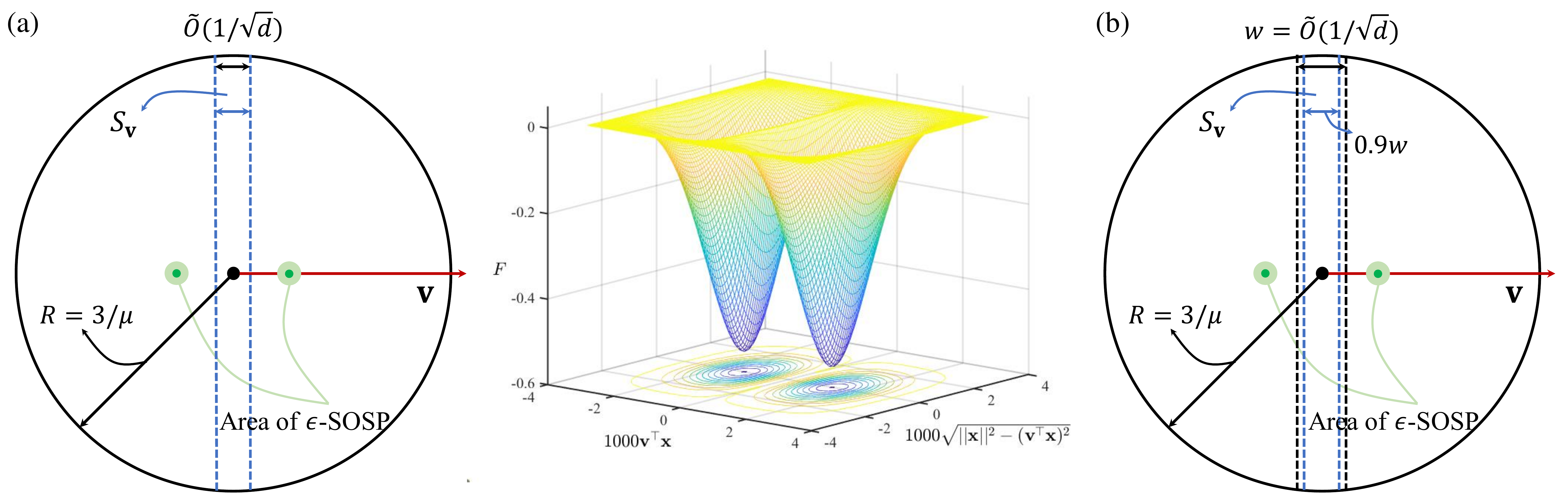}
    \caption{(a) The Sketch of the hard instance function for \thm{ZeroMeanLower}. The left figure illustrates the construction in the domain and the right figure shows a two-dimensional example. (b) The construction of the hard instance function for \thm{FirstLower}.}
    \label{fig:MainLowerHard}
\end{figure}

The constructions of our hard instances (as shown in \fig{MainLowerHard} (a)) are inspired by the idea that originally appeared in Ref.~\cite{jin2018local}. We first consider a ``scale free" version of function pair $(F,f)$, where we assume $\rho=1$ and $\epsilon=1$. Denote $\sin\x:=(\sin(x_1),...,\sin(x_d))$ and $\mathbb{I}(A)$ as the indicator function that has value $1$ when $A$ is true and $0$ otherwise. We set the constant $\mu=300$ and define the target function as
\begin{align}\label{eq:FDef}
F(\x):=h(\sin \x)+\norm{\sin\x}^2,
\end{align}
where $h(\x):=h_1(\v^\top\x)\cdot h_2\left(\sqrt{\norm{\x}^2-(\v^\top\x)^2}\right)$, and
\begin{align}
h_1(x)=g_1(\mu x),\quad g_1(x)&=(-16\abs{x}^5+48x^4-48\abs{x}^3+16x^2)\cdot\mathbb{I}\{\abs{x}<1\},\\
h_2(x)=g_2(\mu x),\quad g_2(x)&=(3x^4+8\abs{x}^3+6x^2-1)\cdot\mathbb{I}\{\abs{x}<1\}.
\end{align}
Here, the vector $\v$ is uniformly chosen from the $d$-dimensional unit sphere. In addition, we can split the domain into different regions upon which analysis and constructions are made separately:
\begin{itemize}
    \item ``ball" $\mathbb{B}(0,3/\mu)=\{\x\in\mathbb{R}^d:\norm{x}\leq 3/\mu\}$ is the $d$-dimensional hyperball with radius $3/\mu$.
    \item ``band" $S_\v=\{\x\in \mathbb{B}(0,3/\mu):\expval{\sin\x,\v}\leq\log d/\sqrt{d}\}$.
\end{itemize}
We provide the landscape of $F$ and the region division in \fig{MainLowerHard} (a). The above construction happens within a hyperball and we cannot fill the entire space $\mathbb{R}^d$ with hyperballs. Therefore, we embed this hyperball into a hypercube and add two regions.
\begin{itemize}
    \item ``hypercube" $H=[-\pi/2,\pi/2]^d$ is the $d$-dimensional hypercube with length $\pi$.
    \item ``padding" $S_2=H-\mathbb{B}(0,3/\mu)$.
\end{itemize}
With the above construction, we can fill the space $\mathbb{R}^d$ using these hypercubes. Meanwhile, the noisy function $f$ is defined as
\begin{align}\label{eq:fDefin}
f(\x)=\begin{cases}
\|\sin\x\|^2,\quad\x\in S_\v,\\
F(\x),\quad\x\notin S_\v.
\end{cases}
\end{align}
The ``band'' region $S_\v$ is known as the non-informative region as any query to $f$ in this area will obtain no information regarding $\v$. Intuitively, the metric of the non-informative area approaches $1$ as $d$ increases according to the measure of concentration. It is hard for any algorithm (both classical and quantum) to find a point out of this region. In particular, the probability of classically querying a point on $S_\v$ is bounded below by
\begin{align}\label{eq:AreaProportion}
\frac{\text{Area}(S_\v)}{\text{Area}(\mathbb{B}(0,3/\mu))}\geq1-O(d^{-\log d})
\end{align}
according to \lem{vector-from-sphere} in \append{existing-lemmas}.The following properties hold for function pair according to Ref.~\cite{jin2018local}.

\begin{lemma}[Lemma 33, Ref.~\cite{jin2018local}]\label{lem:PropHardZero}
The function pair $(F,f)$ defined in \eq{FDef} and \eq{fDefin} above satisfies:
\begin{itemize}
    \item The value of $f$ in the non-informative region $S_\v$ is independent of $\v$.
    \item $\sup_{x\in S_\v}\norm{f-F}_\infty\leq\tO(1/d)$.
    \item $F$ has no $\epsilon$-SOSP in the non-informative region $S_\v$.
    \item $F$ is $O(d)$-bounded, $O(1)$-Hessian Lipschitz, and $O(1)$-gradient Lipschitz.
\end{itemize}
\end{lemma}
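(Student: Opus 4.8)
The plan is to check the four bullets one at a time, following the hard-instance analysis of Ref.~\cite{jin2018local} from which this lemma is quoted. Bullet~1 is immediate from \eq{fDefin}: on $S_\v$ we have $f(\x)=\|\sin\x\|^2$, which does not involve $\v$. I would also record the complementary fact that makes $S_\v$ genuinely non-informative, namely that in each translated copy of $H$ the bump $h(\sin\x)$ is supported inside $\mathbb{B}(\0,3/\mu)$: on $H\setminus\mathbb{B}(\0,3/\mu)$ one has $\|\sin\x\|\ge(2/\pi)\|\x\|>\sqrt2/\mu$, which forces one of $h_1(\v^\top\sin\x)$ and $h_2\big(\sqrt{\|\sin\x\|^2-(\v^\top\sin\x)^2}\big)$ to vanish (each is supported where its argument is $<1/\mu$ in magnitude), so $f=F=\|\sin\x\|^2$ there too and $f$ can depend on $\v$ only on $\mathbb{B}(\0,3/\mu)\setminus S_\v$. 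For bullet~2, on $S_\v$ we have $f-F=-h(\sin\x)=-h_1(\langle\sin\x,\v\rangle)h_2(\cdot)$; since $g_2$ is bounded on $[-1,1]$ we get $|h_2|=O(1)$, and since $g_1(u)=16u^2(1-|u|)^3\mathbb{I}\{|u|<1\}$ vanishes quadratically at the origin we get $|h_1(\langle\sin\x,\v\rangle)|=O(\mu^2\langle\sin\x,\v\rangle^2)$; the defining inequality $\langle\sin\x,\v\rangle\le\log d/\sqrt d$ of $S_\v$ then yields $|h(\sin\x)|=O(\mu^2\log^2 d/d)=\tO(1/d)$. For bullet~4, I would bound $F$ by $\|\sin\x\|^2\le d$ and $|h|=O(1)$, and obtain the smoothness and Hessian-Lipschitz constants via the chain rule: $\|\sin\x\|^2$ has diagonal Hessian $\diag(2\cos 2x_i)$ of norm $\le2$ with $O(1)$-Lipschitz Hessian, while $h\circ\sin$ is $C^2$ with all derivatives bounded by absolute constants, because $g_1,g_2$ are $C^2$ (their only non-smooth monomials $|u|^3,|u|^5$ still have Lipschitz second derivatives) and the chain-rule factors — $\cos x_i$ and the first two derivatives of $\v^\top\y$ and $\sqrt{\|\y\|^2-(\v^\top\y)^2}$ — are $O(1)$ on the ball; periodicity of $\sin$ carries all of this to the tiling of $\R^d$.

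The third bullet is the heart of the statement and the place where I expect the real difficulty. The plan is to show that every $\x\in S_\v$ violates at least one of the two conditions in \eq{SOSPDef}. Since $S_\v\subset\mathbb{B}(\0,3/\mu)$ and $3/\mu$ is tiny, I may replace $\sin\x$ by $\x$ up to $O(\|\x\|^3)$ and the chain-rule Jacobian by the identity up to $O(\|\x\|^2)$, and so analyze $\|\y\|^2+h(\y)$ near $\y=\0$. The mechanism that destroys the second-order condition is that for small $s=\v^\top\y$ one has $h_1(s)=g_1(\mu s)\approx 16\mu^2 s^2$ (using $g_1''(0)=32$), so at a point with $\v^\top\y$ near $0$ the Hessian of $h$ is, to leading order, $32\mu^2\,h_2\big(\sqrt{\|\y\|^2-(\v^\top\y)^2}\big)\,\v\v^\top$; because $g_2(0)=-1<0$ this is strictly negative whenever the component of $\sin\x$ orthogonal to $\v$ is a small enough multiple of $1/\mu$, and since $\mu=300$ is large it contributes an eigenvalue far below $-\sqrt{\rho\epsilon}$ in the direction $\v$, which dominates the $(2-o(1))I$ coming from $\|\sin\x\|^2$; hence $\lambda_{\min}(\nabla^2 F(\x))<-\sqrt{\rho\epsilon}$ and $\x$ is not an $\epsilon$-SOSP.

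What remains is to cover the rest of $S_\v$ — the points where the orthogonal component of $\sin\x$ is not small (so $h_2$ need not be negative), or where $\langle\sin\x,\v\rangle$ is negative and bounded away from $0$ (so $h_1$ is not small). On those points I would argue, as in \cite[Lemma 33]{jin2018local}, that the first-order condition of \eq{SOSPDef} fails instead: the $\v$-directional derivative of $h$ is of order $\mu$ wherever $g_1'$ is bounded away from $0$, and where $h$ is locally flat the radial growth of $\|\sin\x\|^2$ takes over, while the unique near-stationary point of $F$ in the ball — the bottom of the trough, at $\langle\cdot,\v\rangle\approx 2/(5\mu)$, which exceeds $\log d/\sqrt d$ for $d$ large and therefore lies outside $S_\v$ — is excluded by construction. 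I expect the main obstacle to be precisely this bookkeeping: partitioning $S_\v$ according to the sizes of $\langle\sin\x,\v\rangle$ and of the orthogonal component of $\sin\x$, and verifying that the geometric parameters ($\mu=300$, ball radius $3/\mu$, band width $\log d/\sqrt d$) fit together so that the curvature mechanism and the gradient mechanism between them exhaust $S_\v$; the quantitative estimates that make this go through are exactly those of \cite[Lemma 33]{jin2018local}.
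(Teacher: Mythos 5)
The paper does not prove this lemma; it quotes it verbatim as Lemma~33 of Ref.~\cite{jin2018local}, so there is no internal argument to compare yours against. Taken on its own terms, your treatment of bullets~1 and~4 is correct, and the $\tilde{O}(1/d)$ calculation for bullet~2 is the right one, but note that it silently replaces the paper's one-sided constraint $\langle\sin\x,\v\rangle\le\log d/\sqrt{d}$ by the two-sided $|\langle\sin\x,\v\rangle|\le\log d/\sqrt{d}$. That replacement is surely correct — with the one-sided band the entire hemisphere $\langle\sin\x,\v\rangle<0$ lies in $S_\v$, where $h_1(\langle\sin\x,\v\rangle)$ is $\Theta(1)$ and bullets~2 and~3 both visibly fail — but you should flag it as a transcription typo rather than assume it. Similarly, $g_2(x)=(3x^4+8|x|^3+6x^2-1)\mathbb{I}\{|x|<1\}$ as displayed has a jump of $16$ at $|x|=1$ and cannot be what is meant; the intended bump is $(3x^4-8|x|^3+6x^2-1)\mathbb{I}\{|x|<1\}=(|x|-1)^3(3|x|+1)\mathbb{I}\{|x|<1\}$, which is $C^2$, nonpositive, and equals $-1$ at $0$. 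Your factorization of $g_1$ already implicitly uses this corrected sign pattern.

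The substantive gap is in bullet~3, and you partly flag it yourself. The curvature/gradient dichotomy is the right skeleton, but as you set it up the two regimes do not exhaust $S_\v$. Take $\langle\sin\x,\v\rangle\approx 0$ with the orthogonal radius $t$ of $\sin\x$ chosen so that $\mu t$ is close to $1$ (still well inside the ball of radius $3/\mu$). There the $\v$-directional second derivative of $h$ is $\approx 32\mu^2 g_2(\mu t)$, and $g_2(\mu t)\to 0$ as $\mu t\to 1$, so the Hessian of $F$ is close to $2I$ with no large negative eigenvalue, while $\|\nabla F\|\approx 2t\approx 2/\mu\ll 1$ is far below the first-order threshold. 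Your fallback — that ``where $h$ is locally flat the radial growth of $\|\sin\x\|^2$ takes over'' — is asserted but is simply false at this scale, because the radial gradient is only of order $1/\mu$. Closing this requires a quantitative lower bound on $|g_2(\mu t)|$ over the whole annulus reached from $S_\v$ (or a tighter ball radius than $3/\mu$), which is precisely the interaction between $\mu$, the ball radius, and the band width that you defer to Ref.~\cite{jin2018local}. As written, the proposal is a plausible outline for bullet~3, not a proof.
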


The hard instance in the above \lem{PropHardZero} has realized the $\tO(1/d)$ factor for the noise bound in \thm{ZeroMeanLower} and introduced the non-informative area. The next step is to scale the hard instance to reach the lower bound with
correct dependencies on $\rho$ and $\epsilon$. Given $\epsilon>0$ and $\rho>0$, we define the scaling functions
\begin{align}\label{eq:FDefScale}
\tilde{F}(\x)&\coloneqq\epsilon rF\left(\frac{\x}{r}\right),\\\label{eq:fDefinScale}
\tilde{f}(\x)&\coloneqq\epsilon rf\left(\frac{\x}{r}\right),
\end{align}
where $r=\sqrt{\epsilon/\rho}$, and the functions $F,f$ are defined in \eq{FDef} and \eq{fDefin}, respectively. The scaled regions corresponding to $\tilde{F}$ and $\tilde{f}$ are:
\begin{itemize}
    \item ``ball" $\mathbb{B}(0,3r/\mu)=\{\x\in\mathbb{R}^d:\norm{x}\leq 3/\mu\}$ is the $d$-dimensional hyperball with radius $3r/\mu$.
    \item ``band" $\tilde{S}_\v=\{\x\in \mathbb{B}(0,3r/\mu):\expval{\sin(\x/r),\v}\leq\log d/\sqrt{d}\}$.
\end{itemize}
According to \lem{PropHardZero}, the function pair $(\tilde{F},\tilde{f})$ satisfies \assume{ZeroProb} with $\nu=\tTheta(\sqrt{\epsilon^3/\rho}\cdot 1/d)$, upon which we prove our quantum lower bound for finding an $\epsilon$-SOSP of the target function $\tilde{F}$ in \eq{FDefScale} taking queries to the noisy function $\tilde{f}$ in \eq{fDefinScale}. Formally, we provide the following theorem.
\begin{theorem}[Formal version of \thm{ZeroMeanLower}]\label{thm:ZeroMeanLowerF}
For any $B>0,\ell>0,\rho>0$, there exists an $\epsilon_0=\Theta(\min\{\ell^2/\rho,(B^2\rho/d^2)^{1/3}\})$ such that for any $\epsilon\in(0,\epsilon_0]$, the function pair $(\tilde{F},\tilde{f})$ defined in \eq{FDefScale} and \eq{fDefinScale} satisfies \assume{ZeroProb} with $\nu=\tTheta(\sqrt{\epsilon^3/\rho}\cdot 1/d)$, and any \textit{quantum} algorithm that only queries a quasi-polynomial $O(d^{\log(d)})$ times to the zeroth-order quantum oracle $U_{\tilde{f}}$ will fail with high probability to find an $\epsilon$-SOSP of $\tilde{F}$.
\end{theorem}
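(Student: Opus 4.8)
The plan is to reduce the task of outputting an $\epsilon$-SOSP of $\tilde F$ to an unstructured search for the random direction $\v$, and then to rule out any quasi-polynomial quantum algorithm by a hybrid/indistinguishability argument. Write $r=\sqrt{\epsilon/\rho}$ and let $\mathcal{R}_\v\subseteq\R^d$ be the union, over all balls $\vect k+\mathbb{B}(\0,3r/\mu)$ with $\vect k\in\pi r\Z^d$, of the ``non-band'' slivers $\vect k+\big(\mathbb{B}(\0,3r/\mu)\setminus\tilde S_\v\big)$; this is precisely the region where the value of $\tilde f$ depends on $\v$. I would first do the (routine) bookkeeping that $(\tilde F,\tilde f)$ meets \assume{ZeroProb} with the stated parameters. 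Differentiating $\tilde F(\x)=\epsilon r F(\x/r)$ gives $\nabla^2\tilde F=(\epsilon/r)\nabla^2 F(\cdot/r)$ and $\nabla^3\tilde F=(\epsilon/r^2)\nabla^3 F(\cdot/r)$, so by \lem{PropHardZero} the scaled function is $O(d\epsilon r)=O\big(d\sqrt{\epsilon^3/\rho}\big)$-bounded, $O(\epsilon/r)=O(\sqrt{\epsilon\rho})$-smooth, and $O(\epsilon/r^2)=O(\rho)$-Hessian-Lipschitz; demanding these be at most $B$, $\ell$, $\rho$ yields exactly $\epsilon\le\epsilon_0=\Theta\big(\min\{\ell^2/\rho,(B^2\rho/d^2)^{1/3}\}\big)$, while $\|\tilde F-\tilde f\|_\infty=\epsilon r\|F-f\|_\infty=\tTheta\big(\sqrt{\epsilon^3/\rho}/d\big)=\nu$. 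The same lemma, rescaled, also shows every $\epsilon$-SOSP of $\tilde F$ lies in $\mathcal{R}_\v$: there is none in $\tilde S_\v$, and in the padding region $\tilde F(\x)=\epsilon r\|\sin(\x/r)\|^2$, whose every critical point has a direction of curvature $-2\epsilon/r=-2\sqrt{\epsilon\rho}<-\sqrt{\rho\epsilon}$, hence is not an $\epsilon$-SOSP.

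Next I would record the non-informativeness of $\tilde f$. By the (rescaled) definition \eq{fDefin}, on $\R^d\setminus\mathcal{R}_\v$ we have $\tilde f(\x)=\epsilon r\|\sin(\x/r)\|^2=:\tilde f_0(\x)$, a function independent of $\v$; only on $\mathcal{R}_\v$ can $\tilde f$ differ from $\tilde f_0$. Moreover, for \emph{every} fixed $\x$ the measure-concentration estimate \lem{vector-from-sphere} applies: inside a ball $\|\sin(\x/r)\|\le 3/\mu$, and the band is defined by $\langle\sin(\x/r),\v\rangle\le\log d/\sqrt d$, so the spherical-cap bound gives
\begin{align}
\Pr_{\v}\big[\x\in\mathcal{R}_\v\big]\ \le\ \exp\!\big(-\Omega(\log^2 d)\big)\ =\ d^{-\Omega(\log d)},
\end{align}
with a large absolute constant hidden in $\Omega(\cdot)$ (from the radius $3/\mu$ and the threshold $\log d/\sqrt d$).

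The core step is the quantum argument. Let $\mathcal{A}$ be a quantum algorithm issuing $T$ queries to the zeroth-order oracle; run it both on $U_{\tilde f}$ and on the $\v$-independent reference oracle $U_{\tilde f_0}$, and let $|\psi_t\rangle$ be the state after $t$ queries in the $U_{\tilde f_0}$-run, with $q_t(\x)$ the total squared amplitude on query-register value $\x$ (so $\sum_\x q_t(\x)=1$). Since $U_{\tilde f}$ and $U_{\tilde f_0}$ act identically on $|\x\rangle$-components with $\x\notin\mathcal{R}_\v$, we get $\|(U_{\tilde f}-U_{\tilde f_0})|\psi_t\rangle\|^2\le 4\sum_{\x\in\mathcal{R}_\v}q_t(\x)$, and the standard hybrid argument for oracle queries gives $\big\||\psi_T^{\v}\rangle-|\psi_T^{0}\rangle\big\|\le\sum_{t<T}\|(U_{\tilde f}-U_{\tilde f_0})|\psi_t\rangle\|$. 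Taking $\E_\v$, using Jensen's inequality termwise, and then the displayed bound,
\begin{align}
\E_\v\big\||\psi_T^{\v}\rangle-|\psi_T^{0}\rangle\big\|\ \le\ \sum_{t<T}\sqrt{4\sum_\x q_t(\x)\,\Pr_\v[\x\in\mathcal{R}_\v]}\ \le\ T\cdot d^{-\Omega(\log d)}.
\end{align}
Since $\mathcal{A}$'s output on $U_{\tilde f_0}$ is a fixed distribution $D$ on $\R^d$ and every $\epsilon$-SOSP of $\tilde F$ lies in $\mathcal{R}_\v$, we have $\E_\v\Pr_{\x\sim D}[\x\text{ is an }\epsilon\text{-SOSP of }\tilde F]\le\E_\v\Pr_{\x\sim D}[\x\in\mathcal{R}_\v]\le d^{-\Omega(\log d)}$; combining with the previous display (the total-variation distance between $\mathcal{A}$'s output on $U_{\tilde f}$ and $D$ is at most $\||\psi_T^{\v}\rangle-|\psi_T^{0}\rangle\|$),
\begin{align}
\E_\v\big[\Pr[\mathcal{A}\text{ outputs an }\epsilon\text{-SOSP of }\tilde F]\big]\ \le\ (1+T)\cdot d^{-\Omega(\log d)}.
\end{align}
Because the exponent carries a large constant, this is $o(1)$ whenever $T=O(d^{\log d})$, so some fixed choice of $\v$ forces $\mathcal{A}$ to fail with probability $1-o(1)$.

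The main obstacle is this last step: turning the measure-theoretic non-informativeness of $\tilde f$ into a statement robust against quantum superposition. The delicate points are choosing the $\v$-independent reference oracle $U_{\tilde f_0}$ correctly, handling the continuum of query points $\x\in\R^d$ (one may first restrict $\mathcal{A}$ to a sufficiently fine finite mesh, which does not help since each mesh point still has hitting probability $d^{-\Omega(\log d)}$), and the interchange of $\E_\v$ with the hybrid sum via Jensen — which is exactly what degrades the classical union-bound threshold $d^{\Omega(\log d)}$ to a (still quasi-polynomial) quantum threshold $d^{\Omega(\log d)}$. Everything else — the scaling of Step~1 and the SOSP-placement claim — follows directly from \lem{PropHardZero} and \lem{vector-from-sphere}.
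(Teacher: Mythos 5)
Your approach is correct on the main line and genuinely different from the paper's. The paper discretizes the sphere (\lem{DividingSphere}), casts the question as an explicit unstructured search problem (\prob{UnstructuredSearch}), introduces an intermediate oracle $\hat{q}$ (\lem{hatg-properties}), passes through a query-for-query reduction (\lem{reduction-to-hyperball}), and finally invokes the quantum lower bound for unstructured search \cite{Nayak1999Quantum,Nielsen2010Quantum}. You instead run the algorithm against a fixed $\v$-independent reference oracle $U_{\tilde f_0}$ and bound the expected output difference via a direct hybrid (BBBV-style) argument combined with the measure-concentration estimate \lem{vector-from-sphere}. This avoids the discretization machinery entirely, makes the role of concentration of measure transparent, and the Jensen interchange you highlight is exactly what re-derives the Grover-type square-root degradation that the paper obtains by citing the search lower bound. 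Both routes yield the same quasi-polynomial threshold (with a large constant in the exponent inherited from the $\mu$-dependent concentration bound), and your approach subsumes the content of the BBBV bound that the paper cites as a black box.

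The one step where your argument has a genuine gap is the SOSP-placement claim outside the ball. You write that in the padding region $\tilde F(\x)=\epsilon r\|\sin(\x/r)\|^2$, ``whose every critical point has a direction of curvature $-2\sqrt{\epsilon\rho}<-\sqrt{\rho\epsilon}$, hence is not an $\epsilon$-SOSP.'' This rules out only points with $\nabla\tilde F(\x)=\0$, but an $\epsilon$-SOSP merely requires $\|\nabla\tilde F(\x)\|\leq\epsilon$. A point $\x$ just outside the ball $\mathbb{B}(\0,3r/\mu)$ with all coordinates small has $\|\nabla\tilde F(\x)\|\approx 2\epsilon\|\x/r\|\approx 6\epsilon/\mu\ll\epsilon$ yet $\nabla^2\tilde F(\x)\approx 2\sqrt{\rho\epsilon}\,I\succ 0$, to which your curvature criterion does not apply. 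The paper sidesteps this entirely by delegating the tiling-and-padding bookkeeping to \lem{extending-to-Rd}, which cites Appendix C.2 of \cite{jin2018local}; if you want a self-contained treatment of the padding you would need a more careful accounting of gradient and Hessian near the ball boundary (not just at critical points), and as written your argument does not establish what it claims.
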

To prove \thm{ZeroMeanLowerF}, we introduce some lemmas to construct a reduction of the problem. In particular, our goal is to transform the quantum lower bound on the unstructured search problem~\cite{Bennett1997Strengths,Nayak1999Quantum} into a lower bound for the problem of finding an $\epsilon$-SOSP of $\tilde{F}$ considered in \thm{ZeroMeanLowerF}. We discretize the problem via the following results on distributing exponentially many points on $\mathbb{S}^{d-1}$ in a uniform way such that the distances between each pair of points are at least $\delta$.
\begin{lemma}[Lemma D.1, Ref.~\cite{liu2022quantum}]\label{lem:DividingSphere}
For any constant $\delta\in\big(0,\frac{\log d}{2\sqrt{d}}\big)$, there exists a set $\Gamma=\{\y_1,\y_2,\ldots,\y_N\}$ of $N$ unit vectors in $\mathbb{R}^d$ such that
\begin{itemize}
\item $\forall \y_i\neq\y_j\in\Gamma$, $\|\y_i-\y_j\|\geq\delta$;
\item $\forall \z\in\mathbb{S}^{d-1}$, there exists an $\y_i\in\Gamma$ such that $\|\z-\y_i\|\leq\delta$;
\item $N\geq\big(\frac{1}{2\delta}+\frac{1}{2}\big)^d-\big(\frac{1}{2\delta}-\frac{1}{2}\big)^d$.
\end{itemize}
\end{lemma}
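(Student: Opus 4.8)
The plan is to realize $\Gamma$ as a \emph{maximal $\delta$-separated} subset of the sphere $\mathbb{S}^{d-1}$, which makes the first two bullets automatic, and then to extract the cardinality bound from a volumetric comparison between $N$ Euclidean balls of radius $2\delta$ and a thin spherical shell in $\mathbb{R}^d$.

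First I would fix $\delta$ in the stated range and let $\Gamma=\{\y_1,\dots,\y_N\}$ be a subset of $\mathbb{S}^{d-1}$ that is $\delta$-separated (pairwise Euclidean distances $\ge\delta$) and maximal with respect to inclusion among such subsets. Such a $\Gamma$ exists and is finite: if $\|\y_i-\y_j\|\ge\delta$ the open balls $B(\y_i,\delta/2)$ are pairwise disjoint and of equal positive surface measure, so compactness of $\mathbb{S}^{d-1}$ bounds their number, and one may build $\Gamma$ greedily until no further point can be added. The first bullet holds by construction. For the second, if some $\z\in\mathbb{S}^{d-1}$ had $\|\z-\y_i\|>\delta$ for every $i$, then $\Gamma\cup\{\z\}$ would still be $\delta$-separated, contradicting maximality; hence every $\z\in\mathbb{S}^{d-1}$ is within $\delta$ of some $\y_i$.

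For the third bullet I would argue by covering volumes. Using the covering property just proved, I claim the balls $B(\y_i,2\delta)$, $i=1,\dots,N$, cover the shell $A\coloneqq\{\x\in\mathbb{R}^d:1-\delta\le\|\x\|\le 1+\delta\}$: for $\x\in A$ set $\z\coloneqq\x/\|\x\|\in\mathbb{S}^{d-1}$, pick $\y_i$ with $\|\z-\y_i\|\le\delta$, and note $\|\x-\y_i\|\le\|\x-\z\|+\|\z-\y_i\|=\big|\,\|\x\|-1\,\big|+\delta\le 2\delta$. Writing $V_d$ for the volume of the unit ball in $\mathbb{R}^d$ and comparing Lebesgue measures,
\[
N\cdot V_d(2\delta)^d\;\ge\;\mathrm{vol}(A)\;=\;V_d\big[(1+\delta)^d-(1-\delta)^d\big],
\]
so that $N\ge\dfrac{(1+\delta)^d-(1-\delta)^d}{(2\delta)^d}=\big(\tfrac{1}{2\delta}+\tfrac12\big)^d-\big(\tfrac{1}{2\delta}-\tfrac12\big)^d$, using $\tfrac{1\pm\delta}{2\delta}=\tfrac{1}{2\delta}\pm\tfrac12$. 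The hypothesis $\delta<\log d/(2\sqrt d)$ only ensures $1-\delta>0$ so that $A$ is a genuine nonempty shell; it is otherwise inessential here and is retained for consistency with the use of the lemma in the hard-instance construction.

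This is a routine packing/covering argument once the maximal-net viewpoint is taken, so I do not anticipate a real obstacle; the single point that needs care is the covering step, where the radius must be inflated from $\delta$ to $2\delta$ to absorb the radial displacement $\big|\,\|\x\|-1\,\big|\le\delta$ of points in the shell, and this factor $2$ is precisely what yields the $(2\delta)^d$ in the denominator and hence the $\tfrac{1}{2\delta}\pm\tfrac12$ shape of the bound.
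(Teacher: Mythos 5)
Your proof is correct, and it is the standard maximal-$\delta$-separated-net argument combined with a volume comparison against the spherical shell $\{1-\delta\le\|\x\|\le 1+\delta\}$; the form of the cardinality bound, $\big(\tfrac{1}{2\delta}+\tfrac12\big)^d-\big(\tfrac{1}{2\delta}-\tfrac12\big)^d$, is precisely the signature of that shell computation, and this matches the argument given in the cited source (the paper itself imports the lemma by reference rather than reproving it). The only small clarification worth adding is the explicit observation that $\|\x-\x/\|\x\|\|=\big|\|\x\|-1\big|$, which you use implicitly; otherwise the covering step, the inflation from $\delta$ to $2\delta$, and the final algebra are all in order.
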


Inspired by Ref.~\cite{liu2022quantum}, we consider the following unstructured search problem which can be reduced to finding an $\epsilon$-SOSP of $\tilde{F}$ with polynomial overhead. 
\begin{problem}\label{prob:UnstructuredSearch}
Consider a set $\Gamma$ of $N$ unit vectors in $\mathbb{R}^d$ satisfying \lem{DividingSphere}, for an unknown unit vector $\v\in\Gamma$, we define $q\colon\Gamma\to\mathbb{R}^d$ as follows:
\begin{align}
q(\x)\coloneqq\begin{cases}
\v,\quad\langle \sin\x,\v\rangle > \frac{\log d}{2\sqrt{d}},\\
\vect{0},\quad\text{otherwise}.
\end{cases}
\end{align}
The goal is to find $\v$ only with access to values of $q$.
\end{problem}

We now present the reduction from \prob{UnstructuredSearch} to the problem of finding an $\epsilon$-SOSP of $\tilde{F}$ under the setting of \thm{ZeroMeanLowerF}. To make the reduction more straightforward, we additionally introduce an intermediate function $\hat{q}(\x)\colon\mathbb{B}(\0,3/\mu)\to\mathbb{R}$ between $q$ and $f$. In particular, for any $\x\in\mathbb{B}(\0,3/\mu)$, we use $\hat{\y}(\x)$ to denote the vector $\y_i$ in $\Gamma$ such that the distance $\|\x/\|\x\|-\y_i\|$ is minimized. If more than one of such vectors exists, we choose the one with the smallest lower index. We define $\hat{q}(\x)$ as
\begin{align}\label{eq:g-superpoly}
\hat{q}(\x)\coloneqq\begin{cases}
\|\sin\x\|^2,\quad q(\hat{\y}(\x))=\0,\\
F(\x),\quad \text{otherwise}.
\end{cases}
\end{align}

Similar to $f$, $\hat{q}$ also has a large ``non-informative" region $\hat{S}_\v$ where the function value equals $\norm{\sin\x}^2$ and reveals no information about $\v$. Quantitatively, we can observe that $\hat{S}_\v=\{\y\in\mathbb{B}(\0,3/\mu):g(\hat{\y}(\x))=\0\}$, and $\hat{q}$ has the following properties.
\begin{lemma}\label{lem:hatg-properties}
The function $\hat{q}(\x)$ defined in \eq{g-superpoly} has the following properties:
\begin{itemize}
\item One query to $\hat{q}$ can be implemented using one query to $q$.
\item Its non-informative region $\hat{S}_\v=\{\y\in\mathbb{B}(\0,3/\mu):q(\hat{\y}(\x))=\vect{0}\}$ is a subset of $S_\v$, which is the non-informative region of $f$ defined in \eq{fDefin}. 
\item For any $\x\in\mathbb{B}(\0,3/\mu)-S_\v$, we have $\hat{q}(\x)=f(\x)$.
\end{itemize}
\end{lemma}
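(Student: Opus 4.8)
The three items split naturally into one oracle-simulation claim and two geometric claims, and the plan is to dispatch them in that order. For the first item, observe that for any $\x\in\mathbb{B}(\0,3/\mu)$ the net point $\hat\y(\x)\in\Gamma$ is a deterministic function of $\x$ and the fixed family $\Gamma$ — the nearest-neighbour search and any tie-breaking use no information about $\v$ — so it can be produced with zero queries. One query to $q$ then returns $q(\hat\y(\x))\in\{\0,\v\}$: if the answer is $\0$ we output $\|\sin\x\|^2$, and if the answer is $\v$ the hidden direction is in hand and $F(\x)=h(\sin\x)+\|\sin\x\|^2$ can be evaluated explicitly, so we output $F(\x)$. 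This reproduces $\hat q$ from a single call to $q$ plus query-free arithmetic; the coherent (unitary) version is obtained by performing the same steps reversibly, which is all that the lower-bound reduction in \thm{ZeroMeanLowerF} requires.

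For the second item I would argue the contrapositive: if $\x\in\mathbb{B}(\0,3/\mu)\setminus S_\v$, i.e. $\langle\sin\x,\v\rangle>\log d/\sqrt d$, then $q(\hat\y(\x))=\v$, so $\x\notin\hat S_\v$. The leverage is that $\|\x\|\le 3/\mu$ is a fixed small constant, so $\sin$ is nearly linear on the ball: the coordinatewise estimate $|\sin t-t|\le|t|^3/6$ gives $\langle\sin\x,\v\rangle=\|\x\|\,\langle\u,\v\rangle+O(\|\x\|^3)$ with $\u=\x/\|\x\|$, so the direction $\u$ must be substantially aligned with $\v$, namely $\langle\u,\v\rangle\ge(\mu/3)(\log d/\sqrt d)-O(\mu^{-2})$. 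The covering property of \lem{DividingSphere} then supplies $\|\u-\hat\y(\x)\|\le\delta<\log d/(2\sqrt d)$, whence $\langle\hat\y(\x),\v\rangle\ge\langle\u,\v\rangle-\delta$ still sits well above $\log d/(2\sqrt d)$. The remaining — and most delicate — step is to pass from $\langle\hat\y(\x),\v\rangle$, an inner product at unit scale, to $\langle\sin\hat\y(\x),\v\rangle$, which is the quantity $q$ actually tests; one again invokes $|\sin t-t|\le|t|^3/6$ together with $\|\hat\y(\x)\|=1$, and must verify that the residual discrepancy, the net radius $\delta$, and the $O(\mu^{-2})$ term together do not consume the factor-two margin between the threshold $\log d/\sqrt d$ defining $S_\v$ and the threshold $\log d/(2\sqrt d)$ defining $q$. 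The argument here should be pure constant-chasing — checking that the explicit choices $\mu=300$ and $\delta<\log d/(2\sqrt d)$ leave enough room — rather than conceptual, but it is where all the care (and any hidden subtlety) goes.

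The third item is then a short consequence of the second together with the definition of $f$. For $\x\in\mathbb{B}(\0,3/\mu)\setminus S_\v$, the contrapositive just established gives $\x\notin\hat S_\v$, so $q(\hat\y(\x))=\v$ and hence $\hat q(\x)=F(\x)$ by construction; on the other hand $f(\x)=F(\x)$ whenever $\x\notin S_\v$ by \eq{fDefin}. Therefore $\hat q(\x)=f(\x)$ on $\mathbb{B}(\0,3/\mu)\setminus S_\v$. In summary, the only substantial work lies in the geometric bookkeeping of the second item; the first is a routine simulation argument and the third is a one-line corollary, and I expect the main obstacle — and the likeliest spot for a lost constant factor — to be the unit-scale versus ball-scale comparison of $\langle\sin(\cdot),\v\rangle$ underlying $\hat S_\v\subseteq S_\v$.
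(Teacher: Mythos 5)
Your first and third items match the paper's own argument essentially verbatim. For the second item, though, you argue the contrapositive where the paper argues forward, and the forward direction is considerably lighter on bookkeeping. The paper's proof reads $q(\hat\y(\x))=\0$ as the bound $\langle\hat\y(\x),\v\rangle\le\log d/(2\sqrt d)$, applies the covering guarantee $\|\x/\|\x\|-\hat\y(\x)\|\le\delta=\log d/(2\sqrt d)$ to obtain $\langle\x/\|\x\|,\v\rangle\le\log d/\sqrt d$, and then multiplies by $\|\x\|\le 1$ to conclude $\langle\x,\v\rangle\le\log d/\sqrt d$, i.e.\ $\x\in S_\v$. That chain requires only two triangle-inequality steps and no quantitative use of $\|\x\|\le 3/\mu$. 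Your contrapositive forces you to divide by $\|\x\|$, which introduces the $\mu/3\approx 100$ amplification factor and a cascade of $O(\mu^{-2})$ remainder terms, and then you still face what you correctly flag as the awkward step: passing from $\langle\hat\y(\x),\v\rangle$ back to $\langle\sin\hat\y(\x),\v\rangle$ at unit scale. In the paper that last worry never arises, because their proof treats $q$'s threshold as being on $\langle\hat\y(\x),\v\rangle$ directly rather than on $\langle\sin\hat\y(\x),\v\rangle$ as the literal statement of Problem~\ref{prob:UnstructuredSearch} would suggest; the $\sin$ in $q$'s definition is evidently vestigial. So your route is logically sound and leads to the same conclusion, but it trades a two-line direct argument for a longer contrapositive and spends most of its care on a $\sin$-at-unit-scale comparison that the paper's own proof shows is unnecessary. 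If you do keep the contrapositive, the cleanest fix is to follow the paper's reading of $q$ (test $\langle\hat\y(\x),\v\rangle$ rather than $\langle\sin\hat\y(\x),\v\rangle$), which makes the delicate step disappear and leaves only the $\mu/3$ slack, which is ample.
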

\begin{proof}
For the first property, one can observe that for any $\x\in\mathbb{B}(\0,3/\mu)$, $\hat{q}(\x)$ can be expressed as
\begin{align}
\hat{q}(\x)&=\norm{\sin\x}^2+h(\sin\x)\\
&=\norm{\sin\x}^2+h_1(\langle q(\hat{\y}(\x)),\x\rangle)\cdot h_2\big(\sqrt{\norm{\x}^2-\langle q(\hat{\y}(\x)),\x\rangle^2}\big),
\end{align}
which can be implemented using one query to $q(\x)$.

For the second property, $\forall\x\in\hat{S}_\v$, the corresponding $\hat{\y}(\x)$ satisfies $\langle\hat{\y}(\x),\v\rangle\leq\frac{\log d}{2\sqrt{d}}$. Since $\|\x/\|\x\|-\hat{y}(\x)\|\leq\delta=\frac{\log d}{2\sqrt{d}}$ by \lem{DividingSphere}, we deduce that $\langle\x,\v\rangle\leq\langle\x/\|\x\|,\v\rangle\leq\log d/\sqrt{d}$, indicating $\x\in S_\v$.

The third property can be directly obtained from the second property.
\end{proof}

Next, we present the reduction from \prob{UnstructuredSearch} to the problem of finding an $\epsilon$-SOSP of $F$ under the setting of \thm{ZeroMeanLowerF} through the following lemma.
\begin{lemma}\label{lem:reduction-to-hyperball}
Under the setting of \thm{ZeroMeanLowerF}, with polynomial overhead \prob{UnstructuredSearch} can be reduced to the problem of finding an $\epsilon$-SOSP of $F$ defined in \eq{FDef} for any $\epsilon\in(0,\epsilon_0]$, using access to values of $f$ defined in \eq{fDefin}.
\end{lemma}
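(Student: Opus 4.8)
\emph{Overall plan.} I would establish the reduction by turning any algorithm $\mathcal{A}$ that, with high probability, outputs an $\epsilon$-SOSP of $F$ while making $T$ queries to a noisy zeroth-order oracle at noise level $\nu=\tO(1/d)$ into an algorithm $\mathcal{B}$ that solves \prob{UnstructuredSearch} using only $O(T)$ queries to $q$; this is the claimed polynomial (indeed constant-factor) overhead. The algorithm $\mathcal{B}$ simulates $\mathcal{A}$, answers each of its oracle calls with at most two queries to $q$, and, once $\mathcal{A}$ halts with an output point $\x^*$, reads off $\v$ with one more query.

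\emph{Simulating the oracle.} At a query point $\x\in\R^d$, $\mathcal{B}$ first translates $\x$ into the fundamental cell of the periodic tiling that reproduces $F$ over $\R^d$; this costs no query, since the period of $\norm{\sin\x}^2$ is known. If the translated point lies outside the ball $\mathbb{B}(\0,3/\mu)$, then $F$ there equals the explicit, $\v$-independent function $\norm{\sin\x}^2$, so $\mathcal{B}$ returns that value and spends no query on $q$. If it lies in $\mathbb{B}(\0,3/\mu)$, $\mathcal{B}$ computes the nearest net vector $\hat{\y}(\x)\in\Gamma$ (no query), queries $q(\hat{\y}(\x))$, and returns $\hat{q}(\x)=\norm{\sin\x}^2+h_1(\langle q(\hat{\y}(\x)),\x\rangle)\cdot h_2\bigl(\sqrt{\norm{\x}^2-\langle q(\hat{\y}(\x)),\x\rangle^2}\bigr)$, the closed form from the proof of \lem{hatg-properties}. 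For a quantum $\mathcal{A}$ this must be done coherently: compute $\hat{\y}(\x)$ into an ancilla, call the $q$-oracle once, write $\hat{q}(\x)$ into the output register via the closed form, then uncompute the $q$-register (a second $q$-call) and the $\hat{\y}$-register, leaving the clean map $\ket{\x}\ket{0}\mapsto\ket{\x}\ket{\hat{q}(\x)}$ at a cost of two $q$-calls. The function $f_\Gamma$ that $\mathcal{B}$ thereby presents to $\mathcal{A}$ --- equal to $\hat{q}$ inside every translated ball and to $\norm{\sin\x}^2$ elsewhere --- is a legitimate noisy zeroth-order oracle for $F$ at level $\nu$: by \lem{hatg-properties} it agrees with the $f$ of \eq{fDefin} outside $S_\v$ and with $\norm{\sin\x}^2$ on $\hat{S}_\v\subseteq S_\v$, while on the residual thin shell $S_\v\setminus\hat{S}_\v$ it equals $F$ exactly, so \lem{PropHardZero} yields $\norm{F-f_\Gamma}_\infty\le\tO(1/d)=\nu$ everywhere. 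Hence $\mathcal{A}$'s correctness guarantee applies to the simulated oracle $f_\Gamma$.

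\emph{Reading off $\v$.} With high probability $\x^*$ is an $\epsilon$-SOSP of $F$. Since $F$ has no $\epsilon$-SOSP in the band $S_\v$ by \lem{PropHardZero} (and none outside the balls, where $F=\norm{\sin\x}^2$), the translate of $\x^*$ lies in $\mathbb{B}(\0,3/\mu)\setminus S_\v$, i.e.\ $\langle\sin\x^*,\v\rangle>\log d/\sqrt{d}$. Because $\norm{\x^*}\le 3/\mu$ with $\mu=300$ and the $\epsilon$-SOSPs near the band sit along the $\v$-aligned valley, the unit vector $\x^*/\norm{\x^*}$ (hence also $\sin\x^*$) is strongly correlated with $\v$; combined with the net guarantee $\norm{\x^*/\norm{\x^*}-\hat{\y}(\x^*)}\le\delta=\log d/(2\sqrt{d})$ of \lem{DividingSphere}, a routine estimate shows $\langle\sin\hat{\y}(\x^*),\v\rangle>\log d/(2\sqrt{d})$, i.e.\ $q(\hat{\y}(\x^*))=\v$. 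One final $q$-call at $\hat{\y}(\x^*)$ therefore recovers $\v$, and $\mathcal{B}$ has used $2T+1=O(T)$ queries in all.

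\emph{Main obstacle.} I expect the delicate part to be the middle step: performing the oracle simulation coherently --- which is exactly why the closed form for $\hat{q}$ as a function of $q(\hat{\y}(\x))$ and $\x$ is used, so that a compute--query--uncompute pattern inflates the query count only by a constant --- together with checking that the simulated $f_\Gamma$ is genuinely an admissible $\nu$-close oracle despite disagreeing with the reference $f$ of \eq{fDefin} on the thin shell $S_\v\setminus\hat{S}_\v$; this is precisely where the pointwise bound $\sup_{S_\v}\norm{f-F}_\infty\le\tO(1/d)$ of \lem{PropHardZero} is invoked. By contrast, translating queries into the fundamental cell and the final geometric estimate tying $\langle\sin\x^*,\v\rangle$ to $q(\hat{\y}(\x^*))$ are routine once the large constant $\mu=300$ is used to absorb all the slack.
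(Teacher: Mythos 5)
Your proof is correct and takes essentially the same route as the paper's: simulate $\hat{q}$ via queries to $q$, feed it to the $\epsilon$-SOSP finder, and read $\v$ off the output point with one more $q$-query. The paper reaches $q(\hat{\y}(\x^*))=\v$ more cheaply via the already-established containment $\hat{S}_\v\subseteq S_\v$ from \lem{hatg-properties} (whose contrapositive is exactly what you need), so your closing geometric estimate is a re-derivation of that fact; meanwhile your explicit coherent compute--uncompute accounting and the check that the simulated oracle remains $\nu$-close to $F$ are details the paper leaves implicit.
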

\begin{proof}
Since one query to $\hat{q}$ can be implemented using one query to $q$ by \lem{hatg-properties}, \prob{UnstructuredSearch} can be reduced to the problem of finding an $\x\in\mathbb{B}(0,3/\mu)$ satisfying $\hat{q}(\x)\neq\norm{\sin\x}^2$, or equivalently, $\x\in\mathbb{B}(0,3/\mu)-\hat{S}_\v$, with only access to values of $\hat{q}$.

By \lem{hatg-properties}, any $\epsilon$-SOSP of $F$, denoted $\x^F_{\text{SOSP}}$, satisfies
\begin{align}
\x^F_{\text{SOSP}}\in \mathbb{B}(0,3/\mu)-S_\v\subseteq\mathbb{B}(0,3/\mu)-\hat{S}_\v.
\end{align}
Therefore, we have reduced \prob{UnstructuredSearch} to the nonconvex optimization task.
\end{proof}

We can scale the ``scale-free" hard instance $(F,f)$ to the hard instance $(\tilde{F},\tilde{f})$ satisfying \assume{ZeroProb} using \eq{FDefScale} and \eq{fDefinScale}. In particular, we introduce the following lemma, which originally appeared in Ref.~\cite{jin2018local}.
\begin{lemma}[Appendix C.2, Ref.~\cite{jin2018local}]\label{lem:extending-to-Rd}
For any $B>0,\ell>0,\rho>0$, there exists an $\epsilon_0=\Theta(\min\{\ell^2/\rho,(B^2\rho/d^2)^{1/3}\})$ such that for any $\epsilon\in(0,\epsilon_0]$, there exists a function pair $(\tilde{F},\tilde{f})$ satisfying the assumptions in \assume{ZeroProb} with $\nu=\tilde{\Theta}(\sqrt{\epsilon^3/\rho}\cdot(1/d))$, so that with constant overhead the problem of finding an $\epsilon$-SOSP of $F$ defined in \eq{FDef} using only access to values of $f$ defined in \eq{fDefin} can be reduced to the problem of finding an $\epsilon$-SOSP of $\tilde{F}$ using only access to values of $\tilde{f}$.
\end{lemma}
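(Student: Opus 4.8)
The plan is to obtain $(\tilde F,\tilde f)$ from the ``scale-free'' pair $(F,f)$ of \eq{FDef}--\eq{fDefin} by the affine rescaling \eq{FDefScale}--\eq{fDefinScale} with $r=\sqrt{\epsilon/\rho}$, and then to tile $\mathbb{R}^d$ by rescaled copies of the hypercube $H$ so that the noisy structure — and in particular the non-informative band — recurs in every tile. I expect three routine ingredients (the rescaling identities, the resulting parameter window $\epsilon_0$, and the homogeneous scaling of the noise), together with one bookkeeping step that carries the technical weight (the tiling and the reduction itself).

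First I would record the rescaling identities: by the chain rule $\nabla\tilde F(\x)=\epsilon\,\nabla F(\x/r)$ and $\nabla^2\tilde F(\x)=\sqrt{\rho\epsilon}\,\nabla^2 F(\x/r)$, so $\tilde F$ inherits a Hessian-Lipschitz constant equal to $\epsilon/r^2=\rho$ times that of $F$. Feeding in the bounds of \lem{PropHardZero} (scale-free $F$ is $O(d)$-bounded and $O(1)$-gradient/Hessian-Lipschitz), $\tilde F$ is $O(\sqrt{\epsilon^3/\rho}\,d)$-bounded, $O(\sqrt{\rho\epsilon})$-smooth and $O(\rho)$-Hessian-Lipschitz; absorbing the absolute constants, $\tilde F$ meets the $B$-bounded, $\ell$-smooth, $\rho$-Hessian-Lipschitz requirements of \assume{ZeroProb} precisely when $\sqrt{\rho\epsilon}\lesssim\ell$ and $\sqrt{\epsilon^3/\rho}\,d\lesssim B$, i.e. for $\epsilon\le\epsilon_0=\Theta(\min\{\ell^2/\rho,(B^2\rho/d^2)^{1/3}\})$. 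The noise scales homogeneously, $\|\tilde f-\tilde F\|_\infty=\epsilon r\,\|f-F\|_\infty$, so the $\tO(1/d)$ bound of \lem{PropHardZero} becomes $\nu=\tTheta(\sqrt{\epsilon^3/\rho}\cdot(1/d))$. The same derivative identities give the SOSP dictionary: $\x$ is an $\epsilon$-SOSP of $\tilde F$ (with Hessian-Lipschitz parameter $\rho$) iff $\|\nabla F(\x/r)\|\le1$ and $\lambda_{\min}(\nabla^2 F(\x/r))\ge-1$, i.e. iff $\x/r$ is a SOSP of the scale-free $F$.

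For the tiling, I would use that $\sin x_j=(-1)^{k_j}\sin(x_j-\pi k_j)$ on the translated hypercube $\pi\vect{k}+H$ with $\vect{k}\in\mathbb{Z}^d$: after shifting by $-\pi\vect{k}$, the restriction of $F$ there is exactly the scale-free instance \eq{FDef} with $\v$ replaced by the unit vector $\v_{\vect{k}}:=\big((-1)^{k_1},\dots,(-1)^{k_d}\big)\odot\v$, and (extending $f$ $\pi$-periodically so that each tile's band carries the noise) the restriction of $f$ is the corresponding noisy function, whose value inside that band equals $\|\sin(\,\cdot\,-\pi\vect{k})\|^2$ and is thus independent of $\v_{\vect{k}}$. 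Since the bump $h$ is compactly supported near each tile's centre, $F$ is automatically globally as smooth as $h$, and on the padding and tile faces one has $F=\|\sin\x\|^2$, which has no scale-free SOSP (its Hessian has an eigenvalue $2\cos(2x_j)\le-1$ wherever $x_j=\pm\pi/2$). Hence, applying \lem{PropHardZero} tile by tile, after rescaling every $\epsilon$-SOSP of $\tilde F$ lies in some $\pi r\vect{k}+r\big(\mathbb{B}(\0,3/\mu)\setminus S_{\v_{\vect{k}}}\big)$. The reduction is then immediate: given an algorithm that outputs an $\epsilon$-SOSP of $\tilde F$ with queries to $U_{\tilde f}$, answer its query at $\x$ by reducing $\x$ modulo $\pi r\,\mathbb{Z}^d$ to $\x'\in rH$ with sign pattern $\vect{k}\bmod 2$ and returning $\epsilon r\,f_{\v_{\vect{k}}}(\x'/r)$, which costs a single query to $f=f_{\v}$ (the reindexing between $f_{\v}$ and $f_{\v_{\vect{k}}}$ is oracle-free and the algorithm accesses $\v$ only through $f$); translate the algorithm's output back to the central tile, undo the signs, and divide by $r$ to obtain a point of $\mathbb{B}(\0,3/\mu)\setminus S_{\v}$ — an $\epsilon$-SOSP of the scale-free $F$. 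This is a reduction with constant query overhead. Chaining it with \lem{reduction-to-hyperball} reduces \prob{UnstructuredSearch} to finding an $\epsilon$-SOSP of $\tilde F$, which is what \thm{ZeroMeanLowerF} needs.

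I expect the main obstacle to be the tiling/localization step rather than any rescaling arithmetic: one has to verify that each tile genuinely reproduces the scale-free hard instance with the sign-flipped direction $\v_{\vect{k}}$ (so it is equally hard and lets one read off $\v$ once the tile is known), that no $\epsilon$-SOSP of $\tilde F$ leaks into the padding regions or across tile faces, and that the oracle simulation above — with its coordinate/sign bookkeeping — is faithful and has only constant overhead; once those are in place, the rest is the routine bookkeeping indicated above.
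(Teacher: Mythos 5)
The paper does not prove this lemma; it cites it verbatim from Appendix C.2 of Jin et al.~\cite{jin2018local}, so there is no in-paper argument to compare against, and your attempt is necessarily from scratch.

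Your rescaling identities, the derivation of $\epsilon_0=\Theta(\min\{\ell^2/\rho,(B^2\rho/d^2)^{1/3}\})$ from the boundedness and smoothness constraints, the homogeneous scaling of the noise to $\nu=\tTheta(\sqrt{\epsilon^3/\rho}/d)$, and the SOSP dictionary ($\x$ is an $\epsilon$-SOSP of $\tilde F$ iff $\x/r$ is a $1$-SOSP of the scale-free $F$) are all correct. The tiling via sign-flipped $\v_{\vect{k}}$ on each $\pi$-translated hypercube, exploiting $\sin(x-\pi k)=(-1)^k\sin x$ and the evenness of $h_1,h_2$, is the right mechanism for extending the hyperball instance to $\R^d$, and you correctly observe both that $\tilde f$ must carry a non-informative band in \emph{every} tile (the literal reading of \eq{fDefinScale} does not give this, and your $\pi$-periodic extension is the fix) and that the reindexing between $f_\v$ and $f_{\v_{\vect{k}}}$ is oracle-free, so the reduction has constant overhead.

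The genuine gap is the one you flag but do not close: localization of all $\epsilon$-SOSPs of $\tilde F$ to the tiled balls-minus-bands. Your face argument ($2\cos(2x_j)=-2$ at $x_j=\pm\pi/2$) handles only the hypercube boundary, not the interior of the padding. On $S_2=H\setminus\mathbb{B}(\0,3/\mu)$ the bump $h(\sin\x)$ vanishes (it is supported where $\|\sin\x\|<\sqrt{2}/\mu<3/\mu$), so $F=\|\sin\x\|^2$ there. At, say, $\x=(0.02,0,\ldots,0)$, which lies just outside the ball of radius $3/\mu=0.01$, one has $\|\nabla F(\x)\|=|\sin 0.04|\approx 0.04\le 1$ and $\nabla^2 F(\x)=\diag(2\cos 0.04,2,\ldots,2)\succ 0$, so this point meets the scale-free $1$-SOSP conditions. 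Hence the padding is not SOSP-free, \lem{PropHardZero} alone does not imply your claim that ``every $\epsilon$-SOSP of $\tilde F$ lies in some $\pi r\vect{k}+r(\mathbb{B}(\0,3/\mu)\setminus S_{\v_{\vect{k}}})$,'' and the reduction as sketched would permit a trivially uninformative output. The actual argument in~\cite{jin2018local} must resolve this (for instance by restricting the output to $\mathbb{B}(\0,3/\mu)$, or by a steeper increasing term that kills the low-gradient region just outside the ball); you need to import whatever fix is used there before the reduction is complete.
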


Equipped with \lem{extending-to-Rd},, we prove \thm{ZeroMeanLowerF}
\begin{proof}
According to \lem{reduction-to-hyperball} and \lem{extending-to-Rd}, there exists a function pair $(\tilde{F},\tilde{f})$ satisfying \assume{ZeroProb} with $\nu=\tilde{\Theta}(\sqrt{\epsilon^3/\rho}\cdot(1/d))$, such that with polynomial overhead \prob{UnstructuredSearch} can be reduced to the problem of finding an $\epsilon$-SOSP of $\tilde{F}$ using only access to values of $\tilde{f}$.

We divide the $N$ unit vectors in $\Gamma$ into two parts.
\begin{align}
\Gamma_1&=\{\x\in\Gamma\colon q(\x)=\v\}\\
\Gamma_0&=\{\x\in\Gamma\colon q(\x)=\0\}.
\end{align}
We denote the size of the two parts as $N_1=\abs{\Gamma_1}$ and $N_0=\abs{\Gamma_2}$. Our goal is to find any $\x$ in the set $\Gamma_1$. Intuitively, under limitation $\delta\to 0$ and $\delta<\frac{\log d}{2\sqrt{d}}$, we can deduce that
\begin{align}
\frac{N_1}{N}\sim\frac{2\text{Area}(\mathbb{B}(0,3/\mu)-S_\v)}{\text{Area}(\mathbb{B}(0,3/\mu))}=O(d^{-\log d}).
\end{align}
We bound the deviation of $N_1/N$ from $2\text{Area}(\mathbb{B}(0,3/\mu)-S_\v)/\text{Area}(\mathbb{B}(0,3/\mu))$ when $\delta\neq 0$. For $S_\v$, we consider $S_\v'$ the area that is the "band" area along $\v$ within $\log d/2\sqrt{d}-\delta$ from $\0$. The border area of $S_\v$ out of $S_\v'$ contains ignorable $O(\exp(-d))$ directions compared to $S_\v$ when $\delta\ll 1/\sqrt{d}$. 

Even if we consider the boundary above, we can still derive the upper bound for $N_1/N$
\begin{align}
\frac{N_1}{N}\leq\frac{2\text{Area}(\mathbb{B}(0,3/\mu)-S_\v')}{\text{Area}(\mathbb{B}(0,3/\mu))}=O(d^{-\log d}).
\end{align}
The inequality comes from the fact that $\delta\ll\poly(1/d)$ and the boundary area can only bring exponential deviation from the expectation value $2\text{Area}(\mathbb{B}(0,3/\mu)-S_\v)/\text{Area}(\mathbb{B}(0,3/\mu))$. 

With the fact that $N_1/N\leq O(d^{-\log d})$ is quasi-polynomially small, any quantum algorithm that solves \prob{UnstructuredSearch} with high probability requires query complexity at least $\Omega(\sqrt{N/N_1})=\Omega(d^{\log d})$~\cite{Nayak1999Quantum,Nielsen2010Quantum}.
\end{proof}


\subsection{Information-theoretic Limitation of Quantum Zeroth-order Methods}\label{sec:DLowerZeroNo}

When the noise between $F$ and $f$ keeps increasing under \assume{ZeroProb}, it can erase the landscape of target function $F$ in the worst case. As a result, when the noise rate is larger than a certain threshold, for any quantum algorithm we can find a hard instance on which it will fail with a large probability. We consider the same target function defined in Eq.~\eq{FDef} with a different noisy function $f$. We apply the scaling in \eq{FDefScale} and \eq{fDefinScale} as
\begin{align}
&F(\x)=h(\sin\x)+\norm{\sin\x}^2,\\
&f(\x)=\norm{\sin\x}^2,\\\label{eq:FfDefinNo}
&\tilde{F}(\x)=\epsilon rF\left(\frac{\x}{r}\right),\qquad
\tilde{f}(\x)=\epsilon rf\left(\frac{\x}{r}\right).
\end{align}
Similar to \lem{PropHardZero}, the following properties hold for the above hard instance $(\tilde{F},\tilde{f})$.
\begin{lemma}[Appendix D.2, Ref.~\cite{jin2018local}]\label{lem:PropHardZeroNo}
The function pair $(F,f)$ defined in \eq{FfDefinNo} above satisfies:
\begin{itemize}
    \item The values of $\tilde{f}$ in $\mathbb{B}(0,3/\mu)$ are independent of $\v$.
    \item $\sup_{x\in\mathbb{B}(0,3/\mu)}\norm{\tilde{f}-\tilde{F}}_\infty\leq\tO(\epsilon^{1.5}/\sqrt{\rho})$.
    \item $F$ is $B$-bounded, $O(\rho)$-Hessian Lipschitz, and $O(\ell)$-gradient Lipschitz.
\end{itemize}
\end{lemma}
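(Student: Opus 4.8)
The plan is to handle the three bullets separately, leaning on the observation that the target function $\tilde F$ in \eq{FfDefinNo} is \emph{literally the same} scaled hard instance as in \sec{DLowerQuasiPoly}; only the noisy companion has changed, so all landscape properties of the target carry over for free. The first bullet is immediate: by \eq{FfDefinNo}, $\tilde f(\x)=\epsilon r f(\x/r)=\epsilon r\norm{\sin(\x/r)}^2$, which contains no reference to $\v$, hence its values on $\mathbb{B}(0,3/\mu)$ (indeed on all of $\R^d$) are $\v$-independent. For the second bullet I would write $\tilde f(\x)-\tilde F(\x)=\epsilon r(f-F)(\x/r)=-\epsilon r\,h(\sin(\x/r))$, so it suffices to bound $\norm{h}_\infty$. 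Using the explicit forms one checks $g_1(x)=16x^2(1-|x|)^3$ on $|x|<1$, which is bounded by an absolute constant, while $g_2$ is a degree-$4$ polynomial bounded by an absolute constant on $[-1,1]$; since $h=h_1\cdot h_2$ with $h_i(\cdot)=g_i(\mu\cdot)$, this gives $\norm{h}_\infty=O(1)$, whence $\sup_{\x\in\mathbb{B}(0,3/\mu)}\norm{\tilde f-\tilde F}_\infty\le O(\epsilon r)=O(\epsilon^{1.5}/\sqrt{\rho})$ because $r=\sqrt{\epsilon/\rho}$, which lies within the claimed $\tO(\epsilon^{1.5}/\sqrt{\rho})$.

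For the third bullet, note that the scale-free $F$ in \eq{FfDefinNo} coincides with the function of \eq{FDef}, so the last item of \lem{PropHardZero} applies verbatim: $F$ is $O(d)$-bounded, $O(1)$-gradient Lipschitz, and $O(1)$-Hessian Lipschitz. I would then apply the scaling identities $\tilde F(\x)=\epsilon r F(\x/r)$ and $\nabla^k\tilde F(\x)=\epsilon\,r^{1-k}\,\nabla^k F(\x/r)$ for $k=1,2,3$, together with $r=\sqrt{\epsilon/\rho}$, to get $\norm{\tilde F}_\infty=O(\epsilon r d)=O(\epsilon^{1.5}d/\sqrt{\rho})$, $\norm{\nabla^2\tilde F}=O(\epsilon/r)=O(\sqrt{\rho\epsilon})$, and a Hessian-Lipschitz constant $O(\epsilon/r^2)=O(\rho)$. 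The last already matches $O(\rho)$; for the other two I would invoke the restriction $\epsilon\le\epsilon_0=\Theta(\min\{\ell^2/\rho,(B^2\rho/d^2)^{1/3}\})$ from \thm{ZeroMeanLowerF}, which is precisely engineered so that $\epsilon^{1.5}d/\sqrt{\rho}\le O(B)$ (the second term of $\epsilon_0$) and $\sqrt{\rho\epsilon}\le O(\ell)$ (the first term). Finally, since $h(\sin\cdot)$ is supported inside $\mathbb{B}(0,\sqrt2/\mu)\subset\mathbb{B}(0,3/\mu)$, tiling the (rescaled) hypercube $H=[-\pi/2,\pi/2]^d$ to cover $\R^d$ leaves $\tilde F$ equal to the smooth periodic function $\epsilon r\norm{\sin(\cdot/r)}^2$ across every cell boundary, so the boundedness and Lipschitz bounds persist globally — exactly the content already packaged in \lem{extending-to-Rd}.

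The main obstacle is essentially absent here: the lemma is a reorganization of facts already established (the landscape of $F$ from \lem{PropHardZero}, the extension to $\R^d$ from \lem{extending-to-Rd}) plus a one-line constant bound on $\norm{h}_\infty$. The only point that needs a little care is the bookkeeping that matches the scaling exponents — $\epsilon r$ for the function value, $\epsilon/r$ for the Hessian norm, $\epsilon/r^2$ for its Lipschitz constant — against the two competing terms in the definition of $\epsilon_0$; making those line up is what pins down the admissible range of $\epsilon$, and it is the same range already used elsewhere in this section.
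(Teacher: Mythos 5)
Your proof is correct. The paper itself does not prove this lemma --- it is stated with a citation to Appendix D.2 of the reference, so there is no in-paper argument to compare against --- but your direct verification matches what one expects that proof to contain. Each step checks out: the first bullet is immediate from $\tilde f=\epsilon r\|\sin(\cdot/r)\|^2$ carrying no $\v$; the identity $\tilde f-\tilde F=-\epsilon r\,h(\sin(\cdot/r))$ together with the constant bound $\|h\|_\infty=O(1)$ (your factorization $g_1(x)=16x^2(1-|x|)^3$ is exact, and $g_2$ is a bounded quartic on $[-1,1]$) gives $O(\epsilon r)=O(\epsilon^{1.5}/\sqrt\rho)$, which sits inside the stated $\tO$; and the scaling identities $\nabla^k\tilde F(\x)=\epsilon r^{1-k}\nabla^k F(\x/r)$ combined with the scale-free bounds from \lem{PropHardZero} give exactly $O(\epsilon rd)$, $O(\epsilon/r)$, $O(\epsilon/r^2)$ for the bound, smoothness, and Hessian-Lipschitz constants, which the two terms in $\epsilon_0=\Theta(\min\{\ell^2/\rho,(B^2\rho/d^2)^{1/3}\})$ are designed to dominate. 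You also correctly note that the target $\tilde F$ is identical to the one in \sec{DLowerQuasiPoly}, so all its landscape properties and the hypercube-tiling extension to $\R^d$ carry over unchanged, with only the noisy companion having been swapped. One minor observation: the lemma's third bullet writes ``$F$'' while the bounds $B,\ell,\rho$ only make sense for the scaled $\tilde F$; you silently (and correctly) resolve this apparent notational slip in favor of $\tilde F$.
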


We derive the following result concerning the hard instance $(\tilde{F},\tilde{f})$ in Eq.~\eq{FfDefinNo}, which is the formal version of \thm{ZeroNoLower}.
\begin{theorem}[Formal version of \thm{ZeroNoLower}]\label{thm:ZeroNoLowerF}
For any $B>0,\ell>0,\rho>0$, any $\epsilon\in(0,\epsilon_0]$ for some $\epsilon_0=\Theta(\min\{\ell^2/\rho,(B^2\rho/d^2)^{1/3}\})$, and any possible \textit{quantum} algorithm, we can choose a function pair $(\tilde{F},\tilde{f})$ with $\v$ defined in \eq{FfDefinNo} satisfying \assume{ZeroProb} with $\nu=\tTheta(\sqrt{\epsilon^3/\rho})$ such that the quantum algorithm will fail with high probability to find an $\epsilon$-SOSP of $\tilde{F}$ given only access to $U_{\tilde{f}}$.
\end{theorem}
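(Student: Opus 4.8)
The plan is to exploit that, at the larger noise level $\nu=\tTheta(\sqrt{\epsilon^3/\rho})$, the noisy function $\tilde f$ in \eq{FfDefinNo} reveals \emph{no} information about the hidden direction $\v$ anywhere, so that no quantum algorithm — regardless of how many queries it makes — can locate the $\epsilon$-SOSPs of $\tilde F$, which are all concentrated near $\pm\v$. Concretely: by \lem{PropHardZeroNo} the pair $(\tilde F,\tilde f)$ satisfies \assume{ZeroProb} with $\nu=\tTheta(\sqrt{\epsilon^3/\rho})$ for $\epsilon\le\epsilon_0$, and on the ball $\mathbb{B}(\0,3r/\mu)$ we have $\tilde f(\x)=\epsilon r\|\sin(\x/r)\|^2$, which is independent of $\v$; outside the ball, within every translate of the hypercube tiling $\R^d$, the envelope $h(\sin(\x/r))$ vanishes, so there too $\tilde f(\x)=\tilde F(\x)=\epsilon r\|\sin(\x/r)\|^2$. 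Hence $\tilde f$ — and with it the oracle $U_{\tilde f}$ of \eq{QZeroOracle} — is the \emph{same} object for every $\v$, so the final state of any fixed quantum algorithm $\mathcal A$, and therefore the distribution of its output, does not depend on $\v$.

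Next I would pin down where the $\epsilon$-SOSPs of $\tilde F$ can sit. By the landscape analysis behind \lem{PropHardZero} (third bullet), $\tilde F$ has no $\epsilon$-SOSP in the band $\tilde S_\v$; and in the padding region, where $\tilde F(\x)=\epsilon r\|\sin(\x/r)\|^2$, every critical point other than the lattice of minima has some coordinate of $\x/r$ equal to $\pm\pi/2$, at which $\nabla^2\tilde F$ has a diagonal entry $2\sqrt{\rho\epsilon}\cos(\pi)=-2\sqrt{\rho\epsilon}<-\sqrt{\rho\epsilon}$, so it is not an $\epsilon$-SOSP either. Thus, modulo the tiling, every $\epsilon$-SOSP $\x$ of $\tilde F$ projects into the central cell to a point $\bar\x\in\mathbb{B}(\0,3r/\mu)$ with $\langle\sin(\bar\x/r),\v\rangle>\log d/\sqrt d$. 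For a \emph{fixed} $\x$ the vector $u:=\sin(\bar\x/r)$ has $\|u\|\le 3/\mu\ll1$, so by concentration of measure on $\mathbb{S}^{d-1}$ (the estimate underlying \eq{AreaProportion}, i.e.\ \lem{vector-from-sphere}), for $\v$ uniform on the sphere,
\begin{align}
\Pr_{\v}\big[\x\ \text{is an}\ \epsilon\text{-SOSP of}\ \tilde F\big]\ \le\ \Pr_{\v}\big[\langle u,\v\rangle>\tfrac{\log d}{\sqrt d}\big]\ \le\ d^{-\Omega(\log d)} .
\end{align}

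Finally I would combine these via an averaging (Yao-type) argument: since the output distribution of $\mathcal A$ is independent of $\v$, interchanging the expectations gives
\begin{align}
\E_{\v}\ \Pr_{\x}\big[\x\ \text{is an}\ \epsilon\text{-SOSP of}\ \tilde F\big]\ =\ \E_{\x}\ \Pr_{\v}\big[\x\ \text{is an}\ \epsilon\text{-SOSP of}\ \tilde F\big]\ \le\ d^{-\Omega(\log d)},
\end{align}
where $\E_\x,\Pr_\x$ are over the (fixed, $\v$-independent) output distribution of $\mathcal A$. Hence there is a $\v^\star\in\mathbb{S}^{d-1}$ — in fact all but a $d^{-\Omega(\log d)}$ fraction of them — for which $\mathcal A$ finds an $\epsilon$-SOSP of the corresponding $\tilde F$ with probability at most $d^{-\Omega(\log d)}=o(1)$, while the associated $\tilde f$ is exactly the $\v$-free function of \eq{FfDefinNo}; this $\v^\star$ yields the claimed hard instance. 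Note that the bound on the success probability holds regardless of the number of queries, which is why the conclusion is an outright information-theoretic impossibility rather than merely a query lower bound.

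The main obstacle is not the quantum part — once one observes that $U_{\tilde f}$ is \emph{literally} $\v$-independent, no reasoning about superposition queries is needed. The real work lies in the second step: showing that the tiled, scaled function $\tilde F$ really does have all of its $\epsilon$-SOSPs inside $\mathbb{B}(\0,3r/\mu)\setminus\tilde S_\v$, which requires ruling out spurious $\epsilon$-SOSPs created by the $\|\sin\x\|^2$ envelope and at the cell boundaries (handled by the Hessian computation above together with the landscape lemmas of Ref.~\cite{jin2018local}), and in making the per-point spherical concentration bound precise — precisely the content of \lem{vector-from-sphere}.
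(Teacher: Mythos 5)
Your proof is correct and follows the same core route as the paper: $\tilde f$ in \eq{FfDefinNo} is literally $\v$-independent, so the output distribution of any (classical or quantum) algorithm is a single fixed distribution, and by the sphere-concentration estimate \lem{vector-from-sphere} together with the SOSP-location facts in \lem{PropHardZero}, a fixed output point is an $\epsilon$-SOSP of $\tilde F$ for only a $d^{-\Omega(\log d)}$ fraction of directions $\v$. The paper's own proof states this only as ``no algorithm can do better than random guessing in $\mathbb{B}(\0,3/\mu)$'' and leaves the averaging (Yao-style) step and the bookkeeping of where $\epsilon$-SOSPs of $\tilde F$ can live implicit, whereas you make those steps explicit — including ruling out spurious $\epsilon$-SOSPs in the $\|\sin\|^2$ padding via the Hessian computation — so your write-up is a more complete rendering of the same argument.
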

\begin{proof}
As $\tilde{f}$ is independent of $\v$, neither quantum nor classical query can reveal any information on $\v$ and the $\epsilon$-SOSP of $\tilde{F}$. Any solutions output by any algorithm will be independent of $\v$ with probability $1$. Therefore, the probability of success must be independent of the number of iterations, which indicates that any algorithm cannot output an $\epsilon$-SOSP with probability more than a constant. Specifically, no algorithm can do better than random guessing in $\mathbb{B}(0,3/\mu)$ within this construction.
\end{proof}

We remark that the noise bound and its underlying intuition in the quantum case is the same as the classical case~\cite{jin2018local}. However, \thm{ZeroNoLowerF} only indicates the classical and quantum algorithms have the same worst-case lower bound for some level of noise strength, and there is still a possible quantum speedup for solving specific instances $(F,f)$. In \append{QTW}, we show a concrete example in which quantum tunneling walk~\cite{liu2022quantum} can find an $\epsilon$-SOSP of $F$ using polynomial queries and proper initial states containing information of the landscape, while any classical algorithm requires exponential queries even given access to such information.

\subsection{Proof of \thm{ZeroClassLowerJordan}}\label{sec:DLowerClassical}
Here, we use an information-theoretic approach to prove \thm{ZeroClassLowerJordan}, which indicates that if a classical algorithm can find an $\epsilon$-SOSP of $F$ for any function pair $(F,f)$ satisfying \assume{ZeroProb} with $\nu=\Omega(1/\poly(d))$, the query complexity is bounded by $\Omega(d/\log d)$. 

In particular, we consider the target function $\tilde{F}$ defined by \eq{FDef} and \eq{FDefScale}. As $\tilde{F}$ is $\rho$-Hessian Lipschitz and $\ell$-smooth, we can estimate the vector $\v$ within $\poly(\epsilon)$ distance under infinity norm\footnote{Here, we have ignored the dependence on $\ell$ and $\rho$ and regarded them as constants}, which requires $d\log(1/\epsilon)$ bits of information. Furthermore, as the noisy zeroth-order oracle $\tilde{f}$ contains noise of strength $\nu=\Omega(1/\poly(d))$, each classical query reveals at most $O(\log(1/\delta))$ bits of information~\cite{chakrabarti2020optimization}. Therefore, any classical algorithm has to take at least
\begin{align}
\Omega\left(\frac{d\log(1/\epsilon)}{\log(1/\delta)}\right)=\Omega\left(\frac{d}{\log d}\right)
\end{align}
queries to the noisy oracle.

Moreover, Ref.~\cite{chakrabarti2020optimization} shows even estimating a sub-gradient for a Lipschitz convex function within infinity norm $\epsilon=O(1/\poly(d))$ using zeroth-order oracle $f$ with noise rate $\nu=\Omega(1/\poly(d))$ requires $\Omega(d/\log d)$ classical queries. However, Jordan's algorithm enables simultaneous queries to different points using a single oracle query. Thus, only $O(1)$ query to quantum oracle is required to estimate the sub-gradient, which provides the exponential speedup for quantum algorithms.

\subsection{Lower Bound for First-order Methods}\label{sec:DLowerFirst}
We now derive the lower bound for classical and quantum algorithms in finding $\epsilon$-SOSPs of target function $F$ through noisy function satisfying \assume{FirstProb}. We propose the following two theorems as two parts for the formal version of \thm{FirstLower}. In the first part, we consider the case of adding noise such that for any quantum or classical algorithm we can find a hard instance that will make the algorithm fail with high probability, which is an analog of \thm{ZeroNoLowerF} under \assume{FirstProb}. Formally, we have the following theorem 
\begin{theorem}[Formal version of \thm{FirstLower}, Part I]\label{thm:FirstLowerF1}
For any $B>0,\ell>0,\rho>0$, any $\epsilon\in(0,\epsilon_0]$ for some $\epsilon_0=\Theta(\min\{\ell^2/\rho,(B^2\rho/d^2)^{1/3}\})$, and any \textit{quantum} or \textit{classical} algorithm, we choose function pair $(\tilde{F},\tilde{f})$ with $\v$ defined in \eq{FfDefinNo} satisfies \assume{FirstProb} with $\tnu=\tTheta(\epsilon/\sqrt{d})$ such that it will fail with high probability to find any $\epsilon$-SOSP of $\tilde{F}$ given only access to $U_\g$.
\end{theorem}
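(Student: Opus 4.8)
The plan is to follow the proof of \thm{ZeroNoLowerF} almost verbatim: the object handed to the algorithm carries no information whatsoever about the hidden direction $\v$, whereas every $\epsilon$-SOSP of $\tilde F$ depends on $\v$, so any algorithm is reduced to blind guessing inside a region that occupies a vanishing fraction of space. Two ingredients are needed: a verification that $(\tilde F,\tilde f)$ from \eq{FfDefinNo} satisfies \assume{FirstProb} with the stated $\tnu$, and the information-theoretic impossibility argument.

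For the first ingredient, the $B$-boundedness of $\tilde F$ and its $O(\ell)$-gradient- and $O(\rho)$-Hessian-Lipschitzness are inherited from \lem{PropHardZero} through the scaling \eq{FDefScale}, which is precisely where the restriction $\epsilon\le\epsilon_0=\Theta(\min\{\ell^2/\rho,(B^2\rho/d^2)^{1/3}\})$ enters. Since $\tilde f(\x)=\epsilon r\,\|\sin(\x/r)\|^2$ with $r=\sqrt{\epsilon/\rho}$, a direct computation gives $\nabla\tilde f(\x)=\epsilon\sin(2\x/r)$ and $\|\nabla^2\tilde f\|\le 2\epsilon/r=2\sqrt{\rho\epsilon}=O(\ell)$, so $\tilde f$ is the gradient of an $L$-smooth function with $L=O(\ell)$, as \assume{FirstProb} requires. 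The only nontrivial estimate is the pointwise bound $\|\nabla\tilde F-\nabla\tilde f\|_\infty=\epsilon\,\|\nabla_\x h(\sin\x)\|_\infty$: writing $\nabla_\x h(\sin\x)=\diag(\cos\x)\big(h_1'(a)h_2(b)\,\v+h_1(a)h_2'(b)\,\hat{\vect{w}}\big)$ with $a=\langle\v,\sin\x\rangle$, $b=\sqrt{\|\sin\x\|^2-a^2}$, and $\hat{\vect{w}}$ a unit vector orthogonal to $\v$, and plugging in the explicit (compactly supported) forms of $h_1,h_2$ and their derivatives, one obtains the claimed bound on $\tnu$. This step is the first-order counterpart of \lem{PropHardZeroNo}, which I would state and cite from Ref.~\cite{jin2018local}.

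The heart of the matter is the second ingredient. Because $f(\x)=\|\sin\x\|^2$ does not depend on $\v$, neither does $\nabla\tilde f$, hence neither does the unitary $U_\g$; consequently the entire execution of any algorithm — classical or quantum, using an arbitrary (even unbounded) number of queries — produces an output distribution that is literally identical for every $\v$. On the other hand, $\tilde F$ has $\epsilon$-SOSPs only in the ``informative'' region: by \lem{PropHardZero} there is no $\epsilon$-SOSP in the band $\tilde S_\v$, while outside the central ball $\mathbb{B}(\0,3r/\mu)$ (and its translates under the $\pi r$-periodic tiling) one has $\tilde F=\tilde f=\epsilon r\|\sin(\cdot/r)\|^2$, whose critical points either carry a Hessian eigenvalue $-2\sqrt{\rho\epsilon}<-\sqrt{\rho\epsilon}$ or coincide with a tile center, where the ``crater'' term $h$ forces $\lambda_{\min}(\nabla^2\tilde F)\approx-32\mu^2\sqrt{\rho\epsilon}<-\sqrt{\rho\epsilon}$ — in every case a strict saddle rather than an $\epsilon$-SOSP. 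Thus every $\epsilon$-SOSP of $\tilde F$ lies in a translate of $\mathbb{B}(\0,3r/\mu)-\tilde S_\v$, and for any fixed point $\x$ the set of directions $\v$ for which $\x$ falls in this region is a spherical cap of relative measure $O(d^{-\log d})$ by \eq{AreaProportion} (equivalently \lem{vector-from-sphere}). Drawing $\v$ uniformly from $\mathbb{S}^{d-1}$ and combining the $\v$-independence of the output with Fubini, the probability that the algorithm returns an $\epsilon$-SOSP is at most $O(d^{-\log d})=o(1)$; hence there is a fixed $\v$ — i.e., a fixed instance $(\tilde F,\tilde f)$ — on which the algorithm fails with probability $1-o(1)$, which is the assertion.

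I expect the genuinely delicate points to be (i) pinning down the constant-order behaviour of $\|\nabla_\x h(\sin\x)\|_\infty$ precisely enough to certify the stated $\tnu$ — a short but careful estimate using the explicit bump functions $h_1,h_2$ — and (ii) verifying that the tiling/padding region contributes no $\epsilon$-SOSP of $\tilde F$, i.e., that every critical point of $\|\sin\x\|^2$ away from the informative cones, as well as every tile center, is a strict saddle at the scale $\sqrt{\rho\epsilon}$. The impossibility argument itself is a transcription of the proof of \thm{ZeroNoLowerF} and presents no new obstacle.
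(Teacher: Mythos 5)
Your proposal follows the paper's own argument essentially verbatim: verify that the hard instance $(\tilde F,\tilde f)$ from \eq{FfDefinNo} additionally satisfies \assume{FirstProb} with $\tnu=\tTheta(\epsilon/\sqrt d)$ (citing the first-order counterpart of \lem{PropHardZeroNo} from~\cite{jin2018local}), then invoke the information-theoretic impossibility from \thm{ZeroNoLowerF} — since $\nabla\tilde f$ carries no information about $\v$, any classical or quantum algorithm's output distribution is $\v$-independent, while the $\epsilon$-SOSPs of $\tilde F$ occupy an $O(d^{-\log d})$-fraction of space. The only difference is that you spell out the verification of the first-order noise bound and the absence of $\epsilon$-SOSPs in the padding region, which the paper compresses into a citation of \lem{PropHardZeroNo}; the underlying argument is the same.
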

\begin{proof}
We consider the same target function $\tilde{F}$ and noisy function $\tilde{f}$ in \eq{FfDefinNo}. Except from the properties in \lem{PropHardZeroNo}, the noisy function $\tilde{f}$ is also smooth and $\norm{\nabla \tilde{f}-\nabla \tilde{F}}\leq \tTheta(\epsilon/\sqrt{d})$. Therefore, the hard instance in \eq{FfDefinNo} also satisfies \assume{FirstProb} with $\tnu=\tTheta(\epsilon/\sqrt{d})$. According to \thm{ZeroNoLowerF}, for any \textit{quantum} and \textit{classical} algorithm, we choose function pair $(\tilde{F},\tilde{f})$ with $\v$ such that it will success with probability no more than a constant to find any $\epsilon$-SOSP of $\tilde{F}$ given only access to $U_{\tilde{\g}}$.
\end{proof}

Next, we consider the quasi-polynomial lower bound under \assume{FirstProb}. Unlike \thm{ZeroMeanLowerF}, we cannot directly apply the hard instance $(F,f)$ defined in \eq{FDefScale} and \eq{fDefinScale} because $f$ is not differentiable (or more strictly, not continuous). To address this problem, we construct a different noisy function $f$ (as shown in \fig{MainLowerHard} (b)). We start with the ``scale free" version. We still set the $\mu=300$ and define the target function $F(\x)=h(\sin\x)+\norm{\sin\x}^2$, which is the same with \eq{FDef}. We uniformly choose $\v$ and divide the ``hypercube" into different regions as 
\begin{itemize}
    \item ``hypercube" $H=[-\pi/2,\pi/2]^d$ is the $d$-dimensional hypercube with length $\pi$.
    \item ``ball" $\mathbb{B}(0,3/\mu)=\{\x\in\mathbb{R}^d:\norm{x}\leq 3/\mu\}$ is the $d$-dimensional ball with radius $3/\mu$.
    \item ``band" $S=\{\x\in \mathbb{B}(0,3/\mu):\expval{\sin\x,\v}\leq w\}$ with $w=O(\log d/\sqrt{d})$.
    \item ``non-informative band" $S_\v=\{\x\in \mathbb{B}(0,3/\mu):\expval{\sin\x,\v}\leq 0.9w\}$.
    \item ``padding" $S_2=H-\mathbb{B}(0,3/\mu)$.
\end{itemize}
Meanwhile, the noisy function $f$ is defined as
\begin{align}\label{eq:fDefinFirst}
f(\x)=\begin{cases}
\|\sin\x\|^2,\quad\x\in S_\v,\\
\|\sin\x\|^2+h_3(\x)\cdot h_2(\sqrt{\norm{\sin\x}^2-(\v^\top\sin\x)^2}),\quad\x\in S-S_\v\\
F(\x),\quad\x\notin S,
\end{cases}
\end{align}
where
\begin{align}
h_3(\x)=h_1(\v^\top\sin(10\x-9w\v/2)).
\end{align}
By the chain rule of gradients we deduce the following lemma:
\begin{lemma}\label{lem:PropHardFirst}
The function pair $(F,f)$ defined in \eq{FDef} and \eq{fDefinFirst} satisfies:
\begin{itemize}
    \item The value of $f$ in the non-informative region $S_\v$ is independent of $\v$.  $f$ is differentiable and satisfies the Lipshitz condition.
    \item $\sup_{x\in S}\norm{\nabla f-\nabla F}_\infty\leq\tO(1/\sqrt{d})$.
    \item $F$ has no $\epsilon$-SOSP in the non-informative region $S_\v$.
    \item $F$ is $O(d)$-bounded, $O(1)$-Hessian Lipschitz, and $O(1)$-gradient Lipschitz.
\end{itemize}
\end{lemma}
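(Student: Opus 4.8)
The plan is to derive three of the four items directly from the already-established scale-free analysis of the zeroth-order instance, and then to treat separately the two genuinely new features of the first-order $f$: its differentiability (and smoothness) and the gradient-closeness bound $\|\nabla f-\nabla F\|_\infty\le\tO(1/\sqrt d)$ on $S$.

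First I would note that the target function $F$ in \eq{FDef} is \emph{the same} function used in \lem{PropHardZero}, and that the new non-informative region $S_\v=\{\x\in\mathbb{B}(\0,3/\mu):\langle\sin\x,\v\rangle\le 0.9w\}$ with $w=O(\log d/\sqrt d)$ lies inside the band analyzed there (up to the choice of constant in $w$). Hence the third item ($F$ has no $\epsilon$-SOSP in $S_\v$) and the fourth item ($F$ is $O(d)$-bounded, $O(1)$-Hessian Lipschitz, $O(1)$-gradient Lipschitz) follow from \lem{PropHardZero}. The $\v$-independence in the first item is also immediate, since on $S_\v$ the definition \eq{fDefinFirst} sets $f(\x)=\|\sin\x\|^2$, which does not involve $\v$.

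The bulk of the work is the differentiability assertion. The function $f$ in \eq{fDefinFirst} has three pieces, each smooth on its open domain (a composition of polynomials with $\sin$, compactly supported away from the cube boundary), so it suffices to check $C^1$-matching across the two interfaces $\partial S_\v=\{\langle\sin\x,\v\rangle=0.9w\}$ and $\partial S=\{\langle\sin\x,\v\rangle=w\}$. The affine reparametrization inside $h_3(\x)=h_1\!\big(\v^\top\sin(10\x-\tfrac92 w\v)\big)$ is chosen precisely so that: (i) on $\partial S_\v$ the argument of $h_1$ lands on a zero of $g_1$, and since $g_1(x)=16x^2(1-|x|)^3$ has a double zero at $0$ and a triple zero at $\pm1$, both $h_3$ and $\nabla h_3$ vanish there, so the middle piece glues $C^1$ to $\|\sin\x\|^2$; and (ii) on $\partial S$ the argument equals $\v^\top\sin\x$ to the required order, so $h_3=h_1(\v^\top\sin\x)$ and, after a chain-rule computation, the gradients agree, so the middle piece glues $C^1$ to $F$. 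Verifying (i) and (ii) is a routine but bookkeeping-heavy chain-rule calculation; once $C^1$-matching holds, $\nabla f$ is continuous and piecewise $C^1$ with uniformly bounded derivatives on the compact cube, hence globally Lipschitz after the periodic tiling of $\R^d$, which gives the ``Lipschitz condition'' claim ($f$ is $L$-smooth for some $L=O(1)$).

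Finally, for the second item I would bound $\|\nabla f-\nabla F\|_\infty$ on $S_\v$ and on $S\setminus S_\v$ separately. On $S_\v$ one has $\nabla f-\nabla F=-\nabla\big[h_1(\v^\top\sin\x)\,h_2(\cdot)\big]$; every term produced by differentiating in $x_k$ carries a factor $v_k$ from $\partial_{x_k}(\v^\top\sin\x)=v_k\cos x_k$, so using the high-probability bound $\|\v\|_\infty=\tO(1/\sqrt d)$ for $\v$ uniform on $\mathbb{S}^{d-1}$ together with the boundedness of $h_1,h_1',h_2,h_2'$ (and $|\v^\top\sin\x|\le0.9w$ is itself small here), the whole expression is $\tO(1/\sqrt d)$. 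On the transition band $S\setminus S_\v$ one has $\nabla f-\nabla F=\nabla\big[(h_3(\x)-h_1(\v^\top\sin\x))\,h_2(\cdot)\big]$; a naive estimate of $\nabla h_3$ gives $O(\mu)=O(1)$, but the chain rule again extracts exactly one coordinate of $\v$ from $\v^\top\sin(10\x-\tfrac92 w\v)$, so $\|\nabla h_3\|_\infty=O(\mu\|\v\|_\infty)=\tO(1/\sqrt d)$, and the remaining terms are handled the same way. The main obstacle is precisely this ``alignment with $\v$'' argument together with the $C^1$-gluing at $\partial S_\v$: both require pinning down the constants in the reparametrization $10\x-\tfrac92 w\v$ so that $h_1$ and $h_1'$ vanish on $\partial S_\v$ while $h_3$ agrees with $h_1(\v^\top\sin\x)$ to first order on $\partial S$, and making sure that \emph{every} differentiation picks up a single factor of a coordinate of $\v$ — the sole source of the $1/\sqrt d$ improvement over the trivial $O(1)$ bound.
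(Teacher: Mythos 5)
The paper gives no explicit proof of this lemma (it is asserted ``by the chain rule of gradients''), so I am evaluating your plan on its own terms. The reduction of items 3 and 4 to \lem{PropHardZero} and the $\v$-independence in item 1 are fine. The two remaining points, however, have real gaps.

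\emph{On the $C^1$-gluing.} You claim the reparametrization inside $h_3(\x)=h_1(\v^\top\sin(10\x-9w\v/2))$ makes the argument of $h_1$ vanish on $\partial S_\v$ and equal $\v^\top\sin\x$ on $\partial S$, deferring the check to ``routine bookkeeping.'' If you actually do the bookkeeping, it fails. Linearizing (all arguments are $O(w)\ll 1$), $\v^\top\sin(10\x-9w\v/2)\approx 10\v^\top\x-\tfrac{9w}{2}\|\v\|^2=10\v^\top\x-\tfrac{9w}{2}$. At a point of $\partial S_\v$ ($\v^\top\sin\x=0.9w$, so $\v^\top\x\approx0.9w$) this is $\approx 4.5w\neq 0$, and since $g_1$ does not vanish at $4.5\mu w\in(0,1)$, $h_3$ is not even zero there, making $f$ \emph{discontinuous} across $\partial S_\v$. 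At $\partial S$ the argument is $\approx 5.5w\neq w$, so $h_3\neq h_1(\v^\top\sin\x)$ there either. The formula as printed is almost surely a typo, but even the natural fix (so that the argument is $10\v^\top\x-9w$, landing on $0$ at $\partial S_\v$ and $w$ at $\partial S$) only repairs continuity: the chain rule then produces $\nabla h_3 = 10\,h_1'(\cdot)\,v_k\cos(\cdot)$ versus $\nabla[h_1(\v^\top\sin\x)]=h_1'(\cdot)\,v_k\cos x_k$ on $\partial S$, a factor-of-$10$ mismatch, since $h_1'(w)\neq 0$. A genuinely $C^1$ gluing needs a non-affine reparametrization $\phi$ with $\phi(0.9w)=\phi'(0.9w)=0$, $\phi(w)=w$, $\phi'(w)=1$ (or a smooth cutoff multiplier). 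You cannot dismiss this as routine; it is exactly where the construction has to be fixed.

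\emph{On the $\tO(1/\sqrt d)$ bound.} You attribute the $1/\sqrt d$ to $\|\v\|_\infty=\tO(1/\sqrt d)$ and to ``every term carrying a factor $v_k$.'' Neither is accurate. In $\partial_{x_k}\big[h_1(\v^\top\sin\x)\,h_2(\cdot)\big]$ the product-rule term coming from $\nabla h_2$ has numerator $\sin x_k\cos x_k-(\v^\top\sin\x)v_k\cos x_k$, whose first piece has no $v_k$ factor at all. The correct and cleaner source of $1/\sqrt d$ is the smallness of $h_1$ and $h_1'$ on the band: because $g_1$ has a double zero at $0$ and $|\v^\top\sin\x|\leq w=O(\log d/\sqrt d)$ throughout $S$, one has $|h_1'|=O(\mu^2 w)=\tO(1/\sqrt d)$ and $|h_1|=O(\mu^2 w^2)=\tO(1/d)$. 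With $|v_k|\leq 1$ and $h_2'(\sqrt y)/\sqrt y$ bounded (since $g_2'(0)=0$), the first product-rule term is $\tO(1/\sqrt d)$ and the second is $\tO(1/d)$, giving the bound for \emph{every} $\v$, with no high-probability caveat on $\|\v\|_\infty$.
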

We apply the scaled version of $f$ as
\begin{align}\label{eq:fDefinFirstScale}
\tilde{f}=\epsilon r f\left(\frac{\x}{r}\right).
\end{align}
Based on the hard instance $(\tilde{F},\tilde{f})$ defined in \eq{FDefScale}, \eq{fDefinFirst} and \eq{fDefinFirstScale}. We propose the following theorem.
\begin{theorem}[Formal version of \thm{FirstLower}, Part II]\label{thm:FirstLowerF2}
For any $B>0,\ell>0,\rho>0$, there exists $\epsilon_0=\Theta(\min\{\ell^2/\rho,(B^2\rho/d^2)^{1/3}\})$ such that for any $\epsilon\in(0,\epsilon_0]$, the function pair $(\tilde{F},\tilde{f})$ defined in \eq{FDefScale}, \eq{fDefinFirst} and \eq{fDefinFirstScale} satisfies \assume{FirstProb} with $\tnu=\tTheta(\epsilon/\sqrt{d})$, and any \textit{quantum} or \textit{classical} algorithm that only requires a quasi-polynomial number $\Omega(d^{\log(d)})$ queries to function values of $\tilde{f}$ will fail with high probability, to find an $\epsilon$-SOSP of $\tilde{F}$.
\end{theorem}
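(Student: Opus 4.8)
The plan is to prove \thm{FirstLowerF2} along the same route as \thm{ZeroMeanLowerF}, with the pointwise‑closeness hard instance replaced by a continuously differentiable one. First, take the ``scale free'' pair $(F,f)$ with $F$ as in \eq{FDef} and the noisy function $f$ as in \eq{fDefinFirst}, and establish its structural properties, i.e.\ \lem{PropHardFirst}: $f$ is $C^1$ and Lipschitz, $f\equiv\norm{\sin\x}^2$ (hence $\v$‑oblivious) on the non‑informative band $S_\v$, $\sup_{\x\in S}\norm{\nabla f-\nabla F}_\infty\le\tO(1/\sqrt d)$, $F$ has no $\epsilon$‑SOSP in $S_\v$, and $F$ is $O(d)$‑bounded with $O(1)$ gradient/Hessian Lipschitz constants. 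Then rescale by \eq{FDefScale} and \eq{fDefinFirstScale} and tile $\mathbb R^d$ by translated hypercubes exactly as in Appendix~C.2 of Ref.~\cite{jin2018local} (the first‑order analogue of \lem{extending-to-Rd}), obtaining $(\tilde F,\tilde f)$ that meets \assume{FirstProb} with $\tnu=\tTheta(\epsilon/\sqrt d)$. Next reduce a discretized unstructured‑search problem with a quasi‑polynomially small marked fraction to finding an $\epsilon$‑SOSP of $\tilde F$ from the noisy gradient oracle $U_\g$. Finally, invoke the optimality of Grover search. Stages two through four are essentially identical to the zeroth‑order case; the genuinely new part is stage one, since now $f$ must be $C^1$, which is exactly what forces the transition band $S-S_\v$ and the interpolation factor $h_3(\x)h_2(\cdot)$ into \eq{fDefinFirst}.

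For the reduction, apply \lem{DividingSphere} with $\delta$ a sufficiently small multiple of $w=\Theta(\log d/\sqrt d)$ to obtain a $\delta$‑net $\Gamma=\{\y_1,\dots,\y_N\}$ of $\mathbb S^{d-1}$, and work with \prob{UnstructuredSearch} (with oracle $q(\x)=\v\cdot\mathbb I\{\langle\sin\x,\v\rangle>\frac{\log d}{2\sqrt d}\}$), now via a first‑order analogue of the intermediate oracle $\hat q$ of \eq{g-superpoly} that returns gradient values. Since $\nabla\norm{\sin\x}^2=\sin(2\x)$ is $\v$‑independent and $f\equiv\norm{\sin\x}^2$ on $S_\v$, every query of $\nabla f$ inside $S_\v$ returns the same $\v$‑oblivious value, so locating a point of $\mathbb B(\0,3/\mu)-S_\v$ is a genuine unstructured search. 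Mirroring \lem{hatg-properties} and \lem{reduction-to-hyperball}, the choice $\delta\ll w$ guarantees that nearest‑net rounding never carries a point of $\mathbb B(\0,3/\mu)-S$ into $S_\v$, so one oracle call to the discretized object costs one call to $q$ and it reproduces $\nabla f$ on the region that matters; and by \lem{PropHardFirst}, $F$ (hence $\tilde F$ after rescaling and tiling) has no $\epsilon$‑SOSP in $S_\v$ or any translated copy of it, so any $\epsilon$‑SOSP of $\tilde F$ automatically lies in the informative region. Thus \prob{UnstructuredSearch} reduces to SOSP‑finding with polynomial overhead.

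For the counting and conclusion, split $\Gamma=\Gamma_1\cup\Gamma_0$ according to whether the corresponding direction lies in $\mathbb B(\0,3/\mu)-S_\v$ or in $S_\v$. By the concentration estimate \eq{AreaProportion} (\lem{vector-from-sphere}), and since the net width $\delta\ll\poly(1/d)$ perturbs the relevant spherical‑cap measure by only an $O(\exp(-d))$ multiplicative factor—the same boundary estimate used in the proof of \thm{ZeroMeanLowerF}—one gets $N_1/N=O(d^{-\log d})$. Then the optimality of Grover search \cite{Bennett1997Strengths,Nayak1999Quantum,Nielsen2010Quantum} shows that solving this search problem with constant success probability, hence finding an $\epsilon$‑SOSP of $\tilde F$, requires $\Omega(\sqrt{N/N_1})=\Omega(d^{\log d})$ quantum queries and $\Omega(N/N_1)$ classical queries to $U_\g$; in particular any quantum or classical algorithm making only quasi‑polynomially many queries fails with high probability, which is the claim.

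The main obstacle is stage one, \lem{PropHardFirst}, and concretely the two quantitative points: (a) $f$ in \eq{fDefinFirst} is $C^1$ across both interfaces $\{\langle\sin\x,\v\rangle=0.9w\}$ and $\{\langle\sin\x,\v\rangle=w\}$; and (b) the induced gradient deviation satisfies $\sup_{\x\in S}\norm{\nabla f-\nabla F}_\infty\le\tO(1/\sqrt d)$. For (a) one uses that $g_1$ and $g_1'$ both vanish at the endpoints of $\supp g_1=[-1,1]$: the factor $h_3(\x)=h_1(\v^\top\sin(10\x-9w\v/2))$ is engineered so that its $h_1$‑argument reaches such an endpoint on $\{\langle\sin\x,\v\rangle=0.9w\}$, making $h_3h_2$ and all its first derivatives vanish at the inner interface, while at the outer interface $\norm{\sin\x}^2+h_3h_2$ agrees with $F=\norm{\sin\x}^2+h_1(\v^\top\sin\x)h_2(\cdot)$ to first order by the same endpoint vanishing. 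Claim (b) is the technical heart: it requires tracking, through the chain rule as in Ref.~\cite{jin2018local}, the gradient contributions of the interpolation term—supported on a band of thickness only $\Theta(w)=\Theta(\log d/\sqrt d)$ in the $\v$‑direction, and carrying the ``fast'' oscillation from the factor $10\x$ inside $h_3$—and checking that, because the extra gradient produced is essentially aligned with $\v$, its worst per‑coordinate size (the norm used in \assume{FirstProb}) is only $\tO(1/\sqrt d)$. The remaining items of \lem{PropHardFirst}, the Hessian‑Lipschitz and boundedness bounds, and $C^1$‑preservation of the tiling across hypercube faces, follow the same computations as in Ref.~\cite{jin2018local} and pose no new difficulty.
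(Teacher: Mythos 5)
Your proposal is correct and follows the same approach as the paper: establish \lem{PropHardFirst} for the $C^1$ hard instance $(F,f)$ of \eq{FDef} and \eq{fDefinFirst}, rescale via \eq{FDefScale} and \eq{fDefinFirstScale} with \lem{extending-to-Rd}, reduce the quasi-polynomially sparse unstructured search to SOSP-finding exactly as in \thm{ZeroMeanLowerF}, and invoke the Grover lower bound. The paper's own proof is a two-sentence pointer back to that zeroth-order argument, so you are essentially unpacking its deferred steps; in particular, your identification of the genuinely new work — $C^1$-ness across both interfaces of the transition band and the $\tO(1/\sqrt d)$ gradient-deviation bound — is exactly the content of \lem{PropHardFirst}, which the paper asserts ``by the chain rule'' without further elaboration.
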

\begin{proof}
According to \lem{PropHardFirst} and \lem{extending-to-Rd}, the width of non-informative band is $0.9w=\tTheta(\epsilon/\sqrt{d})$. Then we can directly apply the similar procedure when we prove \thm{ZeroMeanLower} as the non-informative band has the same width scaling.
\end{proof}


\section{Classical and Quantum Lower Bounds in $\epsilon$}\label{sec:EpsLower}
In this section, we prove classical randomized lower bounds and quantum lower bounds in $\epsilon$ for finding an $\epsilon$-SOSP of an objective function $F$ given access to noisy classical or quantum zeroth- or first-order oracles, where $F\colon\R^d\to\R$ is $\ell$-smooth and $\rho$-Hessian Lipschitz, and satisfies
\begin{align}
F(\0)-\inf_\x F(\x)\leq\Delta,
\end{align}
for some constant $\Delta$.

\subsection{Hard Instance for Deterministic Classical Algorithms}\label{sec:classical-det}

We first discuss the construction and intuition of hard instances upon which we can obtain lower bound results for deterministic classical algorithms. Consider the toy example proposed by Nesterov~\cite[Chapter 2.1.2]{nesterov2003introductory},
\begin{align}\label{eqn:nesterov}
F(\x)\coloneqq\frac{1}{2}(x_1-1)^2+\frac{1}{2}\sum_{i=1}^{T-1}(x_i-x_{i+1})^2,
\end{align}
whose gradient satisfies that
\begin{align}
\forall 1<i<T,\quad\nabla_i F(\x)=\0\Leftrightarrow x_{i-1}=x_i=x_{i+1}.
\end{align}
Then, if we query the gradient of $F$ at a point with only its first $t$ entries being nonzero, the derivative can only reveal information about the $(t+1)$th direction, if one does not have knowledge about the directions of the coordinate axes. Formally, such properties can be summarized to consist of the concept of \textit{zero-chain}, which is defined as follows.
\begin{definition}[Definition 3, Ref.~\cite{carmon2020lower}]\label{defn:zero-chain}
A function $F\colon\mathbb{R}^d\to\mathbb{R}$ is called a zero-chain if for every $\x\in\R^d$,
\begin{align}
\supp\{\x\}\subseteq\{1,\ldots,i-1\}\Rightarrow\supp\{\nabla f(\x)\}\subseteq\{1,\ldots,i\},
\end{align}
where the support of a vector $\y\in\R^d$ is defined as
\begin{align}
\supp\{\y\}\coloneqq\{i\in[d]\,|\,y_i\neq 0\}.
\end{align}
\end{definition}

From an algorithmic perspective, if we encode $F(\x)$ or any other $T$-dimensional zero chain into a $d$-dimensional space with $d\gg T$ and apply a random rotation $U$, any deterministic algorithm making fewer than $T$ queries will fail on certain instances to find the directions of all the $T$ axes. Hence, from an algorithmic perspective, if we can construct a $T$-dimensional zero chain with all SOSPs or even FOSPs overlapped with all the $T$ axes, we can establish an $\Omega(T)$ lower bound for all deterministic classical algorithms.

Following this intuition, Ref.~\cite{carmon2021lower} provided a concrete hard instance construction to obtain an $\Omega(1/\epsilon^2)$ lower bound for deterministic classical algorithms. In particular, Ref.~\cite{carmon2021lower} first defined the following zero-chain $\bar{F}_{T;\mu}(\x)\colon\R^{T+1}\to\R$:
\begin{align}\label{eqn:FT-defn}
\bar{F}_{T;\mu}(\x)=\frac{\sqrt{\mu}}{2}(x_1-1)^2+\frac{1}{2}\sum_{i=1}^{T}(x_{i+1}-x_i)^2+\mu\sum_{i=1}^T\Gamma(x_i),
\end{align}
where the non-convex function $\Gamma\colon\R\to\R$ is defined as
\begin{align}
\Gamma(x)=120\int_1^x\frac{t^2(t-1)}{1+t^2}\d t.
\end{align}
According to \lem{FT-large-gradient} in \append{existing-lemmas}, finding an FOSP requires knowledge about the directions of all the $T$ axes. We further apply a unitary rotation $U\in\R^{(T+1)\times d}$ and certain appropriate scaling to obtain the formal hard instance
\begin{align}\label{eqn:tildeF-defn}
\tilde{F}_{T;U}(\x)\coloneqq\lambda\sigma^2\bar{F}_T\big(\<\x/\sigma,\u^{(1)}\>,\<\x/\sigma,\u^{(2)}\>,\ldots,\<\x/\sigma,\u^{(T+1)}\>\big), 
\end{align}
where $\u^{(i)}$ stands for the $i$-th column of the rotation matrix $U$, and all its columns $\{\u^{(1)},\ldots,\u^{(T+1)}\}$ forms a set of orthonormal vectors. We use $\tilde{\mathcal{F}}$ to denote the set of functions that can be presented in the form of \eqn{tildeF-defn} for some suitable parameters $T,U,\lambda,$ and $\sigma$ whose function value at point $\0$ is not far from its minimum value, i.e.,
\begin{align}
F(\0)-\inf_\x F(\x)\leq\Delta,\quad\forall F\in\tilde{\mathcal{F}}.
\end{align}
Based on $\tilde{\mathcal{F}}$, we have the following classical lower bound result.
\begin{lemma}[Theorem 2, Ref.~\cite{carmon2021lower}]\label{lem:classical-deterministic-lower}
There exist numerical constants $c,C\in\R_{+}$ and $\ell_q\leq e^{\frac{3q}{2}\log q+Cq}$ for every $q\in\mathbb{N}$ such that, for any deterministic classical algorithm making at most
\begin{align}
c\cdot\Delta\Big(\frac{L_1}{\ell_1}\Big)^{\frac{3}{7}}\Big(\frac{L_2}{\ell_2}\Big)^{\frac{2}{7}}\epsilon^{-12/7}
\end{align}
gradient queries, there exists a function $\tilde{F}\in\tilde{\mathcal{F}}$ such that the output of this algorithm on $\tilde{F}$ is not an $\epsilon$-FOSP of $\tilde{F}$.
\end{lemma}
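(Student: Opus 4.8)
The plan is to reuse the ``zero-chain plus random rotation'' machinery of Refs.~\cite{carmon2020lower,carmon2021lower} in three stages: (a) record that the base function $\bar F_{T;\mu}$ of \eqn{FT-defn} is a zero-chain whose gradient stays bounded away from $\0$ until all $T$ ``active'' coordinates are unlocked; (b) hide the chain inside $\R^d$ by a random orthonormal rotation $U$, so that any deterministic algorithm can unlock at most one new coordinate per gradient query; and (c) rescale by $\lambda,\sigma$ and tune the internal conditioning parameter $\mu$ so that the resulting instance $\tilde F_{T;U}\in\tilde{\mathcal{F}}$ meets all the prescribed budgets while $T$ is as large as the claimed query bound. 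For stage (a) I would first check from \eqn{FT-defn} that $\nabla_i\bar F_{T;\mu}$ depends only on $x_{i-1},x_i,x_{i+1}$, so $\bar F_{T;\mu}$ is a zero-chain in the sense of \defn{zero-chain}; the quantitative heart is \lem{FT-large-gradient}, which keeps $\norm{\nabla\bar F_{T;\mu}(\x)}$ above an absolute constant $c_0$ whenever $\prog_{1/2}(\x)\le T$, i.e.\ whenever some coordinate among the first $T$ has not yet crossed the $1/2$ threshold. I also record the order-by-order derivative bounds $\ell_q\le e^{\frac{3q}{2}\log q+Cq}$ of the chain; only $\ell_1$ (its gradient-Lipschitz constant) and $\ell_2$ (its Hessian-Lipschitz constant) enter the final estimate.

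For stage (b) I would prove the standard rotation lemma: if the ambient dimension $d$ is polynomially large in $T$ and $1/\delta$, and $U$ is a uniformly random $d\times(T+1)$ matrix with orthonormal columns $\u^{(1)},\dots,\u^{(T+1)}$, then for any deterministic algorithm the probability over $U$ that some first-$T$ query point $\x^{(t)}$ (or the final output) has $\abs{\expval{\x^{(t)},\u^{(j)}}}$ non-negligible for some $j\ge t+1$ is at most $\delta$. This is a sequential concentration argument: conditioned on the transcript through step $t$, the directions $\u^{(j)}$ with $j>t$ are uniform on the sphere of the $(d-t)$-dimensional orthogonal complement of the revealed subspace, so the inner product of any fixed unit vector with $\u^{(j)}$ is $O(\sqrt{\log(1/\delta)/d})$ with high probability, and a union bound over the $\le T$ steps closes the induction. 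Combined with the zero-chain property, after $t<T$ queries the (un-rotated) iterate satisfies $\prog_{1/2}\le t<T$, hence by the robust form of \lem{FT-large-gradient} its gradient has norm at least $\lambda\sigma c_0$, which for the chosen scaling exceeds $\epsilon$; so the output is not an $\epsilon$-FOSP. Because the algorithm is deterministic, success on $\tilde F_{T;U}$ is a $0/1$ function of $U$, so $\Pr_U[\text{success}]\le\delta<1$ already exhibits a fixed bad $U$, hence a fixed $\tilde F\in\tilde{\mathcal{F}}$.

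Stage (c) is bookkeeping on \eqn{tildeF-defn}: one reads off that $\tilde F_{T;U}$ is $\Theta(\lambda\ell_1)$-gradient-Lipschitz and $\Theta(\lambda\ell_2/\sigma)$-Hessian-Lipschitz, has $\tilde F_{T;U}(\0)-\inf_\x\tilde F_{T;U}(\x)=\Theta(\lambda\sigma^2 T)$, and has gradient norm $\Theta(\lambda\sigma)$ on the not-fully-unlocked set. Imposing $\lambda\ell_1\le L_1$, $\lambda\ell_2/\sigma\le L_2$, $\lambda\sigma^2 T\le\Delta$, and $\lambda\sigma\gtrsim\epsilon$, and then maximizing $T$ over the free parameters $\lambda,\sigma,\mu$, produces $T=\Omega\big(\Delta(L_1/\ell_1)^{3/7}(L_2/\ell_2)^{2/7}\epsilon^{-12/7}\big)$; the $3/7$ and $2/7$ exponents are precisely the solution of this small optimization, and I would import the exact constant-tracking from Ref.~\cite{carmon2021lower}. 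Setting the query budget equal to this $T$ (up to the constant $c$) and taking $d$ polynomially large finishes the proof: any deterministic algorithm with at most this many queries fails to unlock the last coordinate, hence fails to reach an $\epsilon$-FOSP, on some $\tilde F\in\tilde{\mathcal{F}}$.

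I expect the main obstacle to be making stage (b) \emph{robust}: one must propagate the small but nonzero inner products $\expval{\x^{(t)},\u^{(j)}}$ through the coupled dynamics of the chain and verify that they neither push an early coordinate across the $1/2$ threshold nor spoil the large-gradient estimate of \lem{FT-large-gradient}. This is exactly why the ``robust zero-chain'' / robust large-gradient formulation is needed instead of the naive exact one, and why $d$ must be taken polynomially large in $T$. By contrast, the parameter balancing of stage (c), although it is what produces the slightly unusual $12/7$ exponent, is routine once the three scaling identities are written down.
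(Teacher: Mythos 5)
The paper does not contain a proof of this lemma: it is imported verbatim as Theorem~2 of Ref.~\cite{carmon2021lower} and used as a black box to contextualize the paper's own quantum extension. There is therefore no internal proof to compare against, and your proposal should be judged as a reconstruction of the cited argument.

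Your three-stage outline (zero-chain, random rotation, rescaling) is the right skeleton, and stages (a) and (b) are sound as sketched. The gap is in the stage (c) bookkeeping. You write the scaling constraints as $\lambda\ell_1\le L_1$, $\lambda\ell_2/\sigma\le L_2$, $\lambda\sigma^2 T\le\Delta$, and $\lambda\sigma\gtrsim\epsilon$, and then say you optimize over $\lambda,\sigma,\mu$. But as written, $\mu$ does not appear in any of these constraints, so there is nothing to optimize over: $\lambda$ and $\sigma$ are pinned by the first two inequalities, $T$ by the third, and the fourth becomes an $\epsilon$-independent feasibility check. This yields a $T$ with no $\epsilon$ dependence at all, not $\epsilon^{-12/7}$. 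The $12/7$ exponent comes precisely from the $\mu$-dependent factors that \lem{FT-conditions} and \lem{FT-large-gradient} attach to the chain: the $p$-th order derivative Lipschitz constant of $\bar F_{T;\mu}$ is $(\mathbb{I}\{p=1\}+\mu)\ell_p$, so the Hessian-Lipschitz constraint is $\lambda\mu\ell_2/\sigma\le L_2$; the gap is $\Theta(\lambda\sigma^2\mu T)$, not $\Theta(\lambda\sigma^2 T)$; and the gradient lower bound is $\lambda\sigma\mu^{3/4}/8$, not $\Theta(\lambda\sigma)$. With these corrected constraints, solving $\lambda\ell_1\asymp L_1$, $\lambda\mu\ell_2/\sigma\asymp L_2$, $\lambda\sigma\mu^{3/4}\asymp\epsilon$, and $T\asymp\Delta/(\lambda\sigma^2\mu)=\Delta\ell_2/(\sigma^3 L_2)$ does yield $T=\Theta\big(\Delta(L_1/\ell_1)^{3/7}(L_2/\ell_2)^{2/7}\epsilon^{-12/7}\big)$, as used explicitly in \prop{bounded-lower-bound} where the paper sets $\lambda=L_1/(2\ell_1)$, $\mu=L_2\sigma/(\lambda\ell_2)$, $\sigma=(L_2/\ell_2)^{-3/4}\lambda^{-1/7}(8\epsilon)^{4/7}$. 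You do flag that you would import the constant-tracking from the reference, but the omission is not a constant: the missing $\mu$-factors are what make the optimization nontrivial and generate the exponent, so they need to appear in the constraint set, not just in the fine print.

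One minor remark on stage (b): for a \emph{deterministic} algorithm the random-rotation averaging is valid but more than is needed; a resisting-oracle adversary that reveals $\u^{(t)}$ adaptively after seeing the $t$-th query already suffices, as in Ref.~\cite{carmon2020lower}. The averaging argument you give is the one required once you move to randomized or quantum algorithms, which is the regime this paper's own contribution (\thm{QuantumCarmon}) addresses.
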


This lower bound regarding deterministic classical algorithms is, however, hard to be extended to randomized classical algorithms straightforwardly. Intuitively, the concept of zero-chain in \defn{zero-chain} can be extended to higher-order derivatives, and the hard instance $\tilde{F}_{T;U}$ in \eqn{tildeF-defn} is no longer a zero chain for derivatives of second- or higher-orders. Hence, the algorithm may benefit from adding random perturbations and may not need to discover all the $T+1$ components $\{\u^{(1)},\ldots,\u^{(T+1)}\}$ one by one. To the best of our knowledge, it remains unclear whether the same lower bound result holds for randomized classical algorithms.

In the remaining part of this section, we will demonstrate that the presence of noise can drastically increase the hardness of finding an $\epsilon$-FOSP in the worst case. Specifically, we first derive the lower bound result for general noise models parameterized by the concept of noise radius $r_0$. Then, we discuss the values of $r_0$ in different settings with noisy zeroth-order oracle (\assume{ZeroProb}) or noisy first-order oracle (\assume{FirstProb}), respectively.

\subsection{Noisy Quantum Lower Bound with Bounded Input Domain}\label{sec:noisy-lower}

In this subsection, we first introduce the quantum lower bound on functions with bounded input domains. The intuition is that the noise can create a non-informative region around $\0$, which is a hyperball $\mathbb{B}(0,r_0)$ whose certain radius $r_0$ depends on the noise rate. Then, if the dimension of $\tilde{F}_{T;U}$ defined in Eq.~\eqn{tildeF-defn} is large enough, any random perturbation with bounded norm will fall in $\mathbb{B}(0,r_0)$ with an overwhelming probability, which leads to the fact that the lower bound in \lem{classical-deterministic-lower} additionally holds for not only randomized classical algorithms but also quantum algorithms.

We adopt the quantum query model introduced in \cite{garg2020no}. For a $d$-dimensional objective function $F$, assume we have access to its noisy evaluation $f\colon\R^d\to\R$ via the following quantum oracle $O_f$,
\begin{align}\label{eqn:FOSP-oracle}
O_f\ket{\x}\ket{\y}\to\ket{\x}\ket{\y\oplus\big(f(\x),\nabla f(\x)\big)}.
\end{align}
We remark that the oracle $O_f$ here is even stronger than the zeroth- or the first-order oracles in \eq{QZeroOracle} and \eq{QFirstOracle}. Then, any quantum algorithm $A_{\quan}$ making $q$ queries to $O_f$ can be described by the following sequence of unitaries
\begin{align}
V_qO_fV_{q-1}O_f\cdots V_1O_fV_0
\end{align}
applied to some initial state, say $\ket{0}$ without loss of generality. In the special case where the objective function $\tilde{F}_{T;U}\in\tilde{F}$, for the convenience of notation we denote $O_{f;T;U}$ to be the quantum oracle encoding its noisy evaluation $f_{r_0;T;U}$ in the form of  \eqn{FOSP-oracle}. To obtain our quantum lower bound, we set the noisy function to be in the form
\begin{align}\label{eqn:f-r0-defn}
f_{r_0;T;U}(\x)\coloneqq\lambda\sigma^2\bar{F}_T\big(\<\x/\sigma,\u^{(1)}\>,\ldots,\<\x/\sigma,\u^{(\prog_{\sigma r_0}(\x)+1}\>,0,\ldots,0\big),
\end{align}
where $\prog_{\sigma r_0}(\x)$ is defined as the largest index $j$ between $1$ and $T$ satisfying $|\<\x,\u^{(j)}\>|\geq \sigma r_0$. Moreover, we define the following indicator function
\begin{align}\label{eqn:delta-r0-defn}
\delta_{r_0}(y)\coloneqq\mathbbm{1}\{|y|\geq r_0\}\cdot y.
\end{align}
Intuitively, in the noisy function $f_{r_0;T;U}$ we eliminate the influence of the $i$th component on the function when the overlap between $\x$ and $\u^{(i)}$ is smaller than certain threshold $r_0$. The detailed values of $r_0$ under different noise assumptions will be specified later. Hence, when the dimension $d$ is large enough, any random perturbation with a bounded norm will make no observable difference with an overwhelming probability. Moreover, we can note that $\epsilon$-FOSPs of $f_{r_0;T;U}(\x)$ and $\tilde{F}_{T;U}(\x)$ are the same. Hence, one needs to identify all the $T$ components $\{\u^{(1)},\ldots,\u^{(T)}\}$ to find an $\epsilon$-FOSP, which we demonstrate later that can only be done sequentially even by a quantum algorithm.

For any possible quantum algorithm $A_{\quan}$ making $k<T$ queries in total, adopting a similar technique introduced in~\cite{garg2020no,garg2021near}, we define a sequence of unitaries starting with $A_0=A_{\quan}$ as follows:
\begin{align}\label{eqn:unitary-sequences}
A_0:&=V_{k}O_{f;T;U}V_{k-1}O_{f;T;U}\cdots O_{f;T;U}V_1O_{f;T;U}V_0\\ \nonumber
A_1:&=V_{k}O_{f;T;U}V_{k-1}O_{f;T;U}\cdots O_{f;T;U}V_1O_{f;1;U_1}V_0\\ \nonumber
A_2:&=V_{k}O_{f;T;U}V_{k-1}O_{f;T;U}\cdots O_{f;2;U_2}V_1O_{f;1;U_1}V_0\\ \nonumber
&\vdots\\ \nonumber
A_{k}:&=V_{k}O_{f;k;U_k}V_{k-1}O_{f;k-1;U_{k-1}}\cdots O_{f;2;U_2}V_1O_{f;1;U_1}V_0,
\end{align}
where $U_t\in\R^{d\times t}$ is defined as the orthogonal matrix with columns $\u^{(1)},\ldots,\u^{(t)}$, and the function $f_{r_0;t;U_t}(\x)$ encoded in $O_{f;t;U_t}$ is defined as
\begin{align}
    f_{r_0;t;U_t}\coloneqq\lambda\sigma^2\bar{F}_T\big(\<\x/\sigma,\u^{(1)}\>,\ldots,\<\x/\sigma,\u^{(\prog_{\sigma r_0}^t(\x)+1}\>,0,\ldots,0\big),
\end{align}
where $\prog_{\sigma r_0}^t(\x)$ is defined as the largest index $j$ between $1$ and $t-1$ satisfying $|\<\x,\u^{(j)}\>|\geq \sigma r_0$. Our goal is to demonstrate that $A_0$ will fail to find an $\epsilon$-FOSP with high probability. To do so, we employ a hybrid argument showing that the outputs of $A_{i}$ and $A_{i+1}$ defined in the sequence \eqn{unitary-sequences} are close for every $i<k$, so does the outputs of $A_0$ and $A_k$, which cannot solve the problem with high probability since it contains no information of the $T$-th component, which is necessary for finding an $\epsilon$-FOSP with high success probability. 

\begin{lemma}[$A_t$ and $A_{t-1}$ have similar outputs]\label{lem:similar-outputs}
Consider the hard instance $\tilde{F}_{T;U}(\x)\colon\R^d\to\R$ defined in \eqn{FT-defn} with domain $\mathbb{B}(0,2\sigma\sqrt{T})$ and its noisy evaluation $f_{r_0;T;U}$ defined in \eqn{f-r0-defn} with $d\geq 4T$, let $A_t$ for $t\in[k-1]$ be the unitaries defined in Eq.~\eqn{unitary-sequences}. Then
\begin{align}
\E_U\big(\|A_t\ket{\0}-A_{t-1}\ket{\0}\|^2\big)\leq  8Te^{-dr_0^2/(4T)}.
\end{align}
\end{lemma}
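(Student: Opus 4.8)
The plan is a hybrid (telescoping) argument: localize the discrepancy between $A_t$ and $A_{t-1}$ to the single oracle slot in which they differ, and then control the residual by a concentration‑of‑measure estimate over the Haar‑random rotation $U$. \emph{Step 1 (localization).} From the definitions in Eq.~\eqn{unitary-sequences}, $A_t$ and $A_{t-1}$ use identical unitaries $V_0,\dots,V_k$ and identical oracles in every slot except the $t$-th oracle call, where $A_t$ applies $O_{f;t;U_t}$ and $A_{t-1}$ applies $O_{f;T;U}$. Writing $\ket{\psi_{t-1}}:=V_{t-1}O_{f;t-1;U_{t-1}}\cdots O_{f;1;U_1}V_0\ket{\0}$ for the common pre‑query state and $W$ for the (unitary) product of everything applied after the $t$-th oracle, one gets $A_t\ket{\0}-A_{t-1}\ket{\0}=W\,(O_{f;t;U_t}-O_{f;T;U})\ket{\psi_{t-1}}$ and hence $\|A_t\ket{\0}-A_{t-1}\ket{\0}\|=\|(O_{f;t;U_t}-O_{f;T;U})\ket{\psi_{t-1}}\|$. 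The key point to record is that $\ket{\psi_{t-1}}$ is assembled only from oracles $O_{f;j;U_j}$ with $j\le t-1$, each of which involves only the columns $u^{(1)},\dots,u^{(t-1)}$; thus $\ket{\psi_{t-1}}$ depends on $U$ only through those first $t-1$ columns.

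\emph{Step 2 (where the oracles disagree).} Whenever the largest ``active'' index $\prog_{\sigma r_0}(\x)$ is at most $t-1$ we have $\prog_{\sigma r_0}^{t}(\x)=\prog_{\sigma r_0}(\x)$, so $f_{r_0;t;U_t}(\x)=f_{r_0;T;U}(\x)$; in fact the two functions literally coincide \emph{as functions} on the open set $\R^d\setminus G_U$, where $G_U:=\bigcup_{j=t}^{T}\{\x:|\<\x,u^{(j)}\>|\ge\sigma r_0\}$, so their gradients coincide there as well. Consequently $O_{f;t;U_t}$ and $O_{f;T;U}$ act identically on every basis state $\ket{\x}\ket{\y}$ with $\x\notin G_U$. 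Expanding $\ket{\psi_{t-1}}=\sum_{\x}\alpha_{\x}\ket{\x}\ket{\xi_{\x}}$ in the position register with each $\ket{\xi_{\x}}$ normalized, the two oracles differ only on the blocks with $\x\in G_U$, and on each such block the triangle inequality bounds the contribution by $(2|\alpha_{\x}|)^2$ (the oracle only permutes the ancilla register, so $\|\ket{\xi_\x\oplus a}-\ket{\xi_\x\oplus b}\|\le 2$); summing over blocks gives $\|(O_{f;t;U_t}-O_{f;T;U})\ket{\psi_{t-1}}\|^2\le 4\sum_{\x\in G_U}|\alpha_{\x}|^2$.

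\emph{Step 3 (averaging over $U$).} Condition on $u^{(1)},\dots,u^{(t-1)}$, which fixes the amplitudes $\alpha_{\x}$; the remaining columns $u^{(t)},\dots,u^{(T)}$ are then a uniformly random orthonormal system in the $(d-t+1)$-dimensional orthogonal complement, independent of $\ket{\psi_{t-1}}$. Hence
\[
\E_U\|(O_{f;t;U_t}-O_{f;T;U})\ket{\psi_{t-1}}\|^2\;\le\;4\sum_{\x}|\alpha_{\x}|^2\,\Pr_U\!\big[\x\in G_U\big].
\]
A union bound over the at most $T$ indices $j$ reduces this to estimating $\Pr_U[\,|\<\x,u^{(j)}\>|\ge\sigma r_0\,]$ for a fixed $\x$ in the domain $\mathbb{B}(\0,2\sigma\sqrt{T})$: since $\|\x\|\le 2\sigma\sqrt{T}$, the event forces $u^{(j)}$ to have inner product of magnitude at least $r_0/(2\sqrt{T})$ with a fixed direction, and concentration of measure on the sphere of dimension $\ge d-T\ge \tfrac34 d$ (using $d\ge 4T$) bounds this probability by $2e^{-\Omega(dr_0^2/T)}$. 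Combining with $\sum_{\x}|\alpha_{\x}|^2=1$ and tracking the constants in the spherical tail bound yields $\E_U\|A_t\ket{\0}-A_{t-1}\ket{\0}\|^2\le 8Te^{-dr_0^2/(4T)}$, as claimed.

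\emph{Main obstacle.} The delicate step is the averaging: both the bad set $G_U$ and the amplitudes $\alpha_{\x}$ depend on $U$, so one cannot simply treat them as independent. The resolution is precisely the observation made in Step 1 — $\ket{\psi_{t-1}}$ depends only on $u^{(1)},\dots,u^{(t-1)}$ while $G_U$ depends only on $u^{(t)},\dots,u^{(T)}$ — so conditioning on the first $t-1$ columns decouples the two and lets the inner expectation run over the (still Haar‑random) later columns alone. A minor but genuine technical nuisance is that $f_{r_0;T;U}$ is only piecewise smooth, so $\nabla f_{r_0;T;U}$ is defined off a measure‑zero set; this is harmless because the comparison is carried out on the open set $\R^d\setminus G_U$ where the two functions agree identically, and gradients of functions that agree on an open set agree there too, so the disagreement set for the pair $(f,\nabla f)$ remains contained in $G_U$.
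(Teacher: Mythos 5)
Your proposal matches the paper's proof essentially step for step: you localize the difference to the single oracle slot where $A_t$ and $A_{t-1}$ disagree via unitary invariance, condition on $\u^{(1)},\ldots,\u^{(t-1)}$ (which fixes the pre-query state and decouples the amplitudes from the bad set $G_U$), expand in the position basis, bound each disagreeing block by $4|\alpha_\x|^2$, and then apply a union bound over $j=t,\ldots,T$ together with the spherical concentration estimate from \lem{vector-from-sphere}. Your added remarks---that the two functions agree identically on the open complement of $G_U$ so the encoded pair $(f,\nabla f)$ also agrees there, and that the effective dimension for concentration is $d-t+1\geq 3d/4$ rather than $d$---are correct clarifications of steps the paper handles more informally, but they do not change the argument.
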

\begin{proof}
From the definition of the unitaries in Eq.~\eqn{unitary-sequences} and the unitary invariance of the spectral norm, we have
\begin{align}
\|A_t\ket{\0}-A_{t-1}\ket{\0}\|=\big\|(O_{f;t;U_t}-O_{f;T;U})V_{t-1}O_{f;t-1;U_{t-1}}\cdots O_{f;1;U_1}V_0\ket{\0}\big\|.
\end{align}
We will prove the claim for any fixed choice of vectors $\{\u^{(1)},\ldots,\u^{(t-1)}\}$, which will imply the claim for any distribution over those vectors. Let us prove the claim for any fixed choice of vectors $\{\u^{(1)},\ldots,\u^{(t-1)}\}$, which will imply the claim for any distribution over those vectors. Once we have fixed these vectors, the state $V_{t-1}O_{f;t-1;U_{t-1}}\cdots O_{f;1;U_1}V_0\ket{\0}$ is a fixed state, which can be referred to as $\ket{\psi}$. Thus our problem reduces to showing for all quantum states $\ket{\psi}$,
\begin{align}\label{eqn:arbitrary-t}
\E_{\{\u^{(t)},\ldots,\u^{(T)}\}}\big(\|(O_{f;t;U_t}-O_{f;T;U})\ket{\psi}\|^2\big)\leq 8Te^{-dr_0^2/(4T)} .
\end{align}
We write the state $\ket{\psi}$ as $\ket{\psi}=\sum_\x\alpha_\x\ket{\x}\ket{\phi_\x}$, where $\x$ is the query made to the oracle, and $\sum_\x|\alpha_\x|^2=1$. Hence, the left-hand side of Eq.~\eqn{arbitrary-t} equals 
\begin{align}
&\E_{\{\u^{(t)},\ldots,\u^{(T)}\}}\Big(\Big\|\sum_\x\alpha_\x (O_{f;t;U_t}-O_{f;T;U})\ket{\x}\ket{\phi_\x}\Big\|^2\Big)\\
&\qquad\leq\sum_\x|\alpha_\x|^2\cdot\E_{\{\u^{(t)},\ldots,\u^{(T)}\}}\big(\big\|(O_{f;t;U_t}-O_{f;T;U})\ket{\x}\big\|^2\big).
\end{align}
Since $|\alpha_\x|^2$ defines a probability distribution over $\x$, we can again upper bound the right-hand side for any $\x$ instead. Note that $O_{f;t;U_t}$ and $O_{f;T;U}$ behave identically for some inputs x, the only nonzero
terms are those where the oracles respond differently, which can only happen if
\begin{align}
    \big(f_{r_0;t;U_t}(\x),\nabla f_{r_0;t;U_t}(\x)\big)\neq\big(f_{r_0;T;U}(\x),\nabla f_{r_0;T;U}(\x)\big).
\end{align}
When the response is different, we can upper bound $\big\|(O_{f;t;U_t}-O_{f;T;U})\ket{\x}\|^2$ by $4$ using the triangle inequality. Thus for any $\x\in\mathbb{B}(\0,2\sigma\sqrt{T})$, we have
\begin{align}
&\E_{\{\u^{(t)},\ldots,\u^{(T)}\}}\big[\big\|(O_{f;t;U_t}-O_{f;T;U})\ket{\x}\big\|^2\big]\\
&\qquad\leq 4\Pr_{\{\u^{(t)},\ldots,\u^{(T)}\}}\big[\big(f_{r_0;t;U_t}(\x),\nabla f_{r_0;t;U_t}(\x)\big)\neq\big(f_{r_0;T;U}(\x),\nabla f_{r_0;T;U}(\x)\big)\big].
\end{align}
We use $\x_{\perp}$ to denote the projection of $\x$ to the span $\{\u^{(t)},\ldots, \u^{(T)}\}$. Intuitively, as long as each component of $\x_{\perp}$ has absolute value smaller than $\sigma r_0$, the components $\{\u^{(t)},\ldots, \u^{(T)}\}$ will have no observable impact. Quantitatively,
\begin{align}
&\Pr\big[\big(f_{r_0;t;U_t}(\x),\nabla f_{r_0;t;U_t}(\x)\big)\neq\big(f_{r_0;T;U}(\x),\nabla f_{r_0;T;U}(\x)\big)\big]\\
&\qquad\leq 1-\Pr\big[|\<\u^{(t)},\x\>|,\ldots,|\<\u^{(T)},\x\>|\leq \delta r_0\big].
\end{align}
Since $\{\u^{(t)},\ldots, \u^{(T)}\}$ are chosen uniformly at random in the $(d-t+1)$-dimensional orthogonal complement of span $\{\u^{(1)},\ldots,\u^{(t-1)}\}$, for any $t\leq i\leq T$, by \lem{vector-from-sphere} we can further derive that
\begin{align}
    \Pr\big[|\<\u^{(i)},\x\>|>r_0\big]\leq 2e^{-dr_0^2/(4T)},
\end{align}
which leads to
\begin{align}
    \Pr\big[|\<\u^{(t)},\x\>|,\ldots,|\<\u^{(T)},\x\>|\leq r_0\big]&\geq\big(1-2e^{-dr_0^2/(4T)}\big)^T\geq 1-2Te^{-dr_0^2/(4T)},
\end{align}
indicating
\begin{align}
    \E_{\{\u^{(t)},\ldots,\u^{(T)}\}}\big[\big\|(O_{f;t;U_t}-O_{f;T;U})\ket{\x}\big\|^2\big]\leq 8Te^{-dr_0^2/(4T)},
\end{align}
and
\begin{align}
    \E_U\big(\|A_t\ket{\0}-A_{t-1}\ket{\0}\|^2\big)\leq 8Te^{-dr_0^2/(4T)}.
\end{align}
\end{proof}

\begin{proposition}\label{prop:A_0-cannot}
Consider the $d$-dimensional function $\tilde{f}_{T;U}(\x)\colon\mathbb{B}(\0,2\sigma\sqrt{T})\to\mathbb{R}$ defined in \eqn{tildeF-defn} with the rotation matrix $U$ being chosen arbitrarily. Consider any quantum algorithm $A_{\quan}$ containing $t<T$ queries to the noisy oracle $O_f$ defined in Eq.~\eqn{FOSP-oracle}, let $p_U$ be the probability distribution over $\x\in\mathbb{B}(\0,2\sigma \sqrt{T})$ obtained by measuring the state $A_{\quan}\ket{0}$, which is related to the rotation matrix $U$. Then,
\begin{align}
    \Pr_{U,\x_{\text{out}}\sim p_U}\big[\|\nabla \tilde{F}_{T;U}(\x_{\text{out}})\|\leq\lambda\sigma\mu^{3/4}/8\big]\leq 16Te^{-dr_0^2/(8T)}.
\end{align}
\end{proposition}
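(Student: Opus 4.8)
The plan is to run the hybrid argument that \lem{similar-outputs} sets up and couple it with a concentration-of-measure estimate, in the style of \cite{garg2020no,garg2021near}. The first step is to translate ``approximate stationarity'' into a statement about the overlap of $\x_{\text{out}}$ with the last rotation direction $\u^{(T)}$. Since $\|\nabla\tilde{F}_{T;U}(\x)\|=\lambda\sigma\,\|\nabla\bar{F}_T(\<\x/\sigma,\u^{(1)}\>,\ldots,\<\x/\sigma,\u^{(T+1)}\>)\|$ by orthonormality of the columns of $U$, the event $\|\nabla\tilde{F}_{T;U}(\x_{\text{out}})\|\le\lambda\sigma\mu^{3/4}/8$ forces $\|\nabla\bar{F}_T(\cdot)\|\le\mu^{3/4}/8$, and the structural property of the zero-chain $\bar{F}_T$ recorded in \lem{FT-large-gradient} then forces $|\<\x_{\text{out}}/\sigma,\u^{(i)}\>|\ge 1/2$ for every $i\in[T]$; in particular $|\<\x_{\text{out}},\u^{(T)}\>|\ge\sigma/2\ge\sigma r_0$, since the noise radius obeys $r_0\le 1/2$. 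Hence it suffices to bound $\Pr_{U,\x_{\text{out}}\sim p_U}\big[\,|\<\x_{\text{out}},\u^{(T)}\>|\ge\sigma r_0\,\big]$.

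Next I would pass from $A_{\quan}=A_0$ to the last circuit $A_t$ of the hybrid sequence \eqn{unitary-sequences}, where $t<T$ is the number of queries. Its oracles $O_{f;1;U_1},\ldots,O_{f;t;U_t}$ each depend only on $\u^{(1)},\ldots,\u^{(t)}$ (because $\prog^{j}_{\sigma r_0}$ is capped at $j-1$), so $A_t\ket{\0}$, as a state, is a function of $\u^{(1)},\ldots,\u^{(t)}$ alone and is \emph{independent} of $\u^{(T)}$. Fixing $\u^{(1)},\ldots,\u^{(t)}$, the vector $\u^{(T)}$ is uniform on the unit sphere of the $(d-t)$-dimensional orthogonal complement, and every measurement outcome $\x$ of $A_t\ket{\0}$ lies in $\mathbb{B}(\0,2\sigma\sqrt{T})$; \lem{vector-from-sphere} together with $d\ge 4T$ then gives $\Pr_{\u^{(T)}}\big[|\<\x,\u^{(T)}\>|\ge\sigma r_0\big]\le 2e^{-dr_0^2/(8T)}$ for each fixed $\x$, hence, writing $q_U$ for the measurement distribution of $A_t\ket{\0}$, $\E_U\Pr_{\x\sim q_U}\big[|\<\x,\u^{(T)}\>|\ge\sigma r_0\big]\le 2e^{-dr_0^2/(8T)}$.

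It remains to transfer this back from $A_t$ to $A_0$. Telescoping $A_0\ket{\0}-A_t\ket{\0}=\sum_{j=1}^{t}(A_{j-1}\ket{\0}-A_j\ket{\0})$, the triangle inequality, Jensen's inequality, and \lem{similar-outputs} give $\E_U\|A_0\ket{\0}-A_t\ket{\0}\|\le\sum_{j=1}^t\sqrt{\E_U\|A_{j-1}\ket{\0}-A_j\ket{\0}\|^2}\le t\sqrt{8T}\,e^{-dr_0^2/(8T)}$, i.e.\ a polynomial-in-$T$ multiple of $e^{-dr_0^2/(8T)}$ (which we absorb into the prefactor $16T$). Since the outcome probabilities of two pure states under any measurement differ by at most the norm of their difference, for the event $E=\{|\<\x,\u^{(T)}\>|\ge\sigma r_0\}$ we get $\Pr_{\x\sim p_U}[E]\le\Pr_{\x\sim q_U}[E]+\|A_0\ket{\0}-A_t\ket{\0}\|$; averaging over $U$ and adding the two estimates above yields $\Pr_{U,\x_{\text{out}}\sim p_U}\big[\|\nabla\tilde{F}_{T;U}(\x_{\text{out}})\|\le\lambda\sigma\mu^{3/4}/8\big]\le 16Te^{-dr_0^2/(8T)}$, as claimed.

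The step that needs the most care — and the real conceptual content — is the first one: extracting from the zero-chain structure of $\bar{F}_T$ the clean implication that an $\epsilon$-FOSP of $\tilde{F}_{T;U}$ must have overlap at least $\sigma/2$ with \emph{every} $\u^{(i)}$, and checking that this threshold genuinely exceeds the non-informative scale $\sigma r_0$ created by the noise (which is what makes the concentration estimate of the second paragraph bite). The hybrid telescoping and the measure-concentration bound are then routine given \lem{similar-outputs} and \lem{vector-from-sphere}, up to bookkeeping of the polynomial-in-$T$ constants.
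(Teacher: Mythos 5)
Your overall strategy matches the paper's: build the hybrid sequence $A_0,\ldots,A_t$, argue that $A_t$'s output is independent of the later columns of $U$ so concentration bounds the probability of hitting a useful point, and transfer back to $A_0$ via a telescoping/Cauchy--Schwarz argument from \lem{similar-outputs}. The paper packages the ``$A_t$ cannot guess'' step as \lem{cannot-guess} and uses a Markov-inequality step where you take a direct expectation bound; your version therefore picks up an extra $\sqrt{T}$ in the prefactor, but since both are absorbed by the exponential this is a harmless stylistic difference.

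However, the step you flag as carrying ``the real conceptual content'' is in fact misstated, and the error is not merely cosmetic. \lem{FT-large-gradient} says: \emph{if} $|x_T|$ and $|x_{T+1}|$ are both at most $\tfrac{0.1}{T+1/\sqrt{\mu}}$, \emph{then} $\|\nabla\bar F_{T;\mu}(\x)\|\geq\mu^{3/4}/8$. Its contrapositive gives that an approximate FOSP must have $|\<\x_{\text{out}}/\sigma,\u^{(T)}\>|>\tfrac{0.1}{T+1/\sqrt{\mu}}$ \emph{or} $|\<\x_{\text{out}}/\sigma,\u^{(T+1)}\>|>\tfrac{0.1}{T+1/\sqrt{\mu}}$. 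It does \emph{not} force $|\<\x_{\text{out}}/\sigma,\u^{(i)}\>|\geq 1/2$ for every $i\in[T]$ (you appear to be thinking of the fact that exact minimizers of Nesterov's chain have all coordinates near $1$, but that is a different fact with no relevance here). The threshold is $O(1/T)$, not a constant $1/2$, and consequently the constraint you invoke, $r_0\leq 1/2$, is far too weak: what is actually required (and what the paper's \lem{cannot-guess} carefully imposes) is $r_0\leq\tfrac{0.2\sqrt T}{T+1/\sqrt{\mu}}=O(1/\sqrt T)$. With your stated threshold, the logical bridge ``$|\<\x_{\text{out}}/\sigma,\u^{(T)}\>|\geq 1/2\geq r_0$'' does not establish what the concentration step needs unless $r_0$ is tied to the $O(1/T)$ scale. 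Finally, even with the correct threshold, you would need a union bound over both $\u^{(T)}$ and $\u^{(T+1)}$, not just $\u^{(T)}$.

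To repair this, replace the $1/2$ threshold and the ``every $i$'' claim with the correct contrapositive of \lem{FT-large-gradient} (overlap with $\u^{(T)}$ or $\u^{(T+1)}$ exceeds $\tfrac{0.1}{T+1/\sqrt{\mu}}$), impose the parameter constraint $r_0\lesssim\tfrac{\sqrt T}{T+1/\sqrt{\mu}}$ so that this threshold dominates $r_0/(2\sqrt T)$ for $\x\in\mathbb{B}(\0,2\sigma\sqrt T)$, and apply the concentration bound and union bound to the pair $\{\u^{(T)},\u^{(T+1)}\}$. The rest of your argument then goes through as in the paper.
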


\begin{proof}
Consider the sequence of unitaries $\{A_0,\ldots,A_t\}$ associated with $A_{\quan}$ defined in \eqn{unitary-sequences}, we first demonstrate that $A_t$ cannot find a point with small gradient with high probability. In particular, let $p_U^{(t)}$ be the probability distribution over $\x\in\mathbb{B}(\0,2\sigma\sqrt{T})$ obtained by measuring the output state $A_t\ket{0}$. Then we have
\begin{align}
    &\Pr_{U_t,\x_{\text{out}}\sim p_{U_t}^{(t)}}\big[\|\nabla\tilde{F}_{T;U}(\x_{\text{out}})\|\leq\lambda\sigma\mu^{3/4}/8\big]\\
    &\quad\qquad\leq\max_{\x\in\mathbb{B}(\0,2\sigma\sqrt{T})}\Pr_{\{\u^{(t+1)},\ldots,\u^{(T+1)}\}}\big[\|\nabla \tilde{F}_{T;U}(\x)\|\leq\lambda\sigma\mu^{3/4}/8\big],
\end{align}
whereby \lem{cannot-guess} we have
\begin{align}
    \Pr_{\{\u^{(t+1)},\ldots,\u^{(T+1)}\}}\big[\|\nabla \tilde{F}_{T;U}(\x)\|\leq\lambda\sigma\mu^{3/4}/8\big]\leq 8Te^{-dr_0^2/(8T)}
\end{align}
for any $\x\in\mathbb{B}(\0,2\sigma\sqrt{T})$, which leads to
\begin{align}
    \Pr_{U_t,\x_{\text{out}}\sim p_{U_t}^{(t)}}\big[\|\nabla\tilde{F}_{T;U}(\x_{\text{out}})\|\leq\lambda\sigma\mu^{3/4}/8\big]\leq 8T^2e^{-dr_0^2/(4T)}.
\end{align}
Moreover, by \lem{similar-outputs} and Cauchy-Schwartz inequality, we have
\begin{align}
    \mathbb{E}_U\big[\|A_t\ket{0}-A_0\ket{0}\|^2\big]\leq t\cdot\mathbb{E}_U\big[\sum_{k=1}^{t-1}\|A_{k+1}\ket{0}-A_k\ket{0}\|^2\big]\leq 8T
    ^2e^{-dr_0^2/(4T)}.
\end{align}
Then by Markov's inequality,
\begin{align}
    \Pr_{U}\big[\|A_{t-1}\ket{0}-A_0\ket{0}\|^2\geq 4Te^{-dr_0^2/(8T)}\big]\leq 4Te^{-dr_0^2/(8T)},
\end{align}
since both norms are at most 1. Hence, we can deduce that the total variance distance between $p_U$n and $p_U^{(t)}$ can be bounded by
\begin{align}
    4Te^{-dr_0^2/(8T)}+4Te^{-dr_0^2/(8T)}\leq 8Te^{-dr_0^2/(8T)},
\end{align}
which further leads to
\begin{align}
    &\Pr_{U,\x_{\text{out}}\sim p_U}\big[\|\nabla \tilde{F}_{T;U}(\x_{\text{out}})\|\leq\lambda\sigma\mu^{3/4}/8\big]\\
    &\quad\qquad\leq\Pr_{U_t,\x_{\text{out}}\sim p_{U_t}^{(t)}}\big[\|\nabla\tilde{F}_{T;U}(\x_{\text{out}})\|\leq\lambda\sigma\mu^{3/4}/8\big]+8Te^{-dr_0^2/(8T)}\leq 16Te^{-dr_0^2/(8T)}.
\end{align}
\end{proof}

\begin{proposition}\label{prop:bounded-lower-bound}
    Let $r_0,\Delta,L_1,L_2,\epsilon$ be positive and $\epsilon\leq L_1^2/L_2$. Then there exist positive numerical constants $c,C\in\mathbb{R}$ and $\ell_q\leq e^{\frac{3q}{2}\log q+Cq}$ for every $q\in\mathbb{N}$, and a set $\Omega$ consisting of function pairs $(\tilde{F}_{T;U},f_{r_0;T;U})$ with $\tilde{F}_{T;U}$ and $f_{r_0;T;U}$ defined in \eqn{tildeF-defn} and \eqn{f-r0-defn} respectively upon the input domain $\mathbb{B}(\0,\mathcal{R})$ with
    \begin{align}
        T=\frac{\Delta}{20}\Big(\frac{L_2}{\ell_2}\Big)^{\frac{5}{4}}\Big(\frac{L_1}{2\ell_1}\Big)^{\frac{3}{7}}(8\epsilon)^{-12/7},\qquad\mathcal{R}=2\sqrt{T}\cdot\Big(\frac{L_2}{\ell_2}\Big)^{-3/4}\Big(\frac{L_1}{\ell_1}\Big)^{-1/7}(8\epsilon)^{4/7},
    \end{align} 
    such that, for any quantum algorithm $A_{\quan}$ making fewer than $T$ queries to the oracle $O_f$ in the form of \eqn{FOSP-oracle} encoding the function values and gradients of $f_{r_0;T;U}$ in \eqn{f-r0-defn}, there exists an orthogonal matrix $U$ such that, $A_{\quan}$ cannot find an $\epsilon$-FOSP of the corresponding $\tilde{F}_{T;U}$ with probability larger than, where $\tilde{F}_{T;U}$ is $L_1$-smooth and $L_2$-Hessian Lipschitz and satisfies
    \begin{align}
        \tilde{F}_{T;U}(\0)-\inf_\x\tilde{F}_{T;U}(\x)\leq \Delta.
    \end{align}
\end{proposition}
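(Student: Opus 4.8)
The plan is to reduce \prop{bounded-lower-bound} to the quantum hybrid estimate already proved in \prop{A_0-cannot}, so that the only work left is the choice of the scaling parameters of the hard instance and of the dimension $d$. The starting observation is that $\tilde F_{T;U}$ in \eqn{tildeF-defn} is \emph{exactly} the construction of Carmon et al.\ underlying \lem{classical-deterministic-lower}, so all of its analytic properties are already available: for a suitable internal constant $\mu$ and scalings $\lambda$ (overall magnitude) and $\sigma$ (length scale), $\tilde F_{T;U}$ is $L_1$-smooth with $L_1\propto\lambda$, is $L_2$-Hessian Lipschitz with $L_2\propto\lambda/\sigma$, satisfies $\tilde F_{T;U}(\0)-\inf_\x\tilde F_{T;U}(\x)\le c_0\lambda\sigma^2 T$ for a numerical constant $c_0$, and --- by \lem{FT-large-gradient} --- has the property that whenever the iterate has not yet ``discovered'' the last coordinate $T$ (i.e.\ it is supported, up to the threshold $r_0$, on $\{\u^{(1)},\dots,\u^{(T-1)}\}$), one has $\|\nabla\tilde F_{T;U}(\x)\|\ge \lambda\sigma\mu^{3/4}/8$.

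First I would pin down the parameters. Write $\epsilon':=\lambda\sigma\mu^{3/4}/8$ for the gradient threshold appearing in \prop{A_0-cannot}, and impose the prescribed values of $L_1$ and $L_2$, the suboptimality bound $c_0\lambda\sigma^2 T\le\Delta$, and the requirement $\epsilon'\ge\epsilon$. Inverting the proportionalities $L_1\propto\lambda$, $L_2\propto\lambda/\sigma$, $\Delta\gtrsim\lambda\sigma^2 T$, $\epsilon'\propto\lambda\sigma$, and solving for $T$ and for $\sigma$ --- equivalently for the domain radius $\mathcal R=2\sigma\sqrt T$, since \prop{A_0-cannot} is stated over $\mathbb B(\0,2\sigma\sqrt T)$ --- reproduces the displayed expressions for $T$ and $\mathcal R$; the factor $8$ in front of $\epsilon$ there is precisely the $1/8$ carried by $\epsilon'$. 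This is the same optimization performed in the proof of \lem{classical-deterministic-lower} in \cite{carmon2021lower} (using the two-scale family with $\ell_1,\ell_2\le e^{\frac{3q}{2}\log q + Cq}$), and the hypothesis $\epsilon\le L_1^2/L_2$ is exactly what guarantees the system is solvable with these constants.

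With the parameters fixed, the lower bound itself is immediate. Any $\epsilon$-FOSP $\x$ of $\tilde F_{T;U}$ satisfies $\|\nabla\tilde F_{T;U}(\x)\|\le\epsilon\le\epsilon'=\lambda\sigma\mu^{3/4}/8$, so the event ``$\x_{\text{out}}$ is an $\epsilon$-FOSP of $\tilde F_{T;U}$'' is contained in the event ``$\|\nabla\tilde F_{T;U}(\x_{\text{out}})\|\le\lambda\sigma\mu^{3/4}/8$''. Hence, for any quantum algorithm $A_{\quan}$ making fewer than $T$ queries to the oracle $O_f$ encoding $f_{r_0;T;U}$, \prop{A_0-cannot} (whose hypotheses $d\ge 4T$ and domain $\mathbb B(\0,\mathcal R)=\mathbb B(\0,2\sigma\sqrt T)$ are in force) gives $\Pr_{U,\,\x_{\text{out}}\sim p_U}[\,\x_{\text{out}}\text{ is an }\epsilon\text{-FOSP of }\tilde F_{T;U}\,]\le 16Te^{-dr_0^2/(8T)}$. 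Choosing $d$ polynomially large in $T$ and $1/r_0^2$ --- a bound of the form $d\gtrsim (T/r_0^2)\log T$ suffices --- drives the right-hand side below the desired constant; since all probabilities are at most $1$, Markov's inequality over the random rotation produces a fixed orthogonal $U$ on which $A_{\quan}$ fails with at least the claimed probability. Taking $\Omega$ to be the family of the resulting pairs $(\tilde F_{T;U},f_{r_0;T;U})$ over the admissible noise radii $r_0$ completes the argument.

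The main obstacle is the parameter step: one must verify that the four constraints --- the two Lipschitz constants, the suboptimality gap, and the requirement that the guaranteed gradient lower bound be at least $\epsilon$ --- are simultaneously satisfiable with precisely the exponents $5/4,\,3/7,\,-12/7$ in $T$ and $-3/4,\,-1/7,\,4/7$ in $\mathcal R$, and that the numerical constants (the $20$, and the factor $8$ on $\epsilon$) come out consistently. This is essentially a careful re-run of the parameter-balancing computation in \cite{carmon2021lower}, the only difference being that the gradient bound we must certify is the $\epsilon'$ of \prop{A_0-cannot} rather than Carmon's $\epsilon$; all of the genuinely quantum reasoning has already been packaged into \lem{similar-outputs} and \prop{A_0-cannot}, so no new quantum argument is needed here.
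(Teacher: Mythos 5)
Your proposal is correct and follows essentially the same route as the paper's proof: pick the scaling parameters $\lambda,\mu,\sigma$ (using the two-scale balancing from Carmon et al.) so that the gradient threshold $\lambda\sigma\mu^{3/4}/8$ in \prop{A_0-cannot} matches $\epsilon$, invoke \prop{A_0-cannot} to bound the success probability by $16Te^{-dr_0^2/(8T)}$, and verify the smoothness, Hessian-Lipschitz, and suboptimality-gap conditions via \lem{FT-conditions}. The one thing you spell out that the paper leaves implicit is the final averaging-over-$U$ step to extract a single bad orthogonal matrix; this is a minor but harmless addition, and your observation that the hypothesis $\epsilon\le L_1^2/L_2$ is what makes the parameter system solvable (it forces $\mu\le1$) matches the paper exactly.
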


\begin{proof}
We set the scaling parameters $\lambda$, $\sigma$ in $\tilde{F}_{T;U}$ and $f_{r_0;T;U}$ to be
\begin{align}
    \lambda=\frac{L_1}{2\ell_1},\qquad\mu=\frac{L_2\sigma}{\lambda\ell_2},\qquad\sigma=\Big(\frac{L_2}{\ell_2}\Big)^{-3/4}\lambda^{-1/7}(8\epsilon)^{4/7},
\end{align}
which satisfy $\mu\leq 1$ since $\epsilon\leq L_1^2/L_2$. By \prop{A_0-cannot}, for any possible quantum algorithm $A_{\quan}$ making $t<T$ queries to the oracle $O_f$ defined in \eqn{FOSP-oracle} encoding $(f_{r_0;U;T},\nabla f_{r_0;U;T})$, we have
\begin{align}
    \Pr_{U,\x_{\text{out}}\sim p_U}\big[\|\nabla \tilde{f}_{T;U}(\x_{\text{out}})\|\leq\lambda\sigma\mu^{3/4}/8\big]
    =\Pr_{U,\x_{\text{out}}\sim p_U}\big[\|\nabla \tilde{f}_{T;U}(\x_{\text{out}})\|\leq\epsilon\big]
    \leq 16Te^{-dr_0^2/(8T)},
\end{align}
where $p_U$ is the probability distribution over $\x\in\mathbb{B}(\0,2\sigma\sqrt{T})$ obtained by measuring the state $A_{\quan}\ket{0}$, indicating that the success probability of $A_{\quan}$ finding an $\epsilon$-FOSP of $\tilde{F}_{T;U}$ is at most $16Te^{-dr_0^2/(8T)}$. 
Moreover, by \lem{FT-conditions} we can derive that, for any $T\times d$ orthogonal matrix $U$, the function $\tilde{F}_{T;U}$ is $(1+\mu)\ell_1= L_1$-smooth and $\lambda\mu\ell_2/\sigma=L_2$-Hessian Lipschitz, with
\begin{align}
    \tilde{F}_{T;U}(\0)-\inf_\x\tilde{F}_{T;U}(\x)\leq 
    \lambda\sigma^2\Big(\frac{\sqrt{\mu}}{2}+10\mu T\Big)\leq\Delta.
\end{align}
\end{proof}

\prop{bounded-lower-bound} shows that, if we restrict the input domain of the function pair $(F,f)$ to a hyperball with radius $\mathcal{R}$, in the worst case every quantum algorithm has to make at least $\Omega\big(\epsilon^{-12/7}\big)$ queries to the noisy evaluation $f$ to find an $\epsilon$-FOSP of $F$ with high probability. Moreover, the dimension $d$ of the hard instance achieving this lower bound is of order $\Omega\big(\epsilon^{-12/7}\log (1/\epsilon)/r_0^2\big)$, where the noise radius $r_0$ is determined by the noise rate with different relations under different noise assumptions, on which a detailed discussion is given in \sec{eps-specific-noise-models} after we extend this lower bound to unbounded input domain in \sec{noisy-lower-unbounded}.


\subsection{Noisy Quantum Lower Bound with Unbounded Input Domain}\label{sec:noisy-lower-unbounded}

In this subsection, we extend the quantum lower bound proved in \prop{bounded-lower-bound} to functions with an unbounded input domains. In particular, Ref.~\cite{carmon2020lower} introduced a method for extending lower bound to unbounded input domain by adding a scaling term on the input vector and additionally introducing a quadratic term. The intuition is that, if the input vector has a large norm, the corresponding function value is almost solely determined by the quadratic term and it cannot be an approximate stationary point. Hence, it is not beneficial for any classical algorithm to explore any point outside a certain bounded region, indicating that the lower bound with an unbounded input domain is the same as the one with a bounded input domain. The same argument also holds for quantum algorithms, as shown in Ref.~\cite{zhang2022quantum}.

Quantitatively, we consider the following $T+1$ dimensional kernel function defined on $\R^{T+1}$,
\begin{align}\label{eqn:FT-square-defn}
    \bar{\mathfrak{F}}_{T;\mu}(\x)
\coloneqq\bar{F}_{T;\mu}(\gamma_\alpha(\x))+\beta\|\sin\x\|^2,
\end{align}
where $\gamma_{\alpha}(\x)$ is defined as
\begin{align}
\gamma_{\alpha}(\x)\coloneqq 
\begin{cases}
\Big(1-\frac{\|\x\|}{\alpha\sqrt{T}}\Big)^3\cdot\x,\quad\|\x\|\leq\alpha\sqrt{T},\\
\0,\quad\|\x\|>\alpha\sqrt{T}.
\end{cases}
\end{align}
By \lem{FT-square-conditions}, finding an $\epsilon$-SOSP or even an $\epsilon$-FOSP of $\bar{\mathfrak{F}}_{T;\mu}$ requires knowledge of all the $T+1$ coordinate directions, if it is projected to a $d$-dimensional space via an arbitrary orthogonal matrix $U\in\R^{d\times(T+1)}$. Moreover, to guarantee that the hard instance satisfies the $B$-boundedness condition required in the empirical risk setting considered in this paper, we additionally add a sine function to the quadratic term and obtain the following hard instance defined on the hypercube $[-\pi\mathcal{L}/2,\pi\mathcal{L}/2]^d$ with $\mathcal{L}=\zeta\alpha\sqrt{T}$ for some constant $\zeta\geq 2$,
\begin{align}\label{eqn:hatFT-square-defn}
    \hat{\mathfrak{F}}_{T;U}(\x)\coloneqq\bar{F}_{T;\mu}\big(\gamma_\alpha(U^{T}\x)\big)+\beta\mathcal{L}^2\|\sin(\x/\mathcal{L})\|^2,
\end{align}
where the constants $\alpha,\beta$ are chosen according to \lem{FT-square-conditions}, and for any $\y\in\R^{d}$, $\sin\y$ is defined as 
\begin{align}
    \sin\y\coloneqq(\sin y_1,\ldots,\sin y_d)^{T}.
\end{align}

\begin{lemma}\label{lem:hatFT-square-conditions}
Consider the function $\hat{\mathfrak{F}}_{T;U}\colon[-\pi\mathcal{L}/2,\pi\mathcal{L}/2]^d\to\R$ defined in Eq.~\eqn{hatFT-square-defn}, suppose that the parameter $\mu$ satisfies $\mu\leq 1$. Then, there exist positive constants $\alpha,\beta,\zeta$ such that
\begin{enumerate}
\item For any $\x\in[-2\zeta\alpha\sqrt{T},2\zeta\alpha\sqrt{T}]^{d}$ such that
\begin{align}
|\<\x,\u^{T}\>|,|\<\x,\u^{T+1}\>|\leq\frac{0.05}{T+1/\sqrt{\mu}},
\end{align}
its gradient satisfies
\begin{align}
\|\nabla\bar{\mathfrak{F}}_{T;U}(\x)\|\geq\mu^{3/4}/32;
\end{align}
\item $\bar{\mathfrak{F}}_{T;U}(\0)-\inf_\x\bar{\mathfrak{F}}_{T;U}(\x)\leq\frac{\sqrt{\mu}}{2}+10\mu T$;
\item For $p=1,2$, the $p$-th order derivatives of $\bar{\mathfrak{F}}_{T;U}$ are $(2\mathbb{I}\{p=1\}+\mu)\ell_p$-Lipschitz continuous in the hyperball $\mathbb{B}(\0,\alpha\sqrt{T})$, where $\ell_p\leq\exp\big(\frac{3p}{2}\log p+cp\big)$ for a numerical constant $c<\infty$.
\end{enumerate}
\end{lemma}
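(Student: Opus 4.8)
The plan is to obtain the three items of the lemma from the corresponding properties of the $(T+1)$-dimensional kernel $\bar{\mathfrak{F}}_{T;\mu}(\x')=\bar F_{T;\mu}(\gamma_\alpha(\x'))+\beta\|\sin\x'\|^2$ furnished by \lem{FT-square-conditions}. The only difference between $\hat{\mathfrak{F}}_{T;U}$ and an orthogonal embedding of $\bar{\mathfrak{F}}_{T;\mu}$ into $\R^d$ is (i) that the ``hard part'' is composed with the surjective linear isometry $\Pi_U\colon\x\mapsto U^{\top}\x$ onto its $(T+1)$-dimensional range, and (ii) that the trigonometric regularizer $\beta\|\sin\cdot\|^2$ is dilated to $\beta\mathcal{L}^2\|\sin(\cdot/\mathcal{L})\|^2$ with $\mathcal{L}=\zeta\alpha\sqrt{T}$. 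Isometry invariance transfers gradient norms, function values, and the Lipschitz moduli of all derivatives of $\bar F_{T;\mu}\circ\gamma_\alpha$ verbatim, so the argument reduces to accounting for the dilation (ii). The only analytic facts about $g(t)=\mathcal{L}^2\sin^2(t/\mathcal{L})$ that I will use are $|g'(t)|\le\mathcal{L}$, $|g''(t)|\le 2$, $|g'''(t)|\le 4/\mathcal{L}$, and the Taylor bound $|g'(t)-2t|\le\tfrac{4}{3}|t|^3/\mathcal{L}^2$.

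Item 2 is immediate: $\gamma_\alpha(\0)=\0$ and $\sin\0=\0$ give $\hat{\mathfrak{F}}_{T;U}(\0)=\bar F_{T;\mu}(\0)=\tfrac{\sqrt{\mu}}{2}+\mu T\,\Gamma(0)$, and $\Gamma(0)<10$ by a one-line estimate of the defining integral; meanwhile $\hat{\mathfrak{F}}_{T;U}(\x)\ge\bar F_{T;\mu}(\gamma_\alpha(U^{\top}\x))\ge 0$ for every $\x$, since the dilated regularizer is nonnegative, the quadratic part of $\bar F_{T;\mu}$ is nonnegative, and $\Gamma\ge 0$ (its global minimum is $\Gamma(1)=0$), so $\inf_\x\hat{\mathfrak{F}}_{T;U}\ge 0$ and subtracting yields the claimed gap. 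Item 3 follows from the third item of \lem{FT-square-conditions}: $\Pi_U$ preserves every derivative-Lipschitz modulus, and the dilated regularizer is in every relevant respect no worse than the one it replaces — its Hessian is diagonal with entries $2\beta\cos(2x_i/\mathcal{L})$, so it has operator norm $\le 2\beta$ exactly as $\beta\|\sin\cdot\|^2$ does, and it is $(4\beta/\mathcal{L})$-Lipschitz, smaller than the $4\beta$-Lipschitzness of the Hessian of $\beta\|\sin\cdot\|^2$. Hence both the $(2+\mu)\ell_1$ smoothness budget and the $\mu\ell_2$ Hessian-Lipschitz budget are met once $\beta$ is a small enough numerical constant, $\mathcal{L}=\zeta\alpha\sqrt T$ is large, and the constant $c$ in $\ell_p\le\exp(\tfrac{3p}{2}\log p+cp)$ is taken large enough to absorb the $O(\beta)$ and $O(\beta/\mathcal{L})$ corrections.

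The heart of the lemma is item 1. The plan is to write $\nabla\hat{\mathfrak{F}}_{T;U}(\x)=U\,\nabla\bar{\mathfrak{F}}_{T;\mu}(U^{\top}\x)+\beta\,\vect{e}(\x)$, where $\vect{e}(\x)$ is the discrepancy between the gradient of the dilated regularizer and the orthogonal embedding of the gradient of $\beta\|\sin\cdot\|^2$; invoke the first item of \lem{FT-square-conditions}, whose hypothesis $|(U^{\top}\x)_T|,|(U^{\top}\x)_{T+1}|\le 0.05/(T+1/\sqrt\mu)$ coincides with the present hypothesis $|\<\x,\u^{(T)}\>|,|\<\x,\u^{(T+1)}\>|\le 0.05/(T+1/\sqrt\mu)$, to obtain a gradient floor $\|\nabla\bar{\mathfrak{F}}_{T;\mu}(U^{\top}\x)\|$ of order $\mu^{3/4}$; and then show that $\beta\|\vect{e}(\x)\|$ stays below a constant fraction of this floor throughout $[-2\zeta\alpha\sqrt T,2\zeta\alpha\sqrt T]^d$. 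This last step calls for a region split. Where the radial cut-off is active, i.e.\ $\|U^{\top}\x\|\le\alpha\sqrt T$, the Taylor bound above controls $\vect{e}$ on the range of $\Pi_U$ once $\sin(2\x/\mathcal{L})$ is replaced by its projection onto that range, and one additionally checks that the gradient surviving in an unexplored coordinate is not cancelled by $\vect{e}$. Where $\gamma_\alpha(U^{\top}\x)=\0$, the hard part of the gradient vanishes, and one is reduced to lower-bounding $\beta\mathcal{L}\|\sin(2\x/\mathcal{L})\|$, which is where the smallness of $\<\x,\u^{(T)}\>$ and $\<\x,\u^{(T+1)}\>$ is used to exclude the (many) spurious critical points of the dilated regularizer. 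I expect this second region to be the main obstacle: unlike the genuinely quadratic regularizer of Carmon's unbounded-domain construction, the trigonometric one has a lattice of critical points, so forcing the order-$\mu^{3/4}$ gradient floor to survive on the whole box, while keeping $\zeta$ and hence the ambient scale $\mathcal{L}$ within the boundedness and Lipschitz budgets of items 2 and 3, is the delicate balance that fixes the numerical constants $\alpha,\beta,\zeta$.
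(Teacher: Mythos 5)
Your handling of items 2 and 3 is fine, and arguably more explicit than the paper's one-line perturbation argument (the paper simply records that $\|\y\|^2$ and $\|\sin\y\|^2$ agree to $O(1/\zeta^2)$ in value, gradient, and Hessian for $\|\y\|\leq 1/(2\zeta)$, and transfers all three conditions at once). The genuine gap is item 1, and specifically your stated goal of showing $\beta\|\vect{e}(\x)\|$ ``stays below a constant fraction of the floor.'' That cannot hold: to leading order the discrepancy $\vect{e}(\x)=\mathcal{L}\sin(2\x/\mathcal{L})-U\sin(2U^{\top}\x)$ equals $2(I-UU^{\top})\x$, whose norm scales with $\|\x\|$ and already reaches $2\alpha\sqrt{T}$ on the sphere of radius $\alpha\sqrt{T}$, which is far above the $\mu^{3/4}/32$ floor for $T$ large. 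What makes item 1 true is not that this piece is small but that, to leading order, it lies in the orthogonal complement of the column space of $U$, so it cannot cancel $U\nabla\bar{\mathfrak{F}}_{T;\mu}(U^{\top}\x)$ and only increases $\|\nabla\hat{\mathfrak{F}}_{T;U}(\x)\|$. The only quantity that actually needs to be beaten by the floor is the Taylor remainder $\|\mathcal{L}\sin(2\x/\mathcal{L})-2\x\|=O(\alpha\sqrt{T}/\zeta^2)$ on $\mathbb{B}(\0,\alpha\sqrt{T})$, which is what taking $\zeta$ large accomplishes; this is exactly the calculus the paper carries out. You gesture at orthogonality (``not cancelled by $\vect{e}$'') but then commit to smallness of the whole discrepancy, which is the wrong invariant. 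Your concern about the lattice of spurious critical points of the dilated trigonometric regularizer outside $\mathbb{B}(\0,\alpha\sqrt{T})$ is legitimate, but the paper does not resolve it inside this lemma either; that region is handled downstream in \thm{unbounded-lower-bound} by noting the regularizer Hessian is positive definite near the hypercube boundary. The inner-ball case, which is the intended scope of item 1, needs the orthogonality observation your decomposition is missing.
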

\begin{proof}
We set the constants $\alpha,\beta$ according to \lem{FT-square-conditions}. Note that for any vector $\y$ with $\|\y\|\leq 1/(2\zeta)$, the values as well as first- and second-order derivatives of $\|\y\|^2$ and $\|\sin\y\|^2$ are close to each other given that $\zeta$ reaches a large enough value that is independent from $d$. Quantitatively, we have
\begin{align}
    \|\y\|^2-\|\sin\y\|^2\leq\sum_{i=1}^d y_i^2-\big(y_i-y_i^3/6\big)^2\leq\sum_{i=1}^d\frac{y_i^4}{3}\leq\frac{1}{48\zeta^4},
\end{align}
and
\begin{align}
    \big\|\nabla\cdot(\|\y\|^2-\|\sin\y\|^2)\big\|=\|2\y-\sin(2\y)\|
    \leq\frac{1}{6}\cdot\frac{1}{\zeta^3}=\frac{1}{6\zeta^3}.
\end{align}
Moreover, we notice that
\begin{align}
    \nabla^2(\|\y\|^2-\|\sin\y\|^2)=I-
    \begin{bmatrix}
    \cos 2y_1,&\cdots,&0\\
    \vdots, &\ddots,&\vdots\\
    0,&\cdots,&\cos 2y_d
    \end{bmatrix},
\end{align}
which leads to
\begin{align}
    \big\|\nabla^2(\|\y\|^2-\|\sin\y\|^2)\big\|\leq\frac{1}{2}\cdot(2/\zeta)^2=\frac{2}{\zeta^2}.
\end{align}
Hence, there exists a large enough $\zeta=O(1/\mu)$ independent from $d$ such that, $\hat{\mathfrak{F}}_{T;U}$ is close enough to the pure rotation of $\bar{\mathfrak{F}}_{T;\mu}$ in the hyperball $\mathbb{B}(\0,\alpha\sqrt{T})$ up to the second-order derivatives, and the above three conditions can be satisfied. 
\end{proof}

Note that if we replicate the hypercube $[-\pi\mathcal{L}/2,\pi\mathcal{L}/2]^d$ in $\R^d$ consecutively and have the function value in each hypercube being $\hat{\mathfrak{F}}_{T;U}$ respectively, the new function defined on $\mathbb{R}^d$ is still infinitely differentiable. Moreover, we can notice that finding an $\epsilon$-SOSP in $\R^d$ is equivalent to finding an $\epsilon$-SOSP in one specific hypercube $[-\pi\mathcal{L}/2,\pi\mathcal{L}/2]^d$, since for any $\x$ on the boundary of $[-\pi\mathcal{L}/2,\pi\mathcal{L}/2]^d$, the Hessian matrix
\begin{align}
\nabla^2\hat{\mathfrak{F}}_{T;U}(\x)=2\beta\cdot
\begin{bmatrix}
\cos(2x_1/\mathcal{L}) & \ldots & 0\\
\vdots & \ddots & \vdots\\
0 & \cdots & \cos(2x_d/\mathcal{L})
\end{bmatrix},
\end{align}
is positive definite with the matrix norm being $2\beta$, indicating that $\x$ cannot be an $\epsilon$-SOSP. Similar to \sec{noisy-lower}, we add scaling parameters $\lambda$ and $\sigma$ to $\hat{\mathfrak{F}}_{T;U}$ and obtain the formal hard function
\begin{align}\label{eqn:tildeFT-square-defn}
\tilde{\mathfrak{F}}_{T;U}(\x)\coloneqq\lambda\sigma^2\hat{\mathfrak{F}}_{T;U}(\x/\sigma).
\end{align}
Moreover, we assume access to the following noisy evaluation $\mathfrak{f}_{r_0;T;U}$ of $\hat{\mathfrak{F}}_{T;U}$,
\begin{align}\label{eqn:f-r0-square-defn}
\mathfrak{f}_{r_0;T;U}&\coloneqq 
\lambda\sigma^2\big[\bar{F}_{T;\mu}\big(\gamma_\alpha(\<\x/\sigma,\u^{(1)}\>),\ldots,\gamma_\alpha(\<\x/\sigma,\u^{(\prog_{r_0}(\gamma_\alpha(\x/\mu))+1)}\>),0,\ldots,0\big)\\
&\qquad\quad+\beta\mathcal{L}\|\sin(\x/(\sigma\mathcal{L}^2))\|^2\big],  
\end{align}
which is encoded in the quantum oracle $O_{\mathfrak{f}}$ with form \eqn{FOSP-oracle}. Then, we can prove the following quantum lower bound via the function pair $(\tilde{\mathfrak{F}}_{T;U},\mathfrak{f}_{r_0;T;U})$.

\begin{theorem}\label{thm:unbounded-lower-bound}
    Let $\Delta,\epsilon,r_0$ be positive and $\epsilon\leq 1/2$, where the noise radius $r_0$ is a parameter related to the noise rate. Then there exist positive numerical constants $c,C,\alpha,\beta\in\mathbb{R}$ and $\ell_q\leq e^{\frac{3q}{2}\log q+Cq}$ for every $q\in\mathbb{N}$, and a set $\Omega$ consisting of function pairs $(\tilde{\mathfrak{F}}_{T;U},\mathfrak{f}_{r_0;T;U})$ with $\tilde{\mathfrak{F}}_{T;U}$ and $\mathfrak{f}_{r_0;T;U}$ defined in \eqn{tildeFT-square-defn} and \eqn{f-r0-square-defn} respectively with
    \begin{align}
        T=\frac{\Delta}{20}\Big(\frac{1}{\ell_2}\Big)^{\frac{5}{4}}\Big(\frac{1}{2\ell_1}\Big)^{\frac{3}{7}}(16\epsilon)^{-12/7},
    \end{align} 
    such that, for any quantum algorithm $A_{\quan}$ making fewer than $T$ queries to the oracle $O_{\mathfrak{f}}$ in the form of \eqn{FOSP-oracle} encoding the function value and gradient of $\mathfrak{f}_{r_0;T;U}$ in \eqn{f-r0-square-defn}, there exists an orthogonal matrix $U$ such that, $A_{\quan}$ cannot find an $\epsilon$-SOSP of the corresponding $\tilde{\mathfrak{F}}_{T;U}$ with probability larger than $e^{-dr_0^2/(2\alpha^2T)}$, where $\tilde{\mathfrak{F}}_{T;U}$ is $B$ bounded, $L_1$-smooth, and $L_2$-Hessian Lipschitz with 
    \begin{align}
        B=O\big(\Delta+\epsilon^{-12/7}\big),\quad L_1=O(1),\quad L_2=O(1),
    \end{align}
    and satisfies
    \begin{align}
        \tilde{\mathfrak{F}}_{T;U}(\0)-\inf_\x\tilde{\mathfrak{F}}_{T;U}(\x)\leq \Delta.
    \end{align}
\end{theorem}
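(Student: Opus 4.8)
The plan is to lift the bounded-domain lower bound of \prop{bounded-lower-bound} to an unbounded input domain by composing the zero-chain of \sec{classical-det} with the norm-damping kernel $\gamma_\alpha$ and replacing the growing quadratic tail $\|\x\|^2$ with the periodic $\|\sin(\x/\mathcal{L})\|^2$, along the definitions \eqn{FT-square-defn}--\eqn{f-r0-square-defn}. I would rely on three structural facts. First, $\gamma_\alpha$ vanishes outside $\mathbb{B}(\0,\alpha\sqrt{T})$ and maps $\R^{T+1}$ into $\mathbb{B}(\0,\alpha\sqrt{T})$, so the informative part of $\mathfrak{f}_{r_0;T;U}$ depends on $\x$ only through $\gamma_\alpha(\x/\sigma)$, whose norm never exceeds $\alpha\sqrt{T}$. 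Second, \lem{FT-square-conditions} (together with the $T$-dimensional zero-chain/large-gradient structure behind \lem{FT-large-gradient} and \defn{zero-chain}) guarantees that every $\epsilon$-SOSP, indeed every $\epsilon$-FOSP, of $\tilde{\mathfrak{F}}_{T;U}$ must have non-negligible overlap with each of the $T+1$ columns $\u^{(1)},\ldots,\u^{(T+1)}$ of $U$. Third, \lem{hatFT-square-conditions} shows that the $\sin$-replacement preserves the zero-chain structure, the $\ell_p$-Lipschitzness of the first two derivatives, and the $O(\sqrt{\mu}+\mu T)$ suboptimality gap on $\mathbb{B}(\0,\alpha\sqrt{T})$, while making the Hessian positive definite (of norm $2\beta$) on the boundary of each hypercube.

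First I would re-run the hybrid argument of \lem{similar-outputs} and \prop{A_0-cannot} essentially verbatim, with $\tilde{F}_{T;U},f_{r_0;T;U}$ replaced by $\tilde{\mathfrak{F}}_{T;U},\mathfrak{f}_{r_0;T;U}$ and the truncated oracles $O_{\mathfrak{f};t;U_t}$ defined analogously to \eqn{unitary-sequences}. The only substantive change is the concentration step: for a fixed query $\x$, the oracles $O_{\mathfrak{f};t;U_t}$ and $O_{\mathfrak{f};T;U}$ disagree only if $|\langle\gamma_\alpha(\x/\sigma),\u^{(i)}\rangle|\geq r_0$ for some $t\leq i\leq T$, and since $\{\u^{(t)},\ldots,\u^{(T)}\}$ are uniform in the $(d-t+1)$-dimensional orthogonal complement of $\spn\{\u^{(1)},\ldots,\u^{(t-1)}\}$ while $\|\gamma_\alpha(\x/\sigma)\|\leq\alpha\sqrt{T}$, \lem{vector-from-sphere} gives $\Pr[\,|\langle\u^{(i)},\gamma_\alpha(\x/\sigma)\rangle|>r_0\,]\leq 2e^{-dr_0^2/(2\alpha^2T)}$. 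A union bound over the $\leq T$ indices, followed by the Cauchy--Schwarz telescoping of \prop{A_0-cannot}, then shows that any quantum algorithm making fewer than $T$ queries to $O_{\mathfrak{f}}$ outputs a point where the gradient of $\tilde{\mathfrak{F}}_{T;U}$ has norm larger than $\epsilon$ — hence, by the second fact above, not even an $\epsilon$-FOSP and a fortiori not an $\epsilon$-SOSP — except with probability at most $e^{-dr_0^2/(2\alpha^2T)}$, where under the mild condition $d=\Omega\big(\alpha^2T\log(T)/r_0^2\big)$ the polynomial-in-$T$ prefactors coming from the union bound and the telescoping get absorbed into the exponent.

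Next I would fix the scaling. Choosing $\lambda,\sigma,\mu$ as in the proof of \prop{bounded-lower-bound} (so $\mu\leq 1$ whenever $\epsilon\leq 1/2$), picking $\zeta=O(1/\mu)$ via \lem{hatFT-square-conditions} so the $\sin$-tail agrees with the quadratic up to second order on $\mathbb{B}(\0,\alpha\sqrt{T})$, and setting $T=\tfrac{\Delta}{20}(1/\ell_2)^{5/4}(1/2\ell_1)^{3/7}(16\epsilon)^{-12/7}$ so that $\lambda\sigma^2(\tfrac{\sqrt{\mu}}{2}+10\mu T)\leq\Delta$, I would invoke \lem{FT-conditions} and \lem{hatFT-square-conditions}, together with a crude bound on the $\beta\mathcal{L}^2\|\sin(\cdot/\mathcal{L})\|^2$ term on the relevant region, to obtain $L_1=O(1)$-smoothness, $L_2=O(1)$-Hessian-Lipschitzness, $B=O(\Delta+\epsilon^{-12/7})$-boundedness, and $\tilde{\mathfrak{F}}_{T;U}(\0)-\inf_\x\tilde{\mathfrak{F}}_{T;U}(\x)\leq\Delta$; the change of constant from $8\epsilon$ in \prop{bounded-lower-bound} to $16\epsilon$ here absorbs the factor lost when $\gamma_\alpha$ and the $\sin$-regularization shrink the worst-case gradient (compare the $\mu^{3/4}/32$ threshold of \lem{hatFT-square-conditions} with the $\mu^{3/4}/8$ of \prop{A_0-cannot}). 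Finally, infinite differentiability across the hypercube faces together with the positive-definite boundary Hessian makes the periodic tiling of $\hat{\mathfrak{F}}_{T;U}$ over $\R^d$ a single smooth function whose $\epsilon$-SOSPs coincide with those inside one hypercube, so the unbounded-domain instance is no easier than the bounded one.

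I expect the main obstacle to be not the hybrid and concentration argument — which transfers mechanically — but the bookkeeping required to satisfy the \emph{simultaneous} constraints $\mu\leq 1$, $B=O(\Delta+\epsilon^{-12/7})$, $L_1,L_2=O(1)$, gap $\leq\Delta$, and the survival of the ``$\epsilon$-FOSP $\Rightarrow$ large overlap with every $\u^{(i)}$'' property under a single choice of $(\lambda,\sigma,\mu,\alpha,\beta,\zeta)$. In particular one must verify that composing with $\gamma_\alpha$ creates no spurious approximate stationary points near the sphere $\|\x\|=\alpha\sigma\sqrt{T}$ (the content of \lem{FT-square-conditions}) and that the $\sin$-regularization, being only a second-order-accurate surrogate for $\|\cdot\|^2$, keeps the derivative and boundedness estimates uniform in $d$ (the content of \lem{hatFT-square-conditions}); extracting these with matching constants is the technical heart of the reduction.
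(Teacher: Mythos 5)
Your proposal is correct and captures all the key ideas, but you take a slightly different route from the paper at one point, and that point is worth noting. The paper's proof is a black-box reduction: it fixes the scaling parameters, invokes \lem{hatFT-square-conditions} to conclude that an $\epsilon$-SOSP of $\tilde{\mathfrak{F}}_{T;U}$ yields a $2\epsilon$-SOSP of $\tilde{F}_{T;U}$, observes that one query to the $f_{r_0;T;U}$-oracle can be simulated from one query to the $\mathfrak{f}_{r_0;T;U}$-oracle, and then directly invokes \prop{bounded-lower-bound}. By contrast, you propose to re-run the hybrid argument of \lem{similar-outputs} and \prop{A_0-cannot} from scratch on the unbounded-domain pair $(\tilde{\mathfrak{F}}_{T;U},\mathfrak{f}_{r_0;T;U})$, with the concentration step carried out for $\gamma_\alpha(\x/\sigma)$ rather than for a point in a ball of radius $2\sigma\sqrt{T}$.

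Both arguments are sound. Your direct re-run is arguably more transparent: the paper's oracle simulation is stated tersely (``one query to $O_f$ can be implemented via one query to $O_{\mathfrak{f}}$''), and checking that the simulation goes in the direction a lower-bound transfer actually needs — that the $\mathfrak{f}$-oracle can be simulated from the $f$-oracle, not merely vice versa, so that a cheap $O_{\mathfrak{f}}$-algorithm implies a cheap $O_f$-algorithm — requires some care that the paper glosses over. Your version sidesteps this entirely by redoing the telescoping with the $\mathfrak{f}$-oracles directly; the only new input is that $\|\gamma_\alpha(\x/\sigma)\|\le\alpha\sqrt{T}$ for all $\x$, which gives the exponent $dr_0^2/(2\alpha^2 T)$ appearing in the theorem via \lem{vector-from-sphere}. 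You also correctly track the factor-of-two change from $8\epsilon$ to $16\epsilon$ as arising from the degradation of the gradient-lower-bound constant (from $\mu^{3/4}/8$ in \lem{FT-large-gradient} to $\mu^{3/4}/32$ in \lem{hatFT-square-conditions}), correctly verify the $B$-boundedness and $O(1)$-Lipschitz bounds via the tail estimates on $\beta\mathcal{L}^2\|\sin(\cdot/\mathcal{L})\|^2$, and correctly invoke the positive-definiteness of the Hessian on the hypercube faces to justify that the periodic tiling of $\hat{\mathfrak{F}}_{T;U}$ has all its $\epsilon$-SOSPs in a single cube. One minor caveat, which the paper also leaves implicit: the $16Te^{-dr_0^2/(8T)}$-type prefactor from the union bound and Markov step in \prop{A_0-cannot} must be absorbed into the exponent, which is fine for $d=\Omega(\alpha^2 T\log T/r_0^2)$ but should be stated explicitly if you carry out the re-run.
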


\begin{proof}
Since the functions and noisy evaluations in each hypercube are the same, without loss of generality we assume all queries happen in the hypercube $[-\pi\sigma\mathcal{L}/2,\pi\sigma\mathcal{L}/2]^d$ centered at $\0$. Similar to the setting of \prop{bounded-lower-bound}, we set the scaling parameters $\lambda,\mu$ and $\mu$ in $\tilde{\mathfrak{F}}_{T;U}$ and $\mathfrak{f}_{r_0;T;U}$ to be
\begin{align}
\lambda=\frac{1}{2\ell_1},\quad\mu=\frac{\sigma}{\lambda\ell_2},\quad\sigma=\ell_2^{3/4}\lambda^{-2/7}(16\epsilon)^{4/7},
\end{align}
which satisfies $\mu\leq 1$ since $\epsilon\leq 1$. By \lem{hatFT-square-conditions}, finding an $\lambda\sigma\mu^{3/4}/32=\epsilon$-SOSP of $\tilde{\mathfrak{F}}$ with high probability requires complete knowledge of all the $T+1$ columns of the matrix $U$. Equivalently, we can find an $2\epsilon$-SOSP of the function $\tilde{F}_{T;U}$ by finding an $\frac{\epsilon}{2}$-SOSP of $\tilde{\mathfrak{F}}_{T;U}$ with the same $U$ and same settings of parameters, which by \prop{bounded-lower-bound} requires at least $T$ queries to the quantum oracle $O_f$ encoding the noisy evaluation $f_{r_0;T;U}$ of $\tilde{F}_{T;U}$ to guarantee a success probability at least $e^{-dr_0^2/(2\alpha^2T)}$.

In addition, we notice that one query to the quantum oracle $O_f$ can be implemented via one query to the quantum oracle $O_{\mathfrak{f}}$ encoding the noisy evaluation $\mathfrak{f}_{r_0;U;T}$. Hence, by \prop{bounded-lower-bound} we can claim that to find an $\epsilon$-SOSP of $\tilde{\mathfrak{F}}_{T;U}$ with success probability at least $e^{-dr_0^2/(2\alpha^2T)}$, it takes at least
\begin{align}
    T=\frac{\Delta}{20}\Big(\frac{1}{\ell_2}\Big)^{\frac{5}{4}}\Big(\frac{1}{2\ell_1}\Big)^{\frac{3}{7}}(16\epsilon)^{-12/7}
\end{align}
queries to the oracle $O_{\mathfrak{f}}$.

Moreover, by the second entry of \lem{hatFT-square-conditions}, we know that 
\begin{align}
    \tilde{\mathfrak{F}}_{T;U}(\0)-\inf_\x\tilde{\mathfrak{F}}_{T;U}(\x)\leq\lambda\sigma^2\Big(\frac{\mu}{2}+10\mu T\Big)\leq \Delta.
\end{align}
Further, we can observe that
\begin{align}
    \sup_\x\tilde{\mathfrak{F}}_{T;U}(\x)-\tilde{\mathfrak{F}}_{T;U}(\0)&\leq \lambda\sigma^2\sup_{\|\x\|\leq\alpha\sqrt{T}}\bar{F}_{T;\mu}(\x)+\lambda\sigma^2\beta\mathcal{L}^2\sup_\x\|\sin(\x/(\sigma\mathcal{L}))\|^2\\
    &\leq \lambda\sigma^2\big(2\alpha^2T+60\alpha^2T+\beta\zeta^2\alpha^2 T\big)\\
    &=O(\lambda\sigma^2\zeta^2T)=O(\lambda\sigma^2\mu^{-2}T)\\
    &=O\big(\epsilon^{-12/7}\big),
\end{align}
indicating that $\tilde{\mathfrak{F}}_{T;U}(\x)$ is $B$-bounded for $B=O(\Delta+\epsilon^{-12/7})$.

By the third entry of \lem{hatFT-square-conditions}, $\tilde{F}_{T;U}$ is $\lambda(2+\mu)\ell_1=O(1)$-smooth and $\mu\ell_2/\sigma=O(1)$-Hessian Lipschitz in the region $\mathbb{B}(\0,\alpha\sigma\sqrt{T})$. For any point $\x\in[-\pi\mathcal{L}\sigma/2,\pi\mathcal{L}\sigma/2]^d-\mathbb{B}(\0,\alpha\sigma\sqrt{T})$, we have
\begin{align}
    \|\nabla^2\tilde{\mathfrak{F}}_{T;U}(\x)\|=\beta\mathcal{L}^2\sigma^2\Big\|\nabla^2\sin^2\Big(\frac{\x}{\sigma\mathcal{L}}\Big)\Big\|\leq 4\beta=O(1),
\end{align}
and 
\begin{align}
\|\nabla^3\tilde{F}_{T;U}(\x)\|=\beta\mathcal{L}^2\sigma^2\Big\|\nabla^3\sin^2\Big(\frac{\x}{\sigma\mathcal{L}}\Big)\Big\|\leq\frac{8\beta}{\sigma\mathcal{L}}=O(1).
\end{align}
Hence, we can conclude that $\tilde{\mathfrak{F}}_{T;U}$ is $O(1)$-smooth and $O(1)$-Hessian Lipschitz in the entire space $\R^d$.
\end{proof}


\subsection{Lower Bound for Quantum Algorithms with Noisy Zeroth- and First-order Oracles}\label{sec:eps-specific-noise-models}

In this subsections, we specify the value of noise radius $r_0$ appearing in \thm{unbounded-lower-bound} when we are given noisy zeroth-order oracle or noisy first-order oracle satisfying \assume{ZeroProb} or \assume{FirstProb}, respectively, and further discuss the requirement on dimension $d$ to obtain our lower bound in $\epsilon$. 

We first discuss the setting with zeroth-order oracle access.

\begin{corollary}[Formal version of \thm{QuantumCarmon}, Part 1]\label{cor:zeroth-lower-bound}
    Let $\Delta,\epsilon>0$ and $\epsilon\leq 1/2$. Then there exist positive numerical constants $c,C,\alpha,\beta\in\mathbb{R}$ and $\ell_q\leq e^{\frac{3q}{2}\log q+Cq}$ for every $q\in\mathbb{N}$, and a set $\Omega$ consisting of function pairs $(F,f)$ satisfying \assume{ZeroProb} with some $\nu$ satisfying
    \begin{align}
    \nu=\Omega\big(\epsilon^{-16/7}/d\big),
    \end{align}
    such that, for any quantum algorithm $A_{\quan}$ making fewer than $\Theta(\epsilon^{-12/7})$ queries to the oracle $O_f$ defined in \eqn{FOSP-oracle} encoding the function value and gradient of $f$, there exists a function pair $(F,f)$ such that $A_{\quan}$ cannot find an $\epsilon$-SOSP of $F$ with probability larger than $1/3$, where $F$ is $B$ bounded, $L_1$-smooth, and $L_2$-Hessian Lipschitz with 
    \begin{align}
        B=O\big(\Delta+\epsilon^{-12/7}\big),\quad L_1=O(1),\quad L_2=O(1),
    \end{align}
    and satisfies
    \begin{align}
        F(\0)-\inf_\x F_{T;U}(\x)\leq \Delta.
    \end{align}
\end{corollary}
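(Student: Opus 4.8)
The plan is to obtain \cor{zeroth-lower-bound} as a specialization of \thm{unbounded-lower-bound}, choosing the ambient dimension $d$ and the noise radius $r_0$ so that, on the one hand, the failure probability $e^{-dr_0^2/(2\alpha^2T)}$ guaranteed there stays below $1/3$, and on the other hand the pointwise gap $\|\tilde{\mathfrak{F}}_{T;U}-\mathfrak{f}_{r_0;T;U}\|_\infty$ matches the advertised noise rate $\nu=\Omega(\epsilon^{-16/7}/d)$ of \assume{ZeroProb}. Concretely, I would invoke \thm{unbounded-lower-bound} with $T=\Theta(\epsilon^{-12/7})$ and the scaling parameters $\lambda,\mu,\sigma$ fixed there, so that $\lambda=\Theta(1)$, $\sigma=\Theta(\epsilon^{4/7})$, and hence $\lambda\sigma^2=\Theta(\epsilon^{8/7})$; then, for any $d\geq 4T$, take $r_0$ with $r_0^2=\Theta(T/d)$, the hidden constant chosen large enough that $dr_0^2/(2\alpha^2T)\geq\ln 3$, which drives the success probability of any quantum algorithm making fewer than $T$ queries down to at most $1/3$.

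The first substantive step is to convert the noise radius $r_0$ into an honest bound on $\|F-f\|_\infty$, where $F=\tilde{\mathfrak{F}}_{T;U}$ and $f=\mathfrak{f}_{r_0;T;U}$. By construction $f$ agrees with $F$ except that it suppresses the contribution of the coordinates $\<\x/\sigma,\u^{(i)}\>$ with $i>\prog+1$, each of which --- on the region where the two oracles can disagree --- has magnitude below $\sigma r_0$. Bounding termwise inside the kernel $\bar F_{T;\mu}$, and using the Lipschitz estimates of \lem{FT-square-conditions}, each affected quadratic coupling $\tfrac12(x_{i+1}-x_i)^2$ changes by $O(r_0^2)$ and each affected $\Gamma$-term by $O(r_0^3)$, over at most $T$ coordinates; after rescaling by $\lambda\sigma^2$ this gives $\|F-f\|_\infty=O(\lambda\sigma^2\,T\,r_0^2)$. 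With $r_0^2=\Theta(T/d)$ this equals $\Theta(\lambda\sigma^2 T^2/d)=\Theta(\epsilon^{8/7}\cdot\epsilon^{-24/7}/d)=\Theta(\epsilon^{-16/7}/d)$, so the pair $(F,f)$ satisfies \assume{ZeroProb} with $\nu=\Theta(\epsilon^{-16/7}/d)=\Omega(\epsilon^{-16/7}/d)$.

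It then remains to assemble the pieces. The oracle $O_f$ in \eqn{FOSP-oracle} encoding $(f,\nabla f)$ for $f=\mathfrak{f}_{r_0;T;U}$ is precisely the oracle $O_{\mathfrak{f}}$ appearing in \thm{unbounded-lower-bound}, and $T=\Theta(\epsilon^{-12/7})$; hence for every quantum algorithm making fewer than $\Theta(\epsilon^{-12/7})$ queries to $O_f$ there is a rotation matrix $U$ for which the algorithm outputs an $\epsilon$-SOSP of $F=\tilde{\mathfrak{F}}_{T;U}$ with probability at most $e^{-dr_0^2/(2\alpha^2T)}\leq 1/3$. The structural claims $B=O(\Delta+\epsilon^{-12/7})$, $L_1=O(1)$, $L_2=O(1)$, and $F(\0)-\inf_\x F(\x)\leq\Delta$ carry over verbatim from \thm{unbounded-lower-bound}. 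I expect the main obstacle --- though it is essentially bookkeeping --- to be verifying that the two parameter requirements are mutually consistent: the lower bound needs $r_0^2\gtrsim T/d$ together with the condition $d\geq 4T$ from \lem{similar-outputs}, whereas the target noise rate pins $\lambda\sigma^2 T r_0^2\asymp\epsilon^{-16/7}/d$; since the dimension is a free parameter, both hold for every $d\geq 4T$ with $r_0^2=\Theta(T/d)$, and the termwise noise estimate then closes the argument.
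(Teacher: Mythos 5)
Your proposal is correct and takes essentially the same route as the paper: specialize \thm{unbounded-lower-bound} with $r_0^2=\Theta(T/d)$ so the failure probability $e^{-dr_0^2/(2\alpha^2T)}$ falls below $1/3$, then convert $r_0$ into $\nu=\Omega(\epsilon^{-16/7}/d)$. The only differences are bookkeeping: the paper routes the noise estimate through \lem{r0-max-differences}, which also records a boundary cross term $2\alpha r_0\sqrt{T}$ that your termwise estimate drops, and that lemma requires $r_0<\sigma$, so the valid dimension range is $d=\Omega(\epsilon^{-20/7})$ rather than merely $d\geq 4T$; since the dropped term is nonnegative and the dimension is a free parameter, neither point affects your conclusion.
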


\begin{proof}
We adopt the settings of functions and parameters in \thm{unbounded-lower-bound} and set $\Omega$ to be
\begin{align}
    \Omega=\{(\tilde{\mathfrak{F}}_{T;U},\mathfrak{f}_{r_0;T;U})\,|\,U\in\R^{d\times(T+1)}\text{ s.t. }U^\top U=I\},
\end{align}
where 
\begin{align}
    T=\frac{\Delta}{20}\Big(\frac{1}{2\ell_1}\Big)^{\frac{3}{7}}\Big(\frac{1}{\ell_2}\Big)^{\frac{5}{4}}(16\epsilon)^{-12/7}.
\end{align}
By \lem{r0-max-differences}, the parameter $r_0$ satisfies
\begin{align}
\lambda\sigma^2(50r_0^2T+2\alpha r_0\sqrt{T})\leq\nu.
\end{align}
Moreover, by \thm{unbounded-lower-bound}, if the dimension $d$ satisfies
\begin{align}\label{eqn:zeroth-order-d}
    d\geq \frac{4\alpha^2T}{r_0^2},
\end{align}
then for any quantum algorithm $A_{\quan}$ making $T$ queries to $O_f$, there exists a function pair\\ $(F,f)=(\tilde{\mathfrak{F}}_{T;U},\mathfrak{f}_{r_0,T;U})\in\Omega$ such that the success probability of $A_{\quan}$ finding an $\epsilon$-SOSP of $\tilde{\mathfrak{F}}_{T;U}$ is at most
\begin{align}
    \exp\big(-dr_0^2/(2\alpha^2T)\big)\leq\frac{1}{3}.
\end{align}
In order to guarantee inequality \eqn{zeroth-order-d}, we can require $\nu$ to satisfy
\begin{align}
\nu\geq\frac{4\alpha^2(50+2\alpha)\lambda\sigma^2T^2}{d}\geq\Omega(\epsilon^{-16/7}/d).
\end{align}
Moreover, by \thm{unbounded-lower-bound} we can conclude that $F=\tilde{\mathfrak{F}}_{T;U}$ is $O(\Delta+\epsilon^{-12/7})$-bounded, $O(1)$-smooth and $O(1)$ Hessian Lipschitz with
    \begin{align}
        F(\0)-\inf_\x F_{T;U}(\x)\leq\Delta.
    \end{align}
\end{proof}

A similar conclusion can be obtained concerning the setting with first-order oracle access.

\begin{corollary}[Formal version of \thm{QuantumCarmon}, Part 2 
]\label{cor:first-lower-bound}
    Let $\Delta,\epsilon>0$ and $\epsilon\leq 1/2$. Then there exist positive numerical constants $c,C,\alpha,\beta\in\mathbb{R}$ and $\ell_q\leq e^{\frac{3q}{2}\log q+Cq}$ for every $q\in\mathbb{N}$, and a set $\Omega$ consisting of function pairs $(F,f)$ satisfying \assume{FirstProb} except the smoothness condition of $f$ with some $\tilde{\nu}$ satisfying
    \begin{align}
    \tilde{\nu}=\Omega(\epsilon^{-8/7}/\sqrt{d})
    \end{align}
    such that, for any quantum algorithm $A_{\quan}$ making fewer than $\Theta(\epsilon^{-12/7})$ queries to the oracle $O_f$ defined in \eqn{FOSP-oracle} encoding the function value and gradient of $f$, there exists a function pair $(F,f)$ such that $A_{\quan}$ cannot find an $\epsilon$-SOSP of $F$ with probability larger than $1/3$, where $F$ is $B$ bounded, $L_1$-smooth, and $L_2$-Hessian Lipschitz with 
    \begin{align}
        B=O\big(\Delta+\epsilon^{-12/7}\big),\quad L_1=O(1),\quad L_2=O(1),
    \end{align}
    and satisfies
    \begin{align}
        F(\0)-\inf_\x F_{T;U}(\x)\leq \Delta.
    \end{align}
\end{corollary}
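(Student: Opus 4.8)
The plan is to rerun the argument of \cor{zeroth-lower-bound} essentially verbatim, using the \emph{same} family of hard instances $\{(\tilde{\mathfrak{F}}_{T;U},\mathfrak{f}_{r_0;T;U})\}_{U}$ supplied by \thm{unbounded-lower-bound} and the \emph{same} parameters, namely $T=\frac{\Delta}{20}\big(\frac{1}{2\ell_1}\big)^{3/7}\big(\frac{1}{\ell_2}\big)^{5/4}(16\epsilon)^{-12/7}$, $\lambda=\frac{1}{2\ell_1}$, $\mu=\sigma/(\lambda\ell_2)$, $\sigma=\ell_2^{3/4}\lambda^{-2/7}(16\epsilon)^{4/7}$, and $\Omega=\{(\tilde{\mathfrak{F}}_{T;U},\mathfrak{f}_{r_0;T;U})\colon U\in\R^{d\times(T+1)},\ U^{\top}U=I\}$. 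The query model is unchanged: the algorithm is still granted the strong oracle $O_f$ of \eqn{FOSP-oracle} returning both $f(\x)$ and $\nabla f(\x)$, so no change of oracle is required. The \emph{only} departure from \cor{zeroth-lower-bound} lies in the noise accounting: instead of controlling the zeroth-order gap $\|\mathfrak{f}_{r_0;T;U}-\tilde{\mathfrak{F}}_{T;U}\|_\infty$ by \lem{r0-max-differences}, I must control the first-order gap $\|\nabla\mathfrak{f}_{r_0;T;U}-\nabla\tilde{\mathfrak{F}}_{T;U}\|_\infty$ in terms of the noise radius $r_0$. Note that $\mathfrak{f}_{r_0;T;U}$ has a discontinuous gradient across the level sets of $\prog$, which is precisely why the pair can be claimed to satisfy \assume{FirstProb} only \emph{up to} the smoothness hypothesis on $f$.

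The one new ingredient is a first-order counterpart of \lem{r0-max-differences}, namely the estimate
\begin{align}
\sup_{\x}\big\|\nabla\mathfrak{f}_{r_0;T;U}(\x)-\nabla\tilde{\mathfrak{F}}_{T;U}(\x)\big\|_\infty\;\leq\;O\big(\lambda\sigma\,r_0\sqrt{T}\big).
\end{align}
The two functions differ only in that, inside the chain kernel $\bar{F}_{T;\mu}$, each argument $\<\gamma_\alpha(\x/\sigma),\u^{(j)}\>$ whose absolute value is below $r_0$ is replaced by $0$; such truncated coordinates have magnitude at most $r_0$, and there are at most $T$ of them. Since $\nabla\bar{F}_{T;\mu}$ is the sum of the quadratic coupling terms $\frac{1}{2}(x_{i+1}-x_i)^2$, whose gradient is affine (hence $1$-Lipschitz) in the coordinates, and the terms $\mu\Gamma(x_i)$, whose derivative $\mu\Gamma'(x)=120\mu\,x^2(x-1)/(1+x^2)$ is $O(\mu r_0^2)$ near the origin, zeroing a single truncated coordinate perturbs $\nabla\bar{F}_{T;\mu}$ by $O(r_0)$ in the $\ell_2$-norm, the $\Gamma'$ part being lower order. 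The total perturbation is a vector supported on at most $T$ coordinates, hence of $\ell_2$-norm $O(r_0\sqrt T)$; composing with the Lipschitz map $\gamma_\alpha$, using the scaling identity $\nabla\tilde{\mathfrak{F}}_{T;U}(\x)=\lambda\sigma\,\nabla\hat{\mathfrak{F}}_{T;U}(\x/\sigma)$, and bounding $\|\cdot\|_\infty\leq\|\cdot\|_2$ gives the displayed bound. The $\beta\|\sin(\cdot)\|^2$ part of $\mathfrak{f}_{r_0;T;U}$ is untouched by the truncation and contributes nothing. (Contrast \lem{r0-max-differences}, whose zeroth-order gap is a \emph{scalar} sum of $T$ terms of size $O(r_0^2)$, hence of order $\lambda\sigma^2 r_0^2 T$: differentiating trades one power of $r_0$ for a $1/\sqrt T$ concentration factor, which is exactly why the first-order noise level comes out larger by a factor $\sim\sqrt d$.)

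The remainder is bookkeeping identical to \cor{zeroth-lower-bound}. Fix the ambient dimension $d$ and take $r_0=\Theta(\sqrt{T/d})$, which is exactly what is needed to meet the hypothesis $d\geq 4\alpha^2T/r_0^2$ of \thm{unbounded-lower-bound}. Then the pair $(\tilde{\mathfrak{F}}_{T;U},\mathfrak{f}_{r_0;T;U})$ satisfies \assume{FirstProb}, except for the smoothness of $f$, with $\tnu=\Theta(\lambda\sigma r_0\sqrt T)=\Theta(\lambda\sigma T/\sqrt d)$; substituting $\lambda=\Theta(1)$, $\sigma=\Theta(\epsilon^{4/7})$ and $T=\Theta(\epsilon^{-12/7})$ yields $\tnu=\Theta(\epsilon^{-8/7}/\sqrt d)$, as claimed. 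By \thm{unbounded-lower-bound}, any quantum algorithm making fewer than $T=\Theta(\epsilon^{-12/7})$ queries to $O_{\mathfrak{f}}$ finds an $\epsilon$-SOSP of the matching $\tilde{\mathfrak{F}}_{T;U}$ with probability at most $e^{-dr_0^2/(2\alpha^2T)}\leq e^{-2}<1/3$, and the same theorem certifies that $\tilde{\mathfrak{F}}_{T;U}$ is $O(\Delta+\epsilon^{-12/7})$-bounded, $O(1)$-smooth and $O(1)$-Hessian Lipschitz with $\tilde{\mathfrak{F}}_{T;U}(\0)-\inf_{\x}\tilde{\mathfrak{F}}_{T;U}(\x)\leq\Delta$. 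This completes the reduction.

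The main obstacle is the first-order counterpart of \lem{r0-max-differences}: one must check carefully that the truncation moves $\nabla\bar{F}_{T;\mu}$ by only $O(r_0\sqrt T)$ rather than $O(r_0T)$ — that the near-decoupled structure of the zero-chain prevents cross-terms from accumulating linearly — and that passing to the $\|\cdot\|_\infty$-norm and composing through $\gamma_\alpha$ near the boundary of its support do not erode the $\sqrt d$ in the denominator of $\tnu$. Everything else, including the hybrid-argument machinery, \lem{hatFT-square-conditions}, and the dimension accounting, is inherited unchanged from \thm{unbounded-lower-bound} and from the already-established \cor{zeroth-lower-bound}.
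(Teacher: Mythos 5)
Your proposal is correct and takes essentially the same route as the paper. The one thing worth noting is that the ``new ingredient'' you identify — the first-order bound $\sup_\x\|\nabla\mathfrak{f}_{r_0;T;U}(\x)-\nabla\tilde{\mathfrak{F}}_{T;U}(\x)\|\leq O(\lambda\sigma r_0\sqrt{T})$ — is already the second inequality in \lem{r0-max-differences} (the lemma controls both the zeroth- and first-order gaps), so the paper's proof of \cor{first-lower-bound} simply cites that lemma rather than rederiving it; your sketch of why the zero-chain's near-decoupling gives an $r_0\sqrt{T}$ rather than $r_0 T$ perturbation agrees with the paper's term-by-term bound of $6r_0\sqrt{T}$.
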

\begin{remark}
One may notice that in the statement of \cor{first-lower-bound}, the hard instance $(F,f)$ we consider only satisfies part of \assume{FirstProb} except the smoothness condition of $f$. Nevertheless, adopting a similar smoothing technique presented in \sec{DLowerFirst}, we can modify the hard instance to further satisfy the smoothness condition of $f$ without affecting the asymptotic lower bound.
\end{remark}

\begin{proof}
We adopt the settings of functions and parameters in \thm{unbounded-lower-bound} and set $\Omega$ to be
\begin{align}
    \Omega=\big\{(\tilde{\mathfrak{F}}_{T;U},\mathfrak{f}_{r_0;T;U})\,|\,U\in\R^{d\times(T+1)}\text{ s.t. }U^\top U=I\big\},
\end{align}
where 
\begin{align}
    T=\frac{\Delta}{20}\Big(\frac{1}{2\ell_1}\Big)^{\frac{3}{7}}\Big(\frac{1}{\ell_2}\Big)^{\frac{5}{4}}(16\epsilon)^{-12/7}.
\end{align}
By , the parameter $r_0$ satisfies
\begin{align}
    6\lambda\sigma r_0\sqrt{T}\leq\tilde{\nu}.
\end{align} 
Moreover, by \thm{unbounded-lower-bound}, if the dimension $d$ satisfies
\begin{align}
    d\geq \frac{4\alpha^2T}{r_0^2},
\end{align}
then for any quantum algorithm $A_{\quan}$ making $T$ queries to $O_f$, there exists a function pair\\ $(F,f)=(\tilde{\mathfrak{F}}_{T;U},\mathfrak{f}_{r_0,T;U})\in\Omega$ such that the success probability of $A_{\quan}$ finding an $\epsilon$-SOSP of $\tilde{\mathfrak{F}}_{T;U}$ is at most
\begin{align}\label{eqn:eps-first-order-error-probability}
    \exp\big(-dr_0^2/(2\alpha^2T)\big)\leq\frac{1}{3}.
\end{align}
In order to guarantee inequality \eqn{eps-first-order-error-probability}, we can require $\tilde{nu}$ to satisfy
\begin{align}
\tilde{\nu}\geq6\lambda\sigma \cdot\sqrt{4\alpha^2T^2/d}=\Omega(\epsilon^{-8/7}/\sqrt{d}).
\end{align}
Moreover, by \thm{unbounded-lower-bound} we can conclude that $F=\tilde{\mathfrak{F}}_{T;U}$ is $O(\Delta+\epsilon^{-12/7})$-bounded, $O(1)$-smooth and $O(1)$ Hessian Lipschitz with
    \begin{align}
        F(\0)-\inf_\x F_{T;U}(\x)\leq\Delta.
    \end{align}
\end{proof}


\section*{Acknowledgement}
We thank Yizhou Liu for helpful discussions about the quantum tunneling walk in~\cite{liu2022quantum}. CZ was supported by the AFOSR under grant FA9550-21-1-039. TL was supported by a startup fund from Peking University, and the Advanced Institute of Information Technology, Peking University.


\bibliographystyle{myhamsplain}
\bibliography{QuantumNonconvexEmpirical}

\newpage
\appendix

\section{Auxiliary Lemmas}\label{append:existing-lemmas}
\begin{lemma}[Lemma 3, Ref.~\cite{carmon2021lower}]\label{lem:FT-large-gradient}
Consider the function $\bar{F}_{T;\mu}\colon\R^{T+1}\to\R$ defined in Eq.~\eqn{FT-defn}, suppose the parameter $\mu$ satisfies $\mu\leq 1$. Then for any $\x\in\R^{T+1}$ such that \footnote{The condition below is a bit different from the original condition in Lemma 3 of \cite{carmon2021lower}, which is $x_T=x_{T+1}=0$. Nevertheless, the following stricter conditions can be achieved with only minor modifications to the original proof.}
\begin{align}
    |x_T|,|x_{T+1}|\leq\frac{0.1}{T+1/\sqrt{\mu}},
\end{align}
we have
\begin{align}
\big\|\nabla\bar{F}_{T;\mu}(\x)\big\|\geq \mu^{3/4}/8.
\end{align}
\end{lemma}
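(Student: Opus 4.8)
The plan is to follow the proof of Lemma~3 of Ref.~\cite{carmon2021lower}, which establishes the same bound under the stronger hypothesis $x_T=x_{T+1}=0$, and to upgrade it to the perturbative hypothesis stated here. Differentiating $\bar{F}_{T;\mu}$ in \eqn{FT-defn} gives the components $\nabla_1\bar{F}_{T;\mu}(\x)=\sqrt{\mu}\,(x_1-1)+(x_1-x_2)+\mu\Gamma'(x_1)$, $\nabla_i\bar{F}_{T;\mu}(\x)=-(x_{i-1}-2x_i+x_{i+1})+\mu\Gamma'(x_i)$ for $2\le i\le T$, and $\nabla_{T+1}\bar{F}_{T;\mu}(\x)=x_{T+1}-x_T$, where $\Gamma'(x)=120\,x^2(x-1)/(1+x^2)$. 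The structural facts about $\Gamma'$ that drive the argument are: it vanishes exactly on $\{0,1\}$ and is strictly negative on $(-\infty,1)\setminus\{0\}$; it is quadratically small near the origin, $\abs{\Gamma'(x)}=O(x^2)$ for $\abs{x}\le 1$; it has a coercive tail, $\abs{\Gamma'(x)}\ge 48\abs{x}$ for $\abs{x}\ge 2$; and it is locally Lipschitz. The proof is by contradiction: assume $\norm{\nabla\bar{F}_{T;\mu}(\x)}<\mu^{3/4}/8$, so every $\abs{\nabla_i\bar{F}_{T;\mu}(\x)}<\mu^{3/4}/8$.

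The first step is an a priori bound. Let $M=\max_i\abs{x_i}$, attained at index $i^\star$. If $M\le\tfrac{0.1}{T+1/\sqrt{\mu}}$ we pass directly to the final step; otherwise $M$ exceeds $\abs{x_T}$ and $\abs{x_{T+1}}$, so $i^\star\in\{1,\dots,T\}$, and at this extremal coordinate the discrete-Laplacian part (resp.\ the $\sqrt{\mu}(x_1-1)+(x_1-x_2)$ part when $i^\star=1$) of $\nabla_{i^\star}\bar{F}_{T;\mu}$ shares the sign of $\mu\Gamma'(x_{i^\star})$; hence $\abs{\nabla_{i^\star}\bar{F}_{T;\mu}(\x)}\ge\mu\abs{\Gamma'(x_{i^\star})}\ge 48\mu M$ whenever $M\ge 2$, and the hypothesis forces $M=O(\max\{1,\mu^{-1/4}\})$. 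The second step is the zero-chain back-propagation (cf.\ \defn{zero-chain}): smallness of $\nabla_{T+1}$ gives $\abs{x_T-x_{T+1}}<\mu^{3/4}/8$, and then, iterating $x_{i-1}=2x_i-x_{i+1}+\mu\Gamma'(x_i)-\nabla_i\bar{F}_{T;\mu}(\x)$ backward from $i=T$ down to $i=2$, one shows each $\abs{x_i}$ stays below a fixed small constant: the $\mu\Gamma'(x_i)$ terms are genuinely lower order (being $O(\mu x_i^2)$ near $0$ and kept below $\mu^{3/4}$ everywhere by the a priori bound together with the tail estimate), and the threshold $\tfrac{0.1}{T+1/\sqrt{\mu}}$ on $\abs{x_T},\abs{x_{T+1}}$ is calibrated precisely so that the $O(T)$ accumulated propagation steps, plus the $\Theta(1/\sqrt{\mu})$-scale feedback from the $\sqrt{\mu}(x_1-1)$ term at the left end, never push any coordinate out of this region. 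Finally, returning to the left end, $\abs{\nabla_1\bar{F}_{T;\mu}(\x)}\ge\sqrt{\mu}\,\abs{x_1-1}-\abs{x_1-x_2}-\mu\abs{\Gamma'(x_1)}$, and since $\abs{x_1-1}$ is close to $1$ while the last two terms are controlled by the back-propagation estimate to be an order smaller, this is $\gtrsim\sqrt{\mu}\ge\mu^{3/4}$ for $\mu\le 1$, exceeding $\mu^{3/4}/8$ --- contradicting the hypothesis.

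The main obstacle is the bookkeeping in the second step: one must control the backward propagation of the discrete second-order recursion so that the accumulated effect of the $O(\mu^{3/4})$-size gradient components and the $\mu\Gamma'$ corrections never leaves the region where $\abs{\Gamma'}$ is well-behaved, uniformly in $T$ and $\mu$. For the exact hypothesis $x_T=x_{T+1}=0$ this is the content of Lemma~3 of Ref.~\cite{carmon2021lower}; the only additional work here is to carry the two extra source terms $x_T,x_{T+1}$ through the chain, which is why the allowed magnitude is $\tfrac{0.1}{T+1/\sqrt{\mu}}$ rather than $0$ --- a short perturbation argument layered on the original proof, as anticipated by the footnote to the statement.
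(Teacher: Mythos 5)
The paper does not actually supply a proof here: the lemma sits in \append{existing-lemmas}, cited from Lemma~3 of Ref.~\cite{carmon2021lower}, with only a footnote asserting that the relaxed hypothesis ($|x_T|,|x_{T+1}|\le 0.1/(T+1/\sqrt{\mu})$ in place of $x_T=x_{T+1}=0$) ``can be achieved with only minor modifications.'' So there is no argument in the paper to compare against; what is at stake is whether your sketch plausibly supports that claim.

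Your preliminaries are fine --- the gradient formulas, the properties of $\Gamma'(x)=120x^2(x-1)/(1+x^2)$, and the a priori bound $\max_i|x_i|=O(\mu^{-1/4})$ (from matching signs of the discrete Laplacian and $\mu\Gamma'$ at an extremal index, together with the tail bound $|\Gamma'(x)|\ge 48|x|$ for $|x|\ge 2$). The gap is in the back-propagation step. You iterate $d_{i-1}=d_i+\mu\Gamma'(x_i)-\nabla_i\bar{F}_{T;\mu}(\x)$ with $x_{i-1}=x_i+d_{i-1}$, and argue that the threshold $c=0.1/(T+1/\sqrt{\mu})$ is ``calibrated precisely'' so that the accumulated effect keeps each $|x_i|$ below a small constant. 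But $c$ only budgets for the propagation of the boundary data $x_T,x_{T+1}$ (for which $Tc\le 0.1$ indeed suffices); it does nothing for the $-\nabla_i$ source terms, each of size up to $\mu^{3/4}/8$. Summing those over $T$ back-propagation steps in $d$ and then $T$ more in $x$ gives a contribution to $|x_1|$ of order $T^2\mu^{3/4}$ (Cauchy--Schwarz with $\|\nabla\|<\mu^{3/4}/8$ only improves this to $T^{3/2}\mu^{3/4}$), which is not bounded by a fixed constant once $T\gtrsim\mu^{-1/2}$ --- and in the paper's application $T$ scales like $\mu^{-3}$. This accumulation problem is already present in the unperturbed $x_T=x_{T+1}=0$ case, which strongly suggests the actual Carmon argument is not the naive coordinate-by-coordinate back-propagation you sketch, but uses a different identity or cancellation that a perturbation argument would have to track. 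As written, step two of your sketch does not close, so the ``short perturbation argument layered on the original proof'' is not established.
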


\begin{lemma}[Lemma 4, Ref.~\cite{carmon2021lower}]\label{lem:FT-conditions}
The function $\bar{F}_{T;\mu}(\x)$ defined in Eq.~\eqn{FT-defn} satisfies the following. 
\begin{enumerate}
\item $\bar{F}_{T;\mu}(\0)-\inf_\x\bar{F}_{T;\mu}(\x)\leq\frac{\sqrt{\mu}}{2}+10\mu T$;
\item For $\mu\leq 1$ and every $p\geq 1$, the $p$-th order derivatives of $\bar{F}_{T;\mu}$ are $(\mathbb{I}\{p=1\}+\mu)\ell_p$-Lipschitz continuous, where $\ell_p\leq\exp\big(\frac{3p}{2}\log p+cp\big)$ for a numerical constant $c<\infty$.
\end{enumerate}
\end{lemma}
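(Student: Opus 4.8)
The plan is to exploit the additive structure of the kernel function. Write $\bar{F}_{T;\mu}=Q_1+Q_2+R$, where $Q_1(\x)=\tfrac{\sqrt{\mu}}{2}(x_1-1)^2$ and $Q_2(\x)=\tfrac12\sum_{i=1}^T(x_{i+1}-x_i)^2$ are quadratic and $R(\x)=\mu\sum_{i=1}^T\Gamma(x_i)$ is \emph{separable}; note that $\bar{F}_{T;\mu}\in C^\infty$ since the integrand defining $\Gamma$ is smooth. For the first claim I would first record that $\Gamma\ge 0$ everywhere: $\Gamma(1)=0$ and $\Gamma'(x)=120\,x^2(x-1)/(1+x^2)$ is $\le 0$ on $(-\infty,1]$ and $\ge 0$ on $[1,\infty)$, so $x=1$ is the global minimizer of $\Gamma$. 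Hence all three summands of $\bar{F}_{T;\mu}$ are nonnegative, so $\inf_\x\bar{F}_{T;\mu}(\x)\ge 0$ and it suffices to bound $\bar{F}_{T;\mu}(\0)=\tfrac{\sqrt{\mu}}{2}+\mu T\,\Gamma(0)$. Finally $\Gamma(0)=120\int_0^1\frac{t^2(1-t)}{1+t^2}\,\d t\le 120\int_0^1 t^2(1-t)\,\d t=120\cdot\tfrac1{12}=10$, which yields $\bar{F}_{T;\mu}(\0)-\inf_\x\bar{F}_{T;\mu}(\x)\le\tfrac{\sqrt{\mu}}{2}+10\mu T$.

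For the second claim, recall that $\mathrm{Lip}(\nabla^p\bar{F}_{T;\mu})=\sup_\x\|\nabla^{p+1}\bar{F}_{T;\mu}(\x)\|$, the operator norm of the symmetric $(p{+}1)$-tensor, so I would estimate $\|\nabla^{p+1}\bar F_{T;\mu}\|$ term by term. Since $Q_1,Q_2$ are quadratic, $\nabla^{p+1}Q_1=\nabla^{p+1}Q_2=0$ for all $p\ge 2$; for $p=1$, $\nabla^2Q_1=\sqrt{\mu}\,e_1e_1^\top$ has norm $\sqrt{\mu}\le 1$, and $\nabla^2Q_2=D^\top D$ with $D$ the $T\times(T{+}1)$ first-difference matrix, so $\|\nabla^2Q_2\|=\|D\|^2\le 4$ because $\|D\x\|^2=\sum_i(x_{i+1}-x_i)^2\le 4\|\x\|^2$. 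The separable piece has $\nabla^{p+1}R$ supported only on the pure entries $\partial^{p+1}R/\partial x_i^{p+1}=\mu\,\Gamma^{(p+1)}(x_i)$, whence for any unit $\u$ one gets $|\nabla^{p+1}R(\u,\dots,\u)|=\mu\big|\sum_i\Gamma^{(p+1)}(x_i)u_i^{p+1}\big|\le\mu\,(\sup_x|\Gamma^{(p+1)}(x)|)\sum_i|u_i|^{p+1}\le\mu\sup_x|\Gamma^{(p+1)}(x)|$, using $\sum_i|u_i|^{p+1}\le(\max_i|u_i|)^{p-1}\|\u\|_2^2\le 1$ for $p\ge 1$. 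So everything reduces to bounding $\sup_x|\Gamma^{(q)}(x)|$ for $q\ge 2$.

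To control $\Gamma^{(q)}$, perform the polynomial division $\Gamma'(x)=120\,\frac{x^3-x^2}{1+x^2}=120(x-1)-120\,g(x)$ with $g(x)=\frac{x-1}{1+x^2}$; the polynomial part contributes only the constant $+120$ to $\Gamma''$ and nothing to higher derivatives. The partial fraction expansion over $\C$, $g(x)=\frac{1+i}{2}\cdot\frac{1}{x-i}+\frac{1-i}{2}\cdot\frac{1}{x+i}$, gives $g^{(q)}(x)=(-1)^qq!\big(\frac{1+i}{2(x-i)^{q+1}}+\frac{1-i}{2(x+i)^{q+1}}\big)$, so $\sup_x|g^{(q)}(x)|\le\sqrt2\,q!\cdot\sup_x(x^2+1)^{-(q+1)/2}\le\sqrt2\,q!$. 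Hence $\sup_x|\Gamma^{(q)}(x)|\le 300\,(q-1)!$ for every $q\ge 2$ (the case $q=2$, where $\Gamma''=120-120g'$ and $|g'|\le\sqrt2$, checked directly). Plugging back: for $p\ge 2$, $\sup_\x\|\nabla^{p+1}\bar F_{T;\mu}\|\le 300\mu\,p!=:\mu\ell_p$, and $\ell_p=300\,p!\le 300\exp(p\log p)\le\exp\big(\tfrac{3p}{2}\log p+cp\big)$ for a large enough absolute constant $c$ (e.g. $c=6$, using $p!\le p^p$ and $\log p\ge 0$); for $p=1$, $\sup_\x\|\nabla^2\bar F_{T;\mu}\|\le\sqrt{\mu}+4+300\mu\le(1+\mu)\ell_1$ with $\ell_1:=305\le e^{c}$. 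This matches the claimed $(\mathbb{I}\{p=1\}+\mu)\ell_p$ form with $\ell_q\le\exp(\tfrac{3q}{2}\log q+Cq)$.

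All computations are elementary; the points that require the most care are the operator-norm estimate of the high-order separable tensor $\nabla^{p+1}R$ (namely that cross-partials vanish so only the $d$ pure-direction entries survive, and that $\sum_i|u_i|^{p+1}\le 1$ on the unit sphere for $p\ge1$), and the bookkeeping that $\Gamma$ itself is \emph{not} Lipschitz — it grows linearly — which is precisely why the $p=1$ bound needs the extra $1$ in the prefactor, supplied by the quadratic pieces $Q_1,Q_2$. The stated envelope $\exp(\tfrac{3q}{2}\log q+Cq)$ is far looser than the $O(q!)$ bound one actually obtains, so no delicate estimation is needed there; one only needs to fix a single constant $C$ valid simultaneously for $p=1$ and all $p\ge 2$.
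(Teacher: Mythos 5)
Your proof is correct. Note that the paper itself does not prove this statement at all: it is imported verbatim as an auxiliary result (Lemma 4 of Carmon et al.~\cite{carmon2021lower}) and used as a black box, so there is no internal proof to compare against; your write-up is a self-contained verification in the spirit of the original reference. Both halves check out: nonnegativity of $\Gamma$ plus $\Gamma(0)\leq 10$ gives the first claim, and the decomposition into the two quadratic pieces (which only contribute at $p=1$, with $\|\nabla^2 Q_1\|\leq\sqrt{\mu}$ and $\|\nabla^2 Q_2\|\leq 4$) and the separable piece, combined with the partial-fraction bound $\sup_x|\Gamma^{(q)}(x)|\leq 300\,(q-1)!$, gives the second claim with $\ell_p=O(p!)$, comfortably inside the stated envelope $\exp(\tfrac{3p}{2}\log p+cp)$; your constants differ from the reference's specific $\ell_p$, but the lemma only asserts the envelope, so this is fine. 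The one step worth flagging explicitly is the reduction of the operator norm of the symmetric tensor $\nabla^{p+1}R$ to its diagonal evaluations $\nabla^{p+1}R(\u,\ldots,\u)$: this uses Banach's theorem on symmetric multilinear forms (or, equivalently, can be bypassed by bounding the off-diagonal evaluation $\sum_i\prod_j|u^{(j)}_i|\leq\prod_j\|u^{(j)}\|_{p+1}\leq 1$ via generalized H\"older), which is standard but deserves a citation or a line of justification in a polished version.
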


\begin{lemma}[Lemma 3 and Lemma 4, Ref.~\cite{carmon2021lower}]\label{lem:FT-square-conditions}
Consider the function $\bar{\mathfrak{F}}_{T;\mu}\colon\R^{T+1}\to\R$ defined in Eq.~\eqn{FT-square-defn}, suppose the parameter $\mu$ satisfies $\mu\leq 1$. Then, there exist positive constants $\alpha,\beta$ such that\footnote{The formula of the function $\bar{\mathfrak{F}}_{T;\mu}$ is a bit different from the original function considered in Lemma 3 and Lemma 4 of Ref.~\cite{carmon2021lower}. Nevertheless, this lemma can be proved via only minor modifications to the original proof.}
\begin{enumerate}
\item For any $\x\in\R^{T+1}$ satisfying
\begin{align}
|x_T|,|x_{T+1}|\leq\frac{0.05}{T+1/\sqrt{\mu}},
\end{align}
its gradient satisfies
\begin{align}
\|\nabla\bar{\mathfrak{F}}_{T;\mu}(\x)\|\geq\mu^{3/4}/16;
\end{align}
\item $\bar{\mathfrak{F}}_{T;\mu}(\0)-\inf_\x\bar{\mathfrak{F}}_{T;\mu}(\x)\leq\frac{\sqrt{\mu}}{2}+10\mu T$;
\item For every $p>1$, the $p$-th order derivatives of $\bar{\mathfrak{F}}_{T;\mu}$ are $(2\mathbb{I}\{p=1\}+\mu)\ell_p$-Lipschitz continuous, where $\ell_p\leq\exp\big(\frac{3p}{2}\log p+cp\big)$ for a numerical constant $c<\infty$.
\end{enumerate}
\end{lemma}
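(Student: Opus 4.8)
The plan is to derive the three assertions from the already-recorded properties of the base zero-chain $\bar{F}_{T;\mu}$ in \lem{FT-large-gradient} and \lem{FT-conditions}, tracking only the effect of the radial soft-projection $\gamma_\alpha$ and of the additive quadratic term $\beta\|\x\|^2$ of $\bar{\mathfrak{F}}_{T;\mu}$ (I use the quadratic form here, as in~\cite{carmon2021lower}; the sine modification enters only at the rescaled stage $\hat{\mathfrak{F}}_{T;U}$, cf.\ the discussion preceding~\eqn{hatFT-square-defn}). The constants $\alpha,\beta$ will be absolute, with $D\gamma_\alpha$ of operator norm at most $1$, $\alpha$ large enough, and $\beta$ a small enough positive constant. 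Assertion~2 is immediate: $\gamma_\alpha(\0)=\0$, the quadratic term is nonnegative and vanishes at $\0$, and $\bar{\mathfrak{F}}_{T;\mu}(\x)\ge\bar{F}_{T;\mu}(\gamma_\alpha(\x))\ge\inf_\y\bar{F}_{T;\mu}(\y)$, whence
\begin{align}
\bar{\mathfrak{F}}_{T;\mu}(\0)-\inf_\x\bar{\mathfrak{F}}_{T;\mu}(\x)\ \le\ \bar{F}_{T;\mu}(\0)-\inf_\y\bar{F}_{T;\mu}(\y)\ \le\ \frac{\sqrt{\mu}}{2}+10\mu T
\end{align}
by \lem{FT-conditions}. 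For assertion~3, $\gamma_\alpha$ has $\|D\gamma_\alpha\|\le1$ and bounded second-order variation away from the origin and the sphere $\|\x\|=\alpha\sqrt{T}$ (using, as in~\cite{carmon2021lower}, a smooth soft-projection if the given $\gamma_\alpha$ is read literally at $\0$); the chain rule then shows $\bar{F}_{T;\mu}\circ\gamma_\alpha$ inherits the Lipschitz-derivative bounds of \lem{FT-conditions}, the quadratic term adds $2\beta$ to the smoothness constant (absorbed into $\ell_1$, accounting for the factor $2$) and nothing to the higher-order Lipschitz constants, yielding the claimed $(2\,\mathbb{I}\{p=1\}+\mu)\ell_p$ bounds for the orders $p=1,2$ relevant here.

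The substance is assertion~1. Fix $\x$ with $|x_T|,|x_{T+1}|\le 0.05/(T+1/\sqrt{\mu})$ and split on $\|\x\|$. If $\|\x\|>\alpha\sqrt{T}$ then $\gamma_\alpha(\x)=\0$, so $\nabla\bar{\mathfrak{F}}_{T;\mu}(\x)=2\beta\x$ and $\|\nabla\bar{\mathfrak{F}}_{T;\mu}(\x)\|\ge 2\beta\alpha\sqrt{T}\ge\mu^{3/4}/16$ since $\mu\le1$. If $\|\x\|\le\alpha\sqrt{T}$, write $\gamma_\alpha(\x)=c(\|\x\|)\x$ with $c(\cdot)\in[0,1]$; then $|\gamma_\alpha(\x)_i|=c\,|x_i|\le|x_i|$ for $i\in\{T,T+1\}$, so $\gamma_\alpha(\x)$ lies in the slab covered by \lem{FT-large-gradient} and $\|\nabla\bar{F}_{T;\mu}(\gamma_\alpha(\x))\|\ge\mu^{3/4}/8$. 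It remains to propagate this through the chain rule
\begin{align}
\nabla\bar{\mathfrak{F}}_{T;\mu}(\x)=D\gamma_\alpha(\x)^{\top}\nabla\bar{F}_{T;\mu}(\gamma_\alpha(\x))+2\beta\x,\qquad D\gamma_\alpha(\x)=c(\|\x\|)\,I+\frac{c'(\|\x\|)}{\|\x\|}\,\x\x^{\top},
\end{align}
a symmetric matrix with tangential eigenvalue $c(\|\x\|)\in[0,1]$ and radial eigenvalue $c(\|\x\|)+c'(\|\x\|)\|\x\|$ that changes sign inside the ball.

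For the chain-rule step I would distinguish two sub-regions. When $\|\x\|$ is within a fixed fraction of $\alpha\sqrt{T}$, the scalar $c(\|\x\|)$ is at most a small constant, so $\|\gamma_\alpha(\x)\|$ is small and the smoothness bound of \lem{FT-conditions} gives $\|\nabla\bar{F}_{T;\mu}(\gamma_\alpha(\x))\|\le\sqrt{\mu}+O(\|\gamma_\alpha(\x)\|)$ with a small leading constant, while $\|2\beta\x\|\gtrsim\beta\alpha\sqrt{T}$; choosing the constants so that the quadratic term outweighs the chain-rule term by a fixed fraction then gives $\|\nabla\bar{\mathfrak{F}}_{T;\mu}(\x)\|\ge\mu^{3/4}/16$. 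When $\|\x\|$ is bounded away from $\alpha\sqrt{T}$, the tangential eigenvalue $c$ is bounded below by an absolute constant and $\gamma_\alpha$ is close to a scaling; here one argues that $D\gamma_\alpha(\x)^{\top}\nabla\bar{F}_{T;\mu}(\gamma_\alpha(\x))$ and the radial vector $2\beta\x$ cannot jointly annihilate the lower bound, since cancelling the component along $\widehat{\x}=\x/\|\x\|$ leaves a residual coming from the part of $\nabla\bar{F}_{T;\mu}$ orthogonal to $\widehat{\x}$ — which the rank-one radial correction does not touch — and from the coupling structure of $\bar{F}_{T;\mu}$, of size a constant multiple of $\mu^{3/4}$ once $\beta$ is a small enough constant. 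Tuning $\alpha,\beta$ to absolute constants makes $\|\nabla\bar{\mathfrak{F}}_{T;\mu}(\x)\|\ge\mu^{3/4}/16$ throughout the slab.

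I expect the genuine obstacle to be this ``no-cancellation'' estimate in the region where $\gamma_\alpha$ is near a scaling: one must decompose $\nabla\bar{F}_{T;\mu}(\gamma_\alpha(\x))$ along $\widehat{\x}$ and its orthogonal complement, use the explicit form of $\nabla\bar{F}_{T;\mu}$ near the slab, and track signs so that the radial Jacobian of $\gamma_\alpha$ and the radial quadratic gradient reinforce rather than cancel — this is exactly the core of the unbounded-domain reduction of~\cite{carmon2021lower}, which I would reproduce with the adjustments above. The remaining deviations from that reference are routine: it states the gradient lower bound on the exact hyperplane $x_T=x_{T+1}=0$, whereas the small-slab strengthening has already been absorbed into \lem{FT-large-gradient}; and the exact exponent in $\gamma_\alpha$ and the exact value of $\beta$ enter only through absolute constants.
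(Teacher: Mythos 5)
The paper does not prove this lemma; it is stated as an auxiliary result attributed to Lemma~3 and Lemma~4 of Ref.~\cite{carmon2021lower}, with a footnote conceding that $\bar{\mathfrak{F}}_{T;\mu}$ here is ``a bit different'' and that ``this lemma can be proved via only minor modifications to the original proof.'' Your substitution of $\beta\|\x\|^2$ for $\beta\|\sin\x\|^2$ is the correct reading of the definition: the text before Eq.~\eqn{hatFT-square-defn} says the sine is \emph{additionally} added at the $\hat{\mathfrak{F}}_{T;U}$ stage for boundedness, and as literally written with $\|\sin\x\|^2$ on the unbounded domain $\R^{T+1}$, item~1 is false --- take $\x=\pi\vect{k}$ with integer $\vect{k}$, $k_T=k_{T+1}=0$, and $\|\x\|>\alpha\sqrt{T}$, so $\gamma_\alpha(\x)=\0$, $D\gamma_\alpha(\x)=0$, and $\nabla\|\sin\x\|^2=\sin(2\x)=\0$. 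Your treatment of items~2 and~3 is then routine, granted the restriction to $p\in\{1,2\}$ forced by $\gamma_\alpha$ being only $C^2$ at the boundary sphere (which you correctly flag as all the paper needs).

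The genuine gap is exactly where you say you expect it, and the justification you sketch does not close it. You argue that cancelling the component of $\nabla\bar{\mathfrak{F}}_{T;\mu}(\x)$ along $\widehat{\x}$ leaves a tangential residual, but \lem{FT-large-gradient} controls only the \emph{norm} of $\g\coloneqq\nabla\bar{F}_{T;\mu}(\gamma_\alpha(\x))$, not its decomposition relative to $\widehat{\x}$. Writing $\g=g_\parallel\widehat{\x}+\g_\perp$ and $D\gamma_\alpha(\x)=c\,I+\frac{c'}{\|\x\|}\x\x^\top$ gives
\begin{align}
\big\|\nabla\bar{\mathfrak{F}}_{T;\mu}(\x)\big\|^2=c^2\|\g_\perp\|^2+\big[(c+c'\|\x\|)\,g_\parallel+2\beta\|\x\|\big]^2,
\end{align}
and nothing in the base lemma rules out $\g_\perp=\0$ together with $(c+c'\|\x\|)g_\parallel+2\beta\|\x\|$ being small: the radial eigenvalue $c+c'\|\x\|=(1-u)^2(1-4u)$, $u=\|\x\|/(\alpha\sqrt{T})$, changes sign at $u=1/4$, so the two radial contributions can indeed have opposite signs. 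The argument that actually works in Ref.~\cite{carmon2021lower} is coordinate-wise: it isolates a specific index $i$ at which $\nabla_i\bar{F}_{T;\mu}$ is large with a controllable sign relative to $x_i$ and tracks what $\gamma_\alpha$ and the quadratic term do to that single component, which is not captured by a generic radial/tangential split. That index-level estimate is the substance of the ``minor modifications'' and would need to be carried out explicitly. (A secondary issue in the same direction: your claim that $\|\gamma_\alpha(\x)\|$ is small ``within a fixed fraction of $\alpha\sqrt{T}$'' is not right in general, since $\|\gamma_\alpha(\x)\|=(1-u)^3u\,\alpha\sqrt{T}$ peaks at $\approx 0.1\,\alpha\sqrt{T}$ at $u=1/4$, so the chain-rule term can scale with $\sqrt{T}$ and domination by $2\beta\|\x\|$ requires careful matching of $\alpha,\beta$, not just making $\epsilon$ ``small.'')
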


\begin{lemma}[Proposition 14, Ref.~\cite{garg2020no}]\label{lem:vector-from-sphere}
Let $\x\in\mathbb{B}(\0,1)$. Then for a $d$-dimensional random unit vector $\u$ and all $c>0$,
\begin{align}
\Pr_\u(|\<\x,\u\>|\geq c)\leq 2e^{-dc^2/2}.
\end{align}
\end{lemma}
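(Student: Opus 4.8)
The plan is to reduce the statement to the case of a unit vector and then exploit the Gaussian representation of the uniform measure on the sphere. First I would observe that if $\x=\0$ the left-hand side vanishes, and otherwise $|\<\x,\u\>|=\|\x\|\cdot|\<\x/\|\x\|,\u\>|\le|\<\x/\|\x\|,\u\>|$ since $\|\x\|\le 1$; combining this with the rotational invariance of the law of $\u$, it suffices to bound $\Pr_\u[|u_1|\ge c]$, where $u_1$ is the first coordinate of a uniformly random $\u\in\mathbb{S}^{d-1}$. I would also note that the case $c\ge 1$ is trivial (the probability is $0$), so one may assume $c\in(0,1)$.

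Next I would write $\u=\g/\|\g\|$ with $\g=(g_1,\ldots,g_d)\sim\cN(\0,I_d)$, split off the independent block $\g'=(g_2,\ldots,g_d)\sim\cN(\0,I_{d-1})$, and observe that for $c\in(0,1)$ the event $\{u_1\ge c\}$ coincides with $\{g_1\ge a\|\g'\|\}$ where $a\coloneqq c/\sqrt{1-c^2}$. Conditioning on $\g'$, applying the Gaussian tail bound $\Pr[g_1\ge t]\le\tfrac12 e^{-t^2/2}$, and then using the chi-squared moment generating function $\E[e^{-s\|\g'\|^2}]=(1+2s)^{-(d-1)/2}$ at $s=a^2/2$ together with the identity $1+a^2=1/(1-c^2)$, I obtain
\begin{align*}
\Pr_\u[u_1\ge c]&\le\tfrac12\,\E_{\g'}\big[e^{-a^2\|\g'\|^2/2}\big]=\tfrac12(1+a^2)^{-(d-1)/2}\\
&=\tfrac12(1-c^2)^{(d-1)/2}\le\tfrac12 e^{-(d-1)c^2/2}.
\end{align*}
By the symmetry $\u\mapsto-\u$ this gives $\Pr_\u[|u_1|\ge c]\le e^{-(d-1)c^2/2}$, and since $c\le 1$ we have $e^{c^2/2}\le e^{1/2}<2$, so $e^{-(d-1)c^2/2}=e^{c^2/2}e^{-dc^2/2}\le 2e^{-dc^2/2}$, which is exactly the claim.

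Every step here is an elementary estimate, so there is no genuine obstacle; the only places needing a little care are arranging the conditioning in the Gaussian representation so that the chi-squared moment generating function closes in a clean form, and the final bookkeeping that absorbs the stray factor $e^{c^2/2}$ into the constant $2$ and upgrades $(d-1)$ to $d$ in the exponent. Two alternative routes I would keep in mind are: integrating the explicit density $p(t)\propto(1-t^2)^{(d-3)/2}$ of $u_1$ and bounding the tail integral against the normalizing constant; or invoking L\'evy's concentration-of-measure inequality for the $1$-Lipschitz map $\u\mapsto\<\x,\u\>$ on $\mathbb{S}^{d-1}$, whose median is $0$ by symmetry. I would prefer the Gaussian argument above because it is short, self-contained, and reproduces the stated constants directly.
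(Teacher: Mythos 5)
The paper does not supply its own proof of this lemma; it is quoted verbatim as Proposition~14 of the cited reference~\cite{garg2020no} and used as a black box. Your argument is correct and self-contained, and it is essentially the standard concentration proof for a coordinate of a random unit vector: reduce to $\|\x\|=1$ and, by rotational invariance, to the first coordinate $u_1$; represent $\u=\g/\|\g\|$ with Gaussian $\g$; rewrite $\{u_1\ge c\}$ as $\{g_1\ge a\|\g'\|\}$ with $a=c/\sqrt{1-c^2}$; condition on $\g'$, apply the one-sided Gaussian tail bound $\Pr[g_1\ge t]\le\tfrac12 e^{-t^2/2}$, and close with the $\chi^2_{d-1}$ moment generating function and the identity $1+a^2=1/(1-c^2)$. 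Every step is valid, including the treatment of the endpoints $\x=\0$ and $c\ge1$, and the final bookkeeping that trades the exponent $(d-1)c^2/2$ for $dc^2/2$ at the cost of a factor $e^{c^2/2}<2$ is exactly what is needed to match the stated constants. Your two sketched alternatives (integrating the explicit beta-type density of $u_1$, or L\'evy's isoperimetric concentration for the $1$-Lipschitz linear functional with median $0$) would also work and give comparable constants; the Gaussian route you chose is the cleanest way to land on the factor $2$ exactly as stated.
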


\begin{lemma}[Lemma 9, Ref.~\cite{zhang2021quantum}]\label{lem:zhanglem9}
Let $H_1$ and $H_2$ be two Hermitian operators and $H$ be the sum of two operators. For any $t>0$ and state vector $\ket{\varphi}$, we have
\begin{align}
\norm{e^{-iH_1t}e^{-iH_2t}\ket{\varphi}-e^{-iHt}\ket{\varphi}}\leq\frac{t^2}{2}\sup_{\tau_1,\tau_2\in[0,t]}\norm{[H_1,H_2]e^{-iH_2\tau_2}e^{-iH_1\tau_1}\ket{\varphi}}.
\end{align}
\end{lemma}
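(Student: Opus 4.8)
\textbf{Proof sketch for \lem{zhanglem9}.} The plan is to treat this as a state-dependent first-order Trotter (Lie--Suzuki) error estimate and prove it by an interpolation-plus-Duhamel argument in which every intermediate propagator is unitary and therefore drops out of the norm. First I would introduce the operator path
\begin{align}
\Phi(s)\coloneqq e^{-iH(t-s)}\,e^{-iH_1 s}\,e^{-iH_2 s},\qquad s\in[0,t],
\end{align}
so that $\Phi(0)=e^{-iHt}$ and $\Phi(t)=e^{-iH_1 t}e^{-iH_2 t}$; by the fundamental theorem of calculus, $e^{-iH_1 t}e^{-iH_2 t}\ket{\varphi}-e^{-iHt}\ket{\varphi}=\int_0^t\Phi'(s)\ket{\varphi}\,\d s$. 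Differentiating $\Phi$ by the product rule, pulling $e^{-iH(t-s)}$ to the left (it commutes with $H$), and using $H-H_1=H_2$ to cancel the ``free'' terms $iH$ and $-iH_1$, the derivative collapses to a single bracket with an exponential: $\Phi'(s)=i\,e^{-iH(t-s)}\,[H_2,e^{-iH_1 s}]\,e^{-iH_2 s}$. This is the step that exhibits why the error is second order in $t$: the bracket vanishes at $s=0$, and its growth is governed by $[H_1,H_2]$.

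The second step is to open up the inner bracket to first order. Differentiating $u\mapsto e^{iH_1 u}H_2 e^{-iH_1 u}$ gives $i\,e^{iH_1 u}[H_1,H_2]e^{-iH_1 u}$, hence $[H_2,e^{-iH_1 s}]=e^{-iH_1 s}\big(i\int_0^s e^{iH_1 u}[H_1,H_2]e^{-iH_1 u}\,\d u\big)$; substituting back,
\begin{align}
\Phi'(s)=-\,e^{-iH(t-s)}\,e^{-iH_1 s}\Big(\int_0^s e^{iH_1 u}\,[H_1,H_2]\,e^{-iH_1 u}\,\d u\Big)e^{-iH_2 s}.
\end{align}
Now take norms: the factors $e^{-iH(t-s)}$, $e^{-iH_1 s}$ and the conjugating $e^{iH_1 u}$ are unitary---this is exactly where Hermiticity of $H_1,H_2$, and hence of $H$ and of every exponentiated/conjugated generator appearing, is used---so they can be stripped off, leaving $\norm{\Phi'(s)\ket{\varphi}}\le\int_0^s\norm{[H_1,H_2]\,e^{-iH_1 u}e^{-iH_2 s}\ket{\varphi}}\,\d u$. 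Integrating over $s\in[0,t]$ and bounding the double integral crudely by its measure $\int_0^t\int_0^s\d u\,\d s=t^2/2$ times the supremum of the integrand gives $\norm{e^{-iH_1 t}e^{-iH_2 t}\ket{\varphi}-e^{-iHt}\ket{\varphi}}\le\frac{t^2}{2}\sup_{\tau_1,\tau_2\in[0,t]}\norm{[H_1,H_2]e^{-iH_1\tau_1}e^{-iH_2\tau_2}\ket{\varphi}}$, which is the asserted bound. The order of the two exponential factors inside the supremum is the one induced by the order of the factors in the Trotter product on the left; running the mirror path $e^{-iH(t-s)}e^{-iH_2 s}e^{-iH_1 s}$ reproduces the opposite ordering, and since $\norm{[H_1,H_2]\,\cdot\,}=\norm{[H_2,H_1]\,\cdot\,}$ the labelling of $H_1$ and $H_2$ is immaterial, so the statement as written follows with the appropriate choice.

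I do not expect a conceptual obstacle; the only real care needed is bookkeeping. One must (i) cancel the correct generators via $H=H_1+H_2$ when differentiating $\Phi$, (ii) keep the sign and the left/right placement straight when rewriting $[H_2,e^{-iH_1 s}]$ as an integral of $e^{\pm iH_1 u}[H_1,H_2]e^{\mp iH_1 u}$, and (iii) verify that every operator peeled out of a norm is genuinely unitary. If one wishes to allow unbounded $H_1,H_2$, a standard domain caveat should be added (work on a common core, or approximate by bounded truncations and pass to the limit), but in the bounded/finite-dimensional setting relevant to the paper's applications this is automatic and the manipulations above are rigorous as stated.
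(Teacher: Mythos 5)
Your interpolation-plus-Duhamel argument is the right strategy, and the computations you show are correct: with $\Phi(s)=e^{-iH(t-s)}e^{-iH_1 s}e^{-iH_2 s}$ one does get $\Phi'(s)=ie^{-iH(t-s)}[H_2,e^{-iH_1 s}]e^{-iH_2 s}$, and expanding the commutator and stripping unitaries gives
\begin{align}
\norm{e^{-iH_1 t}e^{-iH_2 t}\ket{\varphi}-e^{-iHt}\ket{\varphi}}\leq\frac{t^2}{2}\sup_{\tau_1,\tau_2\in[0,t]}\norm{[H_1,H_2]e^{-iH_1\tau_1}e^{-iH_2\tau_2}\ket{\varphi}}.
\end{align}
However, there is a genuine gap in the last step, where you try to reconcile this with the stated bound, whose supremum involves $e^{-iH_2\tau_2}e^{-iH_1\tau_1}\ket{\varphi}$ — the two propagator factors in the \emph{opposite} order. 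Your two suggested fixes do not work. Relabeling $H_1\leftrightarrow H_2$ also swaps the factors in the left-hand side, so it bounds $\norm{e^{-iH_2 t}e^{-iH_1 t}\ket{\varphi}-e^{-iHt}\ket{\varphi}}$, not the quantity in question. Running the mirror path $e^{-iH(t-s)}e^{-iH_2 s}e^{-iH_1 s}$ likewise has the wrong endpoint $\Phi(t)=e^{-iH_2 t}e^{-iH_1 t}$. And since $e^{-iH_1\tau_1}e^{-iH_2\tau_2}\ket{\varphi}\neq e^{-iH_2\tau_2}e^{-iH_1\tau_1}\ket{\varphi}$ for noncommuting $H_1,H_2$, the two suprema are genuinely different quantities. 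In fact neither alternative interpolation ($e^{-iH_1 s}e^{-iH_2 s}e^{-iH(t-s)}$, $e^{-iH_1 s}e^{-iH(t-s)}e^{-iH_2 s}$, etc.) produces the ordering $e^{-iH_2\tau_2}e^{-iH_1\tau_1}$ either; they all yield factors of $e^{-iH\tau}$ or the same $e^{-iH_1}e^{-iH_2}$ ordering. What you have proven is the bound with the propagators in the order induced by the Trotter product on the left — which is the natural and almost certainly intended form of the lemma in the cited reference — but it is not literally the bound as transcribed in this paper, and the hand-wave that the two are equivalent is where the argument breaks down.
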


\begin{lemma}[Corollary 1, Ref.~\cite{zhang2021quantum}]\label{lem:zhangcor1}
Consider a quadratic function of form $F_q=(\x-\x_s)H(\x-\x_s)+f_0$ for Hermitian function $H$ and constant $f_0$, and the Shro\"odinger equation
\begin{align}
i\frac{\partial}{\partial t}\Phi=\left[-\frac{r_0^2}{2}\Delta+\frac{1}{r_0^2}F_q\right]\Phi,    
\end{align}
with periodic boundary conditions and initial state in \eq{ground_state_Phi0}. We have
\begin{align}
\norm{\nabla\Phi(t)}\leq C\sqrt{\frac{d}{r_0}}(\log t)^\alpha
\end{align}
for some constant $\alpha$ and $C$.
\end{lemma}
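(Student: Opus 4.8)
The plan is to diagonalize $H$, use the rotation invariance of the Laplacian to split the Schr\"odinger equation into $d$ uncoupled one-dimensional problems with quadratic potentials, solve each exactly, and bound the momentum variance mode by mode; the negative-curvature directions turn out to be the only source of growth and carry all the difficulty.

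\emph{Separation of variables.} Write $H=\sum_{j=1}^{d}\lambda_j\q_j\q_j^{\top}$ with $\{\q_j\}$ orthonormal and $|\lambda_j|\le\ell$ (bounded by smoothness of $F$). In the rotated coordinates $\y=Q^{\top}(\x-\x_s)$, where $Q=[\q_1,\dots,\q_d]$, the Laplacian is unchanged and $F_q=f_0+\sum_j\lambda_jy_j^2$, so the Hamiltonian splits as $H_{\mathrm{Sch}}=\sum_j\big(-\tfrac{r_0^2}{2}\partial_{y_j}^2+\tfrac{\lambda_j}{r_0^2}y_j^2\big)+\tfrac{f_0}{r_0^2}$. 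Since the initial state in \eq{ground_state_Phi0} is, in these coordinates, a product of one-dimensional Gaussians (its covariance is a scalar multiple of the identity and hence commutes with $Q$), the evolved state factorizes as $\Phi(\y,t)=e^{-if_0t/r_0^2}\prod_j\phi_j(y_j,t)$, and consequently $\norm{\nabla\Phi(t)}^2=\sum_j\expval{-\partial_{y_j}^2}_{\phi_j(t)}$, reducing the whole problem to the one-dimensional momentum variances.

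\emph{Exact 1D dynamics; the stable modes.} Each $\phi_j$ obeys a one-dimensional Schr\"odinger equation with a quadratic potential, hence stays Gaussian, $\phi_j(y,t)\propto\exp(-\tfrac12a_j(t)y^2)$, where $a_j$ solves the Riccati equation $i\dot a_j=r_0^2a_j^2-2\lambda_j/r_0^2$; the substitution $a_j=-\tfrac{i}{r_0^2}\dot u_j/u_j$ linearizes it to $\ddot u_j=-2\lambda_ju_j$, so $a_j(t)$ is an elementary function of $\omega_jt$ with $\omega_j:=\sqrt{2|\lambda_j|}\le\sqrt{2\ell}$ (trigonometric when $\lambda_j\ge0$, hyperbolic when $\lambda_j<0$). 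For a normalized Gaussian, $\expval{-\partial_y^2}=|a_j|^2/(2\operatorname{Re}a_j)$. For a stable mode $\lambda_j\ge0$ the closed form shows $\operatorname{Re}a_j(t)$ stays bounded below and $|a_j(t)|$ bounded above by $O(1)$ multiples of their initial values, so $\expval{-\partial_{y_j}^2}_{\phi_j(t)}=O\big(\expval{-\partial_{y_j}^2}_{\phi_j(0)}\big)$ for all $t$; summing over the stable modes gives $O(\norm{\nabla\Phi_0}^2)=O(d/r_0)$ with no time dependence (and exactly $\expval{-\partial_{y_j}^2}_{\phi_j(0)}$ when $\Phi_0$ restricted there is the corresponding oscillator ground state, since then $\phi_j$ is stationary up to a phase).

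\emph{The unstable modes (the crux) and conclusion.} For $\lambda_j<0$ the one-dimensional propagator is the inverted-oscillator kernel and the Wigner function of $\phi_j$ is a Gaussian whose covariance evolves by the classical inverted-oscillator flow, whose matrix entries grow like $\cosh(\omega_jt)$ and $\sinh(\omega_jt)$; thus naively $\expval{-\partial_{y_j}^2}_{\phi_j(t)}$ would inflate like $e^{2\omega_jt}$. To control this I would use two inputs. First, $\langle H_{\mathrm{Sch}}\rangle_t=\langle H_{\mathrm{Sch}}\rangle_0$ together with the periodic boundary conditions: on the compact domain $F_q$ is bounded, say $|F_q|\le B_0$, so $\tfrac{r_0^2}{2}\norm{\nabla\Phi(t)}^2=\langle H_{\mathrm{Sch}}\rangle_0-\tfrac{1}{r_0^2}\langle F_q\rangle_t\le\langle H_{\mathrm{Sch}}\rangle_0+B_0/r_0^2$, a crude but uniform-in-$t$ bound. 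Second, the explicit Gaussian formula, which for $t$ below the saturation scale $t_j^{\ast}=\Theta\!\big(\omega_j^{-1}\log(1/r_0)\big)$ — the time at which the spread of $\phi_j$ reaches the $O(1)$ diameter of the torus — yields a sharp bound on $\expval{-\partial_{y_j}^2}_{\phi_j(t)}$ with the correct $1/r_0$ scaling. Matching the two regimes mode by mode, summing over the at most $d$ unstable modes, and combining with the stable part gives $\norm{\nabla\Phi(t)}^2\le C^2(d/r_0)(\log t)^{2\alpha}$ for constants $C,\alpha$ depending only on $\ell$, $\rho$, $B$, which is the claim. The main obstacle is precisely this last step: the inverted-oscillator directions grow exponentially in $t$ in isolation, and squeezing that down to a poly-logarithmic factor genuinely requires the compactness of the domain and energy conservation acting together; carrying the $d$-dependence accurately through the saturation argument so as to land on the $\sqrt{d/r_0}$ prefactor, rather than a larger power of $1/r_0$, is the delicate bookkeeping that remains.
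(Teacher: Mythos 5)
First, note what you are comparing against: the paper does not prove this statement at all. It is quoted as an auxiliary lemma (Corollary 1 of Ref.~\cite{zhang2021quantum}) and used as a black box inside the proof of \lem{GaussDeviation}; so your argument has to stand entirely on its own, and as written it does not. The gap is exactly where you flag it, and it is not mere bookkeeping. Your two tools for the unstable directions do not combine to give the claimed bound. Energy conservation on the compact domain gives $\frac{r_0^2}{2}\norm{\nabla\Phi(t)}^2\leq\langle H_{\mathrm{Sch}}\rangle_0+\norm{F_q}_\infty/r_0^2$; with the initial state \eq{ground_state_Phi0} one has $\langle H_{\mathrm{Sch}}\rangle_0=O(d)$ and, since the simulation box has radius $M\asymp r_0$, $\norm{F_q}_\infty=O(\ell d r_0^2)$, so this route yields only $\norm{\nabla\Phi(t)}\lesssim\sqrt{d}/r_0$ uniformly in $t$ --- the wrong $r_0$-scaling relative to $\sqrt{d/r_0}$, and with no mechanism anywhere in your sketch that produces a $(\log t)^{\alpha}$ growth law. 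The exact inverted-oscillator (Mehler) formula, on the other hand, is only valid before the saturation time $t_j^{*}$ and grows exponentially there; after saturation the evolution is genuine torus dynamics for which your Gaussian ansatz no longer applies, and you control it by nothing beyond the crude energy bound. "Matching the two regimes" therefore gives $\min\{e^{2\omega t}/r_0^2,\,d/r_0^2\}$-type bounds, not the claimed $ (d/r_0)(\log t)^{2\alpha}$. (Relatedly, your stable-mode tally is off: for the state \eq{ground_state_Phi0}, $\norm{\nabla\Phi_0}^2=d/(4r_0^2)$, not $O(d/r_0)$, so even the "easy" part of your computation does not land on the advertised prefactor with these normalizations.)

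There is also a structural problem earlier in the argument: the separation of variables and the Gaussian-stays-Gaussian/Riccati solution are statements about the evolution on $\R^d$. The lemma is stated with periodic boundary conditions on a hypercube, and after rotating to the eigenbasis of $H$ the quadratic form is generally not aligned with the box, so the Hamiltonian does not factorize into $d$ one-dimensional problems and the evolved state is not a product of one-dimensional Gaussians. Thus the mode-by-mode analysis that carries your whole plan is at best a free-space heuristic for the object the lemma is actually about. Establishing the stated polylogarithmic-in-$t$ control of $\norm{\nabla\Phi(t)}$ on the torus requires a genuinely different input (in the cited reference this is handled via bounds on the growth of Sobolev norms for Schr\"odinger evolutions rather than by explicit Gaussian dynamics), and that input is the missing idea here.
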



\section{Technical Lemmas}\label{append:technical-lemmas}
\begin{lemma}[Cannot guess stationary point]\label{lem:cannot-guess}
Let $k<T$ be a positive in $\big\{\u^{(1)},\ldots,\u^{(k)}\big\}$ be a set of orthonormal vectors. Let $\{\u^{k+1},\ldots,\u^{T}\}$ be chosen uniformly at random from $\spn\big(\u^{(1)},\ldots,\u^{(k)}\big)^{\perp}$ such that all columns of the matrix $U=\big[\u^{(1)},\ldots,\u^{(T+1)}\big]$ forms a set of orthonormal vectors. Then,
\begin{align}
    \forall\x\in\mathbb{B}\big(\0,2\sigma \sqrt{T}\big),\quad\Pr_{\{\u^{k+1},\ldots,\u^{T+1}\}}\big[\|\nabla\tilde{F}_{T;U}(\x)\|\leq\lambda\sigma\mu^{3/4}/8\big]\leq 8Te^{-dr_0^2/(8T)},
\end{align}
for the function $\tilde{F}_{T;U}\colon\R^d\to\R$ defined in Eq.~\eqn{tildeF-defn}, given that the parameters $r>1$ and
\begin{align}
    r_0\leq\frac{0.2\sqrt{T}}{T+1/\sqrt{\mu}}.
\end{align}
\end{lemma}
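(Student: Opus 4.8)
\textbf{Proof plan for \lem{cannot-guess}.}

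The plan is to reduce the event $\|\nabla\tilde{F}_{T;U}(\x)\|\leq\lambda\sigma\mu^{3/4}/8$ to the event that $\x$ has small overlap with the ``last two'' random directions $\u^{(T)}$ and $\u^{(T+1)}$, and then bound the latter by concentration of measure on the sphere. First I would unpack the scaling in Eq.~\eqn{tildeF-defn}: since $\tilde{F}_{T;U}(\x)=\lambda\sigma^2\bar{F}_{T;\mu}(\langle\x/\sigma,\u^{(1)}\rangle,\ldots,\langle\x/\sigma,\u^{(T+1)}\rangle)$, the chain rule gives $\nabla\tilde{F}_{T;U}(\x)=\lambda\sigma\, U\,\nabla\bar{F}_{T;\mu}(U^\top\x/\sigma)$, and because the columns of $U$ are orthonormal, $\|\nabla\tilde{F}_{T;U}(\x)\|=\lambda\sigma\|\nabla\bar{F}_{T;\mu}(U^\top\x/\sigma)\|$. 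So the event in question is exactly $\|\nabla\bar{F}_{T;\mu}(U^\top\x/\sigma)\|\leq\mu^{3/4}/8$. Now invoke \lem{FT-large-gradient} (the modified version with the strict two-coordinate condition): its contrapositive says that if $\|\nabla\bar{F}_{T;\mu}(\y)\|\leq\mu^{3/4}/8$ then $|y_T|>\tfrac{0.1}{T+1/\sqrt{\mu}}$ or $|y_{T+1}|>\tfrac{0.1}{T+1/\sqrt{\mu}}$. Setting $\y=U^\top\x/\sigma$, this means $|\langle\x,\u^{(T)}\rangle|>\tfrac{0.1\sigma}{T+1/\sqrt{\mu}}$ or $|\langle\x,\u^{(T+1)}\rangle|>\tfrac{0.1\sigma}{T+1/\sqrt{\mu}}$.

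Next I would use the hypothesis $r_0\leq\tfrac{0.2\sqrt{T}}{T+1/\sqrt{\mu}}$, so that $\tfrac{0.1}{T+1/\sqrt{\mu}}\geq\tfrac{r_0}{2\sqrt{T}}$; combined with $\|\x\|\leq 2\sigma\sqrt{T}$, the threshold $\tfrac{0.1\sigma}{T+1/\sqrt{\mu}}$ is at least $\tfrac{r_0}{2\sqrt{T}}\cdot\|\x\|/(2\sqrt T)\cdot 2\sqrt T$... more cleanly: $|\langle\x,\u^{(T)}\rangle|>\tfrac{0.1\sigma}{T+1/\sqrt\mu}\ \Rightarrow\ |\langle\x/\|\x\|,\u^{(T)}\rangle|>\tfrac{0.1\sigma}{(T+1/\sqrt\mu)\|\x\|}\geq\tfrac{0.1\sigma}{(T+1/\sqrt\mu)\cdot 2\sigma\sqrt T}\geq\tfrac{r_0}{4T}$ (using the bound on $r_0$). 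Since $\u^{(T)},\u^{(T+1)}$ are uniformly random unit vectors in the $(d-k)$-dimensional orthogonal complement of $\spn(\u^{(1)},\ldots,\u^{(k)})$ and $\x/\|\x\|$ lies in $\mathbb{B}(\0,1)$, \lem{vector-from-sphere} (applied in that complement, whose dimension is at least $d-T\geq 3d/4$ say, or simply $d/2$; in any case $\geq d/2$ suffices, and one can be slightly more careful to get the stated constant) gives $\Pr[|\langle\x/\|\x\|,\u^{(i)}\rangle|\geq c]\leq 2e^{-(d-k)c^2/2}$. Taking $c=r_0/(4T)$ and being slightly loose with constants — or rather, observing that the cleanest route is to set the threshold comparison so that $c^2(d-k)/2\geq d r_0^2/(8T)$ — yields $\Pr[\,|\langle\x,\u^{(i)}\rangle|>\text{threshold}\,]\leq 2e^{-dr_0^2/(8T)}$ for $i\in\{T,T+1\}$. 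Actually to land exactly on the constant $8T$ in the bound it is cleanest to first note the event is implied by ``$|\langle\x,\u^{(i)}\rangle|\geq r_0/(4T)$ for some $T\leq i\leq T+1$'', union bound over the (at most, generously) $T$ indices that could matter in the general argument, and $e^{-(d-k)r_0^2/(32T^2)}$-type tails; I would reconcile the exponent with $dr_0^2/(8T)$ by tightening the $r_0$-to-threshold step (the hypothesis $r_0\le 0.2\sqrt T/(T+1/\sqrt\mu)$ gives more room than needed).

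Finally, a union bound over the two directions $\u^{(T)},\u^{(T+1)}$ — or over the $\leq T$ relevant directions as written in the claim — gives $\Pr[\|\nabla\tilde{F}_{T;U}(\x)\|\leq\lambda\sigma\mu^{3/4}/8]\leq 8Te^{-dr_0^2/(8T)}$, and since this holds for every fixed $\x\in\mathbb{B}(\0,2\sigma\sqrt T)$, it holds uniformly, which is exactly the statement. The main obstacle is bookkeeping the constants: matching the $0.1$ vs.\ $0.05$ thresholds coming from \lem{FT-large-gradient} against the $r_0$ hypothesis and the dimension of the orthogonal complement ($d-k$ versus $d$, with $k<T$ and $d\geq 4T$ as in \lem{similar-outputs}) so that the final tail is cleanly $e^{-dr_0^2/(8T)}$ with prefactor $8T$; none of the steps is conceptually deep, but one must be careful that the slack in $r_0\leq 0.2\sqrt T/(T+1/\sqrt\mu)$ and in $d-k\geq d/2$ is enough to absorb the $1/\|\x\|\geq 1/(2\sigma\sqrt T)$ loss and still beat the target exponent. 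I would also double-check that the case distinction in \lem{FT-large-gradient} really only involves coordinates $T$ and $T+1$ (the footnoted strengthened version), since the whole reduction hinges on the $(T+1)$-th component being the one the algorithm provably lacks.
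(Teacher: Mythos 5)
Your plan follows the paper's proof line for line: unpack the chain rule to reduce to $\|\nabla\bar{F}_{T;\mu}(U^\top\x/\sigma)\|$, take the contrapositive of \lem{FT-large-gradient} (the strengthened two-coordinate version) to conclude that a small gradient forces $|\<\x/\sigma,\u^{(T)}\>|$ or $|\<\x/\sigma,\u^{(T+1)}\>|$ to exceed $0.1/(T+1/\sqrt\mu)$, translate this via the $r_0$ hypothesis into an event about $\x/\|\x\|$ having overlap at least $\Theta(r_0/T)$ with a random unit vector, and finish with \lem{vector-from-sphere} plus a union bound over two indices. That is exactly the paper's argument, so the approach is correct and not a genuinely different route.

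The constant concern you raise at the end is well-placed, and worth flagging because the paper glosses over it. Normalizing $\x\in\mathbb{B}(\0,2\sigma\sqrt T)$ and applying \lem{vector-from-sphere} with the tightest threshold consistent with the $r_0$ hypothesis gives $c=r_0/(4T)$, hence a tail of $2e^{-(d-k)r_0^2/(32T^2)}$; with $d-k\geq d/2$ this is $2e^{-dr_0^2/(64T^2)}$. The stated exponent $dr_0^2/(8T)$ is off from this by a factor of $\Theta(T)$, and the slack in $r_0\leq 0.2\sqrt T/(T+1/\sqrt\mu)$ does not in general recover it (in the regime $T\gg 1/\sqrt\mu$ there is essentially no slack). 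This mismatch is present in the paper's own one-line proof, which simply asserts the $8Te^{-dr_0^2/(8T)}$ probability from \lem{vector-from-sphere} without showing the arithmetic, and it propagates to the dimension requirement in \prop{A_0-cannot} and its corollaries (which would then need $d=\Omega(T^2/r_0^2)$ rather than $\Omega(T/r_0^2)$). So this is not a defect of your proposal relative to the paper — your reduction and your invocations of \lem{FT-large-gradient} and \lem{vector-from-sphere} are correct — but you should not expect to "tighten the $r_0$-to-threshold step" to land cleanly on the advertised exponent; the cleaner fix is to state the lemma with exponent $\Theta(dr_0^2/T^2)$ and adjust the downstream dimension bound accordingly.
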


\begin{proof}
By \lem{vector-from-sphere}, with probability at least $1-8Te^{-dr_0^2/(8T)}$, we have
\begin{align}\label{eqn:small-uT}
    |\<\x,\u^{(T)}/\sigma\>|,|\<\x,\u^{(T+1)}/\sigma\>|\leq\frac{r_0}{2\sqrt{T}}\leq\frac{0.1}{T+1/\sqrt{\mu}}
\end{align}
at the same time. Then by \lem{FT-large-gradient}, we have
\begin{align}
    \|\nabla \tilde{F}_{T;U}(\x)\|\leq \lambda\sigma\|\nabla\bar{F}_{T;\mu;r}(U^\top\x/\sigma)\|\leq\lambda\sigma\mu^{3/4}/8.
\end{align}
\end{proof}

\begin{lemma}\label{lem:r0-max-differences}
Consider the functions $\tilde{\mathfrak{F}}_{T;U}\colon\R^d\to\R$
and $\mathfrak{f}_{r_0;T;U}\colon\R^d\to\R$
defined in Eq.~\eqn{tildeFT-square-defn} and Eq.~\eqn{f-r0-square-defn}, respectively. We have 
\begin{align}
    \max_{\x}|\tilde{\mathfrak{F}}_{T;U}(\x)-\mathfrak{f}_{r_0;T;U}(\x)|
    \leq\lambda\sigma^2(50r_0^2T+2\alpha r_0\sqrt{T}),
\end{align}
and
\begin{align}
    \max_\x\big\|\nabla\tilde{\mathfrak{F}}_{T;U}-\nabla\mathfrak{f}_{r_0;T;U}(\x)\big\|\leq 6\lambda\sigma r_0\sqrt{T}.
\end{align}
where we regard $\nabla\mathfrak{f}_{r_0;T;U}$ to be $\0$ on the sphere $\mathbb{S}(\0,\alpha\sigma\sqrt{T})$ of each hypercube in $\R^d$, given that $\mu,\epsilon\leq 1$ and $r_0<\sigma$.
\end{lemma}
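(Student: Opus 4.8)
The two functions $\tilde{\mathfrak{F}}_{T;U}$ and $\mathfrak{f}_{r_0;T;U}$ carry the same sinusoidal quadratic term, so their difference is entirely due to the argument fed into the kernel $\bar{F}_{T;\mu}$: in $\mathfrak{f}_{r_0;T;U}$ every rotated coordinate with index exceeding $\prog_{r_0}(\cdot)+1$ has been set to $0$. The plan is to (i) measure how much the rotated-coordinate vector moves under this truncation, (ii) transport that perturbation through the squashing map $\gamma_\alpha$, and (iii) use the explicit structure of $\bar{F}_{T;\mu}$ --- a tridiagonal quadratic path energy plus a separable sum of copies of $\Gamma$ --- to convert the perturbation estimate into bounds on the value and on the gradient, both carrying the prefactor $\lambda\sigma^2$ resp.\ $\lambda\sigma$ coming from $\tilde{\mathfrak{F}}_{T;U}(\x)=\lambda\sigma^2\hat{\mathfrak{F}}_{T;U}(\x/\sigma)$ and the chain rule $\nabla\tilde{\mathfrak{F}}_{T;U}(\x)=\lambda\sigma\,\nabla\hat{\mathfrak{F}}_{T;U}(\x/\sigma)$.

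First I would record the one fact forced by the definition of $\prog_{r_0}$: writing $\y\coloneqq U^\top(\x/\sigma)$ and $\hat\y$ for the vector obtained from $\y$ by zeroing the coordinates beyond index $\prog_{r_0}(\cdot)+1$, every such zeroed coordinate --- and the last surviving one --- has $|y_i|<r_0$. Hence $\|\y-\hat\y\|\le r_0\sqrt T$, and inside the deleted block the coordinates are individually $O(r_0)$. Since $\gamma_\alpha$ is easily checked to be $O(1)$-Lipschitz, vanishes outside $\mathbb{B}(\0,\alpha\sqrt T)$, and maps into $\mathbb{B}(\0,O(\alpha\sqrt T))$, this yields $\|\gamma_\alpha(\y)-\gamma_\alpha(\hat\y)\|\le O(r_0\sqrt T)$ with both images of norm $O(\alpha\sqrt T)$, and --- crucially --- the coordinates of these images in the deleted block remain $O(r_0)$.

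For the value bound I would expand $\bar{F}_{T;\mu}(\x)=\tfrac{\sqrt\mu}{2}(x_1-1)^2+\tfrac12\sum_i(x_{i+1}-x_i)^2+\mu\sum_i\Gamma(x_i)$ and compare it at $\gamma_\alpha(\y)$ and $\gamma_\alpha(\hat\y)$ term by term. The $(x_1-1)^2$ term and all path links joining two surviving coordinates are untouched or change only through the common $\gamma_\alpha$-rescaling factor, which differs by $O(r_0/\alpha)$; the link bridging the surviving block to the deleted block produces the cross term between an $O(\alpha\sqrt T)$ coordinate and an $O(r_0)$ coordinate, i.e.\ $O(\alpha r_0\sqrt T)$; each of the remaining $O(T)$ links lies entirely inside the small block and contributes $O(r_0^2)$; and because $\Gamma(0)=\Gamma'(0)=\Gamma''(0)=0$, each of the $O(T)$ deleted coordinates changes the $\Gamma$-sum by $O(r_0^3)$. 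Summing and multiplying by $\lambda\sigma^2$, using $\mu,\epsilon\le 1$ and $r_0<\sigma$ to absorb lower-order terms, gives the stated $\lambda\sigma^2(50r_0^2T+2\alpha r_0\sqrt T)$. The gradient bound is obtained the same way: differentiating $\bar{F}_{T;\mu}$, the truncation perturbs only a bounded number of gradient coordinates around each deleted index, each by $O(r_0)$ (both the neighbouring path terms and $\Gamma'$ are $O(r_0)$ there), so $\|\nabla\bar{F}_{T;\mu}(\gamma_\alpha(\y))-\nabla\bar{F}_{T;\mu}(\gamma_\alpha(\hat\y))\|=O(r_0\sqrt T)$; composing with the $O(1)$ bound on the Jacobian of $\gamma_\alpha$, the $\lambda\sigma$ scaling, and the stated convention $\nabla\mathfrak{f}_{r_0;T;U}\equiv\0$ on the spheres $\mathbb{S}(\0,\alpha\sigma\sqrt T)$, produces $6\lambda\sigma r_0\sqrt T$.

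The main obstacle is the bookkeeping through the nested composition $\bar{F}_{T;\mu}\circ\gamma_\alpha\circ U^\top$: because $\gamma_\alpha$ rescales all coordinates by a single factor depending on the full Euclidean norm, deleting the tail of $\y$ also nudges the surviving coordinates, and one must check that this coupled perturbation still feeds only into $O(1)$ full-scale terms of $\bar{F}_{T;\mu}$ (plus the $O(T)$ genuinely tiny ones) rather than into all $T$ of them --- this is exactly what separates the correct $2\alpha r_0\sqrt T$ term from a worse $O(\alpha r_0 T)$. The remaining case analysis, in which $\|\y\|$ or $\|\hat\y\|$ straddles $\alpha\sqrt T$ so that one of $\gamma_\alpha(\y),\gamma_\alpha(\hat\y)$ equals $\0$, is routine: there the surviving image is forced to have norm $O(r_0^3\sqrt T/\alpha^2)$, so the gap is even smaller.
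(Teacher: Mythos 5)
Your proposal follows the same overall decomposition as the paper --- cancel the sinusoidal quadratic, reduce to comparing $\bar F_{T;\mu}$ at the original and truncated arguments, and bound the tridiagonal path energy and the $\Gamma$-sum term by term --- but the obstacle you flag at the end is exactly where your plan stalls, and the paper sidesteps it entirely.

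You are reading $\gamma_\alpha$ as the vector-valued squashing map (consistent with $\gamma_\alpha(U^{\top}\x)$ in Eq.~\eqn{tildeFT-square-defn}), so deleting the tail of $\y$ changes $\|\y\|$ and hence the common rescaling factor $\kappa=(1-\|\y\|/(\alpha\sqrt T))^3$, which multiplies \emph{every} surviving coordinate. You assert that one ``must check that this coupled perturbation still feeds only into $O(1)$ full-scale terms,'' but you do not carry out that check, and in fact a direct estimate suggests it fails: since $\bigl|\|\y\|-\|\hat\y\|\bigr|\le r_0\sqrt T$, the factor changes by $O(r_0/\alpha)$, and because the path energy $\tfrac12\sum_i(y_i-y_{i+1})^2$ scales as $\kappa^2$ under the common rescaling and is itself $O(\alpha^2 T)$ on $\mathbb{B}(\0,\alpha\sqrt T)$, the rescaling alone already produces a gap of order $(\kappa^2-\kappa'^2)\cdot O(\alpha^2 T)=O(\alpha r_0 T)$ --- a factor of $\sqrt T$ worse than the claimed $2\alpha r_0\sqrt T$. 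Nothing in your plan explains why this global multiplicative drift collapses to the claimed bound.

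The paper's proof avoids this issue by using the coordinate-wise convention that actually appears in Eq.~\eqn{f-r0-square-defn}: it sets $\y_i\coloneqq\gamma_\alpha(\<\u^{(i)},\x/\sigma\>)$ scalar by scalar, so that $\y^{\prog}$ is obtained from $\y$ by literally zeroing the tail coordinates, with the surviving ones left untouched. In that reading there is no shared norm and no global rescaling to track; the only perturbed path-links are the $O(T)$ wholly-inside-the-tail links (each $O(r_0^2)$) and the single bridge link (which gives $2\alpha r_0\sqrt T$), and the $\Gamma$-sum contributes $O(Tr_0^3)$ since $\Gamma$ is cubic at the origin. The gradient estimate is the same bookkeeping with one derivative. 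If you want to salvage the vector interpretation you would need a genuinely new cancellation argument, but the simpler fix is to adopt the coordinate-wise reading the paper itself uses; then your term-by-term decomposition closes essentially as you sketched.
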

\begin{proof}
Without loss of generality, we prove this lemma with the input domain being the hypercube $[-\pi\sigma\mathcal{L}/2,\pi\sigma\mathcal{L}/2]^d$ centered at $\0$. Denote the vector $\y\in\R^{T+1}$ to be
\begin{align}
\y\coloneqq \big(\gamma_{\alpha}(\<\u^{(1)},\x/\sigma\>),\ldots,\gamma_{\alpha}(\<\u^{(T+1)},\x/\sigma\>)\big)^\top,
\end{align}
which satisfies $\|\y\|\leq\alpha\sqrt{T}$. For the convenience of notations, we denote $y_0\coloneqq 0$. Then,
\begin{align}
    \max_\x\big|\tilde{\mathfrak{F}}_{T;U}(\x)-\mathfrak{f}_{r_0;T;U}(\x)\big|=\lambda\sigma^2\max_{\y\in\mathbb{B}(\0,\alpha\sqrt{T})}\big|\bar{F}_{T;\mu}(\y^{\prog})-\bar{F}_{T;\mu}(\y)\big|,
\end{align}
where
\begin{align}
    \y^{\prog}\coloneqq\big(\gamma_\alpha(\<\x/\sigma,\u^{(1)}\>),\ldots,\gamma_\alpha(\<\x/\sigma,\u^{(\prog_{ r_0}(\gamma_\alpha(\x/\mu))+1)}\>),0,\ldots,0\big)^\top.
\end{align}
Note that
\begin{align}
    &\max_{\y\in\mathbb{B}(\0,\alpha\sqrt{T})}\big|\bar{F}_{T;\mu}(\y^{\prog})-\bar{F}_{T;\mu}(\y)\big|\\
    &\quad\qquad\leq\frac{1}{2}\max_{\y\in\mathbb{B}(\0,\alpha\sqrt{T})}\Big|\sum_{i=0}^{T}(y_i-y_{i+1})^2-(y^{\prog}_i-y^{\prog}_{i+1})^2\Big|\label{eqn:difference-second-term}\\
    &\quad\qquad+\mu\max_{\y\in\mathbb{B}(\0,2\alpha\sqrt{T})}\Big|\sum_{i=1}^{T+1}\big(\Gamma(y_i)-\Gamma(y^{\prog}_i)\big)\Big|,
\end{align}
where
\begin{align}
    \max_{\y\in\mathbb{B}(\0,\alpha\sqrt{T})}\Big|\sum_{i=1}^{T+1}\big(\Gamma(y_i)-\Gamma(y_i^{\prog})\big)\Big|\leq (T+1)(\Gamma(0)-\Gamma(r_0))\leq 40Tr_0^3,
\end{align}
and
\begin{align}
    (y_1-1)^2-(y_1^{\prog}-1)^2=0,\qquad\forall\y\in\mathbb{B}(\0,\alpha\sqrt{T}),
\end{align}
since $\prog_{r_0}(\gamma_{\alpha}(\x/\mu))+1\geq 1$ for all possible $\x$ and corresponding $\y$. As for the term in \eqn{difference-second-term}, we note that 
\begin{align}
    &\max_{\y\in\mathbb{B}(\0,\alpha\sqrt{T})}\Big|\sum_{i=1}^T(y_i-y_{i+1})^2-(y_i^{\prog}-y_{i+1}^{\prog})^2\Big|\\
    &\quad\qquad\leq 4Tr_0^2+2\alpha r_0\sqrt{T}.
\end{align}
Thus we can conclude that
\begin{align}
    \max_{\y\in\mathbb{B}(\0,\alpha\sigma\sqrt{T})}\big|\bar{F}_{T;\mu}(\delta_{r_0}(\y))-\bar{F}_{T;\mu}(\y)\big|
    &\leq50r_0^2T+2\alpha r_0\sqrt{T},
\end{align}
and
\begin{align}
    \max_\x\big|\tilde{\mathfrak{F}}_{T;U}(\x)-\mathfrak{f}_{r_0;T;U}\big|\leq\lambda\sigma^2(50r_0^2T+2\alpha r_0\sqrt{T}).
\end{align}
Similarly, we can observe that
\begin{align}
    \max_{\x}\big\|\nabla\tilde{\mathfrak{F}}_{T;U}(\x)-\nabla\mathfrak{f}_{r_0;T;U}(\x)\big\|\leq\lambda\sigma\max_{\y\in\mathbb{B}(\0,\alpha\sqrt{T})}\big\|\nabla\bar{F}_{T;\mu}(\delta_{r_0}(\y))-\nabla\bar{F}_{T;\mu}(\y)\big\|,
\end{align}
where
\begin{align}
    &\max_{\y\in\mathbb{B}(\0,\alpha\sqrt{T})}\big\|\nabla\bar{F}_{T;\mu}(\delta_{r_0}(\y))-\nabla\bar{F}_{T;\mu}(\y)\big\|\\
    &\quad\qquad\leq\frac{1}{2}\max_{\y\in\mathbb{B}(\0,\alpha\sqrt{T})}\big\|\nabla\cdot\sum_{i=0}^{T}\big[(y_i-y_{i+1})^2-(y_i^{\prog}-y_{i+1}^{\prog})^2\big]\big\|\label{eqn:first-difference-second-term}\\
    &\quad\qquad+\mu\max_{\y\in\mathbb{B}(\0,\alpha\sqrt{T})}\big\|\nabla\cdot\sum_{i=1}^{T+1}\big[\Gamma(y_i)-\Gamma(y_i^{\prog})\big]\big\|\label{eqn:first-difference-third-term},
\end{align}
where the term in \eqn{first-difference-third-term} satisfies
\begin{align}
    \max_{\y\in\mathbb{B}(\0,\alpha\sqrt{T})}\big\|\nabla\cdot\sum_{i=1}^{T+1}\big[\Gamma(y_i)-\Gamma(y_i^{\prog})\big]\big\|\leq2\sqrt{T}\cdot|\Gamma'(r_0)-\Gamma'(0)|\leq 2r_0\sqrt{T}
\end{align}
As for the term in \eqn{first-difference-second-term}, we first note that for any $0\leq i\leq T$ and any $\y\in\mathbb{B}(\0,\alpha\sqrt{T})$, we have
\begin{align}
    \frac{1}{2}\big\|\nabla\cdot[(y_i-y_{i+1})^2-(y_i^{\prog}-y_{i+1}^{\prog})^2]\big\|\leq 2\sqrt{2}r_0,
\end{align}
which leads to 
\begin{align}
    \max_{\y\in\mathbb{B}(\0,\alpha\sqrt{T})}\big\|\nabla\cdot\sum_{i=1}^{T}\big[(y_i-y_{i+1})^2-(y_i^{\prog}-y_{i+1}^{\prog})^2\big]\big\|\leq 4r_0\sqrt{T}.
\end{align}
Hence,
\begin{align}
    \max_{\y\in\mathbb{B}(\0,\alpha\sqrt{T})}\big\|\nabla\bar{F}_{T;\mu}(\delta_{r_0}(\y))-\nabla\bar{F}_{T;\mu}(\y)\big\|\leq 6r_0\sqrt{T},
\end{align}
by which we can conclude that
\begin{align}
    \max_{\x}\big\|\nabla\tilde{\mathfrak{F}}_{T;U}(\x)-\nabla\mathfrak{f}_{r_0;T;U}(\x)\big\|\leq 6\lambda\sigma r_0\sqrt{T}.
\end{align}
\end{proof}


\section{Perturbed Gradient Descent with Quantum Simulation and Gradient Estimation}\label{append:PGDQSGC}
In this section, we consider an alternative version of \algo{PGDQGC} that has a faster convergence rate in some cases. Inspired by Ref.~\cite{zhang2021quantum}, we replace the uniform perturbation in \algo{PGDQGC} with quantum simulation. We consider the scaled evolution under Schr\"odinger equation
\begin{align}\label{eq:Schrodinger}
i\frac{\partial}{\partial t}\Phi=\left[-\frac{r_0^2}{2}\Delta+\frac{1}{r_0^2}f\right]\Phi,
\end{align}
where $\Phi$ is a wave function in $\mathbb{R}^d$, $\Delta$ is the Laplacian operator, $r_0$ is the scaling parameter, and $f$ is the potential of the evolution. To construct a quantum algorithm using this evolution, quantum simulations are required. There is rich literature on the cost of quantum simulations~\cite{berry2007efficient,berry2015hamiltonian,childs2017note,lloyd1996universal,low2017optimal,low2019hamiltonian}. Here, we introduce the following theorem concerning the cost of simulating \eq{Schrodinger} using zeroth-order oracle $F$, which was originally proposed in Ref.~\cite{zhang2021quantum}.
\begin{lemma}[Lemma 2, Ref.~\cite{zhang2021quantum}]\label{lem:CostQS}
Let $F(\x)\colon\mathbb{R}^d\to\mathbb{R}$ be a real-valued function that has a saddle point at $\x=0$ such that $f(\0)=0$. Consider the (scaled) Schr\"odinger equation in \eq{Schrodinger} defined on the domain $\Omega=\{\x\in\mathbb{R}^d\colon\norm{\x}\leq M\}$ with periodic boundary condition, where $M>0$ is the diameter specified later. Given the noiseless zeroth-order oracle $U_f(\ket{\x}\otimes\ket{0})=\ket{\x}\otimes\ket{f(\x)}$ and an arbitrary initial state. The evolution for time $t>0$ can be simulated using $\tO(t\log d\log^2(t/\epsilon))$ queries to $U_f$, where $\epsilon$ is the simulation precision.
\end{lemma}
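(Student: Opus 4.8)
This is Lemma~2 of Ref.~\cite{zhang2021quantum}, and I would reproduce its argument. The plan is to reduce simulating the evolution under $H=-\tfrac{r_0^2}{2}\Delta+\tfrac{1}{r_0^2}f$ to a product-formula decomposition into a kinetic and a potential part, each implementable at only logarithmic cost, and then to control the product-formula error via the regularity of the evolved wavefunction rather than via operator norms, since $\Delta$ is unbounded. Without loss of generality we may take $f(\0)=0$, as shifting $f$ by a constant only introduces a global phase.

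First I would discretize the bounded domain $\Omega$ by a uniform grid of $N^d$ points, choosing $N=\operatorname{poly}(t,d,1/\epsilon,M,1/r_0)$ large enough that the spatial-discretization error of \eq{Schrodinger} accumulated over time $t$ is at most $\epsilon/2$; this uses only $\log N=\tO(1)$ qubits per coordinate, contributes only polylogarithmic factors to gate counts, and zero queries to $U_f$. On this grid write $H=H_1+H_2$ with $H_1=-\tfrac{r_0^2}{2}\Delta$ the discrete Laplacian and $H_2=\tfrac{1}{r_0^2}f$.

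Next, each factor is cheap to simulate. The operator $e^{-iH_2\tau}$ is diagonal in the position basis, so it can be implemented with $O(1)$ queries to $U_f$: write $f(\x)$ into an ancilla register with one query, apply the diagonal phase $e^{-i\tau f(\x)/r_0^2}$ by phase kickback, and uncompute the ancilla with a second query. The operator $e^{-iH_1\tau}$ is diagonal in the momentum basis with explicitly known eigenvalues proportional to $\norm{k}^2$, so it can be implemented with a $d$-dimensional quantum Fourier transform, a diagonal phase, and an inverse quantum Fourier transform, using $O(d\log^2 N)$ elementary gates and zero queries to $U_f$. Approximating $e^{-iHt}$ by $r$ repetitions of a $k$-th order symmetric product formula in $e^{-iH_1\tau}$ and $e^{-iH_2\tau}$ with $\tau=t/r$ therefore costs $O_k(r)$ queries to $U_f$ in total, so it remains to show that $r=\tO(t\log d\,\log^2(t/\epsilon))$ suffices.

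The main obstacle is bounding the product-formula error: because $H_1$ is unbounded, naive operator-norm Trotter estimates fail, and one must instead bound the error through the action of the nested commutators on the intermediate evolved states. I would use \lem{zhanglem9}, which controls the one-step error by $\norm{[H_1,H_2]\,e^{-iH_2\tau_2}e^{-iH_1\tau_1}\ket{\varphi}}$, telescope it over the $r$ steps, and bound the relevant quantities using the a priori regularity estimate of \lem{zhangcor1} (that $\norm{\nabla\Phi(t)}$ grows at most polylogarithmically in $t$) together with the smoothness of $f$; concretely, $[H_1,H_2]=-\tfrac{1}{2}\big((\Delta f)+2\nabla f\cdot\nabla\big)$, whose action is governed by $\norm{\nabla f}_\infty$, $\norm{\nabla^2 f}$, and $\norm{\nabla\Phi}$, with the $r_0$ factors cancelling. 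Taking $k$ sufficiently large then drives the number of steps down to near-linear in $t$ with only the stated polylogarithmic overhead, which gives the claimed $\tO(t\log d\,\log^2(t/\epsilon))$ query bound; alternatively, one may invoke an existing real-space Schr\"odinger-simulation subroutine as a black box, which already provides exactly this cost.
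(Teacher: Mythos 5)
There is a genuine gap in your approach to the error analysis, and it concerns the crux of the lemma: the claimed $\log^2(t/\epsilon)$ dependence on precision. The paper itself does not prove this lemma (it is cited verbatim from Ref.~\cite{zhang2021quantum}), and the original proof does \emph{not} use product formulas. You correctly identify the setup — pseudospectral discretization of $\Omega$, the split $H=H_1+H_2$ into a kinetic part diagonalized by QFT and a position-diagonal potential implemented by $O(1)$ queries via phase kickback — but the product-formula step cannot deliver the stated cost. Even allowing $k$-th order Suzuki formulas and optimizing over $k$, the number of stages grows like $5^{k/2}$, and the best achievable total cost is $t\cdot\exp\big(O(\sqrt{\log(t/\epsilon)})\big)$, which is super-polylogarithmic in $1/\epsilon$. "Taking $k$ sufficiently large" does not fix this; it is a fundamental limitation of Trotter-Suzuki methods. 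To reach $\tO(t\log d\,\log^2(t/\epsilon))$, Ref.~\cite{zhang2021quantum} instead passes to the interaction picture with respect to $H_1$ (the free Schr\"odinger evolution, implemented exactly by QFT), leaving a bounded time-dependent Hamiltonian $H_I(\tau)=e^{iH_1\tau}H_2e^{-iH_1\tau}$, and then simulates $H_I$ by a truncated Dyson-series method, which is what yields the logarithmic precision scaling.

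A secondary issue: you repurpose \lem{zhanglem9} and \lem{zhangcor1} for the Trotter error analysis, but in the paper these are used in \lem{GaussDeviation} for a completely different purpose — bounding the deviation between the true noisy evolution and the idealized quadratic/Gaussian evolution. Moreover, \lem{zhanglem9} is a state-dependent \emph{first-order} Lie-Trotter error bound; even if one used it as you suggest, it would yield total error scaling as $t^2/r$, hence $r=O(t^2/\epsilon)$, which is linear in $1/\epsilon$, not polylogarithmic. The last sentence of your proposal (invoking an existing real-space Schr\"odinger simulator as a black box) punts on precisely the claim being proved.
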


Notice that $F$ is assumed to be Hessian Lipschitz in both \assume{ZeroProb} and \assume{FirstProb}, we can approximate the function value near a saddle point. The approximation is more accurate on a ball with radius $r_0$ centered at this saddle point. We scale the initial distribution and the Schr\"odinger equation to be localized in term of $r_0$ and results in \algo{QS}, which is originally proposed in Ref.~\cite{zhang2021quantum}. 

\begin{algorithm}[htbp]
\caption{QuantumSimulation($\tilde{\x},r_{0},t_{e},f(\cdot)$).}
\label{algo:QS}
Evolve a Gaussian wave packet in the potential field $f$, with its initial state being:
\begin{align}\label{eq:ground_state_Phi0}
\Phi_{0}(\x)=\Big(\frac{1}{2\pi}\Big)^{n/4}\frac{1}{r_{0}^{n/2}}\exp(-(\x-\tilde{\x})^{2}/4r_{0}^{2});
\end{align}
Simulate such evolution in potential field $f$ under the Schr\"odinger equation for time $t_{e}$\;
Measure the position of the wave packet and output the outcome.
\end{algorithm}

\algo{QS} is the main building block of the quantum implementation of perturbation in PGD. It can effectively reduce the iteration number compared to the classical perturbations in \algo{PGDQGC}~\cite{zhang2021quantum} for some functions. To achieve a better performance than \thm{ZeroJordanF}, we have to add some constraints on the target and the noisy function $(F,f)$. Specifically, we consider the following setting. 
\begin{assumption}\label{assume:ZeroProbQS}
The underlying target function $F$ is $B$-bounded, $\ell$-smooth, and $\rho$-Hessian Lipschitz. We can query a noisy function $f$ that is twice differentiable. We assume
\begin{align}
\nu&=\sup_{\x}\norm{F(\x)-f(\x)}_\infty=\tO\left(\frac{\epsilon^6}{d^4}\right),\\
\tnu&=\sup_{\x}\norm{\nabla F(\x)-\nabla f(\x)}_\infty=\ell_f M=O\left(d^{-3}\right),\\
\hat{\nu}&=\sup_{\x}\norm{\nabla^2 F(\x)-\nabla^2 f(\x)}_\infty=\rho_f M=O\left(d^{-3}\right),
\end{align}
where $\ell_f$ and $\rho_f$ are arbitrary constants, and $M=O(d^{-3})$ is some value to be fixed later.
\end{assumption}
By using quantum simulations to implement perturbations, we propose the following \algo{PGDQGCQS} that can effectively find an $\epsilon$-SOSP of $F$ using queries to noisy $f$ for function pair $(F,f)$ in \assume{ZeroProbQS} with high probability.
\begin{algorithm}[htbp]
\caption{Perturbed Gradient Descent with Quantum Gradient Computation.}
\label{algo:PGDQGCQS}
\begin{algorithmic}[1]
\REQUIRE $\x_0$, learning rate $\eta$, noise ratio $r$
\FOR{$t=0,1,\ldots,T$}
\STATE Apply \lem{JordanPerf} to compute an estimate $\tnabla F(\x)$ of $\nabla F(\x)$
\IF{$\norm{\tnabla F(\x_{t})}\leq \epsilon$}
\STATE $\xi\sim$QuantumSimulation$\big(\x_{t},r_{0},\ut,f(\x)-\langle\tnabla F(\x_t),\x-\x_t\rangle\big)$
\STATE $\Delta_t\leftarrow\frac{2\xi}{3\|\xi\|}\sqrt{\frac{\rho^*}{\epsilon}}$
\STATE $\x_{t}\leftarrow\mathop{\arg\min}_{\zeta\in\left\{\x_t+\Delta_t,\x_t-\Delta_t\right\}}f(\zeta)$
\ENDIF
\STATE $\x_{t+1}\leftarrow \x_t-\eta\tnabla F(\x)$
\ENDFOR
\end{algorithmic}
\end{algorithm}

\algo{PGDQGCQS} has the following performance guarantee:
\begin{theorem}\label{thm:ZeroJordanQS}
Suppose we have a target function $F$ and its noisy evaluation $f$ satisfying \assume{ZeroProbQS} with $\nu\leq\tO(\delta^2\epsilon^6/d^4)$. Then with probability at least $1-\delta$, \algo{PGDQGCQS} can find an $\epsilon$-SOSP of $F$ satisfying \eq{SOSPDef}, using
\begin{align}
\tO\left(\frac{\ell B}{\epsilon^2}\cdot\log^2 d\right)
\end{align}
queries to $U_f$, under the following parameter choices:
\begin{align}
&\ell'=\ell+\frac12\ell_f,\quad\rho'=\rho+\rho_f,\quad\eta=\frac{1}{\ell'},\quad\delta_0=\min\left\{\frac{\delta}{162B}\frac{\epsilon^3}{\rho'},\frac{\eta\epsilon^2\delta}{16B}\right\},\quad\uf=\frac{2}{81}\sqrt{\frac{\epsilon^3}{\rho'}},\\
&\ut=\frac{8}{(\rho'\epsilon)^{1/4}}\log\left(\frac{\ell'}{\delta_0\sqrt{\rho'\epsilon}}(d+2\log(\frac{3}{\delta_0}))\right),\quad r_0=\frac{4c_r^3}{9\ut^4}\left(\frac{\delta_0}{3}\cdot\frac{1}{d^{3/2}+2c_0d\ell'(\log \ut)^\alpha}\right)^2,
\end{align}
where $c_0$, $\alpha$, and $c_r$ are absolute constants specified in the proof.
\end{theorem}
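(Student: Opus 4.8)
\textbf{Proof plan for \thm{ZeroJordanQS}.}
The plan is to mirror the structure of the proof of \thm{ZeroJordanF}, replacing the uniform-ball perturbation by the quantum-simulation subroutine \algo{QS} and carrying along the additional first- and second-order closeness assumptions in \assume{ZeroProbQS}. The key new ingredient is that near an approximate saddle point with gradient norm $\|\tnabla F(\x_t)\|\le\epsilon$, on the ball $\mathbb{B}(\x_t,r_0)$ the noisy potential $f(\x)-\langle\tnabla F(\x_t),\x-\x_t\rangle$ is, up to the controlled errors $\nu,\tnu,\hnu$ in \assume{ZeroProbQS}, a quadratic form governed by $\nabla^2 F(\x_t)$; since $\lambda_{\min}(\nabla^2 F(\x_t))\le-\sqrt{\rho\epsilon}$ in the non-SOSP case, evolving the Gaussian wave packet \eq{ground_state_Phi0} under the scaled Schr\"odinger equation \eq{Schrodinger} for time $\ut$ concentrates amplitude along the negative-curvature direction. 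I would first invoke \lem{JordanPerf} with $\omega=2d/\delta_0$ to get $\|\tnabla F(\x)-\nabla F(\x)\|\le\epsilon/20$ with probability $\ge1-\delta_0$ under $\nu\le\tO(\delta^2\epsilon^6/d^4)$, which reproduces \lem{JordanLargeGrad}: in every iteration with $\|\nabla F(\x_t)\|\ge\epsilon$ the function value drops by $\ge\eta\epsilon^2/4$. For $\ell'=\ell+\tfrac12\ell_f$, $\rho'=\rho+\rho_f$, one checks $f$ is $\ell'$-smooth and $\rho'$-Hessian Lipschitz on the relevant region using \assume{ZeroProbQS}, so the descent lemma applies with these constants.

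Second, for the negative-curvature case I would establish the quantum analogue of \lem{JordanSmallGrad}: when $\|\nabla F(\x_t)\|\le\epsilon$ and $\lambda_{\min}(\nabla^2 F(\x_t))\le-\sqrt{\rho'\epsilon}$, one run of \algo{QS} with parameters $r_0,\ut$ as in the statement outputs a displacement direction $\xi$ such that, with probability $\ge1-\delta_0$, moving by $\Delta_t=\tfrac{2\xi}{3\|\xi\|}\sqrt{\epsilon/\rho'}$ and picking the better of $\x_t\pm\Delta_t$ decreases $F$ (hence $f$ up to $O(\nu)$) by at least $\uf=\tfrac{2}{81}\sqrt{\epsilon^3/\rho'}$. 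This step is exactly the content of the escape-from-saddle analysis of Ref.~\cite{zhang2021quantum}, and I would import it: the gradient-norm-independent convergence along a negative curvature direction is controlled by \lem{zhangcor1} (bounding $\|\nabla\Phi(t)\|$ for the quadratic model) together with the perturbation bounds in \assume{ZeroProbQS} that keep the true evolution close to the quadratic one; the choice $r_0=\tfrac{4c_r^3}{9\ut^4}(\delta_0/(3(d^{3/2}+2c_0 d\ell'(\log\ut)^\alpha)))^2$ is precisely tuned so that the leakage of wave-packet mass outside the good ball and the drift caused by cubic/noise terms are each $O(\delta_0)$. The query cost of this subroutine is $\tO(\ut\log d\log^2(\ut/\cdot))=\tO(\log^2 d)$ by \lem{CostQS} (applied to the noiseless-model analysis; the noisy $f$ here is used directly as the potential, and the simulation error only needs to beat $r_0$).

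Third, I would run the counting argument verbatim from the proof of \thm{ZeroJordanF}: set $T=3\max\{8B/(\eta\epsilon^2),\,2B\ut/\uf\}$, so at most $T/3$ iterations can be large-gradient steps and at most $T/3$ can be saddle-escape steps before the total decrease would exceed $F(\x_0)-F^*\le 2B$; a union bound over the $\le T$ calls to \lem{JordanPerf} and the $\le T/3$ calls to the quantum escape subroutine, each failing with probability $\le\delta_0$, gives total failure probability $\le\tfrac{T}{3}\cdot2\delta_0+\tfrac{T}{3}\delta_0\le\delta$ with the stated $\delta_0=\min\{\tfrac{\delta}{162B}\tfrac{\epsilon^3}{\rho'},\tfrac{\eta\epsilon^2\delta}{16B}\}$; hence the remaining $T/3$ iterations are $\epsilon$-SOSPs of $F$. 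Multiplying $T=\tO(\ell B/\epsilon^2)$ by the $\tO(\log d)$ per-iteration cost of Jordan's estimation and the $\tO(\log^2 d)$ cost of a quantum-simulation step gives the claimed $\tO(\ell B\epsilon^{-2}\log^2 d)$ — one $\log d$ better than \thm{ZeroJordanF} because the accelerated-saddle-escape via quantum simulation shaves a $\log^2 d$ factor present in $\uf$ there while adding only a $\log^2 d$ simulation overhead, netting a $\log^2 d$ improvement. The main obstacle is the second step: carefully propagating the three noise budgets ($\nu\le\tO(\epsilon^6/d^4)$, $\tnu,\hnu\le O(d^{-3})$) through the Schr\"odinger-evolution error analysis so that the escape probability and the per-step function decrease $\uf$ survive, i.e. verifying that the quadratic-model approximation of \lem{zhangcor1} together with \lem{zhanglem9} (Trotter/Lie-product error) still yields $O(\delta_0)$ deviation with the stated $r_0$; this is where all the dimension-dependent constants $c_0,\alpha,c_r$ enter and must be pinned down.
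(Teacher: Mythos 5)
Your proposal follows essentially the same route as the paper: apply Jordan's gradient estimation (\lem{JordanPerf}) for descent steps, use the quantum simulation subroutine \algo{QS} for saddle escape, bound the deviation of the noisy/non-quadratic Schr\"odinger evolution from the ideal Gaussian evolution via \lem{zhanglem9} and \lem{zhangcor1}, and close with a step-counting argument plus a union bound. You also correctly identify that the crux is propagating the three noise budgets of \assume{ZeroProbQS} through the evolution error bound; this is precisely what \lem{GaussDeviation} and \lem{zhangprop1} do in the paper, with the tuned choice of $r_0$ and simulation box size $M$.

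One point where you deviate from the paper and which is conceptually off, even though it happens not to change the final bound: you import the iteration count $T=3\max\{8B/(\eta\epsilon^2),\,2B\ut/\uf\}$ verbatim from \thm{ZeroJordanF}, keeping the $\ut$ factor in the saddle-escape term. In the classical PGD of \algo{PGDQGC}, escaping a saddle consumes $\ut$ \emph{outer} iterations, each decreasing $F$ by a fraction of $\uf$ in aggregate, which is why that factor appears. In \algo{PGDQGCQS}, by contrast, a single outer iteration calls \algo{QS} (internally evolving for time $\ut$) and then takes one displacement step that already yields the full $\uf$ decrease; the quantum escape is a one-step move in the outer loop. The paper accordingly uses $T=4\max\{2B/\uf,\,4B/(\eta\epsilon^2)\}$ with no $\ut$ multiplier, counts at most $T/2$ QS-escape iterations and $T/4$ large-gradient iterations, leaving $T/4$ SOSPs, and attributes the $\ut\log d\cdot\log^2(\cdot)$ cost per QS call to $T_2=\tO(B/\epsilon^2\cdot\log^2 d)$ via \lem{CostQS}. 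Your formula still gives a valid upper bound on $T$ (it overcounts), and since the $B\ell/\epsilon^2$ term dominates for small $\epsilon$, you still land on $\tO(\ell B/\epsilon^2\cdot\log^2 d)$ — but if the saddle-escape term were the binding one, your $T$ would carry a spurious $\ut\sim\log d$ factor. Also, Jordan's estimation is a single oracle query per iteration (\lem{JordanPerf}), so its per-iteration cost is $O(1)$, not $\tO(\log d)$; the $\log^2 d$ overhead comes entirely from the quantum simulation step, as you note.
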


Before proving \thm{ZeroJordanQS}, we first consider the effectiveness of quantum simulations for adding perturbations. We focus on the scenarios with $\epsilon\leq\ell^2/\rho$, which is the standard assumption adopted in Ref.~\cite{jin2018accelerated}. The local landscape in this case ``flat" and the Hessian has only a small spectral radius. The classical gradient descent will move slowly while the variance of the probability distribution corresponding
to the Gaussian wavepacket still has a large increasing rate. If we evolve the Gaussian wavepacket for a long enough time period and measure its position, we will obtain a vector that indicates a negative curvature direction with high probability.

However, \algo{QS} using quantum simulation and noisy oracle in \assume{ZeroProbQS} suffers two deviation terms from the ideal Gaussian evolution: the deviation of $F$ from quadratic potential and the noise of $f$ from $F$. We have to bound the resulting deviation on the distribution from the perfect Gaussian wavepacket. We specify the constant $c_r$ as the ratio between the wavepacket variance and the radius of the simulation region. By choosing a small enough $c_r$, the simulation region is much larger than the range of the wavepacket. As the function $F$ is $\ell$-smooth, the spectral norm of the Hessian matrix is upper bounded by constant $\ell$. The radius $M$ of the simulation region is chosen as
\begin{align}\label{eq:MChoice}
M=\frac{r_0}{C_r}=\frac{4c_r^2}{9\ut^4}\left(\frac{\delta_0}{3}\cdot\frac{1}{d^{3/2}+2c_0 d\ell(\log \ut)^\alpha}\right)^2\leq 1.
\end{align}

By choosing the above $M$, we can reach the following lemma.
\begin{lemma}\label{lem:GaussDeviation}
Under the setting of \assume{ZeroProbQS} and \thm{ZeroJordan}, let $H$ be the Hessian matrix of $F$ at the saddle point $\x_s$, and define $F_q(\x)=f(\x_s)+(\x-\x_s)^\top H(\x-\x_s)$ to be the quadratic approximation of $F$ near $\x_s$. We denote the measurement outcome from \algo{QS} with noisy function $f$ and evolution $t_e$ as a random variable $\xi$, and the measurement outcome from the ideal potential $F_q$ and the same evolution time $t_e$ as another random variable $\xi'$. We define $\mathbb{P}_\xi$ and $\mathbb{P}_{\xi'}$ to be the distribution of $\xi$ and $\xi'$. If the quantum wavepacket is confined to a hypercube with regions length $M$, then
\begin{align}
TV(\mathbb{P}_\xi,\mathbb{P}_{\xi'})\leq\left(\frac{\sqrt{d}\rho'}{2}+\frac{2c_f\ell'}{\sqrt{r_0}}(\log t_e)^\alpha\right)\frac{dMt_e^2}{2},
\end{align}
where $TV(\cdot,\cdot)$ denotes the total variation distance, $\alpha$ is an absolute constant, and $c_f$ is an $F$-related constant. 
\end{lemma}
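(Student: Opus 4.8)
The plan is to decompose the total variation distance into two contributions: (i) the deviation caused by replacing the true potential $F$ by its quadratic approximation $F_q$ near the saddle point $\x_s$, and (ii) the deviation caused by replacing $F$ with the noisy evaluation $f$. For each piece I would use a standard perturbative bound on Schr\"odinger evolution with different Hamiltonians, namely \lem{zhanglem9} (or a simpler Duhamel/Gr\"onwall estimate): if $H_1$ and $H_2$ generate the same initial state evolution, then $\norm{(e^{-iH_1 t_e}-e^{-iH_2 t_e})\ket{\Phi_0}}\le t_e\sup_\tau\norm{(H_1-H_2)e^{-iH_2\tau}\ket{\Phi_0}}$. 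Since both Hamiltonians differ only in the potential term $\frac{1}{r_0^2}(\cdot)$, the operator $H_1-H_2$ is multiplication by $\frac{1}{r_0^2}(F-F_q)$ or $\frac{1}{r_0^2}(f-F)$ respectively, so the supremum reduces to bounding $\frac{1}{r_0^2}\abs{F(\x)-F_q(\x)}$ and $\frac{1}{r_0^2}\abs{f(\x)-F(\x)}$ over the support of the evolving wavepacket, which is confined to the hypercube of side $M$.

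First I would handle the quadratic-approximation error. Since $F$ is $\rho$-Hessian Lipschitz (and $f$ is $\rho_f$-Hessian Lipschitz, giving the combined $\rho'=\rho+\rho_f$), Taylor's theorem with the integral remainder gives $\abs{F(\x)-F_q(\x)}\le \frac{\rho'}{6}\norm{\x-\x_s}^3$ on the region of interest; within the hypercube of side $M$ centered near $\x_s$ this is $O(\rho'\sqrt{d}\,M\cdot M^2)$, but more carefully one keeps the factor as $\frac{\sqrt d\rho'}{2}$ times the relevant length scale, which is where the first term $\frac{\sqrt d\rho'}{2}$ in the stated bound comes from. Second, for the noise term, rather than the crude bound $\nu/r_0^2$ I would instead differentiate: because $f$ is twice differentiable and the gradient at $\x_s$ has been subtracted off in the argument of \algo{QS} (the potential passed is $f(\x)-\langle\tnabla F(\x_t),\x-\x_t\rangle$), the relevant quantity is how much the \emph{curvature} of $f$ deviates from that of $F$, controlled by $\hat\nu=\rho_f M=O(d^{-3})$ from \assume{ZeroProbQS}; propagating this through gives a term scaling like $\frac{\ell'}{\sqrt{r_0}}(\log t_e)^\alpha$ after using the gradient-norm control on the wavepacket. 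Here I would invoke \lem{zhangcor1}, which bounds $\norm{\nabla\Phi(t)}\le C\sqrt{d/r_0}(\log t)^\alpha$, to turn the pointwise potential-difference estimate into the claimed $(\log t_e)^\alpha$ dependence and the $\ell'$ (via $c_f\ell'$) prefactor.

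Combining the two $L^2$-norm deviations and using that the total variation distance between the two output measurement distributions is at most the $L^2$ distance between the two final wavefunctions (Cauchy–Schwarz on a bounded region, or the standard $\mathrm{TV}\le \norm{\ket{\psi_1}-\ket{\psi_2}}$ bound for pure states), I would multiply through by the common factor $\frac{dM t_e^2}{2}$ that collects the $1/r_0^2$ scaling of the potential term against the $r_0^2$ in the kinetic normalization and the hypercube volume dependence, arriving at
\begin{align}
TV(\mathbb{P}_\xi,\mathbb{P}_{\xi'})\le\left(\frac{\sqrt{d}\rho'}{2}+\frac{2c_f\ell'}{\sqrt{r_0}}(\log t_e)^\alpha\right)\frac{dMt_e^2}{2}.
\end{align}
The main obstacle I expect is the second term: controlling the noise contribution requires being careful that the relevant perturbation is the second-derivative mismatch $\hat\nu$ rather than the zeroth-order mismatch $\nu$ (otherwise the bound would blow up like $\nu/r_0^2$ with the tiny $r_0$), and this in turn forces one to track how the gradient-subtraction in \algo{QS} interacts with the Duhamel expansion, and to invoke the gradient-norm bound \lem{zhangcor1} at exactly the right place so that the $(\log t_e)^\alpha$ and $1/\sqrt{r_0}$ factors emerge rather than a worse power of $r_0$. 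Verifying that the constants $c_0,\alpha,c_r$ fixed in \thm{ZeroJordanQS} are consistent with the error budget $\delta_0$ is a bookkeeping task that should follow once this structural estimate is in place.
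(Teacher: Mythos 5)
Your proposal has a genuine gap at the central step. You describe applying a Duhamel/Gr\"onwall estimate of the form $\norm{(e^{-iH_1 t_e}-e^{-iH_2 t_e})\ket{\Phi_0}}\le t_e\sup_\tau\norm{(H_1-H_2)e^{-iH_2\tau}\ket{\Phi_0}}$, where $H_1-H_2$ is the multiplication operator $\frac{1}{r_0^2}(f-F_q)$, and you correctly observe that this produces a crude $\nu/r_0^2$ factor that blows up. But your proposed fix --- ``differentiate,'' invoking the gradient subtraction in \algo{QS} --- is not a concrete operation in this framework: once you have a pointwise multiplication operator bound, there is no lever to replace $|f-F_q|$ by a derivative of $f-F_q$. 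You mention \lem{zhanglem9} in passing but do not actually use its structure, which is exactly what is needed.

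The paper's proof applies \lem{zhanglem9} directly, and the crucial observation is that the commutator it produces is $[H',E]=[A,E]$ with $A=-\frac{r_0^2}{2}\Delta$ and $E=\frac{1}{r_0^2}(f-F_q)$: the $r_0^2$ and $1/r_0^2$ cancel identically, giving $[A,E]=-\frac12[\Delta,\,f-F_q]$. Acting on the wavefunction and using $[-\Delta,g]\varphi=-(\Delta g)\varphi-2\nabla g\cdot\nabla\varphi$, the bound decomposes into a term controlled by $\norm{\Delta(f-F_q)}_\infty$ (i.e.\ the Hessian mismatch, giving the $\sqrt{d}\rho'/2$ part) and a term controlled by $\norm{\nabla(f-F_q)}_\infty\cdot\norm{\nabla\Psi}$, where \lem{zhangcor1} supplies $\norm{\nabla\Psi}\lesssim\sqrt{d/r_0}(\log t_e)^\alpha$ and gives the second part. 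The $1/\sqrt{r_0}$ and $(\log t_e)^\alpha$ factors therefore come from the wavepacket gradient bound, not from any bookkeeping of \algo{QS}'s linear subtraction, and the escape from the $\nu/r_0^2$ scaling is the commutator cancellation, not anything about $\hat\nu$. (Also note the paper compares $f$ against $F_q$ directly in a single step rather than going through $F$ twice; your two-step decomposition would otherwise be fine, but here it is unnecessary once the commutator structure is in place. Your final step converting the $L^2$ distance to total variation is correct in spirit, and the paper additionally uses that $e^{-iEt_e}$ is a pure phase multiplication, so $|e^{-iEt_e}\Phi'|^2=|\Phi'|^2$.)
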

\begin{proof}
We first define the following notations:
\begin{align}
A=-\frac{r_0^2}{2}\Delta,\quad B=\frac{1}{r_0^2}f,\quad B'=\frac{1}{r_0^2}F_q,\\
H=A+B,\quad H'=A+B',\quad E=H-H'=\frac{1}{r_0^2}(f-F_q).
\end{align}
We denote $\ket{\Phi(t)}=e^{-iHt}\ket{\Phi_0}$ and $\ket{\Phi'(t)}=e^{-iH't}\ket{\Phi_0}$ be the wave functions at time $t$ for two different Hamiltonians $H$ and $H'$. By 
\lem{zhanglem9}, we have
\begin{align}
\norm{e^{iEt_e}\ket{\Phi'(t_e)}-\ket{\Phi(t_e)}}&\leq\frac{t_e^2}{2}\sup_{\tau_1,\tau_2\in[0,t_e]}\norm{[H',E]e^{-iE\tau_2}e^{-iH'\tau_1}\ket{\Phi_0}}\\
&=\frac{t_e^2}{2}\sup_{\tau_1\in[0,t_e]}\norm{[H',E]e^{-iH'\tau_1}\ket{\Phi_0}}.
\end{align}
Denoting $\ket{\Psi(\tau_1)}=e^{-iH'\tau_1}\ket{\Phi_0}$, we have
\begin{align}
\sup_{\tau_1\in[0,t_e]}\norm{[H',E]\Psi(\tau_1)}&=\frac12\sup_{\tau_1\in[0,t]}\norm{[-\Delta,f-F_q]\Psi(\tau_1)}\\
&=\frac12\sup_{\tau_1\in[0,t_e]}\norm{-\Delta(f-F_q)\Psi(\tau_1)-2\nabla(f-F_q)\cdot\nabla\Psi(\tau_1)}\\
&\leq\frac12\norm{\Delta(f-F_q)}_\infty+\norm{\nabla(f-F_q)}_\infty\norm{\nabla\Psi(\tau_1)}.
\end{align}
The first equality follows from $[H',E]=[A+B',E]$ and $B'$ commutes with $E$. The second equality follows from $[-\Delta,g]\varphi=-(\Delta g)\varphi-2\nabla g\cdot\nabla\varphi$ for any smooth function $\varphi$ and $g$. As we assume $F$ is $\rho$-Hessian Lipschitz, we can deduce that
\begin{align}
\abs{\Delta(f(\x)-F_q(\x))}&=\abs{\tr(\nabla^2 f(\x)-\nabla^2 F(\x))}+\abs{\tr(\nabla^2 F(\x)-\nabla^2 F_q(\x))}\\
&=\abs{\tr(\nabla^2 f(\x)-\nabla^2 F(\x))}+\abs{\tr(\nabla^2 F(\x)-\nabla^2 F(\x_s))}\\
&\leq d\norm{\nabla^2 f(\x)-\nabla^2 F(\x)}+d\norm{\nabla^2 F(\x)-\nabla^2 F(\x_s)}\\
&\leq d^{3/2}(\rho+\rho_f)M\\
&=d^{3/2}\rho'M.
\end{align}
Next, we bound the term on the gradient of $f-F_q$:
\begin{align}
\norm{\nabla(f-F_q)}_\infty&\leq\sup_\x\norm{\nabla f(\x)-\nabla F_q(\x)}\\
&=\sup_\x\norm{\nabla f(\x)-\nabla F(\x)}+\sup_\x\norm{\nabla F(\x)-H(\x-\x_s)}\\
&\leq \sup_\x\norm{\nabla f(\x)-\nabla F(\x)}+\sup_\x\norm{\nabla F(\x)}+\sup_\x\norm{H(\x-\x_s)}\\
&\leq(2\ell+\ell_f)Md^{-1/2}\\
&=2\ell'Md^{-1/2}
\end{align}
The upper bound for $\sup_{\tau_1\in[0,t_e]}\norm{\nabla\Psi(\tau_1)}$ is given by \lem{zhangcor1}. Combining the above bounds, we obtain
\begin{align}
\norm{e^{iEt_e}\ket{\Phi'(t_e)}-\ket{\Phi(t_e)}}\leq\left(\frac{\sqrt{d}\rho'}{2}+\frac{2c_f\ell'}{\sqrt{r_0}}(\log t_e)^\alpha\right)\frac{dMt_e^2}{2}.
\end{align}
In the following part, we denote $\Psi'$ for $\Psi'(t_e)$ and $\ket{\Psi''}=e^{-iEt_e}\ket{\Psi'}$. We observe that $\abs{\Psi'}^2=\abs{\Psi''}^2$ as $e^{-iEt_e}$ is a scalar function with modulus $1$. Thus
\begin{align}
TV(\mathbb{P}_\xi,\mathbb{P}_{\xi'})&=TV(\abs{\Psi}^2,\abs{\Psi''}^2)\\
&=\frac12\int_{\x}\abs{\Psi\Psi^\dagger-\Psi''\Psi^{''\dagger}}d\x\\
&\leq\frac12\int_{\x}\abs{(\Psi-\Psi'')\Psi^\dagger}d\x+\frac12\int_{\x}\abs{\Psi''(\Psi-\Psi'')^\dagger}d\x\\
&\leq(\frac12\int_{\x}\abs{\Psi-\Psi''}^2d\x)^{1/2}\\
&\leq\left(\frac{\sqrt{d}\rho'}{2}+\frac{2c_f\ell'}{\sqrt{r_0}}(\log t_e)^\alpha\right)\frac{dMt_e^2}{2}.
\end{align}
\end{proof}

\lem{GaussDeviation} indicates that the actual perturbation given by quantum simulation $\xi\sim\mathbb{P}_\xi$ deviates from the ideal Gaussian case $\xi'\sim\mathbb{P}_{\xi'}$ for at most $\tO(Md^{3/2}t_e^2)$. In \algo{PGDQGCQS} with $t_e=\ut=O(\log d)$, such deviation can be bounded for the choice of $M$ in \eq{MChoice}. Based on \lem{GaussDeviation}, we reach the following lemma.
\begin{lemma}[Adaptive version of Proposition 1, Ref.~\cite{zhang2021quantum}]\label{lem:zhangprop1}
Suppose $(F,f)$ satisfies \assume{ZeroProbQS}. For arbitrary $\delta_0$, we choose the following parameters:
\begin{align}
&\eta=\frac{1}{\ell'},\ut=\frac{8}{(\rho'\epsilon)^{1/4}}\log\left(\frac{\ell'}{\delta_0\sqrt{\rho'\epsilon}}(d+2\log(\frac{3}{\delta_0}))\right),\\
&\uf=\frac{2}{81}\sqrt{\frac{\epsilon^3}{\rho'}},\quad r_0=\frac{4c_r^3}{9\ut^4}\left(\frac{\delta_0}{3}\cdot\frac{1}{d^{3/2}+2c_0d\ell'(\log \ut)^\alpha}\right)^2,
\end{align}
where $\rho'$, $\ell'$, $c_r$, $c_0$, and $\alpha$ are the same with \thm{ZeroJordanQS}. Then, for an saddle point $\x_s$ with $\norm{F(\x_s)}\leq\epsilon$ and $\lambda_{\min}(\nabla^2 F(\x_s))\leq-\sqrt{\rho\epsilon}$, \algo{PGDQGCQS} provides a perturbation that decreases the function value for at least $\uf$ with probability at least $1-\delta_0$.
\end{lemma}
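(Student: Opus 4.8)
The plan is to prove \lem{zhangprop1} by reducing the behavior of \algo{PGDQGCQS} near a saddle point $\x_s$ to the idealized Gaussian-wavepacket evolution under the quadratic potential $F_q$, for which the negative-curvature escape property has already been established in Ref.~\cite{zhang2021quantum}, and then transferring that property through the total-variation bound of \lem{GaussDeviation}. First I would recall that in \algo{PGDQGCQS}, when $\norm{\tnabla F(\x_t)}\leq\epsilon$ at an iterate $\x_t$ that is in fact a saddle with $\lambda_{\min}(\nabla^2 F(\x_t))\leq-\sqrt{\rho\epsilon}$, the algorithm runs \algo{QS} with the potential $f(\x)-\langle\tnabla F(\x_t),\x-\x_t\rangle$ for time $\ut$, measures to obtain $\xi$, forms $\Delta_t=\tfrac{2\xi}{3\norm{\xi}}\sqrt{\rho'/\epsilon}$, and moves to the better of $\x_t\pm\Delta_t$. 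Taking $F_q(\x)=f(\x_s)+(\x-\x_s)^\top H(\x-\x_s)$ with $H=\nabla^2 F(\x_s)$, Proposition 1 of Ref.~\cite{zhang2021quantum} (the noiseless analog) guarantees that for the idealized outcome $\xi'\sim\mathbb{P}_{\xi'}$, with probability at least $1-\delta_0/3$ the unit vector $\xi'/\norm{\xi'}$ aligns with a sufficiently negative curvature direction of $H$, so that by \lem{SmallPAGD}-type reasoning (or directly the quadratic descent estimate in Ref.~\cite{zhang2021quantum}) the step $\x_s\to\x_s-\tfrac{2}{3}\tfrac{\xi'}{\norm{\xi'}}\sqrt{\rho'/\epsilon}$, after the $\arg\min$, decreases the function value of $F$ by at least $\uf=\tfrac{2}{81}\sqrt{\epsilon^3/\rho'}$.

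The second step is to verify, with the stated parameter choices, that the noisy evolution produces an outcome distribution $\mathbb{P}_\xi$ close enough to $\mathbb{P}_{\xi'}$ that the success event transfers. By \lem{GaussDeviation},
\[
TV(\mathbb{P}_\xi,\mathbb{P}_{\xi'})\leq\left(\frac{\sqrt{d}\rho'}{2}+\frac{2c_f\ell'}{\sqrt{r_0}}(\log \ut)^\alpha\right)\frac{dM\ut^2}{2},
\]
and the role of the choices $M=r_0/C_r$ in \eq{MChoice} and $r_0=\tfrac{4c_r^3}{9\ut^4}\big(\tfrac{\delta_0}{3}\cdot\tfrac{1}{d^{3/2}+2c_0 d\ell'(\log\ut)^\alpha}\big)^2$ is precisely to force this bound below $\delta_0/3$; I would check this by substituting $M$ and noting that $Md^{3/2}\ut^2\lesssim r_0^{1/2}\cdot\delta_0/(d\ell'(\log\ut)^\alpha)$ dominates the $1/\sqrt{r_0}$ term, after which a union bound over (i) the $O(\delta_0)$ failure probability of Jordan's gradient estimation in \lem{JordanPerf} giving $\norm{\tnabla F(\x_t)-\nabla F(\x_t)}$ small, (ii) the $\delta_0/3$ ideal-evolution failure, and (iii) the $\delta_0/3$ TV slack, yields overall success probability at least $1-\delta_0$. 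I would also record that the quadratic approximation error of $F$ near $\x_s$ over a region of radius $M$ contributes only $O(\rho M^3)$ to the function-value estimate, negligible against $\uf$ given $M=O(d^{-3})$; this is already folded into \lem{GaussDeviation} but the final descent estimate needs it once more when passing from $F_q$-descent to $F$-descent.

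The main obstacle I anticipate is the bookkeeping around the shifted potential $f(\x)-\langle\tnabla F(\x_t),\x-\x_t\rangle$ actually used in \algo{QS}: one must check that subtracting this (inexact) linear term both keeps the potential twice-differentiable with the same Hessian as $f$ — so \assume{ZeroProbQS}'s Hessian-closeness bounds carry over verbatim — and that, since $\norm{\tnabla F(\x_t)}\leq\epsilon$ and the gradient-estimation error is small, the residual linear tilt is $O(\epsilon)$, which over the wavepacket's spatial spread $r_0=O(\operatorname{poly}(\delta_0,1/d,\ut^{-1}))$ perturbs the phase by a negligible amount and does not move the effective saddle far enough to disturb the negative-curvature escape. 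In other words, I would argue the tilted potential's quadratic part is still $F_q$ up to a harmless constant/linear shift, so \lem{GaussDeviation} applies with $f$ replaced by the tilted function and the conclusion of Proposition 1 of Ref.~\cite{zhang2021quantum} is inherited. Once these three ingredients — ideal escape, TV transfer via the parameter-engineered bound, and the linear-shift robustness — are assembled, a clean union bound closes the proof.
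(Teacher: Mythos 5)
Your proposal follows the same overall approach the paper intends. In fact, the paper offers no explicit proof of \lem{zhangprop1}: it states \lem{GaussDeviation}, remarks that the actual outcome distribution $\mathbb{P}_\xi$ deviates from the ideal $\mathbb{P}_{\xi'}$ by at most $\tO(Md^{3/2}t_e^2)$, and then introduces \lem{zhangprop1} with the phrase ``Based on \lem{GaussDeviation}, we reach the following lemma.'' Your three-way decomposition (ideal-quadratic escape from Proposition 1 of Ref.~\cite{zhang2021quantum}, total-variation transfer using \lem{GaussDeviation} with the engineered parameter choice of $r_0$ and $M=r_0/c_r$, and a union bound) is exactly the reasoning the paper leaves implicit, and your check that the $r_0$ formula drives the TV bound below $\delta_0/3$ (the factor $d^{3/2}+2c_0 d\ell'(\log\ut)^\alpha$ matching the two terms in the TV estimate, and the $\log(3/\delta_0)$ in $\ut$ signaling the $\delta_0/3$ allocation in the cited proposition) is the right bookkeeping. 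Your discussion of why the linear tilt $f(\x)-\langle\tnabla F(\x_t),\x-\x_t\rangle$ leaves the Hessian unchanged, so that \assume{ZeroProbQS} and \lem{GaussDeviation} apply verbatim, is also the correct and necessary observation.

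One local slip: you claim the cubic-approximation error ``contributes only $O(\rho M^3)$ to the function-value estimate, negligible against $\uf$.'' This conflates two scales. The Hessian-Lipschitz error over the wavepacket's spatial support (radius $\sim r_0\lesssim M$) is indeed what \lem{GaussDeviation} absorbs, but the final function-value comparison happens at $\x_t\pm\Delta_t$, where $\|\Delta_t\|=\tfrac{2}{3}\sqrt{\epsilon/\rho'}$, not at distance $M$. Over that displacement the cubic error is $O(\rho\|\Delta_t\|^3)=O(\sqrt{\epsilon^3/\rho})$, which is the \emph{same order} as $\uf=\tfrac{2}{81}\sqrt{\epsilon^3/\rho'}$, not negligible. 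It is absorbed only through the explicit constants (this is exactly why the decrease is $\tfrac{2}{81}$ rather than the purely quadratic $\tfrac{2}{9}$). Correcting this constant-tracking closes the argument; the structural plan is sound.
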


We now prove \thm{ZeroJordanQS}.
\begin{proof}
We set the zeroth-order noise bound $\nu\leq c_\nu\delta^2\epsilon^6/d^4$ for small enough $c_\nu$ and let the total total iteration number to be
\begin{align}
T=4\max\left\{\frac{2B}{\uf},\frac{4B}{\eta\epsilon^2}\right\}=\tO\left(\frac{B}{\epsilon^2}\cdot\log d\right).
\end{align}

We first consider the iteration number at saddle points $\x_t$ with $\norm{F(\x_t)}\leq\epsilon$ and $\lambda_{\min}(\nabla^2 F(\x_t))\leq-\sqrt{\rho\epsilon}$. According to \lem{zhangprop1}, each iteration of \algo{PGDQGCQS} in this case will decrease the function value for at least $\ut$. Under this assumption, \algo{QS} can be called for at most $T/2$ times, for otherwise the function value decreases greater than $2B\geq F(\x^*)-F(\x_0)$. The failure probability is bounded by
\begin{align}
81\sqrt{\frac{\rho}{\epsilon^3}}\cdot\delta_0=\frac{\delta}{2}.
\end{align}

Except for these iterations that quantum simulation is implemented to add perturbations, we still have $T/2$ iterations. We consider the iterations with large gradients $\norm{F(\x_t)}\geq\epsilon$. In each iteration, the function value will decrease at least $\eta\epsilon^2/4$. There can be at most $T/4$ iterations, for otherwise the function value decreases greater than $2B\geq F(\x^*)-F(\x_0)$. The failure probability is bounded by 
\begin{align}
\frac{4B}{\eta\epsilon^2}\cdot\delta_0=\delta/2.
\end{align}

In summary, with probability at least $1-\delta$, there are at most $T/2$ iterations when the quantum simulation is called and at most $T/4$ iterations when the gradient is large. There are thus at least $T/4$ iterations resulting in $\epsilon$-SOSP of $F$.

The number of queries can be decomposed into two parts, the number of queries required for gradient estimations, denoted by $T_1$, and the number of queries required for quantum simulations, denoted by $T_2$. For the first part, we have 
\begin{align}
T_1=O(T)=\tO\left(\frac{B}{\epsilon^2}\cdot\log d\right).
\end{align}
For $T_2$, the number of queries is given by \lem{CostQS} as
\begin{align}
T_2=\tO\left(\frac{B}{\epsilon^2}\cdot\log^2 d\right).
\end{align}
The total query complexity $T_1+T_2$ is bounded by
\begin{align}
\tO\left(\frac{B}{\epsilon^2}\cdot\log^2 d\right).
\end{align}
\end{proof}

\section{Existence of Example with Quantum Advantage using Quantum Tunnelling Walk}\label{append:QTW}
In \thm{ZeroNoLower}, we have proved that when the noise under \assume{ZeroProb} increases to $\nu\geq\Theta(\epsilon^{1.5})$, we can find a hard instance for any classical or quantum algorithm even using exponentially many queries. Although the noise bound for the quantum algorithms is the same, there might be quantum speedup for a specific instance. In this section, we provide a candidate for this argument under some proper additional assumptions.

We set the constant $\mu=300$. For the target function $F$, we still consider the following function defined in \eq{FDef}:
\begin{align}\label{eq:FDefApp}
F(\x):=h(\sin \x)+\norm{\sin\x}^2,
\end{align}
where $h(\x):=h_1(\v^\top\x)\cdot h_2\left(\sqrt{\norm{\x}^2-(\v^\top\x)^2}\right)$, and
\begin{align}
h_1(x)=g_1(\mu x),\quad g_1(x)&=(-16\abs{x}^5+48x^4-48\abs{x}^3+16x^2)\cdot\mathbb{I}\{\abs{x}<1\},\\
h_2(x)=g_2(\mu x),\quad g_2(x)&=(3x^4+8\abs{x}^3+6x^2-1)\cdot\mathbb{I}\{\abs{x}<1\}.
\end{align}

We adopt the construction in Ref.~\cite{liu2022quantum} for the construction of noisy function $f$. In the following, we denote $R$ to be the radius of the hyperball where the main construction is. We also choose $\v$ uniformly in the unit sphere. We define two regions $W_-=\mathbb{B}(0,a)$ and $W_+=\mathbb{2b\v,a}$ with $b\geq a$. We choose $a$ and $b$ such that $W_-$ and $W_+$ are in $\mathbb{B}(0,R)$. We denote the region $S_\v\coloneqq\{\x|\x\in\mathbb{B}(0,R),\abs{\x\cdot\v}\leq w\}$, where $w$ is chosen in $[0,\sqrt{3}w/2)$. We define
\begin{align}
B_\v\coloneqq\{\x|w<\x\cdot\v<2b-w,\sqrt{\norm{\x}^2-(\x\cdot\v)^2}<\sqrt{a^2-w^2},\x\notin W_-\cup W_+\}.
\end{align}
The construction of $f$ is given by
\begin{align}\label{eq:fDefinApp}
f=\begin{cases}
\frac{1}{2}\omega^2\norm{\x}^2,\quad\x\in W_-,\\
\frac{1}{2}\omega^2\norm{\x-2b\v}^2,\quad\x\in W_+,\\
H_1,\quad\x\in B_\v,\\
H_2,\quad\text{otherwise.}
\end{cases}
\end{align}
We choose $0<\omega^2a^2/2\sim H_1\ll H_2$. There are two local minima for $f$ in \eq{fDefinApp}, $\0$ and $2b\v$. We can verify that the function pair $(F,f)$ satisfies the following properties.
\begin{itemize}
    \item $\sup_{x}\norm{f-F}_\infty\leq\tO(1)$.
    \item $F$ is $O(d)$-bounded, $O(1)$-Hessian Lipschitz, and $O(1)$-gradient Lipschitz.
\end{itemize}

We apply the same scaling in the main text as
\begin{align}\label{eq:FDefScaleApp}
\tilde{F}(\x)&\coloneqq\epsilon rF\left(\frac{\x}{r}\right),\\\label{eq:fDefinScaleApp}
\tilde{f}(\x)&\coloneqq\epsilon rf\left(\frac{\x}{r}\right),
\end{align}
where $r=\sqrt{\epsilon/\rho}$. According to Ref.~\cite{liu2022quantum}, quantum tunneling walk can provide a speedup for finding SOSP of $f$ using ground states containing information of $\v$ and $W_+$.
\begin{lemma}[Proposition 4.2 and Theorem 4.1, Ref.~\cite{liu2022quantum}]\label{lem:LiuComp}
Assume we start from point $\0$ and we are provided with knowledge that $\0$ is a local minimum. We know local ground states associated with $W_-$ and $W_+$. By properly choosing the parameter $a$, $b$, $H_1$, and $H_2$, quantum tunneling walk can find the another local minima with high probability using $O(\poly(d))$ queries while any classical algorithm requires $\Omega(e^{dB})$ queries to zeroth-order oracle $f$.
\end{lemma}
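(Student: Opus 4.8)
The plan is to establish the two halves of \lem{LiuComp} separately: a classical query lower bound of $\Omega(e^{dB})$ and a quantum upper bound of $O(\poly(d))$ realized by a Schr\"odinger-evolution (quantum tunneling walk) algorithm of the type in \algo{QS}. Throughout I would work with the scaled pair $(\tilde F,\tilde f)$ in \eq{FDefScaleApp}--\eq{fDefinScaleApp}; since the scaling is affine it preserves the query model and the notion of an $\epsilon$-SOSP, so it suffices to argue about the scale-free $(F,f)$ in \eq{FDefApp}--\eq{fDefinApp}, keeping track of the fact that $f$ still satisfies the boundedness, smoothness, and $\|f-F\|_\infty$ bounds of \assume{ZeroProb}.

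For the classical lower bound, the key observation is that $f$ equals the constant $H_2$ everywhere outside $W_-\cup W_+\cup B_\v$ and the constant $H_1$ inside $B_\v$, so no query value depends on $\v$ unless the query point lies in the thin region $W_+\cup B_\v$, whose location is entirely determined by $\v$. Starting from $\0$ with the provided knowledge that $\0$ is a local minimum and its local ground state, a classical algorithm learns nothing about $\v$ until it queries a point in $W_+\cup B_\v$. I would bound the probability that a fixed query point lies in this region when $\v$ is drawn uniformly from the sphere: by \lem{vector-from-sphere} together with the choice of $w,a,b=\Theta(1)$ and the cylinder of half-width $\sqrt{a^2-w^2}$ around the segment $[\,\0,2b\v\,]$, this probability is $e^{-\Omega(d)}=e^{-dB}$ for a suitable constant $B$. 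A union bound over $k$ queries then shows that any classical algorithm with $k=o(e^{dB})$ queries fails with probability $o(1)$ to find a point revealing $\v$, and since $F$ has no $\epsilon$-SOSP except near $2b\v$ within this construction, this yields the claimed $\Omega(e^{dB})$ bound; adaptivity does not help because the answer to each query is a deterministic (non-informative) function of the earlier answers.

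For the quantum upper bound, I would model the algorithm as simulating, via \lem{CostQS}, the scaled Schr\"odinger operator $-\tfrac{r_0^2}{2}\Delta+\tfrac{1}{r_0^2}f$ on a bounded domain containing $W_-$ and $W_+$, initialized in the provided local ground state $\phi_-$ localized in $W_-$. The analysis then reduces to a double-well tunneling estimate: the two lowest eigenstates are, up to exponentially small corrections, the symmetric and antisymmetric combinations of the harmonic-well ground states $\phi_-,\phi_+$ of $\tfrac12\omega^2\|\x\|^2$ and $\tfrac12\omega^2\|\x-2b\v\|^2$, with a tunneling splitting $\Delta E$ set by the barrier region $B_\v$. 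Choosing the barrier height $H_1$ and its geometry so that $\Delta E=1/\poly(d)$ while the gap to the remainder of the spectrum stays $\Omega(1)$, evolving for time $t_e=\Theta(1/\Delta E)=\poly(d)$ rotates $\phi_-$ into $\phi_+$ up to small error, so a position measurement returns a point in $W_+$ (hence an $\epsilon$-SOSP of $\tilde F$ after rescaling) with constant probability, boosted by repetition; by \lem{CostQS} this costs $\tO(t_e\log d)=\poly(d)$ queries to $f$. The main obstacle — and the technical heart of \cite{liu2022quantum} — is making the tunneling estimate rigorous: one must (i) control the deviation of $F$ from the idealized piecewise potential near $B_\v$, which can be handled as in \lem{GaussDeviation} through \lem{zhanglem9} and \lem{zhangcor1} since those errors are bounded and confined to a small region; (ii) verify the two-level approximation by separating the second eigenvalue from the third by $\Omega(1)$; and (iii) lower bound $\Delta E$ by an explicit $1/\poly(d)$ via an Agmon-type decay estimate or a variational bound across the barrier, which is delicate because $\Delta E$ depends exponentially on the geometric parameters, so $a,b,H_1,H_2$ must be tuned to land in the polynomial regime while still respecting all the regularity and closeness constraints on $f$.
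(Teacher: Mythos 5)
The paper does not contain a proof of this lemma: it is imported verbatim as Proposition~4.2 and Theorem~4.1 of Ref.~\cite{liu2022quantum}, and the appendix simply restates it and then specializes it in \cor{QTWSpeedup}. So there is no in-paper argument to compare against, and your task here was really to reconstruct the cited reference's argument rather than match the present paper.

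That said, your sketch is a reasonable high-level account of what \cite{liu2022quantum} does, with one caveat worth flagging: the lemmas you reach for to discharge the ``deviation from the idealized potential'' step --- \lem{GaussDeviation}, \lem{zhanglem9}, and \lem{zhangcor1} --- are tools from the \emph{saddle-point escape} analysis of \cite{zhang2021quantum} (expanding the potential quadratically around a saddle and tracking a spreading Gaussian wavepacket over $t_e=O(\log d)$). The double-well tunneling analysis is a different regime: it requires a two-level spectral approximation of the full Schr\"odinger operator, a lower bound on the tunneling splitting $\Delta E$ (typically via Agmon-type decay or a WKB/variational estimate), and control of leakage to higher eigenstates over times $t_e=\Theta(1/\Delta E)=\poly(d)$, i.e., much longer than the saddle-escape setting. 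Trotter-style commutator bounds like \lem{zhanglem9}, or the $O(\log t_e)^\alpha$ derivative growth of \lem{zhangcor1}, are not the machinery \cite{liu2022quantum} uses to control these quantities, and it is not clear they suffice over polynomially long evolution times. Your classical lower bound outline (concentration of measure on the sphere hiding $\v$, union bound over queries, adaptivity not helping because non-informative answers are deterministic) is in the right spirit and matches the style of argument used elsewhere in the paper (cf.\ \thm{ZeroMeanLowerF}), but again the paper itself simply cites \cite{liu2022quantum} for this too. You are candid that the ``technical heart'' --- tunneling rate and spectral gap estimates --- is left to the reference; since those are precisely what the lemma's parameters $a,b,H_1,H_2$ must be tuned to control, your proposal is best read as a roadmap of what must be proved rather than a proof.
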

Under the setting of this paper, we consider choosing $R$ such that $2b\v$ is also a local minimum of $F$. Notice that $\0$ is not a local minimum of $F$, our goal is to find the $\epsilon$-SOSP near the local minima $2b\v$ of $F$ taking queries to noisy oracle $f$ in \eq{QZeroOracle}. We can reach the following corollary using \lem{LiuComp}.
\begin{corollary}\label{cor:QTWSpeedup}
Consider the hard instance $(\tilde{F},\tilde{f})$ defined by \eq{FDefScaleApp}, \eq{fDefinScaleApp}, and a proper chosen $R$ such that $2b\v$ is also a local minimum of $\tilde{F}$. There exists a choice of parameters $a$, $b$, $H_1$, and $H_2$ such that a quantum algorithm starting at $\0$ can find an $\epsilon$-SOSP of $\tilde{F}$ with high probability using $O(\poly(d))$ queries to the noisy $\tilde{f}$ and proper initial ground state. However, any classical algorithm with proper initial ground states requires $\Omega(e^{dB})$ queries.
\end{corollary}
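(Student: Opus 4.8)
The plan is to reduce the nonconvex optimization task for $(\tilde{F},\tilde{f})$ to the quantum-classical separation in \lem{LiuComp}. First I would fix the geometric parameters: choose the radius $R$ of the hyperball housing the construction so that the point $2b\v$ coincides with a local minimum of the scaled target $\tilde{F}$ in \eq{FDefScaleApp}. Recall that $\tilde{F}(\x)=\epsilon r F(\x/r)$ with $F(\x)=h(\sin\x)+\|\sin\x\|^2$, so its local minima are at points where $\sin\x=\0$ and the perturbation term $h$ vanishes; by \lem{PropHardZero}-type scaling arguments, picking $2b\v$ near a lattice point $\pi\mathbf{k}$ (appropriately rescaled) makes $2b\v$ an exact local minimum of $\tilde{F}$, while $\0$ is not (the band structure around $\0$ prevents it). I would then verify, as in the bullet list following \eq{fDefinApp}, that $(F,f)$ satisfies $\sup_\x\|f-F\|_\infty\leq\tilde{O}(1)$, $F$ is $O(d)$-bounded, $O(1)$-Hessian Lipschitz, and $O(1)$-gradient Lipschitz, so that after scaling $(\tilde{F},\tilde{f})$ obeys \assume{ZeroProb} with the appropriate $B$, $\ell$, $\rho$, and $\nu=\tTheta(\epsilon^{1.5})$.

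Second, I would set up the reduction. The quantum tunneling walk algorithm of Ref.~\cite{liu2022quantum}, quoted in \lem{LiuComp}, takes as input the knowledge that $\0$ is a local minimum of $f$ together with the local ground states associated with the wells $W_-$ and $W_+$, and with $O(\poly(d))$ queries to the zeroth-order oracle for $f$ it locates the second local minimum $2b\v$ of $f$ with high probability. Crucially, by the construction of $f$ in \eq{fDefinApp}, the well $W_+ = \mathbb{B}(2b\v, a)$ is a small harmonic well centered exactly at $2b\v$, and inside $W_+$ the functions $f$ and $F$ differ by at most the noise scale; moreover $2b\v$ is an $\epsilon$-SOSP of $\tilde{F}$ by our choice of $R$. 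So once the walk outputs a point in $W_+$, a constant number of additional (quantum or classical) gradient-descent steps on $f$ inside the harmonic well $W_+$ — or simply returning the well center, which is known once the well is identified — yields an $\epsilon$-SOSP of $\tilde{F}$. Applying the scaling \eq{FDefScaleApp}–\eq{fDefinScaleApp} carries this over verbatim to $(\tilde{F},\tilde{f})$ with only constant overhead, giving the $O(\poly(d))$ quantum upper bound.

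Third, for the classical lower bound I would observe that $\tilde{f}$ is, outside the two wells and the tunneling corridor $B_\v$, a flat plateau $H_2$ that reveals no information about $\v$ — this is exactly the situation handled by the classical part of \lem{LiuComp}, which gives $\Omega(e^{dB})$ queries for any classical algorithm starting at $\0$ even with the initial ground-state information. The point is that, to find $2b\v$ (and hence any $\epsilon$-SOSP of $\tilde{F}$ near it — one checks $\tilde{F}$ has no $\epsilon$-SOSP on the plateau or in the band, analogously to \lem{PropHardZero}), a classical algorithm must escape the well $W_-$ and cross the exponentially-suppressed barrier, whereas the quantum walk tunnels through. Thus $(\tilde{F},\tilde{f})$ inherits the separation. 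The main obstacle I anticipate is the bookkeeping in matching the barrier/well parameters $a,b,H_1,H_2$ simultaneously against (i) the $B$-boundedness, smoothness, and Hessian-Lipschitz constraints of \assume{ZeroProb}, (ii) the requirement that $2b\v$ be an exact local minimum of $\tilde{F}$ (not just of $\tilde{f}$), and (iii) the precise hypotheses of \lem{LiuComp} on the relative scales $0<\omega^2a^2/2\sim H_1\ll H_2$; verifying that a single consistent choice of $R$ and these parameters exists, and that the noisy-vs-target discrepancy stays within $\tilde{O}(\epsilon^{1.5})$ throughout, is where the real care is needed, though it is essentially a routine (if tedious) parameter-balancing argument once the qualitative picture is in place.
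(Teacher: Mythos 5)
Your overall route is the one the paper takes: invoke the separation of Ref.~\cite{liu2022quantum} (\lem{LiuComp}), verify that the scaled pair $(\tilde F,\tilde f)$ satisfies \assume{ZeroProb} with the parameters listed in the bullet list after \eq{fDefinApp}, and position the second well $W_+$ so that its center $2b\v$ is a local minimum of $\tilde F$. The paper's own ``proof'' is essentially just this citation plus the parameter choice, so the high-level plan is correct.

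However, there is a concrete error in your description of the landscape of $F(\x)=h(\sin\x)+\|\sin\x\|^2$ that would break your parameter choice. You claim the local minima of $F$ are at lattice points $\pi\mathbf k$ (where $\sin\x=\0$ and ``$h$ vanishes'') and that one should take $2b\v$ near such a point. This is wrong on two counts. First, by periodicity every lattice point looks exactly like $\0$: the Hessian of $h(\sin\x)$ there contributes $h_1''(0)\,h_2(0)\,\v\v^\top=-32\mu^2\,\v\v^\top$, which (with $\mu=300$) overwhelms the $+2I$ coming from $\|\sin\x\|^2$, so \emph{every} lattice point is a strict saddle, not just $\0$. Your remark that ``$\0$ is not a local minimum but other lattice points are'' is internally inconsistent for the same reason. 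Second, $2b\v$ lies on a ray in a uniformly random direction $\v$, so it cannot coincide with a nonzero lattice point anyway. The local minimum the construction must hit is instead the one \emph{created by} the perturbation $h$: $g_1$ attains its maximum at $x=0.4$, so $h(\sin\x)$ is most negative along $\v$ at $\v^\top\sin\x\approx 0.4/\mu$ (with the orthogonal component zero), which places the minimizer of $F$ at roughly $(0.4/\mu)\v$, i.e.\ at $(0.4\,r/\mu)\v$ after the scaling \eq{FDefScaleApp}. The correct choice is therefore $2b\approx 0.4/\mu$ (a point well inside the perturbed ball $\mathbb B(\0,3/\mu)$), not a lattice-scale displacement. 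Once $2b\v$ is placed there, the rest of your reduction and the parameter balancing you flag as ``the real care'' go through as in the paper.
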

The above corollary demonstrates that if we assume that we have some ground states revealing information above $\v$, the quantum algorithm can provide an exponential speedup in solving a special hard instance $(F,f)$ that satisfies \assume{ZeroProb}. It is worthwhile to mention that the additional assumption on the local ground state is essential for this speedup and the quantum algorithm also requires query complexity that is exponential in $d$ without such assumption ~\cite{liu2022quantum}.
\end{document}